\documentclass[12pt]{article}
\usepackage{amsmath}
\usepackage{graphicx,psfrag,epsf}
\usepackage{enumerate}
\usepackage{natbib}
\usepackage{amsthm}
\usepackage{amssymb}
\usepackage{amsfonts}

\usepackage[colorlinks,citecolor=blue,urlcolor=blue,filecolor=blue,backref=page]{hyperref}
\usepackage{bbm}
\usepackage{bm}
\usepackage{dcolumn}
\newcolumntype{d}[1]{D{.}{.}{#1}}

\usepackage{booktabs,siunitx}
\usepackage{multirow}
\usepackage{lscape}
\usepackage{enumitem}
\usepackage{xcolor}

\newif\ifshowrev
\showrevfalse
\newcommand{\rev}[1]{\ifshowrev{\color{blue}#1}\else{#1}\fi}
\newenvironment{revblock}{\ifshowrev\color{blue}\fi}{}

\newcommand*\diff{\mathop{}\!\mathrm{d}}

\newcommand\indep{\protect\mathpalette{\protect\independenT}{\perp}}
\def\independenT#1#2{\mathrel{\rlap{$#1#2$}\mkern2mu{#1#2}}}

\newtheorem{assumption}{Assumption}
\newtheorem{proposition}{Proposition}
\newtheorem{lemma}{Lemma}

\theoremstyle{remark}
\newtheorem{remark}{Remark}

\newcommand{\blind}{0}

\begin{document}

\def\spacingset#1{\renewcommand{\baselinestretch}%
{#1}\small\normalsize} \spacingset{1}


\newcommand{\tit}{\bf Identification and Estimation of Treatment Effects from Network Data with Confounded Social Links}

\if0\blind
{
  \title{\tit}
  \author{Licheng Liu\thanks{
     Assistant Professor, Department of Political Science, University of Michigan.
     \\Email:
     \href{mailto:lichengl@umich.edu}{lichengl@umich.edu}. I thank Robert Franzese, Danny Hidalgo, 
      Yue Hou, In Song Kim, Jialu Li, Xun Pang, Lily Tsai, Yiqing Xu, Teppei Yamamoto, Han Zhang, 
      members of the Kim Research Group at MIT, and participants at seminars at Stanford University, 
      University of Michigan, University of Rochester, Hong Kong University of Science and 
      Technology, and Annual Meeting of the Society for Political Methodology in 2023 and 2024 for 
      their helpful comments. All remaining errors are my own.
     }
  }
  \date{}
  \maketitle
} \fi

\if1\blind
{
   \title{\tit}
   \date{This draft: October 10, 2026 \\ First draft: August 30, 2023}
   \maketitle
} \fi

\bigskip
\begin{abstract}
\begin{revblock}
In observational network data, the adoption of a treatment and the formation of social ties often 
depend on the same characteristics that affect outcomes. We propose a generalized propensity score 
for each unit's effective treatment, built from a model for treatment assignment and a dyadic 
model for network formation, which may include latent node heterogeneity. The score is 
marginalized over the network given covariates rather than conditioned on the realized network, 
and the estimand is a population effect: the average effect of an exposure level for a unit drawn 
from the population of units and networks. We relate this estimand to those of methods that 
condition on the realized network and show what marginalizing costs and buys. It requires 
network formation to be unconfounded given observed covariates, a stronger assumption than 
unconfoundedness of treatment given the network. In exchange, it defines effects at exposure 
levels that poorly connected units cannot reach, and at levels every unit can reach it gives an 
estimator whose asymptotic variance is no larger than the network-conditional one. We combine 
the score with an augmented inverse probability weighting estimator, prove asymptotic normality 
as the number of networks grows with bounded network size, discuss what carries over to a single 
large sparse network, and apply the method to the effect of a political communication technology 
on political participation in Uganda.
\end{revblock}
\end{abstract}

\noindent%
{\it Keywords:}  Causal Inference, Doubly Robust Estimation, Interference, Network Formation, Social Network Data
\vfill

\newpage
\spacingset{1.45} 
\section{Introduction}

When the units in a study are connected, a treatment given to one of them can affect the outcomes 
of others, and the Stable Unit Treatment Value Assumption (SUTVA) fails. The causal inference 
literature calls this \textit{interference} (e.g., \citealp[]{bowers2013reasoning}). It arises in 
behavioral studies, where individuals form family or friendship ties with people of similar age or 
education (e.g., \citealp[]{sinclair2012detecting}), and in cross-country analyses, where nations 
are connected by geography, trade or joint membership in international organizations 
(e.g., \citealp[]{simmons2004globalization}). The usual remedy is a regression that includes the 
unit's own treatment, a measure of interference such as the average treatment of its neighbors, 
and unit-level controls (e.g., \citealp[]{arpino2016assessing}), with treatment effects read off 
the coefficients. This works when the connections are unconfounded, as geographic proximity 
usually is.\footnote{That is, when no variable affects both the outcome and the formation of 
connections between units.} Social ties are different. People choose them on the basis of 
characteristics that may also affect the outcome \citep{goldsmith2013social} and the adoption of 
the treatment, and a regression that adjusts for individual characteristics alone leaves the 
confounding that enters through the ties.

\rev{By \textit{confounded social links} we mean that the ties, like the treatment, depend on 
covariates that also affect potential outcomes. The network is then not a confounder in its own 
right; it is a component of the assignment, and that component is confounded. Our identifying 
assumption is that the covariates that drive ties, adoption and outcomes are observed 
(Assumption~\ref{asm:ig} below).}

We propose a generalized propensity score (GPS) approach for data that consist of many disjoint 
groups, with ties formed only within groups. Treatment effects are differences in average 
potential outcomes across levels of an ``effective treatment'' \citep{manski2013identification}, a 
function of the treatment vector and the network that \citet{aronow2017estimating} call an 
``exposure mapping.'' Its form encodes how the researcher believes units interfere; we take it to 
be flexible but correctly specified. Given a model for treatment assignment and a model for 
network formation \citep{toulis2021estimating}, the score of a unit is the probability of its 
exposure level given the covariates of everyone in its group. For simple exposure mappings the 
score has a closed form; otherwise it can be simulated from the fitted models 
(e.g., \citealp[]{aronow2017estimating}). With the scores in hand, we estimate average potential 
outcomes and treatment effects by augmented inverse probability weighting (AIPW), which is doubly 
robust \citep{glynn2010introduction}, and use a cluster-robust \rev{variance estimator} for the 
dependence among units in the same group.

\begin{revblock}
What is new is what the score conditions on. Existing methods condition on the realized network, 
either by treating it as fixed or by placing it in the propensity score. Our score averages over 
the networks that the formation process could have produced given covariates. We think of the 
treatment vector and the network as a joint assignment with two components, and then the choice 
between conditioning on the network component and averaging over it is a choice of estimand, of 
the same kind as the choice between an average treatment effect and an effect on the treated. 
Averaging costs more: it requires tie formation to be unconfounded given observed covariates, 
which methods that condition on the network do not need. It buys three things. The estimand is a 
population effect whose uncertainty refers to the population of networks, which is the relevant 
frame when the policy question concerns villages, schools or markets whose ties have not been 
observed, and which stays well defined when ties respond to treatment. Effects are defined at 
exposure levels that poorly connected units cannot reach, such as having two treated friends for a 
unit with one tie, and are recovered from units with similar covariates who did reach them. And 
at levels that every unit can reach, the network affects exposure but not outcomes given 
covariates, so leaving it out of the score lowers the asymptotic variance, as leaving an 
instrument out of a propensity score does. Section~\ref{sec:cond} states these results and 
compares the three estimands involved.

The cost is real, and the latent-homophily literature is right that selection on observables is 
a strong requirement for ties \citep{shalizi2011homophily, mcfowland2023estimating}. Conditioning 
on the realized network is robust to unobserved characteristics that shape ties and outcomes but 
not adoption, because the network then serves as a proxy for them. A middle course fits the 
framework: latent-variable network models \citep{hoff2002latent, hoff2021additive} recover 
node-level latent positions from the observed ties, and the score can condition on these 
positions while averaging over the ties themselves. The application uses this specification. Its 
justification is a plug-in argument for large groups, which we state but do not prove.
\end{revblock}

The approach is related to design-based causal inference with network data 
\citep{ogburn2024causal}. \citet{aronow2017estimating} propose inverse probability weighting for 
network experiments. For observational data, \citet{forastiere2021identification} factorize the 
exposure into a direct treatment and an interference component and propose a semiparametric 
weighting estimator. \citet{sanchez2022network} develops a network propensity score for data with 
many disjoint groups that allows for group-level unobserved heterogeneity but takes the ties as 
randomly assigned \textit{a priori}, and under tighter assumptions suggests parametric models that 
estimate the score directly. \citet{jackson2020adjusting} introduce a peer-influenced propensity 
score and \citet{leung2022unconfoundedness} a propensity score based on graph neural networks; 
both allow treatment adoption to diffuse through the network. \rev{All of these condition on the 
realized network. Random-network frameworks are rarer: \citet{li2022random} derive asymptotics 
for randomized experiments when the interference graph is drawn from a graphon, and 
\citet{clark2024causal} model ties and covariates jointly in a Bayesian framework.} The setting is 
also close to the literature on causal peer effects or \textit{contagion}, the influence of 
neighbors' outcomes on a unit's own outcome, where the central difficulty is confounding by 
homophily \citep{shalizi2011homophily} and researchers often turn to longitudinal data 
(e.g., \citealp[]{egami2024identification, uppala2023contagion}). We estimate the overall effect of 
the treatment vector and do not try to isolate the part of a spillover that works through 
contagion.

\rev{The main development assumes that treatment assignment and network formation are 
conditionally independent given covariates, a mechanism \citet{franzese2008contagion} call 
\textit{common exposure}, with dyadic tie formation that may depend on observed and latent node 
attributes. Two extensions follow. When treatment diffuses along ties or ties respond to 
treatment, the score is obtained by simulation from the joint model. For a single large sparse 
network, we show that identification, positivity and the rate of the estimated score carry over 
and say what is known about inference.}

Section~\ref{me} introduces the motivating example, Section~\ref{sec:setup} the framework and 
estimands, and Section~\ref{sec:gps} the score, the comparison with network-conditional 
approaches (Section~\ref{sec:cond}), and estimation and inference. \rev{Section~\ref{sec:ext} 
covers the extensions.} Sections~\ref{SMC} and~\ref{ea} report Monte Carlo evidence and the 
application, and Section~\ref{sec:conc} concludes.

\section{A Motivating Example}\label{me} 

Whether information and communication technology (ICT) promotes political engagement in developing 
countries is still debated. In theory, ICT lowers the cost of communicating with politicians and so 
raises participation, especially among marginalized groups who otherwise have little access to them 
\citep{grossman2020information}. The empirical evidence is mixed.

We study the effect of U-Bridge, a political communication technology, on political participation in 
16 Ugandan villages, using the replication data of \citet{ferrali2020takes} and 
\citet{eubank2021viral}. U-Bridge was introduced in a district of northwestern Uganda and lets 
residents contact local officials by anonymous text message. It was rolled out as a field experiment 
in 131 randomly selected villages, where treatment villages received periodic community meetings 
about its use. To study adoption, \citet{ferrali2020takes} surveyed residents of 16 treatment 
villages in person in 2016, two years after the launch, collecting individual data on demographics, 
meeting attendance, adoption of U-Bridge, and social ties.

Our analysis uses the 3,018 respondents with complete data, of whom 135 adopted U-Bridge. The 
treatment is a binary indicator of adoption; the outcome is a continuous index of political 
participation that summarizes political actions over the previous year. The covariates are age, 
gender, income, secondary education, a formal leadership role, phone ownership, a behavioral proxy 
for community engagement, and an indicator for attendance at a U-Bridge meeting.\footnote{The 
variables are described in \citet{ferrali2020takes}.}

Residents are linked by several kinds of tie: family, friendship, lending, and problem solving 
\citep{sinclair2012detecting}. Such ties can carry spillovers of adoption through conversation. They 
also form on the basis of individual characteristics that influence participation, which is the 
source of confounding. We build one network per village by linking residents who share any of the 
four ties, treat the links as undirected, and drop ties across villages, which leaves 16 separate 
components.\footnote{This follows \citet{ferrali2020takes} and \citet{sanchez2022network}.}

The effect of U-Bridge on participation is hard to estimate for three reasons. Adoption is not 
random and depends on characteristics such as education. The network creates interference, so 
SUTVA fails. And the ties themselves form on characteristics that also affect participation: in 
these villages, leaders have nearly twice as many ties as other residents (Figure~\ref{ns0} shows 
one village, with leaders in red), so ignoring how the network formed invites bias. \rev{The same 
fact shows why conditioning on the realized network changes the question. Given the realized 
ties, the effect of having two adopter friends, two ties to residents who adopted, can only be 
identified among residents with at least two ties. Those residents are older and more often 
leaders than the rest, so the effect refers to a sub-population selected through the network.} We 
address these issues with a potential outcomes framework that includes the network and with a 
propensity-score-based estimation strategy.

 \begin{figure}[!ht]
  \begin{center}
  \begin{tabular}{c}
   \raisebox{-0.5\height}{\includegraphics[scale=0.35]{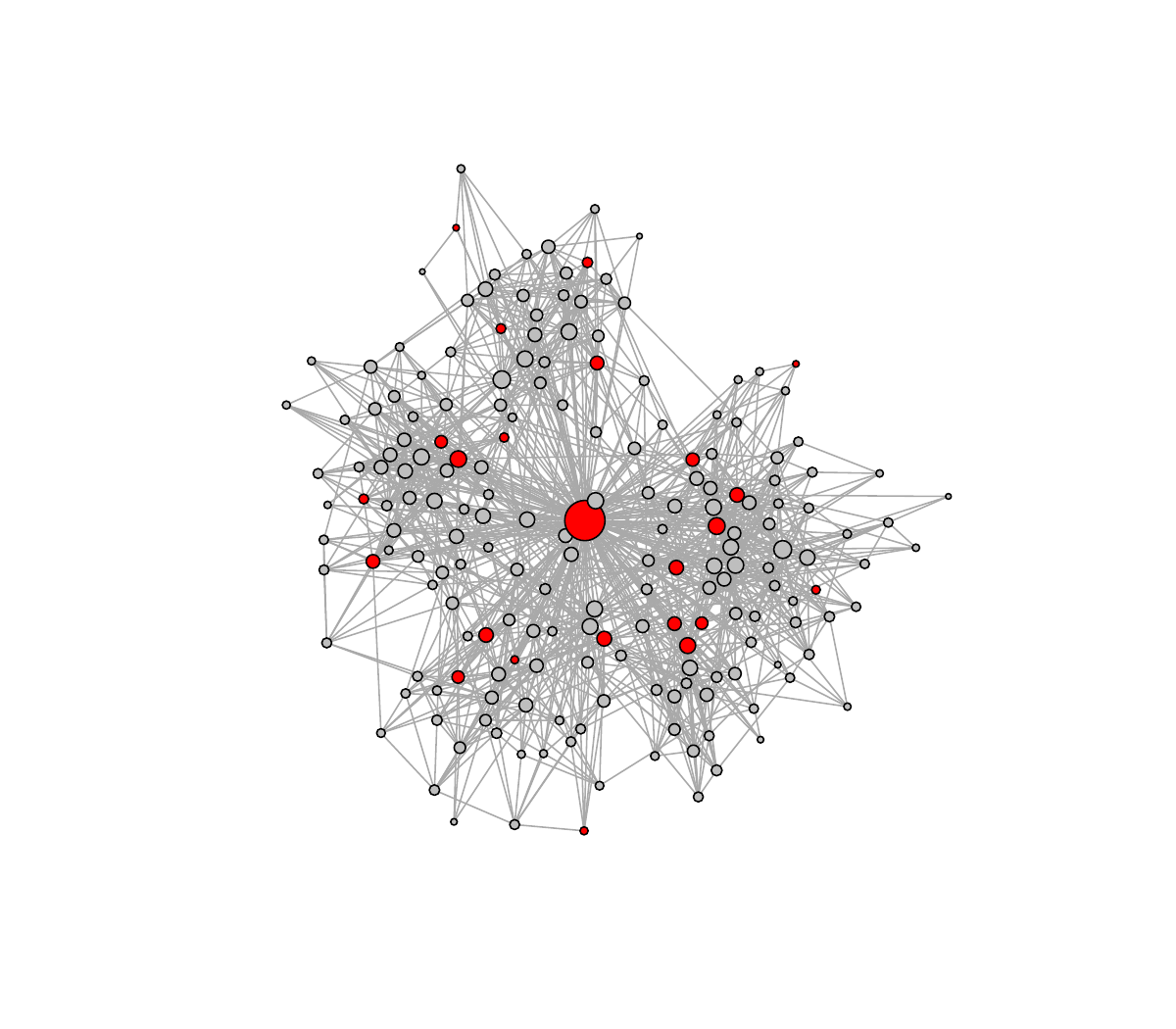}} 	
   \end{tabular}
   \end{center}
   \caption{Network Visualization for a Village}\label{ns0}
   \footnotesize\textbf{Note:} The size of each vertex is proportional to the square root of its 
   degree. Red vertices are respondents who hold a formal leadership role in the village.
   \end{figure}

\section{Setup}\label{sec:setup}

  \subsection{Notation}
    
  Consider a cross-sectional dataset with $G$ distinct groups; in the motivating example each village 
  is a group. Let $\mathcal{G} = \{1, \ldots, G\}$ be the set of groups, $N_g$ the number of units in 
  group $g$, and $N = \sum_{g=1}^{G} N_g$ the total number of units. Let $\bm{W}_g$ be the observed 
  network of group $g$, a random ($N_g \times N_g$) adjacency matrix whose entry $W_{ijg}$ records the 
  connection between units $\{ig\}$ and $\{jg\}$.\footnote{The index pair $\{ig\}$ denotes the $i$th 
  unit of group $g$. Bold uppercase letters denote random vectors and matrices and bold lowercase 
  letters their realizations.} In the language of network analysis, $\bm{W}_g$ is a random graph 
  $\Gamma_g = (\mathcal{N}_g, \mathcal{E}_g)$, where $\mathcal{N}_g$ is the set of units of group $g$ (its 
  ``nodes'' or ``vertices''), with cardinality $N_g$, and $\mathcal{E}_g$ is the set of edges between 
  pairs of units in $g$, so that $(\{ig\},\{jg\}) \in \mathcal{E}_g$ if $W_{ijg} > 0$. If the graph is 
  undirected, $W_{ijg} \equiv W_{jig}$ for all $\{ig\}, \{jg\} \in \mathcal{N}_g$; otherwise 
  \rev{$\bm{W}_g$ is asymmetric and} $W_{ijg}$ and $W_{jig}$ may differ. Throughout we take the network 
  to be unweighted, $W_{ijg} \in \{0, 1\}$ for all $g$, and we rule out self-links, $W_{iig} \equiv 0$. 
  Since $\bm{W}_g$ is random, let $\mathcal{W}_g$ be its sample space. \rev{We write 
  $\deg_{ig}(\bm{w}_g) = \sum_{\{jg\} \neq \{ig\}} w_{ijg}$ for the degree of unit $\{ig\}$ in network 
  $\bm{w}_g$, and $\mathcal{N} = \cup_g \mathcal{N}_g$ for the set of all $N$ units.}
  
   Let $D_{ig} \in \{0, 1\}$ be a binary treatment assigned to unit $\{ig\}$ and $Y_{ig}$ its observed 
   outcome. In vector form, $\bm{D}_g = (D_{1g}, \ldots, D_{N_gg})$ is the ($N_g \times 1$) vector of 
   treatments and $\bm{Y}_g = (Y_{1g}, \ldots, Y_{N_gg})$ the vector of observed outcomes for group 
   $g$, and $\mathcal{D}_g = \{0, 1\}^{N_g}$ is the sample space of $\bm{D}_g$.
   
   For each unit we also observe a ($p \times 1$) vector of pre-treatment covariates $X_{ig}$. With 
   network data it is useful to think of two subsets: confounders $X_{ig}^D$ that influence treatment 
   adoption, and covariates $X_{ig}^W$ that generate homophily and thereby affect tie formation. The 
   two subsets need not be disjoint; a covariate may affect both adoption and tie formation. 
   $X_{ig}^W$ may also contain dyadic covariates, such as the distance or the volume of bilateral 
   trade between two countries. In matrix form, $\bm{X}_g = (X_{1g}, \ldots, X_{N_g g})'$ is the 
   \rev{($N_g \times p$)} matrix of covariates for group $g$, and $\bm{X}_g^D$ and $\bm{X}_g^W$ are 
   the matrices for the two subsets.

   \subsection{Potential outcomes in social network data}
 
   In most social network data the observed network $\bm{W}_g$ is random, and its formation is 
   confounded in the same way as the assignment of treatment, because the covariates that drive 
   tie formation may also affect potential outcomes. The network also creates interference, since the 
   treatment of one unit may affect the potential outcomes of others. To handle both, we extend the 
   potential outcome framework \citep{rubin1974estimating} and write each unit's potential outcome as 
   a function of the treatment vector and the network: $Y_{ig}(\bm{d}_g, \bm{w}_g)$ for 
   $\{ig\} \in \mathcal{N}_g$, $\bm{d}_g \in \mathcal{D}_g$ and $\bm{w}_g \in \mathcal{W}_g$. Under 
   consistency, that is, no multiple versions of treatment \citep{rubin1986ifs}, the observed outcome 
   is $Y_{ig} = Y_{ig}(\bm{d}_g, \bm{w}_g)$ when $\bm{D}_g = \bm{d}_g$ and $\bm{W}_g = \bm{w}_g$. 
   
   In this framework a change in the network can change potential outcomes even when the treatment 
   vector stays the same. In Figure~\ref{po}, the network $\bm{w}_g$ in the left panel is denser than 
   the network $\bm{w}'_g$ in the right panel while every unit's treatment status is the same in both 
   \rev{panels}; $Y_{ig}(\bm{d}_g, \bm{w}_g)$ may still differ from $Y_{ig}(\bm{d}_g, \bm{w}'_g)$, 
   because unit $\{ig\}$ is connected to more treated units in the denser network. We therefore treat 
   $\bm{D}_g$ and $\bm{W}_g$ as a ``joint'' assignment of treatment and network. \rev{On this view the 
   network is a component of the assignment rather than a confounder: the pair 
   $(\bm{D}_g, \bm{W}_g)$ plays the role that the treatment indicator plays in the classical setting, 
   and the covariates $\bm{X}_g$ confound both components.}

   With this notation, the data generating process for network formation, treatment assignment, and 
   outcomes can be written as a non-parametric structural equation model 
   (NP-SEM; \citealp[]{pearl2009causality}):\footnote{We assume that the observed covariates fully 
   account for network formation and treatment adoption, which rules out unobserved confounders. 
   Latent homophily can be inferred from the network with latent space models 
   (e.g., \citealp[]{hoff2021additive,mcfowland2023estimating})\rev{; Section~\ref{sec:latent} shows 
   how such latent node attributes enter the generalized propensity score}.}
   \begin{equation}\label{all}
       \begin{gathered}
           W_{ijg} = f_w(\{ig\}, \{jg\}, \bm{X}_g, \epsilon^w_{ijg}), \\
           D_{ig} = f_d(\{ig\}, \bm{X}_g, \epsilon^d_{ig}), \\
           Y_{ig}(\bm{d}_g, \bm{w}_g) = f_y(\{ig\}, \bm{d}_g, \bm{w}_g, \bm{X}_g, \epsilon^y_{ig}), 
       \end{gathered}
   \end{equation}  
   where $\epsilon^w_{ijg}$, $\epsilon^d_{ig}$ and $\epsilon^y_{ig}$ are random errors. Correlation 
   between $\epsilon^w_{ijg}$ and $\epsilon^d_{ig}$ captures interaction between network formation 
   and treatment assignment. The framework is not restricted to undirected networks, but for an 
   undirected network in which a tie depends only on the covariates of the pair, the formation model 
   simplifies to $W_{ijg} = f_w(X_{ig}, X_{jg}, \epsilon^w_{ijg})$. 

   In social network data the covariates of all units in a group, $\bm{X}_g$, may affect each 
   unit's adoption and outcome. To make this explicit, \rev{write} $\bm{X}_g$ in NP-SEM~(\ref{all}) as $(X_{ig}, \bm{X}_{-ig})$, where 
   $\bm{X}_{-ig}$ is what the network literature calls a \textit{contextual variable} 
   \citep{jackson2008social, jackson2020adjusting}. Including it in $f_d(\cdot)$ and $f_y(\cdot)$ 
   allows \textit{social norms} to influence adoption and outcomes. A researcher with a strong prior 
   that no such influence exists can drop the contextual variable and work with 
   $D_{ig} = f_d(X_{ig}, \epsilon^d_{ig})$ and 
   $Y_{ig}(\bm{d}_g, \bm{w}_g) = f_y(\bm{d}_g, \bm{w}_g, X_{ig}, \epsilon^y_{ig})$.

   Because the treatment vector and the network jointly determine potential outcomes, $f_y(\cdot)$ 
   has high-dimensional arguments, which makes identification and estimation hard. Following the 
   literature, we assume that a low-dimensional, possibly vector-valued, function of the treatment 
   vector and the network determines the potential outcomes: 

   \begin{assumption}[Exposure mapping] \label{asm:em}
   For each group $g$, there exists a known function 
   $ e: \mathcal{N}_g \times \mathcal{D}_g \times \mathcal{W}_g \rightarrow \mathcal{T}_g $, such that 
   $$ 
   Y_{ig}(\bm{d}_g, \bm{w}_g) = Y_{ig}(t) 
   $$
   if $ e(\{ig\}, \bm{d}_g, \bm{w}_g) =  t $.
   \end{assumption}  
   Under Assumption~\ref{asm:em}, $ Y_{ig}(\bm{d}_g, \bm{w}_g) = Y_{ig}(\bm{d}'_g, \bm{w}'_g) $ 
   whenever $ e(\{ig\}, \bm{d}_g, \bm{w}_g) = e(\{ig\}, \bm{d}'_g, \bm{w}'_g) $. In the literature on 
   interference, $ T_{ig} = e(\{ig\}, \bm{D}_g, \bm{W}_g) $ is the ``effective treatment'' 
   \citep{manski2013identification} and $e(\cdot)$ the ``exposure mapping'' \citep{aronow2017estimating}. 
   
   Exposure mappings are used both in methodological work on interference 
   (e.g., \citealp[]{forastiere2021identification}) and in empirical work on spillovers 
   (e.g., \citealp[]{arpino2016assessing}). That literature usually takes the network as fixed and 
   builds it into the exposure mapping. We instead treat the network as an input on the same footing 
   as the treatment vector: $\bm{W}_g$ and $\bm{D}_g$ are common inputs for all units in a group, 
   and the output, the effective treatment, is unit specific. 
   
   Specifying the exposure mapping requires substantive knowledge of the treatment. Without 
   interference, the effective treatment is the unit's own treatment, 
   $T_{ig} = e(\{ig\}, \bm{D}_g, \bm{W}_g) = D_{ig}$. When the network creates interference and the 
   potential outcome depends on own treatment and on the number of treated neighbors, the 
   mapping is vector valued:\footnote{$Z_{ig}$ can be defined more generally as any function of the 
   network and the treatment vector of the other units, such as the number of connections or the 
   proportion of treated units among a unit's connections.}
   \begin{equation}\label{epp}
      T_{ig} = e(\{ig\}, \bm{D}_g, \bm{W}_g) = (D_{ig}, \sum_{\{jg\} \in \mathcal{N}_g} W_{ijg} D_{jg}) 
      = (D_{ig}, Z_{ig}).
   \end{equation}
   The mapping can be flexible, but we assume the researcher knows it and specifies it 
   correctly.\footnote{\citet{savje2023causal} studies misspecified exposure mappings\rev{: exposure 
   effects remain well-defined contrasts of average potential outcomes even when the mapping misses 
   part of the interference structure, but interpreting them as the effect of a specific intervention 
   requires correct specification. We return to the choice of mapping in the application 
   (Section~\ref{sec:expo}).}} Under Assumption~\ref{asm:em} the outcome equation of 
   NP-SEM~(\ref{all}) becomes $Y_{ig}(t) = f_y(\{ig\}, t, \bm{X}_g, \epsilon^y_{ig})$.

  \begin{figure}[!ht]
  \begin{center}
  \begin{tabular}{cc}
   \raisebox{-0.5\height}{\includegraphics[scale=0.3]{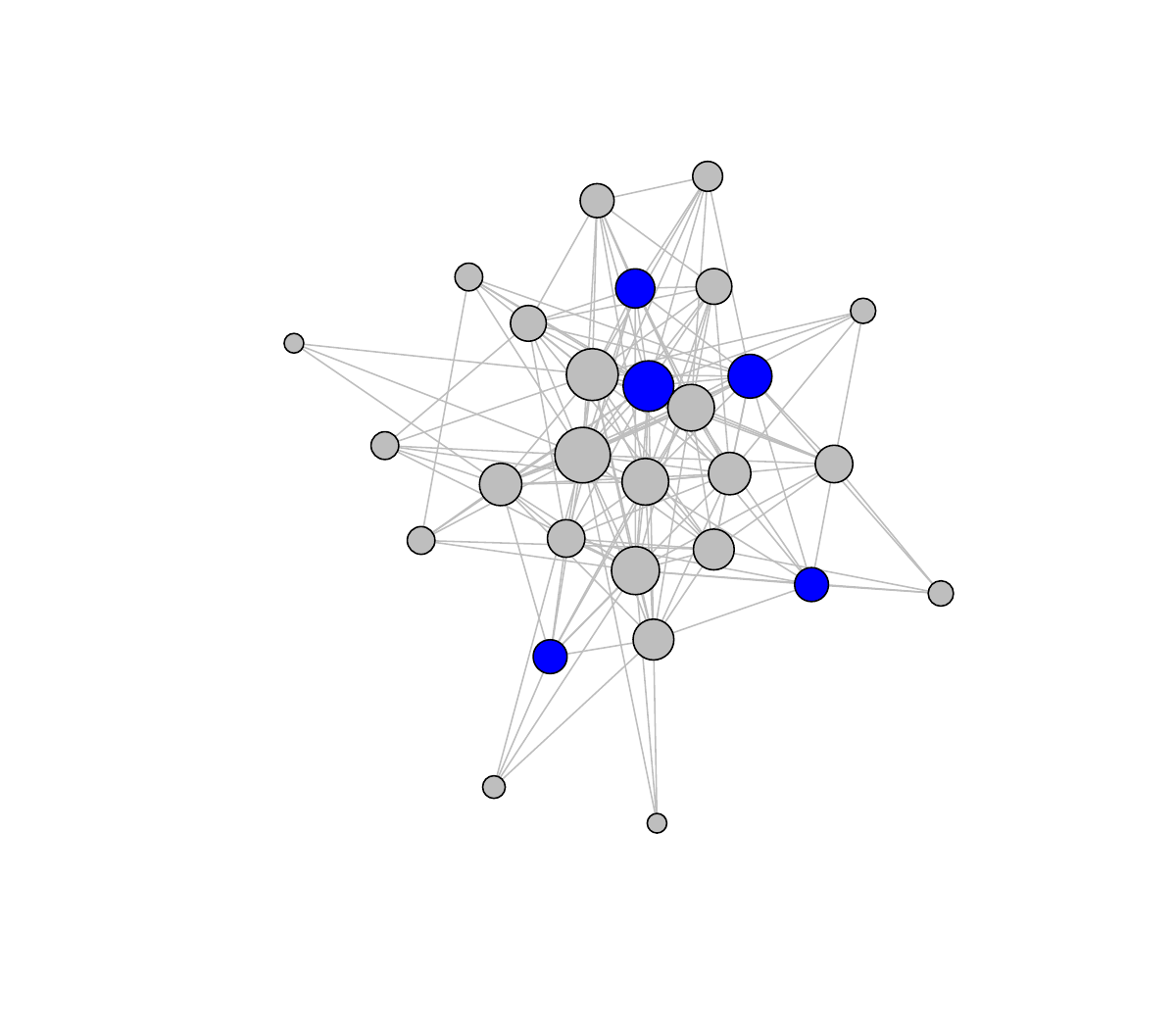}} &
   \raisebox{-0.5\height}{\includegraphics[scale=0.3]{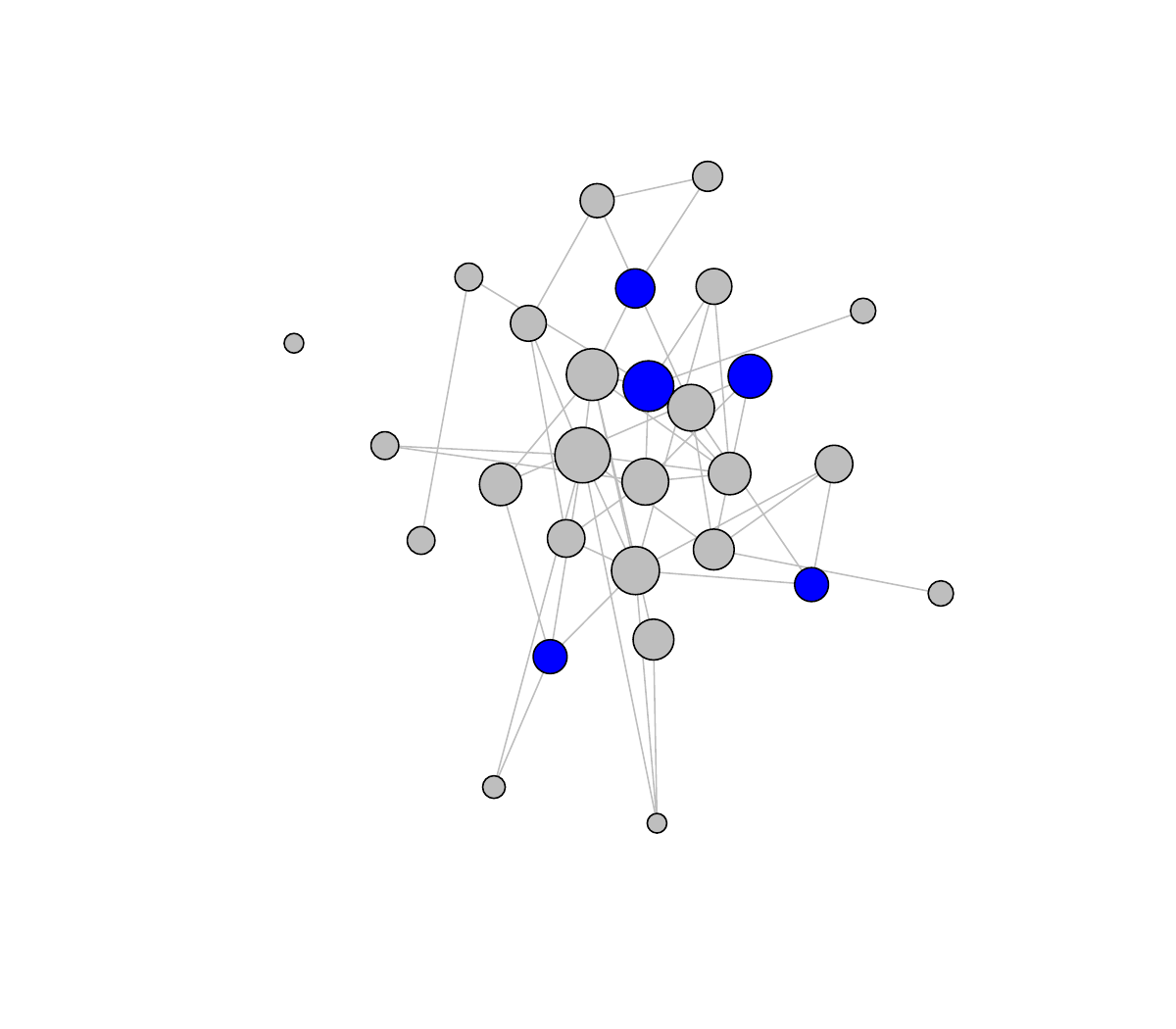}} 	
   \end{tabular}
   \end{center}
   \caption{Potential Outcomes with Different Networks for a Group}\label{po}
   \footnotesize\textbf{Note:} Blue nodes are treated units and gray nodes are control units.
   \end{figure}

   \subsection{Causal quantities of interest} 

   With a binary treatment each unit has two potential outcomes and the estimands are familiar: the 
   average treatment effect $ATE = \mathbbm{E}[Y_{ig}(1) - Y_{ig}(0)]$, the difference in average 
   potential outcomes under treatment and control, and the average effects on the treated and on the 
   controls. With network data the effective treatment is usually multi-valued. Since treatment 
   effects are differences in average potential outcomes, we start from the average potential outcome 
   at a level of the effective treatment, $\mu(t) = \mathbbm{E}[Y_{ig}(t)]$, also called the 
   \textit{average dose-response function} (ADRF), a term from medical statistics that the causal 
   inference literature has adopted for \rev{multi-valued and} continuous treatments. The effect 
   of effective treatment $t$ relative to $t'$ is
   \begin{equation}\label{ate}
   \tau(t, t') = \mathbbm{E}[Y_{ig}(t) - Y_{ig}(t')] = \mu(t) - \mu(t').    
   \end{equation}
   \rev{The expectation runs over all the randomness in NP-SEM~(\ref{all}): covariates, 
   treatment assignment, network and potential outcomes. Group sizes may vary, so the weighting 
   has to be stated. We take $\mu(t)$ to be the average potential outcome of a unit drawn at 
   random from all units, so that larger groups count more:
   \begin{equation}\label{mudef}
   \mu(t) = \frac{\mathbbm{E}\big[ \sum_{\{ig\} \in \mathcal{N}_g} Y_{ig}(t) \big]}{\mathbbm{E}[N_g]} 
   = \frac{\mathbbm{E}\big[ \sum_{\{ig\} \in \mathcal{N}_g} \mu(\{ig\}, t, \bm{X}_g) \big]}{\mathbbm{E}[N_g]}, 
   \qquad \mu(\{ig\}, t, \bm{X}_g) = \mathbbm{E}[Y_{ig}(t) \mid \bm{X}_g].
   \end{equation}
   This equals $\mathbbm{E}[Y_{ig}(t)]$ for any unit when group size is fixed, and it equals the 
   group-weighted mean $\mathbbm{E}[N_g^{-1} \sum_i Y_{ig}(t)]$ when group size is independent of the 
   conditional means. The conditional mean $\mu(\{ig\}, t, \bm{X}_g)$ may depend on the unit through 
   its own covariates and through contextual variables; where no confusion can arise we abbreviate it 
   to $m_t(\bm{X}_g)$. We call~(\ref{ate}) the \textit{population} effect, to distinguish it 
   from the network-conditional estimands of Section~\ref{sec:cond}.}

   Simple as it looks, the effect~(\ref{ate}) can mean many things depending on the exposure mapping. 
   Our interest is in direct and spillover effects of the treatment vector, and we use the 
   mapping~(\ref{epp}) to illustrate. The average potential outcome is then 
   $\mu(d, z) = \mathbbm{E}[Y_{ig}(d, z)]$, and the \textit{conditional direct effect} is
   \begin{equation}\label{ate_dir}
      \tau(z) = \mathbbm{E}[Y_{ig}(1, z) - Y_{ig}(0, z)] = \mu(1, z) - \mu(0, z), 
   \end{equation}
   the effect of own adoption $d$ with the level of interference $z$ held fixed. The 
   \textit{average direct effect} averages $\tau(z)$ over the distribution of $z$: 
   $\tau = \rev{\sum_z} \tau(z) f_Z(z)$\rev{, where $f_Z$ is the probability mass function of 
   $Z_{ig}$}. The \textit{conditional spillover effect} is
   \begin{equation}\label{ate_ind}
     \delta(z, z', d) = \mathbbm{E}[Y_{ig}(d, z) - Y_{ig}(d, z')] = \mu(d, z) - \mu(d, z'),  
   \end{equation}
   the difference in average potential outcomes between two levels of interference $z$ and $z'$ with 
   own treatment $d$ held fixed. When $z'$ is a reference level such as $z' = 0$, we drop it and write 
   $\delta(z, d) = \mu(d, z) - \mu(d, 0)$. The \textit{average spillover effect} is 
   $\delta(d) = \rev{\sum_z} \delta(z, d) f_Z(z)$.

   \subsection{Unconfoundedness of treatment assignment and network formation}

   When SUTVA holds, identification from observational data usually rests on conditional 
   unconfoundedness (ignorability), $\{Y_{ig}(0), Y_{ig}(1)\} \indep D_{ig} \mid X_{ig}$: given 
   individual confounders, potential outcomes are independent of treatment assignment. With network 
   data the potential outcomes depend on the effective treatment, a function of the treatment vector 
   and the network, so the assumption has to be restated. The following version allows the same 
   covariates to confound treatment assignment, network formation and potential outcomes:

   \begin{assumption}[Unconfoundedness of treatment assignment and network formation] \label{asm:ig}
   For each group, given the observed covariate matrix at group level, the potential outcomes of all 
   units in the group are jointly independent of the treatment assignment vector and the network. 
   I.e., $ \forall g \in \mathcal{G}$,
   \begin{equation}\label{ass2}
       \{Y_{ig} (t) : \{ig\} \in \mathcal{N}_g, t \in \mathcal{T}_g\} \indep \{\bm{D}_g, \bm{W}_g\} \mid \bm{X}_g.
   \end{equation}
   \end{assumption} 
   \rev{Assumption~\ref{asm:ig} is stated for the collection of potential outcomes in the group, 
   which is what NP-SEM~(\ref{all}) delivers when the outcome errors are independent of the 
   assignment errors. Identification (Propositions~\ref{prop:b}, \ref{prop:a} and~\ref{prop:idpos}) 
   uses only its unit-wise implication $Y_{ig}(t) \indep \{\bm{D}_g, \bm{W}_g\} \mid \bm{X}_g$ 
   (independence of the whole collection implies independence of each member); the joint version 
   is used in 
   Proposition~\ref{prop:eff}, in the proof of Proposition~\ref{ttt}, and in Appendix~\ref{app:single}.}
   Assumption~\ref{asm:ig} differs from the standard unconfoundedness assumption in that it 
   conditions on the covariates of all units in the group rather than on the unit's own covariates, 
   because $\bm{X}_g$ drives both the treatment vector and the network. Conditioning on $X_{ig}$ 
   alone does not remove the confounding between $\bm{D}_g$, $\bm{W}_g$ and the potential outcomes. 
   Since the effective treatment $T_{ig} = e(\{ig\}, \bm{D}_g, \bm{W}_g)$ is a function of $\bm{D}_g$ 
   and $\bm{W}_g$, an equivalent statement is $Y_{ig}(t) \indep T_{ig} \mid \bm{X}_g$ for 
   $t \in \mathcal{T}_g$. 
   \rev{Assumption~\ref{asm:ig} rules out unobserved variables that affect potential outcomes and 
   either treatment adoption or tie formation. It is a selection-on-observables assumption for the 
   \textit{joint} assignment. It is strictly stronger than the assumption maintained by approaches 
   that condition on the realized network, which we state next for comparison and discuss in 
   Section~\ref{sec:cond}.}

\begin{revblock}
   \begin{assumption}[Unconfoundedness of treatment given the realized network] \label{asm:ig2}
   $ \forall g \in \mathcal{G}, \{ig\} \in \mathcal{N}_g$ and $t \in \mathcal{T}_g$, 
   \begin{equation}\label{ass2b}
       Y_{ig} (t) \indep \bm{D}_g \mid \bm{X}_g, \bm{W}_g .
   \end{equation}
   \end{assumption} 
   Assumption~\ref{asm:ig2} is the condition used, in one form or another, by 
   \citet{forastiere2021identification}, \citet{jackson2020adjusting}, \citet{leung2022unconfoundedness} 
   and \citet{ogburn2024causal}. Assumption~\ref{asm:ig} implies Assumption~\ref{asm:ig2}: if 
   potential outcomes are independent of the pair $(\bm{D}_g, \bm{W}_g)$ given $\bm{X}_g$, they 
   are independent of $\bm{D}_g$ given $\bm{X}_g$ and $\bm{W}_g$ (the weak union property). The 
   converse fails: 
   Assumption~\ref{asm:ig2} lets unobserved node attributes shape both ties and potential outcomes, 
   provided they do not shape treatment adoption given $(\bm{X}_g, \bm{W}_g)$, because the realized 
   network then stands in for them. Section~\ref{sec:cond} spells out what one gives up, and what one 
   gains, by working under the stronger Assumption~\ref{asm:ig}.
\end{revblock}

   \section{A Generalized Propensity Score Based Approach}\label{sec:gps}

   The propensity score is the conditional probability of receiving a given treatment level given 
   observed confounders \citep{rosenbaum1983central}. When the treatment is multi-valued or 
   continuous, the same probability is called the generalized propensity score 
   (GPS, \citealp[]{hirano2004propensity}).

   This section uses propensity score methods to identify and estimate treatment effects from 
   social network data under Assumption~\ref{asm:ig}. Unlike existing approaches (e.g., 
   \citealp[]{sanchez2022network,forastiere2021identification}), which assume conditional 
   independence or exchangeability of treatment assignments, Assumption~\ref{asm:ig} allows 
   treatment assignments to be interdependent, because the exposure mapping takes as inputs the 
   treatment vector and the network, which all units in a group share.\footnote{Two things stop us 
   from modeling the assignment of the effective treatment directly: the covariate matrix 
   $\bm{X}_g$ it depends on is high dimensional, and the effective treatments of units in the same 
   group are interdependent because they share the network. We therefore model network formation 
   and treatment assignment and build the score from them (Section~\ref{sec:joint}).}
   
   \subsection{The generalized propensity score}

   The generalized propensity score of unit $\{ig\}$ for effective treatment $T_{ig} = t$, given the 
   covariate matrix, is
   \begin{equation}\label{gps}
      r(\{ig\}, t; \bm{X}_g) = Pr( T_{ig} = t | \bm{X}_g) = 
      Pr\left( e(\{ig\}, \bm{D}_g, \bm{W}_g) = t | \bm{X}_g\right). 
   \end{equation}
   We refer to it as the score $r$. It extends the usual propensity score, the probability that a 
   unit receives a given level of a binary treatment. In the binary case the score depends only on 
   the unit's own covariates, and units differ because their covariates differ. Here the score 
   depends on $\bm{X}_g$, which all units in group $g$ share, and unit-level heterogeneity enters 
   through the index $\{ig\}$. \rev{Approaches that condition on the realized network use instead 
   the \textit{network-conditional score} 
   $r^{c}(\{ig\}, t; \bm{X}_g, \bm{W}_g) = Pr(T_{ig} = t \mid \bm{X}_g, \bm{W}_g)$. The score 
   $r$ averages $r^{c}$ over the network given covariates, and we call it the \textit{marginal 
   score} when the contrast matters; Section~\ref{sec:cond} compares the two.}

   We impose the usual overlap condition.

   \begin{assumption}[Positivity of the generalized propensity score] \label{asm:p}
   $ \forall g \in \mathcal{G}, \{ig\} \in \mathcal{N}_g, 
   t \in \mathcal{T}_g,$ 
   \begin{equation}
   0 < r(\{ig\}, t; \bm{X}_g) < 1.  
   \end{equation}
   \end{assumption} 
 
Some levels of the effective treatment exist in one group but not in another: a unit in a group 
of four can have at most three treated neighbors, a unit in a group of ten up to nine. To keep 
groups comparable we restrict attention to levels $t$ with $t \in \mathcal{T}_g$ for all 
$g \in \mathcal{G}$\rev{, that is, levels that exist for every group size}.\footnote{An alternative 
is to restrict the analysis to the units for which positivity holds, as in 
\citet{forastiere2021identification}. \rev{Proposition~\ref{prop:idpos}(c) shows that under the 
marginal score this restriction depends on group size only, whereas under the network-conditional 
score it depends on each unit's realized degree.}}

   Given the score~(\ref{gps}) and Assumptions~\ref{asm:em} and~\ref{asm:ig}, we have the 
   following two results.

   \begin{proposition}[Balancing property of generalized propensity score]\label{prop:b}
   $ \forall g \in \mathcal{G}, \{ig\} \in \mathcal{N}_g, t \in \mathcal{T}_g,$
   \begin{equation}
   Pr\left(T_{ig} = t|\bm{X}_g, r(\{ig\}, t; \bm{X}_g)\right) = 
    Pr\left(T_{ig} = t| r(\{ig\}, t; \bm{X}_g)\right).     
   \end{equation}
   \end{proposition}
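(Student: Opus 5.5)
The plan is the standard Rosenbaum--Rubin argument, carried out with the indicator $\mathbbm{1}\{T_{ig}=t\}$ in place of the binary treatment, following \citet{hirano2004propensity}. Fix $g$, a unit $\{ig\}$ and a level $t\in\mathcal{T}_g$, and write $R = r(\{ig\}, t; \bm{X}_g)$ for the score evaluated at that fixed $t$. By definition~(\ref{gps}), $R$ is a measurable function of $\bm{X}_g$. The unit index $\{ig\}$ is not random, so it only enters through the functional form. Hence $\sigma(R) \subseteq \sigma(\bm{X}_g)$, and this nesting is all the argument needs.

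First I would compute the left-hand side. Because $R$ is $\bm{X}_g$-measurable, conditioning on $(\bm{X}_g, R)$ is the same as conditioning on $\bm{X}_g$. Therefore
\[
Pr(T_{ig}=t \mid \bm{X}_g, R) = Pr(T_{ig}=t\mid \bm{X}_g) = r(\{ig\},t;\bm{X}_g) = R .
\]
Second, I would compute the right-hand side by iterated expectations, using $\sigma(R)\subseteq\sigma(\bm{X}_g)$:
\[
Pr(T_{ig}=t\mid R) = \mathbbm{E}\big[\,\mathbbm{E}[\mathbbm{1}\{T_{ig}=t\}\mid \bm{X}_g]\,\big|\, R\big] = \mathbbm{E}[R\mid R] = R .
\]
The two sides coincide almost surely, which gives the claim.

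Notably, the argument uses neither Assumption~\ref{asm:ig} nor the structure of the exposure mapping. It is a purely probabilistic property of conditioning on a coarsening of $\bm{X}_g$, so the dependence of $T_{ig}$ on $(\bm{D}_g,\bm{W}_g)$, including the marginalization over the network, plays no role. The only point needing care is that the score is defined separately for each level $t$. The balancing statement is therefore the pointwise one, for that $t$ and its own score, not a claim about the whole vector $(r(\{ig\},t';\bm{X}_g))_{t'}$. I would remark that the stronger statement, with $T_{ig}$ independent of $\bm{X}_g$ given the full vector of scores, also follows by the same computation applied to each level.

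I expect no real obstacle. The only subtle step is measure-theoretic: the conditional probabilities are defined only almost surely, so the equality holds almost surely. Because the index $\{ig\}$ is a deterministic label, the score is a well-defined function of $\bm{X}_g$ for each unit even though $\bm{X}_g$ is shared within the group. Positivity (Assumption~\ref{asm:p}) is not needed for this proposition.
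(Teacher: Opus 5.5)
Your proof is correct and matches the paper's: both show the left side equals $r(\{ig\}, t; \bm{X}_g)$ because the score is $\bm{X}_g$-measurable, and the right side equals it by iterated expectations. The appendix then also verifies $Pr(T_{ig} = t \mid Y_{ig}(t), r(\{ig\}, t; \bm{X}_g)) = r(\{ig\}, t; \bm{X}_g)$ using Assumption~\ref{asm:ig}, which is the unconfoundedness-given-the-score consequence stated after the proposition, so your remark that the stated equality needs neither unconfoundedness nor positivity is accurate.
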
    
   Proposition~\ref{prop:b} says that conditioning on the score is enough to remove the 
   confounding of both treatment assignment and network formation. The balancing property 
   implies unconfoundedness given the score,
   \begin{equation}
   Y_{ig} (t) \indep \mathbbm{1}\{T_{ig}=t\} | r(\{ig\}, t; \bm{X}_g),  
   \end{equation}
   and, together with Assumption~\ref{asm:p}, identification of the average potential outcome.

   \begin{proposition}[Identification of the average potential outcome]\label{prop:a}
   $ \forall g \in \mathcal{G}, \{ig\} \in \mathcal{N}_g, t \in \mathcal{T}_g,$
   \begin{equation}\label{ind}
   \begin{gathered}
   \mathbbm{E}\Big\{ \frac{Y_{ig}\mathbbm{1}\{ T_{ig} = t \} }{ r(\{ig\}, t; \bm{X}_g)} \,\Big|\, \bm{X}_g \Big\} = 
   \mathbbm{E}\{Y_{ig}(t) \mid \bm{X}_g\}, 
   \qquad \text{and hence} \\
   \frac{\mathbbm{E}\big\{ \sum_{\{ig\} \in \mathcal{N}_g} Y_{ig}\mathbbm{1}\{ T_{ig} = t \} / r(\{ig\}, t; \bm{X}_g) \big\}}{\mathbbm{E}[N_g]} = \mu(t).    
   \end{gathered}
   \end{equation}
   \end{proposition}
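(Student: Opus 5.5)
The plan is to prove the conditional statement first and then obtain the population statement by summing within groups and taking outer expectations. Fix $g$, a unit $\{ig\}$ and a level $t \in \mathcal{T}_g$.

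By Assumption~\ref{asm:em} and consistency, on the event $\{T_{ig} = t\}$ the observed outcome equals $Y_{ig}(\bm{D}_g,\bm{W}_g) = Y_{ig}(t)$. Hence
\[
Y_{ig}\mathbbm{1}\{T_{ig}=t\} = Y_{ig}(t)\mathbbm{1}\{T_{ig}=t\}
\]
holds identically. The score $r(\{ig\}, t; \bm{X}_g)$ is a measurable function of $\bm{X}_g$, and Assumption~\ref{asm:p} makes it strictly positive, so it can be pulled out of the conditional expectation:
\[
\mathbbm{E}\Big\{ \frac{Y_{ig}\mathbbm{1}\{T_{ig}=t\}}{r(\{ig\},t;\bm{X}_g)} \,\Big|\, \bm{X}_g\Big\}
= \frac{\mathbbm{E}\{Y_{ig}(t)\mathbbm{1}\{T_{ig}=t\}\mid \bm{X}_g\}}{r(\{ig\},t;\bm{X}_g)} .
\]

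The key step is to factor the numerator. Because $T_{ig} = e(\{ig\},\bm{D}_g,\bm{W}_g)$ is a function of $(\bm{D}_g,\bm{W}_g)$, the unit-wise implication of Assumption~\ref{asm:ig}, $Y_{ig}(t)\indep\{\bm{D}_g,\bm{W}_g\}\mid\bm{X}_g$, gives $Y_{ig}(t)\indep \mathbbm{1}\{T_{ig}=t\}\mid \bm{X}_g$. Therefore the numerator equals
\[
\mathbbm{E}\{Y_{ig}(t)\mid\bm{X}_g\}\,Pr(T_{ig}=t\mid\bm{X}_g) = \mu(\{ig\},t,\bm{X}_g)\, r(\{ig\},t;\bm{X}_g),
\]
using definition~(\ref{gps}). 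The score cancels, which yields the first display in~(\ref{ind}).

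For the second display, I sum the first display over $\{ig\}\in\mathcal{N}_g$ conditional on $\bm{X}_g$. Since $N_g$ is determined by $\bm{X}_g$ (it is the number of rows), the index set is fixed given $\bm{X}_g$ and linearity applies. Taking the outer expectation by the law of iterated expectations gives
\[
\mathbbm{E}\Big\{\sum_{\{ig\} \in \mathcal{N}_g} Y_{ig}\mathbbm{1}\{T_{ig}=t\}/r(\{ig\},t;\bm{X}_g)\Big\} = \mathbbm{E}\Big\{\sum_{\{ig\} \in \mathcal{N}_g}\mu(\{ig\},t,\bm{X}_g)\Big\}.
\]
Dividing by $\mathbbm{E}[N_g]$ and invoking the second equality in~(\ref{mudef}) gives $\mu(t)$.

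The main obstacle is technical rather than conceptual. It is the integrability needed for the conditional expectations and for interchanging sum and expectation when $N_g$ is random. I would impose $\mathbbm{E}|Y_{ig}(t)|<\infty$ and $\mathbbm{E}[N_g]<\infty$, and handle the random sum by conditioning on $N_g$ (or on $\bm{X}_g$) and applying Tonelli to the positive and negative parts. A second point to state explicitly is that $t\in\mathcal{T}_g$ for every group, so that every term is well defined. Positivity is needed only on the lower side, $r>0$, for division.
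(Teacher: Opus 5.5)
Your proof is correct and is essentially the paper's argument. The paper first writes $\mathbbm{E}\{Y_{ig}\mathbbm{1}\{T_{ig}=t\}/r \mid \bm{X}_g\}$ as $\mathbbm{E}\{Y_{ig}\mid T_{ig}=t,\bm{X}_g\}\,Pr(T_{ig}=t\mid\bm{X}_g)/r$, uses consistency and Assumption~\ref{asm:ig} to replace the conditional mean by $\mathbbm{E}\{Y_{ig}(t)\mid\bm{X}_g\}$, and then sums over the group and divides by $\mathbbm{E}[N_g]$; this is your factorization in a different order, and your remarks that $N_g$ is a function of $\bm{X}_g$ and on integrability fill in details the paper leaves implicit.
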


\begin{revblock}
   \subsection{To condition on the network or not: estimands, assumptions, and trade-offs}\label{sec:cond}

   The literature on interference in observational studies defines effects conditional on the 
   realized network; our estimand~(\ref{ate}) does not. This subsection makes the difference 
   precise, because it is the source of both the benefits and the costs of the proposed approach.

   \paragraph{Three estimands.} Think of $(\bm{D}_g, \bm{W}_g)$ as a joint assignment with two 
   components and of $T_{ig}$ as the dose it delivers. With a binary treatment, the average 
   treatment effect leaves the assignment unconditioned and the effect on the treated conditions 
   on its realization. With a two-component assignment there is an intermediate option: condition 
   on the network component only. We distinguish
   \begin{align}
     \tau(t, t') &= \mathbbm{E}[Y_{ig}(t) - Y_{ig}(t')] 
        && \text{(population effect)}, \label{est:pop} \\
     \tau^{S}_g(t, t') &= \frac{1}{N_g} \sum_{\{ig\} \in \mathcal{N}_g} 
        \mathbbm{E}[Y_{ig}(t) - Y_{ig}(t') \mid \bm{X}_g, \bm{W}_g] 
        && \text{(realized-sample effect)}, \label{est:sample}\\
     \tau(t, t' \mid \mathcal{A}) &= \mathbbm{E}[Y_{ig}(t) - Y_{ig}(t') \mid \mathcal{A}_{ig}(t, t')] 
        && \text{(feasible-population effect)}, \label{est:net}
   \end{align}
   where $\mathcal{A}_{ig}(t, t') = \{ r^{c}(\{ig\}, t; \bm{X}_g, \bm{W}_g) \, r^{c}(\{ig\}, t'; \bm{X}_g, \bm{W}_g) > 0 \}$ 
   is the event that unit $\{ig\}$ can reach both exposure levels given its realized ties (we 
   suppress its dependence on $(i, g, t, t')$ in the notation $\tau(t, t' \mid \mathcal{A})$). For 
   the exposure mapping~(\ref{epp}), 
   $\mathcal{A}_{ig}((d, z), (d', z')) = \{\deg_{ig}(\bm{W}_g) \geq \max(z, z')\}$ when treatment 
   probabilities lie in $(0, 1)$: a unit with one tie cannot have two treated neighbors.

   The realized-sample effect~(\ref{est:sample}) is what fixed-network approaches target. They 
   take the units, their covariates and their ties as given 
   \citep{forastiere2021identification, leung2022causal}, and design-based approaches such as 
   \citet{aronow2017estimating} fix the potential outcomes as well, targeting 
   $N_g^{-1} \sum_i [Y_{ig}(t) - Y_{ig}(t')]$. The realized-sample effect is the sample-average 
   analogue of the population effect~(\ref{est:pop}). The feasible-population effect~(\ref{est:net}) 
   is what those approaches can identify when some units cannot reach an exposure level, as in 
   \citet{forastiere2021identification}, who restrict attention to units for which positivity 
   holds. It conditions on a realized component of the assignment, having enough ties, and in this 
   sense it is the analogue of the effect on the treated.

   One can go one step further and condition on the realized exposure itself. We call
   \begin{equation}\label{est:exp}
   \tau^{e}(t, t') = \mathbbm{E}[Y_{ig}(t) - Y_{ig}(t') \mid T_{ig} = t]
   \end{equation}
   the \textit{effect on the exposed}; it is the exact analogue of the effect on the treated. Under 
   Assumption~\ref{asm:ig} it is identified by weighting units with $T_{ig} = t'$ by the odds 
   $r(\{ig\}, t; \bm{X}_g) / r(\{ig\}, t'; \bm{X}_g)$, as the effect on the treated is identified by 
   odds weights. It is the estimand we use for the direct effects in the application, where 
   adoption is rare. A fourth object, 
   $\mathbbm{E}[Y_{ig}(t) - Y_{ig}(t') \mid \bm{W}_g = \bm{w}_g]$, conditions on the network but 
   integrates over covariates. No method targets it, but it helps intuition, since it is the 
   effect among units ``as connected in $\bm{w}_g$,'' and part (c) of the next proposition covers it.

   \paragraph{How the estimands are related.} Let 
   $\tau(t, t' \mid \bm{X}_g) = \mathbbm{E}[Y_{ig}(t) - Y_{ig}(t') \mid \bm{X}_g] = 
   \mu(\{ig\}, t, \bm{X}_g) - \mu(\{ig\}, t', \bm{X}_g)$ be the conditional average effect given 
   covariates, which may vary across the units of a group.

   \begin{proposition}[Realized-sample, feasible-population and population effects]\label{prop:dec}
   Under Assumptions~\ref{asm:em} and~\ref{asm:ig}, for all $g$, $\{ig\} \in \mathcal{N}_g$ and 
   $t, t' \in \mathcal{T}_g$,
   \begin{enumerate}[label=(\alph*)]
   \item $\tau^{S}_g(t, t') = N_g^{-1} \sum_{\{ig\} \in \mathcal{N}_g} \tau(t, t' \mid \bm{X}_g)$. It does 
   not depend on $\bm{W}_g$, and $\mathbbm{E}[N_g \tau^{S}_g(t, t')] / \mathbbm{E}[N_g] = \tau(t, t')$; 
   in particular $\mathbbm{E}[\tau^{S}_g(t, t')] = \tau(t, t')$ when group size is constant or 
   independent of $\tau^{S}_g(t, t')$.
   \item $\tau(t, t' \mid \mathcal{A}) = \mathbbm{E}[\tau(t, t' \mid \bm{X}_g) \mid \mathcal{A}_{ig}(t, t')]$ 
   whenever $Pr(\mathcal{A}_{ig}(t, t')) > 0$. Hence $\tau(t, t' \mid \mathcal{A}) = \tau(t, t')$ if 
   either (i) $\mathbbm{1}\{\mathcal{A}_{ig}(t, t')\} \indep \bm{X}_g$, or (ii) $\tau(t, t' \mid \bm{X}_g)$ 
   does not depend on $\bm{X}_g$.
   \item $\mathbbm{E}[Y_{ig}(t) - Y_{ig}(t') \mid \bm{W}_g = \bm{w}_g] = 
   \mathbbm{E}[\tau(t, t' \mid \bm{X}_g) \mid \bm{W}_g = \bm{w}_g]$ for every $\bm{w}_g$ with positive 
   probability; it equals $\tau(t, t')$ for all $\bm{w}_g$ if $\bm{W}_g \indep \bm{X}_g$ or under (ii).
   \end{enumerate}
   \end{proposition}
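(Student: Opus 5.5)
The whole proposition rests on one consequence of Assumption~\ref{asm:ig}. For any $\sigma(\bm{D}_g,\bm{W}_g)$-measurable event or variable $B$, we have $\mathbbm{E}[Y_{ig}(t)\mid \bm{X}_g, B] = \mathbbm{E}[Y_{ig}(t)\mid\bm{X}_g] = \mu(\{ig\},t,\bm{X}_g)$. This follows from the unit-wise implication $Y_{ig}(t)\indep(\bm{D}_g,\bm{W}_g)\mid\bm{X}_g$ together with the fact that conditional independence passes to measurable functions of $(\bm{D}_g,\bm{W}_g)$. I would state it once as a preliminary step. Each part then combines it with the tower property, conditioning first on $\bm{X}_g$ and the relevant network-measurable object, and then averaging.

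For (a), apply the preliminary step with $B=\bm{W}_g$. This gives $\mathbbm{E}[Y_{ig}(t)-Y_{ig}(t')\mid\bm{X}_g,\bm{W}_g]=\tau(t,t'\mid\bm{X}_g)$, and averaging over $i$ yields the displayed formula, which is free of $\bm{W}_g$. For the second claim, write $N_g\tau^S_g=\sum_i\tau(t,t'\mid\bm{X}_g)$. Take expectations, and use (\ref{mudef}) with $\mu(\{ig\},t,\bm{X}_g)$ to identify $\mathbbm{E}[\sum_i \mu(\{ig\},t,\bm{X}_g)]/\mathbbm{E}[N_g]=\mu(t)$; the same holds for $t'$. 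One point needs care. $N_g$ is random, so I treat $N_g$ as part of $\bm{X}_g$ (its row count), which keeps the sum measurable with respect to the conditioning. The "in particular" follows because, when $N_g$ is constant or independent of $\tau^S_g$, $\mathbbm{E}[N_g\tau^S_g]=\mathbbm{E}[N_g]\mathbbm{E}[\tau^S_g]$.

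For (b), observe that $\mathcal{A}_{ig}(t,t')$ is determined by $r^c(\cdot;\bm{X}_g,\bm{W}_g)$, so it is $\sigma(\bm{X}_g,\bm{W}_g)$-measurable. Given $\Pr(\mathcal{A})>0$, write $\tau(t,t'\mid\mathcal{A})=\mathbbm{E}[(Y_{ig}(t)-Y_{ig}(t'))\mathbbm{1}_{\mathcal{A}}]/\Pr(\mathcal{A})$. Next, condition the numerator on $(\bm{X}_g,\bm{W}_g)$: the indicator factors out, and the preliminary step replaces the inner expectation with $\tau(t,t'\mid\bm{X}_g)$. This gives $\mathbbm{E}[\tau(t,t'\mid\bm{X}_g)\mathbbm{1}_{\mathcal{A}}]/\Pr(\mathcal{A})$, which is the claimed expression.
\begin{itemize}
\item Under (i), condition on $\bm{X}_g$ instead. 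Independence gives $\mathbbm{E}[\mathbbm{1}_{\mathcal{A}}\mid\bm{X}_g]=\Pr(\mathcal{A})$, so the ratio collapses to $\mathbbm{E}[\tau(t,t'\mid\bm{X}_g)]$.
\item Under (ii), $\tau(t,t'\mid\bm{X}_g)$ is constant and factors out directly.
\end{itemize}
In both cases the remaining step is to check that $\mathbbm{E}[\tau(t,t'\mid\bm{X}_g)]$ equals the population effect $\tau(t,t')$. Under the unit-level reading of $\mathbbm{E}[Y_{ig}(t)]$ in (\ref{ate}) this is just iterated expectations. I expect the main obstacle to be reconciling this with the size-weighted definition (\ref{mudef}) when group sizes vary. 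I would handle it as in (a): read the expectation as over a unit drawn at random, so that $\mathbbm{E}[\tau(t,t'\mid\bm{X}_g)]$ carries the $N_g$-weighting, and state that convention explicitly.

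For (c), the argument is the same as (b) with the event $\{\bm{W}_g=\bm{w}_g\}$, which has positive probability. Condition on $(\bm{X}_g,\bm{W}_g)$ inside, then apply the tower property to get $\mathbbm{E}[\tau(t,t'\mid\bm{X}_g)\mid\bm{W}_g=\bm{w}_g]$. If $\bm{W}_g\indep\bm{X}_g$, the conditioning on $\bm{W}_g$ drops out, leaving the unconditional mean $\tau(t,t')$. Under (ii) the integrand is constant, so the result is immediate.
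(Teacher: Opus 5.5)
Your proposal is correct and follows the paper's proof step for step. The paper also starts from the decomposition $Y_{ig}(t) \indep \bm{W}_g \mid \bm{X}_g$ implied by Assumption~\ref{asm:ig}. It substitutes this into~(\ref{est:sample}) and uses~(\ref{mudef}) for (a), and applies iterated expectations through $(\bm{X}_g, \bm{W}_g)$ with the events $\mathcal{A}_{ig}(t,t')$ and $\{\bm{W}_g = \bm{w}_g\}$ for (b) and (c). Your one addition is the explicit convention that makes $\mathbbm{E}[\tau(t, t' \mid \bm{X}_g)] = \tau(t,t')$ when group sizes vary, namely reading~(\ref{ate}) as the effect for a unit drawn at random; the paper's proof uses this tacitly, and stating it is the right fix.
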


   Proposition~\ref{prop:dec}, proved in Appendix~\ref{app:dec}, makes two points that are easily 
   run together. By part (a), under Assumption~\ref{asm:ig} the realized-sample effect 
   differs from the population effect only as a sample average differs from a population 
   average: once covariates are fixed, the realized ties carry no information about potential 
   outcomes, and homophily plays no role. Homophily enters in part (b). When the 
   network-conditional approach has to restrict attention to units with enough ties, the 
   estimand it identifies is an average of covariate-specific effects over the well-connected 
   sub-population, just as the effect on the treated averages over the covariate distribution of 
   the treated. 

   In the motivating example, residents with at least two ties are older, more often male and 
   more often leaders than the others. If the effect of adoption varies with these 
   characteristics, the feasible-population direct effect at $z = 2$ is an effect for that 
   sub-population and differs from~(\ref{est:pop}) by a composition term. Part (c) is the 
   familiar fact that conditioning on a variable that predicts the assignment but not the outcome 
   reweights the average effect by that variable; it coincides with the population effect only 
   when ties are independent of covariates, that is, without observed homophily. Which of the 
   population and the feasible-population effects is the quantity of interest depends on the 
   question: what the program did among the residents who could have been exposed, or what it 
   would do for a resident drawn from villages whose ties have not been surveyed 
   \citep{li2022random, leung2022causal}.

   \paragraph{Identification and positivity.} Conditioning on the network makes the score zero 
   for every unit whose realized degree is below the exposure level; averaging over the network 
   does not. The next proposition records the consequences for what each score identifies.

   \begin{proposition}[Identification with and without conditioning on the network]\label{prop:idpos}
   Suppose Assumption~\ref{asm:em} holds.
   \begin{enumerate}[label=(\alph*)]
   \item Under Assumption~\ref{asm:ig2}, $\mathbbm{E}[Y_{ig}(t) \mid \bm{X}_g, \bm{W}_g] = 
   \mathbbm{E}[Y_{ig} \mid T_{ig} = t, \bm{X}_g, \bm{W}_g]$ whenever 
   $r^{c}(\{ig\}, t; \bm{X}_g, \bm{W}_g) > 0$. Hence $\tau^{S}_g(t, t')$ is identified if every unit in 
   the group has $\mathcal{A}_{ig}(t, t')$; otherwise what is identified is the feasible-population 
   effect $\tau(t, t' \mid \mathcal{A})$, and more generally $\mathbbm{E}[Y_{ig}(t) \mid r^{c}(\{ig\}, t; \bm{X}_g, \bm{W}_g) > 0]$, 
   through $\mathbbm{E}[Y_{ig} \mathbbm{1}\{T_{ig} = t\} / r^{c} \mid r^{c} > 0]$.
   \item Under Assumption~\ref{asm:ig}, which implies Assumption~\ref{asm:ig2}, the marginal score 
   identifies the population average potential outcome, 
   $\mathbbm{E}[Y_{ig} \mathbbm{1}\{T_{ig} = t\} / r(\{ig\}, t; \bm{X}_g)] = \mu(t)$ 
   (Proposition~\ref{prop:a}), and the network-conditional score identifies 
   $\mathbbm{E}[Y_{ig}(t) \mid r^{c} > 0]$, which in general equals $\mu(t)$ only if $r^{c} > 0$ almost surely.
   \item For the exposure mapping~(\ref{epp}), 
   $r^{c}(\{ig\}, (d, z); \bm{X}_g, \bm{W}_g) = 0$ whenever $z > \deg_{ig}(\bm{W}_g)$, so 
   $r^{c} > 0$ almost surely fails for every $z \geq 1$ under any dyadic model with link probabilities 
   below one. In contrast, if the joint assignment has full support given covariates, i.e., 
   $Pr(\bm{D}_g = \bm{d}_g, \bm{W}_g = \bm{w}_g \mid \bm{X}_g) > 0$ for all $(\bm{d}_g, \bm{w}_g)$, 
   then $r(\{ig\}, (d, z); \bm{X}_g) > 0$ for every $d \in \{0, 1\}$ and $z \in \{0, \ldots, N_g - 1\}$.
   \end{enumerate}
   \end{proposition}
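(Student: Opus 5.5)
The proof treats the three parts in order. Parts (a) and (b) are standard propensity-score identification arguments with $(\bm{X}_g,\bm{W}_g)$ or $\bm{X}_g$ as the conditioning set. Part (c) is a support calculation for the exposure mapping~(\ref{epp}).

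For part (a), fix $(\bm{x}_g,\bm{w}_g)$ with $r^{c}(\{ig\},t;\bm{x}_g,\bm{w}_g)>0$, so that the event $\{T_{ig}=t\}$ has positive conditional probability. By consistency and Assumption~\ref{asm:em}, on $\{T_{ig}=t\}$ we have $Y_{ig}=Y_{ig}(t)$. Hence
\[
\mathbbm{E}[Y_{ig}\mid T_{ig}=t,\bm{X}_g,\bm{W}_g]=\mathbbm{E}[Y_{ig}(t)\mid T_{ig}=t,\bm{X}_g,\bm{W}_g].
\]
Given $\bm{W}_g$, $T_{ig}=e(\{ig\},\bm{D}_g,\bm{W}_g)$ is a function of $\bm{D}_g$ alone. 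Assumption~\ref{asm:ig2} therefore lets us drop the conditioning on $T_{ig}=t$.

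The weighted version follows the same way. On $\{r^{c}>0\}$,
\[
\mathbbm{E}[Y_{ig}\mathbbm{1}\{T_{ig}=t\}/r^{c}\mid\bm{X}_g,\bm{W}_g]=r^{c}\cdot\mathbbm{E}[Y_{ig}(t)\mid\bm{X}_g,\bm{W}_g]/r^{c}.
\]
Since $\{r^{c}>0\}$ is $(\bm{X}_g,\bm{W}_g)$-measurable, iterated expectations over $(\bm{X}_g,\bm{W}_g)$ restricted to that event give $\mathbbm{E}[Y_{ig}(t)\mid r^{c}>0]$. Applying this to $t$ and $t'$ on the intersection event $\mathcal{A}_{ig}(t,t')$, which is also $(\bm{X}_g,\bm{W}_g)$-measurable, identifies $\tau(t,t'\mid\mathcal{A})$. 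If every unit in the group satisfies $\mathcal{A}_{ig}$, the conditional means are identified for all units. Averaging them over $i$ gives $\tau^{S}_g$.

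For part (b), the implication Assumption~\ref{asm:ig} $\Rightarrow$ Assumption~\ref{asm:ig2} is the weak-union remark already made after Assumption~\ref{asm:ig2}. The marginal-score claim is Proposition~\ref{prop:a}. For the network-conditional score, part (a) applies because Assumption~\ref{asm:ig2} holds, so the score identifies $\mathbbm{E}[Y_{ig}(t)\mid r^{c}>0]$. When $Pr(r^{c}>0)=1$, conditioning on an almost-sure event is vacuous and the target equals $\mu(t)$. Otherwise it is an average over a covariate- and network-selected subpopulation. I will note this after Proposition~\ref{prop:dec}(b) and not claim more than ``in general.''

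For part (c), first consider $z>\deg_{ig}(\bm{w}_g)$. In that case $Z_{ig}=\sum_j w_{ijg}D_{jg}\le\deg_{ig}(\bm{w}_g)<z$ for every $\bm{d}_g$, so $r^{c}=0$. Now suppose $z\ge1$ and the dyadic model gives conditionally independent ties with probabilities strictly below one. Then
\[
Pr(\deg_{ig}(\bm{W}_g)=0\mid\bm{X}_g)=\prod_{j\neq i}(1-p_{ij})>0,
\]
so $\{r^{c}=0\}$ has positive probability. This positivity step is the only place the dyadic structure is used. I expect it to be the main subtlety: with dependent ties one would instead need an assumption that the empty neighborhood has positive probability, so I will state the independence explicitly.

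For the marginal score, under full support I exhibit one configuration delivering $T_{ig}=(d,z)$ for any $z\le N_g-1$. Take $\bm{w}_g$ linking $\{ig\}$ to exactly $z$ other units, and $\bm{d}_g$ setting $D_{ig}=d$, treating those $z$ neighbors and leaving everyone else untreated. This pair has positive probability, hence $r(\{ig\},(d,z);\bm{X}_g)>0$.
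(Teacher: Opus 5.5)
Your proposal is correct and follows the paper's proof essentially step for step. Part (a) uses consistency plus Assumption~\ref{asm:ig2}, with $T_{ig}$ a function of $\bm{D}_g$ given $\bm{W}_g$; part (b) uses weak union and Proposition~\ref{prop:a}; part (c) uses the bound $Z_{ig} \leq \deg_{ig}(\bm{W}_g)$ and a single positive-probability configuration. The one step you make explicit that the paper only asserts is $Pr(\deg_{ig}(\bm{W}_g) = 0 \mid \bm{X}_g) = \prod_{j \neq i}(1 - p_{ijg}) > 0$; with latent $\bm{U}_g$ the same conclusion follows by integrating that product over $F_{\bm{U}_g \mid \bm{X}_g}$.
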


   Part (a) is the identification result of \citet{forastiere2021identification}; identification is 
   meant in the population-of-groups sense, since with continuous covariates the conditional 
   expectations cannot be recovered from a single group. Part (b) records that under 
   Assumption~\ref{asm:ig} both scores are valid weights, but for different estimands unless the 
   network-conditional score is positive for everyone. 

   Part (c) is where the two routes part ways in practice. Under the network-conditional approach, 
   the effect of having $z$ treated neighbors is undefined for units with fewer than $z$ 
   neighbors. The estimand is restricted by realized degree, a post-assignment variable, and the 
   restriction grows with $z$. Under the marginal approach the requirement is that every covariate 
   profile reach $(d, z)$ with positive probability (Assumption~\ref{asm:p}), a statement about the 
   joint assignment given covariates and not about realized ties. Concretely, under the dyadic 
   model of Section~\ref{sec:joint} both scores are probabilities that $z$ of a unit's neighbors 
   are treated; the network-conditional score counts over the realized neighbors, the marginal 
   score over all other units in the group weighted by their link probabilities. 

   How does the marginal approach handle a unit whose realized degree is below $z$? That unit 
   contributes nothing to the weighting part of the AIPW estimator at level $(d, z)$, since its 
   exposure indicator is zero. Its potential outcome at that level is recovered from units with 
   similar covariates who did reach the level, through the outcome model and those units' weights. 
   The formation model weights the units that reached the level correctly; it does not 
   extrapolate. The extrapolation runs through covariates and the outcome model, as in any 
   weighting estimator for a unit whose realized exposure is uncommon given its covariates, and 
   Assumption~\ref{asm:ig} licenses it: the ties a unit did not form carry no information about 
   its potential outcomes given covariates. Large weights belong to units that reached an exposure 
   level that is rare given their covariates, the usual overlap problem. Assumption~\ref{asm:mom}(b) 
   below and the trimming in Section~\ref{ea} address it, and trimming brings back a 
   sub-population restriction of a different kind, by covariate profile rather than by realized 
   degree \citep{crump2009dealing}. Full support holds, for instance, under Assumption~\ref{asm:ce} 
   with link and treatment probabilities in $(0, 1)$.

   \paragraph{Efficiency.} Under Assumption~\ref{asm:ig}, $\bm{W}_g$ affects the exposure $T_{ig}$ but 
   not the potential outcomes once $\bm{X}_g$ is held fixed. The propensity score literature has long 
   known that including such a variable in the score inflates the variance of weighting estimators 
   \citep{hahn1998role, hahn2004functional, brookhart2006variable, bhattacharya2007instrumental}. 
   The same logic applies to the network, provided the two estimators target the same estimand, 
   which by Proposition~\ref{prop:idpos} requires exposure levels that every unit can reach. Let 
   $m_t(\bm{X}_g) = \mathbbm{E}[Y_{ig}(t) \mid \bm{X}_g]$ and 
   $\sigma^2_t(\bm{X}_g) = \text{var}(Y_{ig}(t) \mid \bm{X}_g)$. The two augmented inverse probability 
   weighting estimators of $\tau(t, t')$ introduced in Section~\ref{sec:est} differ only in which 
   score sits in the denominator of their influence functions,
   \begin{equation}\label{inf2}
   \begin{aligned}
   \varphi^{M}_{ig} &= \Big[\frac{\mathbbm{1}\{T_{ig} = t\}(Y_{ig} - m_t(\bm{X}_g))}{r(\{ig\}, t; \bm{X}_g)} + m_t(\bm{X}_g)\Big] 
   - \Big[\frac{\mathbbm{1}\{T_{ig} = t'\}(Y_{ig} - m_{t'}(\bm{X}_g))}{r(\{ig\}, t'; \bm{X}_g)} + m_{t'}(\bm{X}_g)\Big] - \tau(t,t'), \\
   \varphi^{C}_{ig} &= \text{the same with } r^{c}(\{ig\}, \cdot; \bm{X}_g, \bm{W}_g) \text{ in place of } r(\{ig\}, \cdot; \bm{X}_g).
   \end{aligned}
   \end{equation}

   \begin{proposition}[Efficiency of marginalizing over the network]\label{prop:eff}
   Suppose Assumptions~\ref{asm:em} and~\ref{asm:ig} hold, potential outcomes are independent across 
   units within a group given $\bm{X}_g$, i.e., $Y_{ig}(\cdot) \indep Y_{jg}(\cdot) \mid \bm{X}_g$ for 
   $i \neq j$, and for $s \in \{t, t'\}$ both $r(\{ig\}, s; \bm{X}_g)$ and 
   $r^{c}(\{ig\}, s; \bm{X}_g, \bm{W}_g)$ are bounded away from zero almost surely. Then 
   \begin{equation}\label{eff}
   \begin{aligned}
   &\text{var}\Big( \sum_{\{ig\} \in \mathcal{N}_g} \varphi^{C}_{ig} \Big) - 
   \text{var}\Big( \sum_{\{ig\} \in \mathcal{N}_g} \varphi^{M}_{ig} \Big) \\
   &\qquad = \sum_{s \in \{t, t'\}} \sum_{\{ig\} \in \mathcal{N}_g} 
   \mathbbm{E}\left[ \sigma_s^2(\bm{X}_g) \left\{ 
   \mathbbm{E}\left[ \frac{1}{r^{c}(\{ig\}, s; \bm{X}_g, \bm{W}_g)} \,\Big|\, \bm{X}_g \right] 
   - \frac{1}{r(\{ig\}, s; \bm{X}_g)} \right\} \right] \geq 0,
   \end{aligned}
   \end{equation}
   with strict inequality whenever $\sigma_s^2(\bm{X}_g) > 0$ and $r^{c}(\{ig\}, s; \bm{X}_g, \bm{W}_g)$ 
   varies with $\bm{W}_g$ given $\bm{X}_g$ on a set of positive probability.
   \end{proposition}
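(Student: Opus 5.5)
The plan is to split each influence function into a part common to both estimators and a weighted-residual part, and then compute the variance of the group sum by conditioning first on $(\bm{X}_g,\bm{W}_g,\bm{D}_g)$ and then on $(\bm{X}_g,\bm{W}_g)$. Write $\varepsilon_{ig}(s) = Y_{ig}(s) - m_s(\bm{X}_g)$ for $s\in\{t,t'\}$. On $\{T_{ig}=s\}$, consistency and Assumption~\ref{asm:em} give $Y_{ig} = Y_{ig}(s)$, so for $\rho\in\{r, r^c\}$
\begin{equation*}
\varphi_{ig} = A^{\rho}_{ig} + B_{ig}, \qquad
A^{\rho}_{ig} = \frac{\mathbbm{1}\{T_{ig}=t\}\varepsilon_{ig}(t)}{\rho(\{ig\},t)} - \frac{\mathbbm{1}\{T_{ig}=t'\}\varepsilon_{ig}(t')}{\rho(\{ig\},t')}, \qquad
B_{ig} = m_t(\bm{X}_g)-m_{t'}(\bm{X}_g)-\tau(t,t').
\end{equation*}
The term $B_{ig}$ is identical for $\varphi^M$ and $\varphi^C$. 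Hence $\text{var}(\sum_i B_{ig})$ cancels in the difference. It remains to show that the cross covariance vanishes and to compute $\text{var}(\sum_i A^\rho_{ig})$.

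The key input is the joint form of Assumption~\ref{asm:ig}: the collection $\{\varepsilon_{ig}(s)\}_i$ is independent of $(\bm{D}_g,\bm{W}_g)$ given $\bm{X}_g$, and has conditional mean zero. Since $T_{ig}$ and both scores are functions of $(\bm{X}_g,\bm{D}_g,\bm{W}_g)$, we get $\mathbbm{E}[A^\rho_{ig}\mid \bm{X}_g,\bm{D}_g,\bm{W}_g]=0$. Because $B_{jg}$ depends on $\bm{X}_g$ only, $\text{cov}(\sum A^\rho,\sum B)=0$ and $\mathbbm{E}[\sum A^\rho]=0$. The bounded-away-from-zero condition makes all these expectations finite.

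For $i\neq j$, conditioning on $(\bm{X}_g,\bm{D}_g,\bm{W}_g)$ leaves the product of the weights times $\mathbbm{E}[\varepsilon_{ig}(s)\varepsilon_{jg}(s')\mid\bm{X}_g]$. This factors as $\mathbbm{E}[\varepsilon_{ig}(s)\mid\bm{X}_g]\,\mathbbm{E}[\varepsilon_{jg}(s')\mid\bm{X}_g]=0$ by the assumed within-group conditional independence of potential outcomes, so all off-diagonal terms vanish. Within a unit, the $t$ and $t'$ terms multiply to zero because $t\neq t'$ makes the indicators disjoint. What remains is
\begin{equation*}
\text{var}\Big(\sum_i A^\rho_{ig}\Big) = \sum_{s}\sum_i \mathbbm{E}\Big[\frac{\mathbbm{1}\{T_{ig}=s\}\,\varepsilon_{ig}(s)^2}{\rho(\{ig\},s)^2}\Big].
\end{equation*}

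Each diagonal term is evaluated by conditioning on $(\bm{X}_g,\bm{W}_g)$. The $\varepsilon^2$ factor contributes $\sigma^2_s(\bm{X}_g)$ by independence, and the indicator contributes $r^c(\{ig\},s;\bm{X}_g,\bm{W}_g)$. The term therefore equals $\mathbbm{E}[\sigma^2_s\, r^c/\rho^2]$.
\begin{itemize}
\item For $\rho=r^c$ this is $\mathbbm{E}[\sigma^2_s\,\mathbbm{E}(1/r^c\mid\bm{X}_g)]$.
\item For $\rho=r$, which is $\bm{X}_g$-measurable, the tower property gives $\mathbbm{E}[r^c\mid\bm{X}_g]=r$, so the term is $\mathbbm{E}[\sigma^2_s/r]$.
\end{itemize}
Subtracting gives the displayed identity~(\ref{eff}). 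Nonnegativity follows from conditional Jensen applied to the convex map $x\mapsto 1/x$ on $(0,1]$: $\mathbbm{E}[1/r^c\mid\bm{X}_g]\ge 1/\mathbbm{E}[r^c\mid\bm{X}_g]=1/r$. The inequality is strict exactly when $r^c$ is nondegenerate given $\bm{X}_g$, and multiplying by $\sigma^2_s>0$ on a positive-probability set yields the strict claim.

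The main point of care is the bookkeeping of which independence is used where. The off-diagonal and cross terms need the residuals to be independent of $(\bm{D}_g,\bm{W}_g)$ jointly with one another, not just unit by unit. Without that, $\mathbbm{E}[\varepsilon_{ig}\varepsilon_{jg}\mid\bm{X}_g,\bm{D}_g,\bm{W}_g]$ need not factor. This is why the proposition invokes the joint version of Assumption~\ref{asm:ig}. A second point to check is that $r^c>0$ almost surely, which is guaranteed by the bounded-away hypothesis. This rules out the exposure levels of Proposition~\ref{prop:idpos}(c), where $1/r^c$ would be undefined, and keeps both estimators aimed at the same estimand.
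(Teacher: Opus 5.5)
Your proposal is correct and follows the paper's proof in Appendix~\ref{app:eff} essentially step for step. The paper uses the same split of each influence function into weighted-residual terms plus a common $B_{ig}$, the same three facts (conditional mean zero given $(\bm{X}_g,\bm{W}_g,\bm{D}_g)$, off-diagonal terms vanishing by the joint form of Assumption~\ref{asm:ig} together with within-group conditional independence, and zero covariance with $B$), and the same diagonal computation, $\mathbbm{E}[1/r^{c}\mid\bm{X}_g]$ versus $1/r$, closed by conditional Jensen.
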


   The proof (Appendix~\ref{app:eff}) applies Jensen's inequality to $\mathbbm{E}[r^{c} \mid \bm{X}_g] = r$. 
   The result comes with qualifications. The hypothesis that $r^{c}$ is bounded away from zero 
   excludes, for the exposure mapping~(\ref{epp}), every contrast involving $z \geq 1$ 
   (Proposition~\ref{prop:idpos}(c)). The proposition applies to contrasts at $z = 0$, such as 
   the direct effect among units with no treated neighbor, and to mappings under which every 
   unit can reach every level, for instance an indicator of ``any treated neighbor'' among units 
   with at least one tie, or a share. For levels that some units cannot reach, the two estimators 
   target different estimands and an efficiency comparison is meaningless. The gain comes from 
   the exclusion restriction that $\bm{W}_g$ does not enter the outcome given $\bm{X}_g$, which is 
   what Assumption~\ref{asm:ig} adds to Assumption~\ref{asm:ig2}; an analyst unwilling to make 
   that assumption gives up the gain along with the population estimand. Finally, the comparison 
   is between two AIPW estimators with the same outcome model and says nothing about regression 
   estimators.

   \paragraph{Summary of the trade-off.} Table~\ref{tab:tradeoff} collects the comparison. 
   Conditioning on the realized network needs less and identifies a realized-sample effect at 
   exposure levels that every unit can reach, or a feasible-population effect otherwise. 
   Averaging over the network needs Assumption~\ref{asm:ig} and a formation model. In exchange 
   it delivers a population estimand, effects at levels that poorly connected units cannot reach, 
   and, at levels every unit can reach, a weakly smaller variance. The two gains do not add up: 
   where the first one bites, the second does not apply. The population estimand also remains 
   well defined when ties respond to treatment, a case in which conditioning on $\bm{W}_g$ would 
   amount to post-treatment adjustment and the score has to be computed under the 
   endogenous-network factorization of Section~\ref{sec:interdep} 
   \citep{comola2021treatment, gao2024endogenous}. Latent-variable network models offer a middle 
   course: node attributes $\bm{U}_g$ are estimated from the observed ties, Assumption~\ref{asm:ig} 
   is imposed given $(\bm{X}_g, \bm{U}_g)$, and the score is averaged over ties given 
   $(\bm{X}_g, \bm{U}_g)$ (Section~\ref{sec:latent}). Section~\ref{SMC} and Appendix~\ref{abc} 
   compare the two routes in simulations, and Section~\ref{ea} compares them in the application.

   \begin{table}[h]
   \centering
   \caption{Conditioning on the realized network versus averaging over it}\label{tab:tradeoff}
   \footnotesize
   \begin{tabular}{p{3.6cm} p{5.3cm} p{5.3cm}}
   \hline\hline
    & Condition on $\bm{W}_g$ & Average over $\bm{W}_g$ \\
   \hline
   Score & network-conditional $r^{c}$ & marginal $r$ \\
   Identifying assumption & Assumption~\ref{asm:ig2} (treatment unconfounded given $\bm{X}_g, \bm{W}_g$) & Assumption~\ref{asm:ig} (treatment and ties unconfounded given $\bm{X}_g$) \\
   Model of tie formation & not needed & needed \\
   Latent homophily that does not affect adoption & tolerated & biases the score unless latent attributes are estimated (Section~\ref{sec:latent}) \\
   Estimand & realized-sample effect, or feasible-population effect when some units cannot reach a level (Proposition~\ref{prop:dec}) & population effect \\
   Levels poorly connected units cannot reach & undefined for them & identified by extrapolation from units with similar covariates (Proposition~\ref{prop:idpos}) \\
   Variance at levels every unit can reach & larger & weakly smaller (Proposition~\ref{prop:eff}) \\
   Ties that respond to treatment & post-treatment adjustment & estimand well defined; score from Section~\ref{sec:interdep} \\
   \hline\hline
   \end{tabular}
   \end{table}
\end{revblock}

  \subsection{Joint modeling of treatment assignment and network formation under common exposure}\label{sec:joint}

  For a binary treatment the propensity score is usually estimated with a logit or probit model for 
  treatment assignment. Doing the same for the generalized propensity score~(\ref{gps}) would mean 
  fitting a model with the high-dimensional input $\bm{X}_g$. We take a different route and model 
  the assignment of treatment and the formation of the network jointly. Under 
  Assumption~\ref{asm:ig} the two processes can be represented by a probabilistic model 
  $f_{\bm{D}_g, \bm{W}_g|\bm{X}_g} (\bm{d}_g, \bm{w}_g)$ given the covariate matrix $\bm{X}_g$, and 
  the score~(\ref{gps}) can be rewritten as
  \begin{equation}\label{gps2}
  \begin{aligned}
   r(\{ig\}, t; \bm{X}_g) &= Pr( e(\{ig\}, \bm{D}_g, \bm{W}_g) = t | \bm{X}_g) \\
  &= \int_{\bm{d}_g \in \mathcal{D}_g} \int_{\bm{w}_g \in \mathcal{W}_g} 
  \mathbbm{1} \{ e(\{ig\}, \bm{d}_g, \bm{w}_g) = t \}  f_{\bm{D}_g, \bm{W}_g|\bm{X}_g} (\bm{d}_g, \bm{w}_g) 
  \diff \bm{d}_g \diff \bm{w}_g .  
  \end{aligned}
  \end{equation}
  With a model for the joint distribution of the treatment vector and the network and a given 
  exposure mapping, the score follows from~(\ref{gps2}). 

   To model $f_{\bm{D}_g, \bm{W}_g \mid \bm{X}_g} (\bm{d}_g, \bm{w}_g)$ we factorize the treatment 
   assignment mechanism and the network formation process under further assumptions. \rev{The main 
   development maintains the following model, in which both processes are driven by covariates but 
   neither responds to the other, a mechanism \citet{franzese2008contagion} call ``common 
   exposure.'' Interdependent processes are treated in Section~\ref{sec:interdep}.}

\begin{revblock}
   \begin{assumption}[Common exposure with dyadic link formation]\label{asm:ce}
   For all $g \in \mathcal{G}$, 
   \begin{enumerate}[label=(\alph*)]
   \item (Common exposure) $\bm{D}_g \indep \bm{W}_g \mid \bm{X}_g$;
   \item (Individualized treatment) $f_{\bm{D}_g|\bm{X}_g}(\bm{d}_g) = 
   \prod_{\{ig\} \in \mathcal{N}_g} Pr(D_{ig} = d_{ig} \mid \bm{X}_g)$, where 
   $Pr(D_{ig} = 1 \mid \bm{X}_g)$ may depend on $X^D_{ig}$ and on contextual variables;
   \item (Dyadic link formation) there exist node-level variables $\bm{U}_g = (U_{1g}, \ldots, U_{N_g g})$, 
   possibly latent, with $\bm{U}_g \indep \bm{D}_g \mid \bm{X}_g$, such that conditional on 
   $(\bm{X}_g, \bm{U}_g)$ the links $\{W_{ijg}\}_{i < j}$ (all ordered pairs $i \neq j$ for a directed 
   network) are mutually independent with 
   $Pr(W_{ijg} = 1 \mid \bm{X}_g, \bm{U}_g) = p_{ijg}(\bm{X}_g, \bm{U}_g)$. When $\bm{U}_g$ is empty, 
   $p_{ijg}$ depends on $(X^W_{ig}, X^W_{jg})$ only.
   \end{enumerate}
   \end{assumption}
\end{revblock}
   \rev{Under Assumption~\ref{asm:ce}(a)} the joint distribution factorizes as
   \begin{equation}\label{cee}
       f_{\bm{D}_g, \bm{W}_g|\bm{X}_g}(\bm{d}_g, \bm{w}_g) = 
       f_{\bm{D}_g|\bm{X}_g}(\bm{d}_g) f_{\bm{W}_g|\bm{X}_g}(\bm{w}_g)
       = f_{\bm{D}_g|\bm{X}^D_g}(\bm{d}_g) f_{\bm{W}_g|\bm{X}^W_g}(\bm{w}_g),
   \end{equation}
   \rev{where the second equality holds when treatment depends on covariates only through 
   $\bm{X}^D_g$ and ties only through $\bm{X}^W_g$; only the first equality is needed.}

  Once the joint distribution is modeled and the exposure mapping is fixed, the generalized 
  propensity score of each unit follows. \rev{Under Assumption~\ref{asm:ce}(b) and (c) without latent 
  variables}, the factorization in~(\ref{cee}) becomes:\footnote{The dyadic model is the simplest 
  illustration of covariates $\bm{X}^W_g$ as confounders; \citet{shalizi2011homophily} use it in 
  simulations of homophily and contagion. Reciprocity, transitivity or clustering can be built in 
  with exponential random graph models \citep{cranmer2011inferential} or other models of 
  interdependent link formation (e.g., \citealp[]{chyzh2019local})\rev{, with the score obtained by 
  simulation as in Section~\ref{sec:interdep}}.}
   $$
   \begin{gathered}
    f_{\bm{D}_g|\bm{X}^D_g}(\bm{d}_g) = \prod_{\{ig\} \in \mathcal{N}_g}Pr (D_{ig} = d_{ig}|X_{ig}^D), \\
    f_{\bm{W}_g|\bm{X}^W_g}(\bm{w}_g) = \prod_{\substack{\{ig\}, \{jg\} \in \mathcal{N}_g  \\ i < j}} 
    Pr (W_{ijg} = w_{ijg}|X_{ig}^W, X_{jg}^W)
   \end{gathered}
   $$ 
   \rev{for an undirected network (for a directed network the product runs over ordered pairs).} 
   Take the exposure mapping~(\ref{epp}), under which the potential outcome depends on own treatment 
   and on the number of treated neighbors. For an unweighted network, 
   (\ref{gps2}) becomes
   \begin{equation}\label{fact}
   \begin{aligned}
   &r (\{ig\}, (d, z); \bm{X}_g) \\  = &Pr\left[e(\{ig\}, \bm{D}_g, \bm{W}_g) = (d, z) | \bm{X}_g\right] \\
                      = &Pr\left[D_{ig} = d, \sum_{\{jg\} \in \mathcal{N}_g } W_{ijg}D_{jg} = z 
                      | \bm{X}_g\right] \\
                      = &Pr(D_{ig} = d|X_{ig}^D) \times \\ 
                     & \sum_{\substack{d_{jg} \in \{0, 1\}, w_{ijg} \in \{0, 1\} \\ \{jg\} \neq \{ig\}}} 
                     \mathbbm{1}\{ \sum_{\substack{\{jg\} \in \mathcal{N}_g \\ \{jg\} \neq \{ig\}}} 
                      w_{ijg}d_{jg} = z \}  
                      \prod_{\substack{\{jg\} \in \mathcal{N}_g \\ \{jg\} \neq \{ig\}}}
                      Pr(D_{jg} = d_{jg}|X_{jg}^D)Pr (W_{ijg} = w_{ijg}|X_{ig}^W, X_{jg}^W),
   \end{aligned}    
   \end{equation}
   which combines the model for treatment assignment with the model for network formation. 
   \rev{The second factor in~(\ref{fact}) is the probability mass function of a Poisson-binomial 
   variable: given $\bm{X}_g$, $Z_{ig}$ is a sum of $N_g - 1$ independent Bernoulli variables 
   $W_{ijg}D_{jg}$ with success probabilities 
   $q_{ijg} = Pr(W_{ijg} = 1 \mid X^W_{ig}, X^W_{jg}) Pr(D_{jg} = 1 \mid X^D_{jg})$. It can be computed 
   exactly by convolution, and it depends on the network only through the link 
   \textit{probabilities}, never through the realized ties.}

\begin{revblock}
   \paragraph{Operational contrast with the network-conditional score.} Under 
   Assumption~\ref{asm:ce}, the network-conditional score of Section~\ref{sec:cond} factorizes as 
   $r^{c}(\{ig\}, (d, z); \bm{X}_g, \bm{W}_g) = Pr(D_{ig} = d \mid X^D_{ig}) \, 
   Pr(Z_{ig} = z \mid \bm{X}_g, \bm{W}_g)$, where the second factor is the probability that exactly 
   $z$ of the \textit{realized} neighbors of $\{ig\}$ are treated, a Poisson-binomial with success 
   probabilities $\{Pr(D_{jg} = 1 \mid X^D_{jg}) : W_{ijg} = 1\}$ 
   \citep{forastiere2021identification}. The two scores differ in exactly one place: whether the 
   realized links $W_{ijg}$ or their probabilities enter the success probabilities of the neighbors' 
   treatment indicators. This is also where positivity differs (Proposition~\ref{prop:idpos}(c)): 
   the network-conditional score sums over at most $\deg_{ig}(\bm{W}_g)$ Bernoulli variables, the marginal 
   score over $N_g - 1$. In practice the network-conditional score is often estimated with a parametric 
   model for $Z_{ig}$ given covariates and network summaries \citep{forastiere2021identification}.

   \subsubsection{Latent node heterogeneity}\label{sec:latent}

   Dyadic independence given observed covariates alone is restrictive. Real networks show degree 
   heterogeneity, transitivity and clustering that observed homophily terms do not reproduce, and 
   Assumption~\ref{asm:ce}(c) accommodates them through the node-level variables $U_{ig}$. In the 
   additive and multiplicative effects (AME) model \citep{hoff2021additive} used in 
   Section~\ref{ea}, for example, 
   \begin{equation}\label{ame}
   Pr(W_{ijg} = 1 \mid \bm{X}_g, \bm{U}_g) = 
   \Phi\left( |X_{ig} - X_{jg}|' \beta_{w1, g} + X_{ijg}' \beta_{w2, g} + \alpha_{ig} + \alpha_{jg} + 
   \tilde{U}_{ig}' \tilde{U}_{jg} \right), 
   \qquad U_{ig} = (\alpha_{ig}, \tilde{U}_{ig}),
   \end{equation}
   where $|X_{ig} - X_{jg}|$ collects absolute differences in node covariates (observed homophily), 
   $X_{ijg}$ collects dyadic covariates such as indicators that at least one of the two nodes has a 
   given attribute, $(\beta_{w1, g}, \beta_{w2, g})$ are group-specific coefficients, $\alpha_{ig}$ 
   captures degree heterogeneity \citep{graham2017econometric}, and the inner product of the latent 
   positions $\tilde{U}_{ig}$ captures latent homophily and clustering \citep{hoff2002latent}. Given 
   $(\bm{X}_g, \bm{U}_g)$ the links are independent, so the score is~(\ref{fact}) with 
   $p_{ijg}(\bm{X}_g, \bm{U}_g)$ in place of $Pr(W_{ijg} = 1 \mid X^W_{ig}, X^W_{jg})$, integrated over 
   the distribution of $\bm{U}_g$:
   \begin{equation}\label{gpslat}
   r(\{ig\}, t; \bm{X}_g) = \int r(\{ig\}, t; \bm{X}_g, \bm{u}_g) \, \diff F_{\bm{U}_g \mid \bm{X}_g}(\bm{u}_g).
   \end{equation}
   How $\bm{U}_g$ is handled depends on the size of the groups, and only the first of the two 
   regimes below is covered by the results of this paper. 

   With many small groups, the asymptotic regime of Section~\ref{sec:est}, node-level parameters 
   cannot be estimated consistently, so $\bm{U}_g$ has to be treated as random effects whose 
   distribution $F_{\bm{U}_g \mid \bm{X}_g}$ has finitely many parameters. Then (\ref{gpslat}) 
   integrates over that distribution, the exact integral is a smooth function of the parameters 
   (its Monte Carlo approximation is discussed in Section~\ref{sec:interdep}), and 
   Lemma~\ref{lem:gps} and Proposition~\ref{ttt} apply. In this regime the score is a function of 
   $\bm{X}_g$ alone, so it is a valid weight only if Assumption~\ref{asm:ig} holds given 
   $\bm{X}_g$. Latent random effects make the formation model more realistic, but they do not 
   protect against latent attributes that also affect outcomes. 

   With large groups in which degrees grow with group size, as in the application (villages of 
   102 to 282 residents, one of 28), latent positions can be estimated node by node with uniform 
   consistency \citep{li2023statistical, mcfowland2023estimating}. One can then use the plug-in 
   score $r(\{ig\}, t; \bm{X}_g, \hat{\bm{U}}_g)$, which integrates over ties given the estimated 
   attributes instead of over $F_{\bm{U}_g \mid \bm{X}_g}$. The application uses this plug-in. 
   Because $\hat{\bm{U}}_g$ is a function of the realized network, the plug-in score is a 
   network-conditional object. Its target is $Pr(T_{ig} = t \mid \bm{X}_g, \bm{U}_g)$, which is a 
   valid weight for $\mu(t)$ when Assumption~\ref{asm:ig} holds with $(\bm{X}_g, \bm{U}_g)$ in 
   place of $\bm{X}_g$. If $\bm{U}_g$ affects only ties, conditioning on it is valid but, by the 
   logic of Proposition~\ref{prop:eff}, weakly less efficient than integrating over 
   $F_{\bm{U}_g \mid \bm{X}_g}$. If $\bm{U}_g$ also affects outcomes, conditioning on it is 
   required, and $\hat{\bm{U}}_g$ should enter the outcome model and, if relevant, the treatment 
   model. Replacing $\bm{U}_g$ by $\hat{\bm{U}}_g$ requires $\hat{\bm{U}}_g$ to be uniformly 
   consistent, which latent space models deliver only when degrees grow with $N_g$, a different 
   regime from the sparse one of Section~\ref{sec:single}. We do not prove a plug-in theorem.

   This plug-in is the middle course mentioned in the introduction. It conditions on the part of 
   the realized network that is informative about latent node attributes and averages over the 
   remaining, idiosyncratic variation in ties. It thus keeps the robustness to latent homophily 
   that motivates conditioning on the network, together with the population interpretation of 
   averaging over ties, subject to the qualifications of Section~\ref{sec:cond}.

   \paragraph{Specifying the models in practice.} The treatment model is a standard binary-response 
   model for $D_{ig}$ given the unit's own covariates, contextual variables such as neighborhood 
   averages of covariates when social norms are thought to matter, and group indicators or 
   group-level covariates. The network model is a dyadic model for $W_{ijg}$ whose systematic 
   part should contain
   \begin{itemize}
   \item homophily terms, typically absolute differences or indicators of agreement in the 
   covariates that also predict the outcome;
   \item node-level attributes that shift the propensity to form ties, such as leadership and 
   phone ownership in the application;
   \item dyadic covariates such as distance; and
   \item latent node effects as in~(\ref{ame}) when the observed terms fail to reproduce degree 
   heterogeneity or clustering.
   \end{itemize}
   Several diagnostics are available. Since $\mathbbm{E}[r(\{ig\}, t; \bm{X}_g)] = Pr(T_{ig} = t)$, 
   the estimated score averaged over units should reproduce the marginal distribution of the 
   effective treatment; Figure~\ref{eme} in Appendix~\ref{C} reports this check for the 
   application. Posterior predictive or parametric bootstrap checks of the degree distribution 
   and of the number of triangles show whether latent effects are needed. Covariate balance 
   across exposure levels after weighting by $1 / \hat{r}$, summarized by standardized mean 
   differences, checks the score as a whole. The distribution of $\hat{r}$ should also be 
   inspected for extreme values, which we trim in the application.
\end{revblock}
  
  \subsection{Estimation and inference}\label{sec:est}

  So far we have defined the generalized propensity score $r(\{ig\}, t; \bm{X}_g)$, shown its 
  balancing property, and shown that average potential outcomes, and therefore treatment effects, 
  are identified. We now turn to estimation and inference. We work with conditional effects; 
  the average effects are weighted averages of them. 

  Once the models for treatment assignment and network formation are estimated, the score 
  $\hat{r}(\{ig\}, t; \bm{X}_g)$ follows from the exposure mapping, and the average potential outcome 
  $\hat{\mu}(t)$ can be estimated by inverse probability weighting 
  \citep{horvitz1952generalization}. For efficiency we use the augmented inverse probability 
  weighting (AIPW) estimator with an outcome model $\mu(\{ig\}, T, \bm{X}_g)$:
   \begin{equation}\label{dr}
      \hat{\mu}(t)^{AIPW} = \frac{1}{N}\sum_{g \in \mathcal{G}}\sum_{\{ig\} \in \mathcal{N}_g} 
      \mathbbm{1}\{ T_{ig} = t \} 
      \frac{Y_{ig} - \hat{\mu}(\{ig\}, t, \bm{X}_g) }{\hat{r}(\{ig\}, t; \bm{X}_g)} + 
      \hat{\mu}(\{ig\}, t, \bm{X}_g).   
   \end{equation}
   
   The AIPW estimator~(\ref{dr}) is doubly robust: it is consistent if either the generalized 
   propensity score or the outcome model is correctly specified \citep{glynn2010introduction}. 
   Because $\bm{X}_g$ is high dimensional, $\hat{\mu}(\{ig\}, t, \bm{X}_g)$ can be fitted with a 
   machine learning method such as random forests, or with a linear model in which the unit's own 
   covariates replace $\bm{X}_g$. The AIPW estimator of the effect of level $t$ relative to level 
   $t'$ is
   \begin{equation}\label{eatt}
   \hat{\tau}(t, t') = \hat{\mu}(t)^{AIPW} - \hat{\mu}(t')^{AIPW},
   \end{equation}
   which we use throughout the paper. With the exposure mapping~(\ref{epp}), conditional direct 
   and spillover effects correspond to specific levels: for the conditional spillover effect 
   $\delta(z, d) = \mu(d, z) - \mu(d, 0)$, set $t = (d, z)$ and $t' = (d, 0)$, so that 
   $\hat\delta(z, d) = \hat{\tau}(t, t')$. The average spillover effect is estimated as 
   $\hat\delta(d) = \sum_z \hat\delta(z, d) \hat f_Z(z)$, where $\hat f_Z(z)$ is the empirical 
   distribution of $Z_{ig}$ \citep{forastiere2021identification}. For inference we derive the 
   asymptotic distribution of~(\ref{eatt}). Define
   $$
   \begin{aligned}
       \tau_{ig}(t, t') = &[\mathbbm{1}\{ T_{ig} = t \} 
     \frac{Y_{ig} - \mu(\{ig\}, t, \bm{X}_g) }{r(\{ig\}, t; \bm{X}_g)} + \mu(\{ig\}, t, \bm{X}_g)]\\
    & - [\mathbbm{1}\{ T_{ig} = t' \} 
     \frac{Y_{ig} - \mu(\{ig\}, t', \bm{X}_g) }{r(\{ig\}, t'; \bm{X}_g)} + \mu(\{ig\}, t', \bm{X}_g)]
   \end{aligned}
   $$
   and $ \varphi_{ig}(t, t') = \tau_{ig}(t, t') - \tau(t, t') $\rev{, with 
   $\mu(\{ig\}, t, \bm{X}_g) = \mathbbm{E}[Y_{ig}(t) \mid \bm{X}_g]$, and let 
   $S_g(t, t') = \sum_{\{ig\} \in \mathcal{N}_g} \varphi_{ig}(t, t')$ be the group-level sum of the 
   influence functions. In the appendix the two bracketed terms of $\tau_{ig}(t, t')$ are written 
   $\psi_{t, ig}$ and $\psi_{t', ig}$.}

\begin{revblock}
   We study the estimator in a ``many-networks'' regime in which the number of groups grows while 
   group sizes stay bounded. This is the regime of the Monte Carlo studies, and it describes data 
   such as classrooms, villages or firms. We maintain the following conditions.

   \begin{assumption}[Sampling of groups]\label{asm:samp}
   (a) The group-level data $\{(\bm{X}_g, \bm{D}_g, \bm{W}_g, \{Y_{ig}(\cdot)\}_{\{ig\} \in \mathcal{N}_g})\}_{g=1}^G$ 
   are independent and identically distributed draws from NP-SEM~(\ref{all}); (b) 
   $1 \leq N_g \leq \bar{N} < \infty$ almost surely.
   \end{assumption}
   Asymptotics are taken as $G \to \infty$. Assumption~\ref{asm:samp}(a) allows arbitrary dependence 
   among the potential outcomes of units in the same group, for instance through a group-level 
   shock. The within-group conditional independence used in Proposition~\ref{prop:eff} is not 
   needed below.

   \begin{assumption}[Moments and overlap]\label{asm:mom}
   (a) There is $M < \infty$ such that $\mathbbm{E}[Y_{ig}(t)^4 \mid \bm{X}_g] \leq M$ almost surely 
   for all $t$, and $N^{-1} \sum_{\{ig\}} \hat{\mu}(\{ig\}, t, \bm{X}_g)^4 = O_p(1)$; (b) there is 
   $[\underline{\pi}, \bar{\pi}] \subset (0, 1)$ such that $r(\{ig\}, t; \bm{X}_g)$ and 
   $\hat{r}(\{ig\}, t; \bm{X}_g)$ lie in $[\underline{\pi}, \bar{\pi}]$ almost surely for all $t$ of 
   interest; (c) $\mathbbm{E}[S_g(t, t')^2] > 0$.
   \end{assumption}
   Part (b) is demanding when treatment is rare: for $t = (1, z)$ under~(\ref{epp}), 
   $r(\{ig\}, t; \bm{X}_g) \leq Pr(D_{ig} = 1 \mid \bm{X}_g)$, so a lower bound on the score is a lower 
   bound on the adoption probability. In practice the bound is enforced by trimming, which changes 
   the estimand to one for a trimmed population \citep{crump2009dealing}. Section~\ref{ea} returns 
   to this.

   \begin{assumption}[Nuisance estimators]\label{asm:rate}
   The estimators $\hat{r}$ and $\hat{\mu}$ satisfy
   \begin{equation}\label{rates}
   \begin{gathered}
   \|\hat{r} - r\|_N^2 = \frac{1}{N} \sum_{\{ig\} \in \mathcal{N}} \big(\hat{r}(\{ig\}, t; \bm{X}_g) - r(\{ig\}, t; \bm{X}_g)\big)^2 = o_p(1), \\
   \|\hat{\mu} - \mu\|_N^2 = \frac{1}{N} \sum_{\{ig\} \in \mathcal{N}} \big(\hat{\mu}(\{ig\}, t, \bm{X}_g) - \mu(\{ig\}, t, \bm{X}_g)\big)^2 = o_p(1), \\
   \|\hat{r} - r\|_N \, \|\hat{\mu} - \mu\|_N = o_p(N^{-1/2}),
   \end{gathered}
   \end{equation}
   for $t$ and $t'$, and in addition either 
   \begin{enumerate}[label=(\alph*)]
   \item (cross-fitting) they are obtained by $K$-fold cross-fitting across groups, each group's 
   nuisance values being computed from a fit that excludes the group; or
   \item (parametric nuisances) $\hat{r}(\{ig\}, t; \bm{X}_g) = r(\{ig\}, t; \bm{X}_g; \hat{\theta})$ and 
   $\hat{\mu}(\{ig\}, t, \bm{X}_g) = \mu(\{ig\}, t, \bm{X}_g; \hat{\beta})$ for finite-dimensional 
   parameters with $\hat{\theta} - \theta = O_p(N^{-1/2})$ and $\hat{\beta} - \beta = O_p(N^{-1/2})$, 
   where $r(\cdot; \theta)$ and $\mu(\cdot; \beta)$ are smooth in the parameters in the sense 
   stated in Appendix~\ref{app:a2}. In this case the bound on $\hat{r}$ in 
   Assumption~\ref{asm:mom}(b) is required with probability approaching one instead of almost 
   surely.
   \end{enumerate}
   \end{assumption}

   Assumption~\ref{asm:samp}(b) is what the simulations and the many-networks description require. 
   It implies that the $\sqrt{N}$ and $\sqrt{G}$ rates differ only by the factor 
   $(N/G)^{1/2} \to \mathbbm{E}[N_g]^{1/2}$; Remark~\ref{rem:growth} discusses growing group sizes. 
   The rate condition in Assumption~\ref{asm:rate} is the usual product-rate condition for doubly 
   robust estimators \citep{farrell2015robust, chernozhukov2018double}, and it holds with room to 
   spare when both nuisance models are parametric, since each mean-square error is then 
   $O_p(N^{-1})$. Part (a) removes the need for Donsker-type restrictions when flexible estimators 
   are used. Part (b) is the case of the Monte Carlo studies and the application, where the 
   nuisances are fitted once on the full sample. The closed-form score~(\ref{fact}) built from 
   generalized linear models with bounded covariates satisfies it, because the score is a 
   polynomial in the link and treatment probabilities; with unbounded covariates the 
   corresponding moment conditions are needed. The next lemma shows that a score built from parametric treatment and 
   link models inherits their parametric rate.

   \begin{lemma}[Rate of the estimated generalized propensity score]\label{lem:gps}
   Suppose Assumptions~\ref{asm:ce} and~\ref{asm:samp}(b) hold, the treatment and link probabilities 
   are $Pr(D_{ig} = 1 \mid \bm{X}_g) = F_d(\{ig\}, \bm{X}_g; \theta_d)$ and 
   $p_{ijg} = F_w(\{ig\}, \{jg\}, \bm{X}_g; \theta_w)$ (or, with latent random effects, 
   $f_{\bm{W}_g \mid \bm{X}_g}$ is the mixture of the conditional law given $\bm{U}_g$ over 
   $\bm{U}_g \sim F_U(\cdot; \theta_w)$), where $F_d$ and $F_w$ are Lipschitz in 
   $\theta = (\theta_d, \theta_w)$ uniformly in their other arguments, $F_U(\cdot; \theta_w)$ is 
   Lipschitz in $\theta_w$ in total variation, the parameter $\theta$ is common to all groups, and 
   $\hat{\theta} - \theta = O_p(G^{-1/2})$. Then for any exposure mapping with finitely many values, 
   $\max_{g, \{ig\}, t} |\hat{r}(\{ig\}, t; \bm{X}_g) - r(\{ig\}, t; \bm{X}_g)| = O_p(G^{-1/2})$, and 
   hence $N^{-1} \sum_{\{ig\}} (\hat{r} - r)^2 = O_p(N^{-1})$. For the exposure 
   mapping~(\ref{epp}) the bound holds with the sharper constant 
   $|\hat{r} - r| \leq |\hat{P}^d_{ig} - P^d_{ig}| + \sum_{\{jg\} \neq \{ig\}} |\hat{q}_{ijg} - q_{ijg}|$.
   \end{lemma}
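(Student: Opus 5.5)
The plan is to write the score as an expectation of a bounded indicator under the joint law of $(\bm{D}_g, \bm{W}_g)$ given $\bm{X}_g$, and to bound the change in that law when $\theta$ is replaced by $\hat\theta$ in total variation. By~(\ref{gps2}) and Assumption~\ref{asm:ce}, $r(\{ig\}, t; \bm{X}_g; \theta) = \mathbbm{E}_\theta[\mathbbm{1}\{e(\{ig\}, \bm{D}_g, \bm{W}_g) = t\} \mid \bm{X}_g]$, where the law $P_\theta$ of $(\bm{D}_g, \bm{W}_g)$ is a product of independent Bernoulli factors: one per unit with parameter $F_d$, and one per pair with parameter $F_w$. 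Since the indicator takes values in $[0,1]$, it follows that $|r(\cdot; \hat\theta) - r(\cdot; \theta)| \leq \|P_{\hat\theta} - P_\theta\|_{TV}$. I will then use the standard subadditivity of total variation for product measures: the distance between two products is at most the sum of the distances between the factors. For Bernoulli factors this distance is $|\hat p - p|$. The result is the bound $\sum_i |F_d(\cdot;\hat\theta) - F_d(\cdot;\theta)| + \sum_{i<j} |F_w(\cdot;\hat\theta) - F_w(\cdot;\theta)|$.

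Next I apply the uniform Lipschitz condition. By Assumption~\ref{asm:samp}(b) the number of factors is at most $\bar N + \bar N^2$. Hence $|\hat r - r| \leq L(\bar N + \bar N^2)\|\hat\theta - \theta\|$ uniformly in $g$, $\{ig\}$ and $t$. This holds for every exposure mapping, whether or not it has finitely many values, because the bound is uniform over all events. Since $\hat\theta - \theta = O_p(G^{-1/2})$, the maximum of $|\hat r - r|$ is $O_p(G^{-1/2})$. Squaring and averaging then gives $N^{-1}\sum(\hat r - r)^2 \leq \max|\hat r - r|^2 = O_p(G^{-1})$. Finally, $N \leq \bar N G$ implies $G^{-1} \leq \bar N N^{-1}$, so this is $O_p(N^{-1})$.

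For latent random effects, I mix over the latent variables. The law becomes $\int P_{\theta}(\cdot \mid \bm{u})\, dF_U(\bm{u};\theta_w)$. I split the difference into two terms: the change in the conditional law at fixed $\bm{u}$, which is handled exactly as above, plus the change in the mixing distribution. The second term is at most $\|F_U(\cdot;\hat\theta_w) - F_U(\cdot;\theta_w)\|_{TV}$, because the integrand is bounded by one. This term is Lipschitz in $\theta_w$ by hypothesis. Both terms are therefore $O(\|\hat\theta-\theta\|)$ with constants depending only on $\bar N$.

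For the sharper constant under mapping~(\ref{epp}), I use the factorization~(\ref{fact}): $r = P^d_{ig}\cdot \Pr(\mathrm{PB}(q_{i\cdot g}) = z)$, where the second factor is a Poisson-binomial probability and $P^d_{ig}$ denotes $P^d_{ig}(d)$ with $|P^d(0)$ differences$| = |P^d(1)$ differences$|$. I write $\hat a\hat b - ab = (\hat a - a)\hat b + a(\hat b - b)$. Both $\hat b$ and $a$ are probabilities bounded by one, so $|\hat r - r| \leq |\hat P^d_{ig} - P^d_{ig}| + |\hat b - b|$. The Poisson-binomial variable is a function of $N_g-1$ independent Bernoulli$(q_{ijg})$ variables, so the same product total-variation coupling gives $|\hat b - b| \leq \sum_{j\neq i}|\hat q_{ijg} - q_{ijg}|$. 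I expect the main obstacle to be justifying the product total-variation bound cleanly. I would prove it by a telescoping hybrid argument: replace one factor at a time, and at each step integrate out the factor being changed, which costs at most that factor's total-variation distance.
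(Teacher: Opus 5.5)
Your proposal is correct and follows the paper's proof essentially step for step: a total-variation bound for the product Bernoulli law (the paper gets subadditivity by coupling the coordinates, you by a hybrid argument), uniform Lipschitz continuity to make the bound uniform in $g$, $\{ig\}$ and $t$, a triangle-inequality split for the latent mixture, and a Poisson-binomial coupling for the sharper constant. The paper makes explicit two provisos that you leave implicit: in the latent case the Lipschitz constant of the conditional link probabilities must be uniform in $\bm{u}_g$, and the sharper bound is stated only for the model without latent variables, where the summands $W_{ijg}D_{jg}$ are independent.
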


   The proof (Appendix~\ref{app:gps}) uses the fact that, with bounded group size and finitely many 
   exposure levels, the score is a polynomial of bounded degree in finitely many link and treatment 
   probabilities; the sharper bound for~(\ref{epp}) is a total-variation bound for Poisson-binomial 
   laws. Pointwise consistency of the link probabilities is not enough when group sizes grow, which 
   is why Assumption~\ref{asm:samp}(b) is imposed here and why the single-network case in 
   Section~\ref{sec:single} needs a sparsity condition instead. The lemma also requires a parameter 
   that is common across groups: group-specific parameters, such as the village-specific network 
   models of Section~\ref{ea}, raise an incidental-parameter problem in this regime and belong to 
   the large-group regime of Section~\ref{sec:latent}.

   \begin{proposition}[Asymptotic normality and variance estimation]\label{ttt}
   Suppose Assumptions~\ref{asm:em}, \ref{asm:ig} and~\ref{asm:samp}--\ref{asm:rate} hold 
   and both nuisance models are correctly specified. Then
   \begin{equation}\label{asy}
     \sigma_{t, t'}^{-1} \sqrt{N} (\hat{\tau}(t, t') - \tau(t, t')) \xrightarrow{d} \mathcal{N}(0,1), 
     \qquad 
     \sigma^2_{t, t'} = \frac{\mathbbm{E}[S_g(t, t')^2]}{\mathbbm{E}[N_g]}, 
   \end{equation}
   and the cluster-robust variance estimator
   \begin{equation}\label{hac}
      \hat{\sigma}^2_{t, t'} = \frac{1}{N} \sum_{g \in \mathcal{G}} 
      \Big( \sum_{\{ig\} \in \mathcal{N}_g} \hat{\varphi}_{ig}(t, t') \Big)^2, 
      \qquad 
      \hat{\varphi}_{ig}(t, t') = \hat{\tau}_{ig}(t, t') - \hat{\tau}(t, t'),
   \end{equation}
   where $\hat{\tau}_{ig}$ replaces $(r, \mu)$ by $(\hat{r}, \hat{\mu})$, satisfies 
   $\hat{\sigma}^2_{t, t'} - \sigma^2_{t, t'} \xrightarrow{p} 0$. If only one of the two nuisance 
   models is correctly specified, $\hat{\tau}(t, t') \xrightarrow{p} \tau(t, t')$ still holds, but 
   the asymptotic variance then contains an additional term from the estimation of the correctly 
   specified nuisance, and $\hat{\sigma}^2_{t, t'}$ is not consistent for it in general.
   \end{proposition}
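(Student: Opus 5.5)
The plan is the standard AIPW argument, carried out at the group level so that within-group dependence is absorbed into i.i.d. group sums. Write $\hat{\tau}(t,t') - \tau(t,t') = N^{-1}\sum_g S_g(t,t') + R_N$, where $R_N$ collects all terms caused by replacing $(r,\mu)$ with $(\hat r,\hat\mu)$. The proof then has three parts: a CLT for the oracle term, a proof that $\sqrt{N}R_N = o_p(1)$, and consistency of the cluster variance. Slutsky then combines them, using $N/G \to \mathbbm{E}[N_g]$ (law of large numbers under Assumption~\ref{asm:samp}).

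\textbf{Oracle term.} Under Assumption~\ref{asm:samp}(a) the $S_g$ are i.i.d. across groups. Proposition~\ref{prop:a} applies unit by unit, and correct specification gives $\mathbbm{E}[\mathbbm{1}\{T_{ig}=t\}(Y_{ig}-\mu(\{ig\},t,\bm{X}_g))/r \mid \bm{X}_g]=0$. Together with the definition~(\ref{mudef}), these give $\mathbbm{E}[S_g]=0$.

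For finite variance, observe that $|S_g| \le \bar N \max_i|\varphi_{ig}|$. By the overlap bound in Assumption~\ref{asm:mom}(b) and the conditional fourth moments in Assumption~\ref{asm:mom}(a), $\mathbbm{E}[S_g^2]<\infty$ (fourth moments are in fact finite). Assumption~\ref{asm:mom}(c) makes the variance nondegenerate. The Lindeberg--L\'evy CLT gives $G^{-1/2}\sum_g S_g \to \mathcal{N}(0,\mathbbm{E}[S_g^2])$. Rescaling, $\sqrt{N}\cdot N^{-1}\sum_g S_g = (G/N)^{1/2}G^{-1/2}\sum_g S_g$, which yields~(\ref{asy}) with $\sigma^2_{t,t'} = \mathbbm{E}[S_g^2]/\mathbbm{E}[N_g]$.

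\textbf{Remainder.} For each level $s\in\{t,t'\}$, split the remainder into three pieces:
\[
\frac{1}{N}\sum \mathbbm{1}\{T_{ig}=s\}(Y_{ig}-\mu)\Big(\frac{1}{\hat r}-\frac{1}{r}\Big)
+ \frac{1}{N}\sum (\hat\mu-\mu)\Big(1-\frac{\mathbbm{1}\{T_{ig}=s\}}{r}\Big)
+ \frac{1}{N}\sum \mathbbm{1}\{T_{ig}=s\}(\hat\mu-\mu)\Big(\frac{1}{r}-\frac{1}{\hat r}\Big).
\]
The third (product) term is bounded by Cauchy--Schwarz and overlap by $\underline{\pi}^{-2}\|\hat r-r\|_N\|\hat\mu-\mu\|_N = o_p(N^{-1/2})$.

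The first two terms have conditional mean zero given $\bm{X}_g$ when the nuisances are held fixed, so they are the main obstacle: one must control their empirical-process fluctuation.

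\begin{itemize}
\item Under cross-fitting (Assumption~\ref{asm:rate}(a)), condition on the out-of-fold fit. Within each fold the summands are then group-level i.i.d. with mean zero. Their second moment is bounded by $C\bar N\|\hat r-r\|^2$ and $C\bar N\|\hat\mu-\mu\|^2$, and the constant absorbs within-group covariances because group size is bounded by $\bar N$. Chebyshev then gives $o_p(N^{-1/2})$.
\item Under parametric nuisances (Assumption~\ref{asm:rate}(b)), Taylor-expand in $(\hat\theta,\hat\beta)$ using the smoothness conditions and Lemma~\ref{lem:gps}. Each term becomes $(\hat\theta-\theta)'\,N^{-1}\sum_g(\text{mean-zero group gradient}) + O_p(N^{-1})$. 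The gradient average is $O_p(N^{-1/2})$ by the CLT, so the product is $O_p(N^{-1}) = o_p(N^{-1/2})$. The mean-zero property of the gradient uses correct specification of both models, which is precisely why, with only one correct model, an extra first-order term survives. That case gives consistency by the same decomposition without the rate, but not the stated variance.
\end{itemize}

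\textbf{Variance.} Expand $\hat\sigma^2 = N^{-1}\sum_g(\hat S_g)^2$ with $\hat S_g = S_g + \Delta_g$, where $\Delta_g$ collects the nuisance errors and $-(\hat\tau-\tau)\,N_g$. The law of large numbers gives $N^{-1}\sum_g S_g^2 \to \mathbbm{E}[S_g^2]/\mathbbm{E}[N_g]$.

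It remains to show $N^{-1}\sum_g\Delta_g^2 = o_p(1)$, after which the cross term vanishes by Cauchy--Schwarz. Because $\Delta_g^2 \le \bar N\sum_i \delta_{ig}^2$, this follows from the mean-square rates in~(\ref{rates}), the overlap bounds, consistency of $\hat\tau$, and the fourth-moment condition on $Y$ and $\hat\mu$ in Assumption~\ref{asm:mom}(a). The fourth moments are what control $\sum_i (Y_{ig}-\hat\mu)^2(1/\hat r-1/r)^2$ via H\"older.
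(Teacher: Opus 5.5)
Your proposal is correct and follows the paper's proof essentially step for step. It uses the same three-term decomposition of $\hat\psi_{s,ig}-\psi_{s,ig}$ and the Lindeberg--L\'evy CLT on i.i.d. group sums with $N/G\to\mathbbm{E}[N_g]$. It then uses Cauchy--Schwarz for the product term, conditional Chebyshev with $(\sum_i a_i)^2\le\bar N\sum_i a_i^2$ under cross-fitting, a first-order expansion with mean-zero gradient averages under parametric nuisances, and the same bound on $N^{-1}\sum_g(\hat S_g-S_g)^2$ for the variance.

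Two small corrections concern the parametric step. First, the paper's smoothness condition is only $C^1$ with uniformly continuous derivatives. Under it the expansion remainder is $o_p(N^{-1/2})$, obtained from the mean-value theorem plus uniform continuity as in the paper's $\Delta_N$. It is not $O_p(N^{-1})$, which would need bounded second derivatives. Second, Lemma~\ref{lem:gps} is not needed there, and invoking it would bring in Assumption~\ref{asm:ce}, which the proposition does not assume.
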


   The proof (Appendix~\ref{app:a2}) does not need limit theorems for dependent data on a single 
   network. Under Assumption~\ref{asm:samp} the group sums $S_g(t, t')$ are independent and 
   identically distributed with finite variance, so the central limit theorem applies to them 
   directly. The remainder terms from estimating $r$ and $\mu$ are controlled by 
   Assumption~\ref{asm:rate} and by the orthogonality of the AIPW moment condition, which makes 
   first-stage estimation error second order \citep{chernozhukov2018double}. The following 
   remarks concern what the variance contains, growing group sizes, and few groups.

   \begin{remark}[What the variance contains]\label{rem:var}
   $\sigma^2_{t, t'}$ is the unconditional variance of the group sum, taken over covariates, 
   treatments and the network. When both nuisance models are correct and potential outcomes are 
   conditionally independent within groups, the within-group cross terms of the weighting parts of 
   the influence functions vanish (Fact 2 in Appendix~\ref{app:eff}), the network enters 
   $\sigma^2_{t, t'}$ only through the score, via $\mathbbm{E}[\sigma^2_t(\bm{X}_g) / r]$, and the 
   remaining within-group covariance comes from shared covariates. 

   Group-level outcome shocks add within-group covariance to the influence functions, and because 
   groups are independent draws the cluster-robust estimator~(\ref{hac}) picks it up. 

   A misspecified outcome model is a different matter. The AIPW moment is then no longer 
   orthogonal to the score parameters, 
   $\sqrt{N}(\hat{\tau} - \tau)$ acquires a first-order term from $\sqrt{N}(\hat{\theta} - \theta)$, 
   and (\ref{hac}), computed from $\hat{\varphi}_{ig}$ alone, ignores it. The standard remedy is to 
   stack the score's estimating equations with the AIPW moment and use the sandwich variance of the 
   joint system. We do not pursue it; the Monte Carlo evidence in Section~\ref{SMC}, where the outcome 
   model is deliberately misspecified, suggests the omitted term is small in that design. Unlike 
   a single network (Section~\ref{sec:single}), the many-networks design does let us estimate 
   whatever within-group covariance the influence functions have, including the part induced by 
   the shared realization of $(\bm{D}_g, \bm{W}_g)$, because each group is an independent draw.
   \end{remark}

   \begin{remark}[Growing group sizes]\label{rem:growth}
   If group sizes grow, the argument changes in two places. The Lyapunov condition for the central limit theorem over 
   groups holds, for bounded outcomes, if $\max_g N_g = o(N^{1/2 - \epsilon})$ for some $\epsilon > 0$ 
   (Appendix~\ref{app:a2}), and the score rate in Lemma~\ref{lem:gps} has to be replaced by a bound 
   whose constant does not grow with $N_g$, which requires a sparse network. Both are available for 
   the exposure mapping~(\ref{epp}) and are developed in Section~\ref{sec:single}.
   \end{remark}

   \begin{remark}[Few groups]\label{rem:few}
   With few groups, as in the application with 16 villages, the normal approximation 
   in~(\ref{asy}) is coarse and the cluster-robust estimator~(\ref{hac}) is biased downward. The 
   standard remedies apply: critical values from a $t$ distribution with $G - 1$ degrees of 
   freedom, or a wild cluster bootstrap that re-estimates the nuisance models under perturbed 
   residuals \citep{cameron2015practitioner}. A bootstrap that only flips the signs of the group 
   sums of the estimated influence function reproduces the cluster-robust variance exactly and 
   checks nothing. When the groups are large and sparse, the within-group asymptotics of 
   Section~\ref{sec:single} suggest a within-group variance estimator. No result in this paper 
   covers the application's specification, with village-specific formation models, node-level 
   latent effects and growing degrees, so there that estimator is a heuristic.
   \end{remark}
\end{revblock}

   \begin{remark}[Regression-based estimation]\label{rem:reg}
   Linear models are common in empirical work on direct and spillover effects 
   (e.g., \citealp[]{cai2015social}). A typical specification is
   \begin{equation}\label{reg_ms}
   Y_{ig}=\alpha_g+\tau D_{ig}+\gamma \sum_{\{jg\} \in \mathcal{N}_g} W_{ijg} D_{jg} + 
   X_{ig}^{\prime} \beta + \varepsilon_{ig},
   \end{equation}
   where $\alpha_g$ is a group fixed effect, and the causal quantities are read off $\tau$ and 
   $\gamma$. Model~(\ref{reg_ms}) is easy to fit and to interpret, but when network formation is 
   confounded the individual covariates $X_{ig}$ may not be enough to adjust for it, and when the 
   outcome model is misspecified the estimated effects are biased. \rev{A fairer regression 
   benchmark estimates $\mu(t)$ by regression imputation (g-computation) from a model that is 
   flexible in $(D_{ig}, Z_{ig})$ and in their interactions with covariates; the AIPW estimator uses 
   such a model as its outcome component and adds the weighting correction. Section~\ref{SMC} 
   compares the two.} 
   \end{remark}

\section{Extensions}\label{sec:ext}

\subsection{Interdependent treatment assignment and network formation}\label{sec:interdep}

The conditional independence in \rev{Assumption~\ref{asm:ce}(a)} can be replaced by assumptions 
under which treatment and network formation interact. The first line of~(\ref{cee2}) is a factorization 
for treatment diffusion, in which a unit's adoption depends on the treatment status of its peers. If 
instead the network is endogenous to treatment, with ties forming in response to treatment status, 
the second line of~(\ref{cee2}) is the appropriate one.\footnote{Appendix~\ref{D} gives graphical 
representations of these factorizations and discusses them further.} Either way the two processes 
have to be modeled sequentially, since one depends on the other, and the score no longer has a 
closed form like~(\ref{fact}). It can be computed by simulation instead (e.g., 
\citealp[]{aronow2017estimating, toulis2021estimating}).
  \begin{equation}\label{cee2}
       f_{\bm{D}_g, \bm{W}_g|\bm{X}_g}(\bm{d}_g, \bm{w}_g)
       = 
       \begin{cases}
       f_{\bm{D}_g|\bm{X}^D_g, \bm{W}_g}(\bm{d}_g) f_{\bm{W}_g|\bm{X}^W_g}(\bm{w}_g)
       \quad (\text{treatment diffusion}) \\
       f_{\bm{D}_g|\bm{X}^D_g}(\bm{d}_g) f_{\bm{W}_g|\bm{X}^W_g, \bm{D}_g}(\bm{w}_g)
       \quad (\text{endogenous network}) 
       \end{cases}
       \end{equation}

\begin{revblock}
Identification is unchanged, since Assumption~\ref{asm:ig} and Propositions~\ref{prop:b}--\ref{prop:a} 
do not use the factorization. Only the computation of $r(\{ig\}, t; \bm{X}_g)$ changes, 
because~(\ref{gps2}) calls for the \textit{joint} law of $(\bm{D}_g, \bm{W}_g)$ given $\bm{X}_g$. 

Under treatment diffusion one fits a network model $f_{\bm{W}_g \mid \bm{X}_g}$ and a model for 
$f_{\bm{D}_g \mid \bm{X}_g, \bm{W}_g}$. A logistic model for $D_{ig}$ with the number or share of 
treated neighbors as a regressor specifies the full conditionals 
$Pr(D_{ig} \mid \bm{D}_{-ig}, \bm{W}_g, \bm{X}_g)$, which in general are compatible with no joint law. A 
coherent specification is an autologistic model, whose joint law has an intractable normalizing 
constant but can be sampled by Gibbs sampling, or a sequential-adoption model when adoption times 
are observed, as in the peer-influenced propensity score of \citet{jackson2020adjusting}, a 
conditional object defined along the adoption sequence. 

Under an endogenous network one fits the treatment model first and then a link model with $(D_{ig}, D_{jg})$ as dyadic covariates 
\citep{comola2021treatment}. This is coherent, and Assumption~\ref{asm:em} should then be read with 
the potential network under each treatment vector in place of the realized one. 

In both cases the score is estimated by Monte Carlo from the joint law: for $h = 1, \ldots, H$, draw 
$(\bm{D}^{(h)}_g, \bm{W}^{(h)}_g)$ from the fitted models (for the autologistic model, with a Gibbs 
sampler run to stationarity), compute $T^{(h)}_{ig} = e(\{ig\}, \bm{D}^{(h)}_g, \bm{W}^{(h)}_g)$, and 
set $\hat{r}(\{ig\}, t; \bm{X}_g) = H^{-1} \sum_h \mathbbm{1}\{T^{(h)}_{ig} = t\}$. The simulation 
error is $O_p(\sqrt{\log N / H})$ uniformly over units and levels (a union bound over $N$ binomial 
proportions) and $O_p(H^{-1})$ in mean square, which is negligible next to the parametric rate 
$O_p(N^{-1})$ of Lemma~\ref{lem:gps} when $H / N \to \infty$. The simulated score is not smooth in 
the fitted parameters, so Assumption~\ref{asm:rate}(b) does not apply to it directly; with $H$ 
large it inherits the properties of the exact score it approximates, and we treat it that way. The 
application uses this device for the treatment-diffusion specification (Section~\ref{ea}) and the 
closed form under common exposure. 

Under treatment diffusion treatments are no longer individualized, and the positivity argument of 
Proposition~\ref{prop:idpos}(c) requires the joint law to give every configuration positive 
probability, which the autologistic model does. When the network forms after treatment, the 
network-conditional estimands of Section~\ref{sec:cond} condition on a post-treatment variable, 
whereas the population estimand does not. The population estimand stays well defined, but its 
interpretation changes, because part of a unit's exposure is then caused by its own treatment 
through the ties it forms, and estimating it requires the endogenous-network factorization in 
place of Assumption~\ref{asm:ce}(a).
\end{revblock}

\begin{revblock}
\subsection{A single large network}\label{sec:single}

The many-networks regime of Section~\ref{sec:est} covers data organized in groups. Many 
applications observe one large network instead. This subsection goes through identification, the 
score and inference for that setting and says what remains open; details are in 
Appendix~\ref{app:single}. We drop the group index: there are $N$ units, 
$\bm{X} = (X_1, \ldots, X_N)$, treatment $\bm{D}$ and network $\bm{W}$, and we maintain 
Assumptions~\ref{asm:em}, \ref{asm:ig} and~\ref{asm:ce} with the exposure mapping~(\ref{epp}).

\paragraph{Identification and the score.} Propositions~\ref{prop:dec}--\ref{prop:a} are stated for 
one group and hold as written. The score is 
$r(i, (d, z); \bm{X}) = Pr(D_i = d \mid \bm{X}) \, Pr(Z_i = z \mid \bm{X})$, where $Z_i \mid \bm{X}$ is 
Poisson-binomial with success probabilities $q_{ij} = p_{ij}(\bm{X}) Pr(D_j = 1 \mid \bm{X})$, 
$j \neq i$. It can be computed exactly by convolution in $O(N^2 \bar{z})$ operations for levels up 
to $\bar{z}$, or approximated by a Poisson law when the network is sparse. The sparsity condition we 
use is the one \citet{graham2024sparse} uses for dyadic logistic regression: 
$p_{ij} = \rho_N \, h(X_i, X_j)$ with $N \rho_N \to \lambda \in (0, \infty)$ and $h$ bounded between 
positive constants, so that expected degrees stay bounded as $N$ grows.

\begin{proposition}[Positivity under sparse network formation]\label{prop:sparsepos}
Under the sparsity condition, for every fixed $z \in \{0, 1, 2, \ldots\}$ and $d \in \{0, 1\}$, 
$\liminf_{N \to \infty} \min_i r(i, (d, z); \bm{X}) > 0$ almost surely, whereas 
$r^{c}(i, (d, z); \bm{X}, \bm{W}) = 0$ for every unit with $\deg_i(\bm{W}) < z$, a set whose share of 
the population converges to a positive limit for every $z \geq 1$.
\end{proposition}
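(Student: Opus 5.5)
The marginal score factorizes under Assumption~\ref{asm:ce} as $r(i,(d,z);\bm{X}) = Pr(D_i = d \mid \bm{X})\,Pr(Z_i = z \mid \bm{X})$, and the plan is to bound the two factors from below separately. For the first factor I will assume, as the positivity discussion in Section~\ref{sec:cond} implicitly does, that treatment probabilities lie in a fixed interval $[\underline{\pi}_d, \bar{\pi}_d] \subset (0,1)$ uniformly in $i$ and $N$. This is the analogue of Assumption~\ref{asm:mom}(b) for the treatment model. With it, $Pr(D_i = d \mid \bm{X}) \geq \min(\underline{\pi}_d, 1 - \bar{\pi}_d) > 0$. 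The work is then in the Poisson-binomial factor $Pr(Z_i = z \mid \bm{X})$, where $Z_i = \sum_{j \neq i} B_{ij}$ with independent $B_{ij} \sim \text{Bernoulli}(q_{ij})$ and $q_{ij} = \rho_N h(X_i,X_j) Pr(D_j = 1 \mid \bm{X})$.

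Let $c \leq h \leq C$. Then $q_{ij} \in [\rho_N c\,\underline{\pi}_d, \rho_N C]$, so the mean $\lambda_i = \sum_{j\neq i} q_{ij}$ lies in $[(N-1)\rho_N c\,\underline{\pi}_d, (N-1)\rho_N C]$. Since $N\rho_N \to \lambda$, this interval converges to $[\lambda c\,\underline{\pi}_d, \lambda C]$, a compact subset of $(0,\infty)$, uniformly in $i$. I then invoke Le Cam's inequality: the total-variation distance between $\mathcal{L}(Z_i)$ and $\text{Poisson}(\lambda_i)$ is at most $\sum_j q_{ij}^2 \leq (N-1)\rho_N^2 C^2 = O(N^{-1})$, again uniformly in $i$. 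Hence
\[
\min_i Pr(Z_i = z \mid \bm{X}) \;\geq\; \min_i e^{-\lambda_i}\lambda_i^z/z! - O(N^{-1}).
\]
The right-hand side has liminf at least $\min_{\ell \in [\lambda c\underline{\pi}_d, \lambda C]} e^{-\ell}\ell^z/z! > 0$. Because the bounds are deterministic given any covariate sequence, this holds almost surely, which proves the first claim. The key point is that the Poisson approximation is uniform in $i$, and this uniformity is exactly what the bounds on $h$ deliver.

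For the network-conditional score, the factorization in the operational-contrast paragraph makes $Pr(Z_i = z \mid \bm{X},\bm{W})$ a Poisson-binomial over at most $\deg_i(\bm{W})$ neighbors. It is therefore zero whenever $\deg_i(\bm{W}) < z$, exactly as in Proposition~\ref{prop:idpos}(c). It remains to show that the share $N^{-1}\sum_i \mathbbm{1}\{\deg_i(\bm{W}) < z\}$ has a positive limit. Because $\{\deg_i < z\} \supseteq \{\deg_i = 0\}$ for $z \geq 1$, it suffices to bound the share of isolated nodes from below.

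I expect this step to be the main obstacle, because degrees are dependent through shared links and ``converges'' needs a limiting covariate distribution. The cleanest route I see is the following. First, the same Le Cam argument applied to $\deg_i \mid \bm{X}$, a Poisson-binomial with $p_{ij}$, gives $Pr(\deg_i = 0 \mid \bm{X}) \geq e^{-\lambda C} - o(1)$ uniformly. Second, the variance of the share is $O(N^{-1})$: distinct $\deg_i,\deg_k$ share only the single link $W_{ik}$, so their covariance is $O(\rho_N) = O(N^{-1})$. Chebyshev's inequality then shows that the share concentrates around its conditional mean, which is bounded below by $e^{-\lambda C}/2$ eventually. To upgrade the $\liminf$ statement to convergence of the share, I would assume $X_i$ i.i.d., so that the conditional mean converges to $\mathbbm{E}[Pr(\text{Poisson}(\lambda\,\mathbbm{E}[h(x,X)\mid x]) < z)]$ by the law of large numbers, and cite the standard inhomogeneous random graph result that the empirical degree distribution converges to this mixed Poisson. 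The limit is positive by the lower bound above.
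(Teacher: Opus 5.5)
Your proposal is correct and follows the paper's proof essentially step for step. Both arguments apply Le Cam's inequality to the Poisson-binomial law of $Z_i$, with means confined to a compact subinterval of $(0,\infty)$ uniformly in $i$; use $Z_i \le \deg_i(\bm{W})$ for the network-conditional score; and use Chebyshev for the share of low-degree units, based on the $O(\rho_N)$ covariance of degree indicators that share a single link. Your extra step, using i.i.d.\ covariates and a law of large numbers to show that the conditional mean of the share actually has a limit, fills in a point the paper takes for granted when it says the share ``converges to the limit of its mean.'' Your bound $\min(\underline{\pi}_d, 1-\bar{\pi}_d)$ also correctly covers the case $d=0$.
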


Proposition~\ref{prop:sparsepos} (Appendix~\ref{app:single}) is the single-network version of 
Proposition~\ref{prop:idpos}(c), and sharper. In a sparse network a fixed fraction of units 
have fewer than $z$ ties, so the network-conditional approach can only ever identify $\mu(d, z)$ on 
a sub-population whose size falls with $z$, while the marginal approach identifies it for everyone. 
Le Cam's inequality gives the rate at which the Poisson-binomial law approaches the Poisson law, 
$O(\sum_j q_{ij}^2) = O(N^{-1})$.

\paragraph{Estimation of the score.} The total-variation bound in Lemma~\ref{lem:gps} does not 
involve group size, so it applies here: 
$|\hat{r}(i, (d,z)) - r(i, (d, z))| \leq |\hat{P}^d_i - P^d_i| + \sum_{j \neq i} |\hat{q}_{ij} - q_{ij}|$. 
Under sparsity $\sum_{j} p_{ij} = O(1)$, and for the dyadic logit with $\rho_N$ absorbed in the 
intercept, \citet{graham2024sparse} shows that the maximum likelihood estimator of $\theta_w$ is 
asymptotically normal at the rate of the square root of the number of links, which here is 
$\sqrt{N}$. Lemma~\ref{lem:single} in Appendix~\ref{app:single} concludes that 
$\max_i |\hat{r}(i, t) - r(i, t)| = O_p(N^{-1/2})$, so the score part of the rate condition in 
Assumption~\ref{asm:rate} holds. Node-level latent positions cannot be estimated consistently in 
this bounded-degree regime, where each node has $O_p(1)$ ties; the plug-in of 
Section~\ref{sec:latent} belongs to the regime in which degrees grow with $N$.

\paragraph{Inference.} Here the single network differs from the many-networks design. Conditional 
on $(\bm{X}, \bm{W})$, the influence functions $\varphi_i$ depend on one another only through 
shared treatment indicators, so units $i$ and $j$ are conditionally dependent only if they are 
neighbors or share a neighbor. A central limit theorem for local dependence 
\citep{baldi1989normal, chen2004normal, ross2011fundamentals} then gives conditional asymptotic 
normality when the maximum degree grows slowly, as it does in sparse inhomogeneous random graphs. 

The unconditional distribution adds the variability of the conditional mean 
$\mathbbm{E}[\hat{\tau} \mid \bm{X}, \bm{W}]$ across network realizations. 
Proposition~\ref{prop:singleclt} in Appendix~\ref{app:single} covers the case in which the outcome 
model is correctly specified and potential outcomes are conditionally independent across units. 
Then the conditional mean does not depend on $\bm{W}$, the two sources of variability are 
orthogonal, and the within-network dependence contributes no covariance terms at all, because 
the weighting parts of the influence functions are conditionally uncorrelated across units 
(Fact 2 of Appendix~\ref{app:eff}). A consistent estimator of the unconditional variance is then 
simply
\begin{equation}\label{hac2}
\begin{gathered}
\hat{\sigma}^2 = \frac{1}{N} \sum_{i} \hat{a}_i^2 + \frac{1}{N} \sum_i \hat{b}_i^2, 
\qquad 
\hat{b}_i = \hat{\mu}(i, t, \bm{X}) - \hat{\mu}(i, t', \bm{X}) - \hat{\tau}(t, t'), \\
\hat{a}_i = \frac{\mathbbm{1}\{T_i = t\}(Y_i - \hat{\mu}(i, t, \bm{X}))}{\hat{r}(i, t; \bm{X})} 
- \frac{\mathbbm{1}\{T_i = t'\}(Y_i - \hat{\mu}(i, t', \bm{X}))}{\hat{r}(i, t'; \bm{X})}, 
\end{gathered}
\end{equation}
with the outcome model depending on $\bm{X}$ only through $X_i$. The local dependence matters for 
the central limit theorem but leaves the variance unchanged. 

Dependence does enter the variance in two cases that Proposition~\ref{prop:singleclt} excludes. 
When potential outcomes are correlated across 
neighbors, a network HAC estimator \citep{kojevnikov2021limit, leung2023network} applied to 
$\hat{a}_i$ over pairs within distance two is the natural extension. When the outcome model is 
misspecified and only the score is correct, the conditional mean of $\hat{\tau}$ varies with 
$\bm{W}$ through $r^{c} / r$, and the resulting between-network component cannot be recovered from 
a single realization without further structure. A model-based bootstrap that redraws 
$(\bm{D}, \bm{W})$ from the fitted assignment models is the natural device for that case; we leave 
its validity for future work. This is the main reason to prefer the many-networks design when it is 
available: there Remark~\ref{rem:var} applies and the issue does not arise. 

For data with a few large groups, estimator~(\ref{hac2}) can be applied within each group and the group-level terms 
summed with weights $N_g / N$, provided the groups are sparse and the nuisances parametric. The 
16 villages in Section~\ref{ea}, with node-level latent effects and a mean degree of about 
fifteen, do not meet these conditions, so there the estimator is a heuristic. The result in 
Appendix~\ref{app:single} is a proof sketch for the oracle estimator with a heuristic extension to 
estimated nuisances, and it is not at the level of rigor of Proposition~\ref{ttt}.
\end{revblock}

  \section{Monte Carlo Studies}\label{SMC}

  This section reports Monte Carlo studies of the finite sample properties of the proposed method. 
  We focus on the AIPW estimator~(\ref{eatt}) and report bias, root mean squared error (RMSE), the 
  coverage rate of 95\% confidence intervals, and the difference between the average estimated 
  standard error (SE) and the sampling standard deviation (SD). 

  Each simulated dataset consists of disjoint groups. Units in the same group may form links; 
  there are no links across groups. All groups have the same size, the treatment is binary, and 
  the networks are undirected and unweighted. The data generating process (DGP) for network 
  formation and treatment assignment is 
  \begin{equation}\label{dgp1}
  \begin{gathered}
  W_{ijg} = 
      \mathbbm{1}\{ 0.1 - 0.25 |X_{ig} - X_{jg}| + \nu_{ijg} \geq 0 \}, \\
    D_{ig} = \mathbbm{1}\{ 0.1 + 0.5 X_{ig} + e_{ig} \geq 0 \}, 
  \end{gathered}
  \end{equation} 
  where $X_{ig} \sim N(0, 1)$ is an individual covariate and $\nu_{ijg} \sim N(0, 1)$ and 
  $e_{ig} \sim N(0, 1)$ are independent errors. \rev{The DGP satisfies Assumption~\ref{asm:ce} 
  without latent variables, and Assumption~\ref{asm:samp} with $N_g = 4$.} Groups have four units 
  and are formed by index: units 1 to 4 form group 1, units 5 to 8 group 2, and so on. With the 
  exposure mapping~(\ref{epp}), $Z_{ig} \in \{0, 1, 2, 3\}$ and the effective treatment $T_{ig}$ 
  takes eight values, $(0, 0)$, $(0, 1)$, $(0, 2)$, $(0, 3)$, $(1, 0)$, $(1, 1)$, $(1, 2)$ and 
  $(1, 3)$. The potential outcome model is
  \begin{equation}\label{dgp2}
  Y_{ig}(d, z) = 1 + X_{ig} + \frac{\sum_{\{jg\} \in \mathcal{N}_g, \{jg\} \neq \{ig\}} X_{jg}}{3} + d + 0.5z + d\cdot z + \varepsilon_{ig},  
  \end{equation}
  where $\varepsilon_{ig} \sim N(0, 1)$ and the average of the other units' covariates is a 
  contextual effect. 
  
  With eight levels of the effective treatment there are ${8 \choose 2} = 28$ possible effects. 
  We focus on a direct effect, $\tau((1, 1), (0, 1)) = \mu(1, 1) - \mu(0, 1)$, and a spillover 
  effect, $\tau((1, 1), (1, 0)) = \mu(1, 1) - \mu(1, 0)$. Since group size is fixed, the sample 
  size is varied through the number of groups, \rev{$G$} $= 50$, $100$, $200$, $500$ and $1000$, 
  with 2,000 replications for each. 

  We first estimate the treatment and network models, \rev{a probit for $D_{ig}$ on $X_{ig}$ and a 
  probit for $W_{ijg}$ on $|X_{ig} - X_{jg}|$ pooled across groups,} both correctly specified as 
  in~(\ref{dgp1}), and build the generalized propensity score from the closed form~(\ref{fact}). 
  We then apply the AIPW estimator~(\ref{eatt})\rev{, with the linear model~(\ref{reg_ms}) in 
  $D_{ig}$, $Z_{ig}$ and $X_{ig}$ (without group effects) as the outcome component. This outcome 
  model is deliberately misspecified: it omits the interaction $D_{ig} Z_{ig}$ and the contextual 
  term of~(\ref{dgp2}). The score is correct, so the design illustrates double robustness rather 
  than the case of Proposition~\ref{ttt}, in which both nuisances are correct. The Gaussian 
  covariate is unbounded, so Assumption~\ref{asm:mom}(b) holds only approximately; the regularity 
  conditions of Section~\ref{sec:est} are sufficient, not sharp. Table~\ref{tab:bench} adds the case 
  with a correctly specified outcome model, the case Proposition~\ref{ttt} covers.} Confidence 
  intervals use \rev{the cluster-robust variance estimator~(\ref{hac})}. Tables~\ref{dir1} 
  and~\ref{spi1} report the results for the two effects. 

  \begin{table}[h]
  \centering
  \caption{Finite Sample Properties: $\tau((1,1),(0,1))$ \label{dir1}}
  \begin{tabular}{l *{5}{S[table-format=-1.3]}}
  \hline\hline  
  {\rev{$G$}} & {Bias} & {RMSE} & {Coverage Rate} & {SE} & {SE - SD}  \\ 
  \hline
 50   &-0.012 &0.298    &0.936 &0.287  &-0.010 \\
 100   &0.001 &0.212    &0.941 &0.205  &-0.007 \\
 200   &0.003 &0.149    &0.939 &0.147  &-0.002 \\
 500   &0.001 &0.094    &0.953 &0.093  &-0.001 \\
 1000 &-0.002 &0.065    &0.960 &0.066   &0.001 \\
  \hline\hline  
 \end{tabular}
\end{table}

  \begin{table}[h]
  \centering
  \caption{Finite Sample Properties: $\tau((1,1),(1,0))$ \label{spi1}}
  \begin{tabular}{l *{5}{S[table-format=-1.3]}}
  \hline\hline  
  {\rev{$G$}} & {Bias} & {RMSE} & {Coverage Rate} & {SE} & {SE - SD}  \\ 
  \hline
 50   &-0.003 &0.277    &0.943 &0.268  &-0.009 \\
 100  &-0.002 &0.196    &0.950 &0.193  &-0.003 \\ 
 200  &-0.002 &0.137    &0.948 &0.137   &0.000 \\ 
 500   &0.000 &0.087    &0.948 &0.087   &0.000 \\
 1000  &0.002 &0.061    &0.956 &0.062   &0.001 \\
  \hline\hline  
 \end{tabular}
\end{table}

For both effects the AIPW estimator has small bias even with few groups, and its RMSE falls as the 
number of groups grows. The coverage of the 95\% intervals is close to nominal, and the gap 
between the average estimated standard error and the sampling standard deviation shrinks with the 
number of groups, as it should if the variance estimator is consistent.

For comparison we also fit the regression model~(\ref{reg_ms}), with standard errors clustered by 
group, and read the direct effect off $\hat{\tau}$ and the spillover effect off $\hat{\gamma}$. 
Model~(\ref{reg_ms}) is misspecified: it omits the interaction $D_{ig} \cdot Z_{ig}$ and the 
contextual effect of the covariate. Figure~\ref{sam} shows the sampling distributions of the two 
estimators of the spillover effect and Figure~\ref{sam1} their bias and RMSE. The AIPW estimator 
has bias close to zero even with few groups, and its RMSE falls quickly with the number of groups; 
the RMSE of the regression estimator barely moves because of its bias. \rev{The comparison shows 
the cost of a misspecified outcome model rather than the merit of weighting as such: a regression 
with the interaction and the contextual term would recover the effects in this design. The AIPW 
estimator uses the same misspecified outcome model and stays unbiased because its score is 
correct, and its coverage is close to nominal even though the variance estimator~(\ref{hac}) 
ignores the first-stage term of Remark~\ref{rem:var}.} \begin{revblock}

\paragraph{Benchmarks.} Table~\ref{tab:bench} compares seven estimators of the two effects under 
the same DGP. The labels are used again in later tables.
\begin{itemize}
\item AIPW-JM, mis.: the AIPW estimator with the marginal score and the misspecified outcome 
model, the estimator of Tables~\ref{dir1}--\ref{spi1}. ``JM'' stands for joint modeling of 
treatment and network.
\item AIPW-JM: the same with the correctly specified outcome model 
$y \sim d \cdot z + x + \bar{x}_{-i}$, where $\bar{x}_{-i}$, the mean covariate of the other 
units in the group, is the contextual term of~(\ref{dgp2}).
\item IPW-JM: H\'{a}jek inverse probability weighting with the marginal score and no outcome model.
\item OR, correct and OR, mis.: outcome regression, that is, regression imputation from the 
correct and from the misspecified outcome model.
\item Regression, interacted: the regression 
$y \sim d \cdot z + x + \bar{x}_{-i} + d \cdot (x - \bar{x})$ with effects read from the 
coefficients.
\item AIPW-SEP: the AIPW estimator whose score conditions on own covariates only, discussed 
below.
\end{itemize}
When both nuisance models are correct, the AIPW estimator is unbiased with nominal coverage, and 
its RMSE is about a fifth smaller than with the misspecified outcome model. The correctly 
specified regression and regression imputation are also unbiased with nominal coverage, and they 
are more efficient than AIPW, with RMSE about thirty percent lower. When the outcome model is 
known to be correct, weighting costs precision. The case for AIPW is robustness, as the 
misspecified-regression rows show, with bias of $-0.30$ and coverage falling to zero. Inverse probability weighting alone is unbiased but the least efficient, with RMSE sixty 
to eighty-five percent above that of AIPW with the correct outcome model, and its H\'{a}jek 
variance estimator is somewhat conservative at large $G$ (coverage $0.97$ at $G = 1000$). The full 
grid of $G$ and the replication code are in the supplementary folder.

\begin{table}[h]
\centering
\caption{Benchmarks under DGP~(\ref{dgp1})--(\ref{dgp2}): bias, RMSE and coverage of 95\% intervals}\label{tab:bench}
\scriptsize
\setlength{\tabcolsep}{4pt}
\begin{tabular}{l *{9}{S[table-format=-1.3]}}
\hline\hline
 & \multicolumn{3}{c}{$G = 50$} & \multicolumn{3}{c}{$G = 200$} & \multicolumn{3}{c}{$G = 1000$} \\
\cline{2-4} \cline{5-7} \cline{8-10}
 & {Bias} & {RMSE} & {Cov.} & {Bias} & {RMSE} & {Cov.} & {Bias} & {RMSE} & {Cov.} \\
\hline
\multicolumn{10}{l}{\textit{Direct effect $\tau((1,1),(0,1))$}} \\
AIPW-JM, mis. outcome model & -0.012 & 0.298 & 0.936 &  0.003 & 0.149 & 0.939 & -0.002 & 0.065 & 0.960 \\
AIPW-JM                     & -0.010 & 0.242 & 0.946 &  0.006 & 0.124 & 0.947 & -0.001 & 0.054 & 0.957 \\
IPW-JM                      &  0.026 & 0.391 & 0.946 &  0.012 & 0.198 & 0.961 & -0.001 & 0.087 & 0.972 \\
OR, correct                 & -0.004 & 0.167 & 0.938 &  0.003 & 0.080 & 0.952 & -0.001 & 0.036 & 0.949 \\
OR, mis.                    & -0.309 & 0.361 & 0.620 & -0.298 & 0.312 & 0.107 & -0.301 & 0.304 & 0.000 \\
Regression, interacted      & -0.004 & 0.168 & 0.935 &  0.003 & 0.080 & 0.954 & -0.001 & 0.036 & 0.949 \\
AIPW-SEP                    & -0.013 & 0.287 & 0.941 &  0.002 & 0.143 & 0.942 & -0.002 & 0.062 & 0.960 \\
\hline
\multicolumn{10}{l}{\textit{Spillover effect $\tau((1,1),(1,0))$}} \\
AIPW-JM, mis. outcome model & -0.003 & 0.277 & 0.943 & -0.002 & 0.137 & 0.948 &  0.002 & 0.061 & 0.956 \\
AIPW-JM                     &  0.002 & 0.231 & 0.949 &  0.001 & 0.112 & 0.956 &  0.002 & 0.050 & 0.957 \\
IPW-JM                      &  0.010 & 0.408 & 0.932 &  0.003 & 0.206 & 0.941 &  0.001 & 0.092 & 0.952 \\
OR, correct                 &  0.001 & 0.137 & 0.925 & -0.001 & 0.067 & 0.946 &  0.000 & 0.030 & 0.949 \\
OR, mis.                    & -0.342 & 0.371 & 0.303 & -0.335 & 0.342 & 0.003 & -0.333 & 0.334 & 0.000 \\
Regression, interacted      &  0.001 & 0.137 & 0.926 & -0.001 & 0.067 & 0.947 &  0.000 & 0.030 & 0.949 \\
AIPW-SEP                    &  0.117 & 0.303 & 0.917 &  0.124 & 0.184 & 0.846 &  0.127 & 0.141 & 0.470 \\
\hline\hline
\end{tabular}
\parbox{0.95\textwidth}{\footnotesize{\bf Notes:} 2,000 replications, the same run as Tables~\ref{dir1}--\ref{spi1}. ``mis.'' denotes the outcome 
model $y \sim d + z + x$; ``correct'' denotes $y \sim d \cdot z + x + \bar{x}_{-i}$. Intervals use 
cluster-robust variances (delta method for OR and regression).}
\end{table}
\end{revblock}

\begin{figure}[!ht]
  \begin{center}
  \begin{tabular}{ccc}
   \raisebox{-0.5\height}{\includegraphics[scale=0.2]{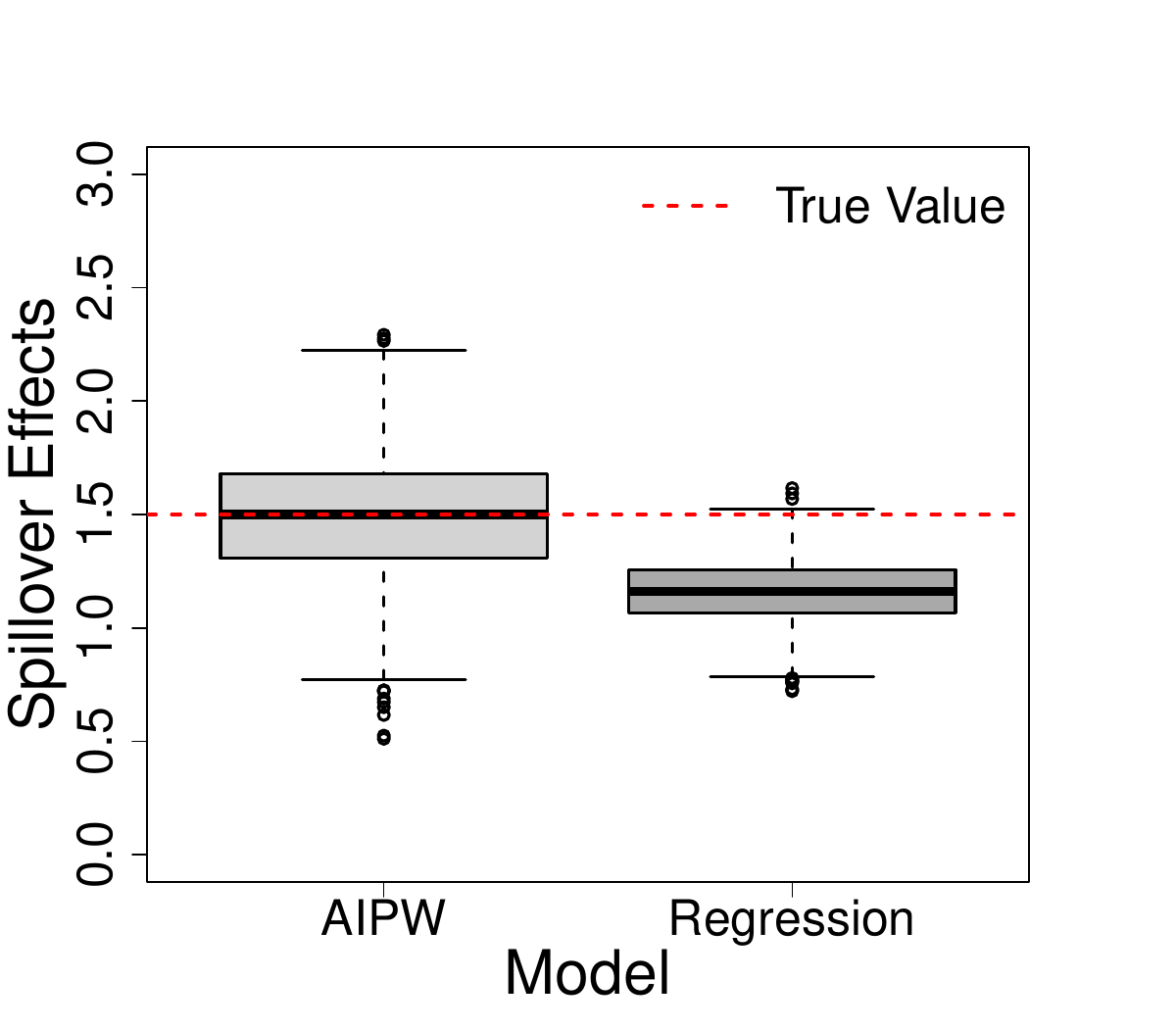}} &
   \raisebox{-0.5\height}{\includegraphics[scale=0.2]{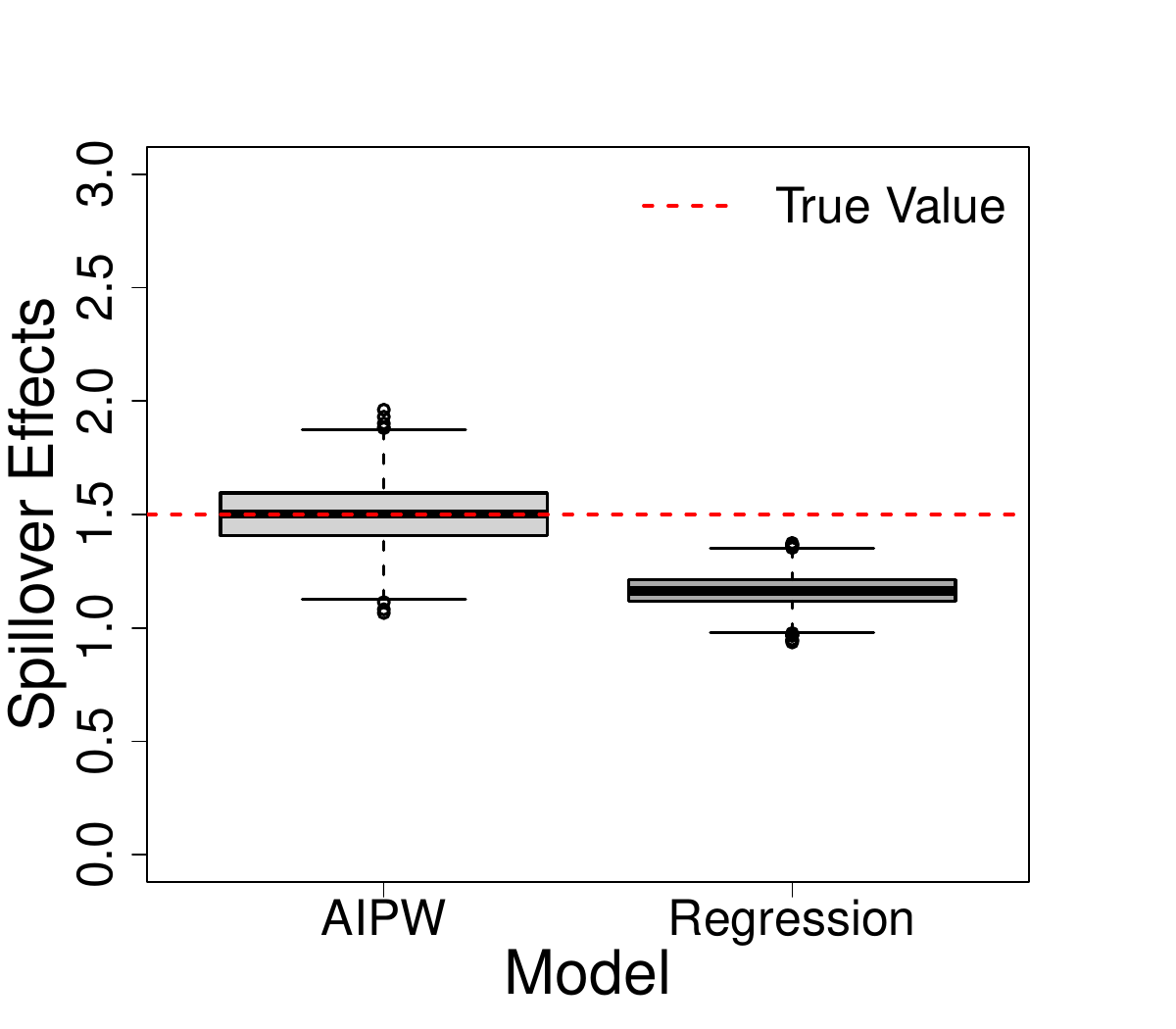}} &
   \raisebox{-0.5\height}{\includegraphics[scale=0.2]{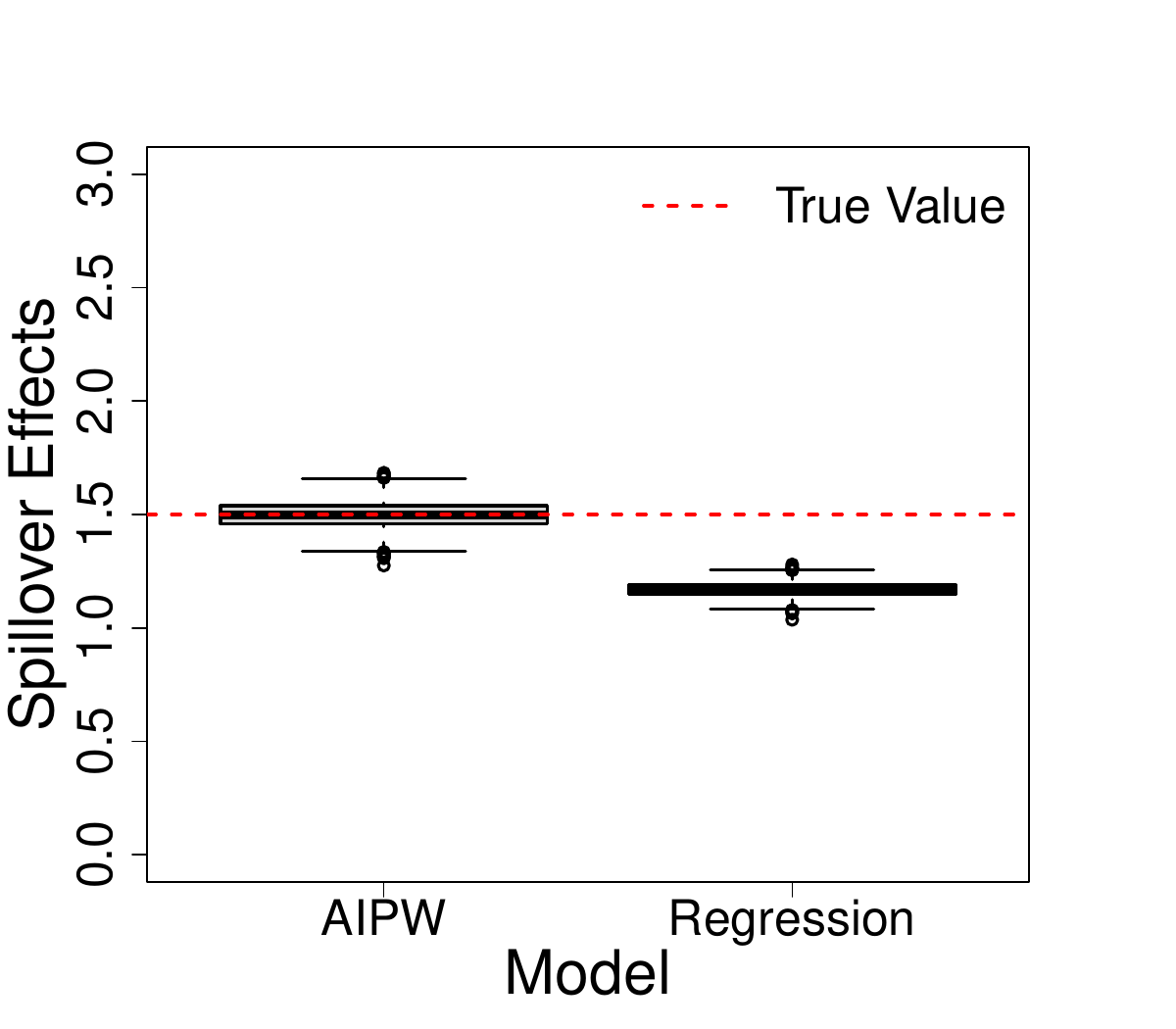}} \\
   50 Groups & 200 Groups & 1000 Groups
   \end{tabular}
   \caption{Sampling Distribution with Different Number of Groups}\label{sam}
   \end{center}
   \end{figure}

\begin{figure}[!ht]
  \begin{center}
  \begin{tabular}{cc}
   \raisebox{-0.5\height}{\includegraphics[scale=0.3]{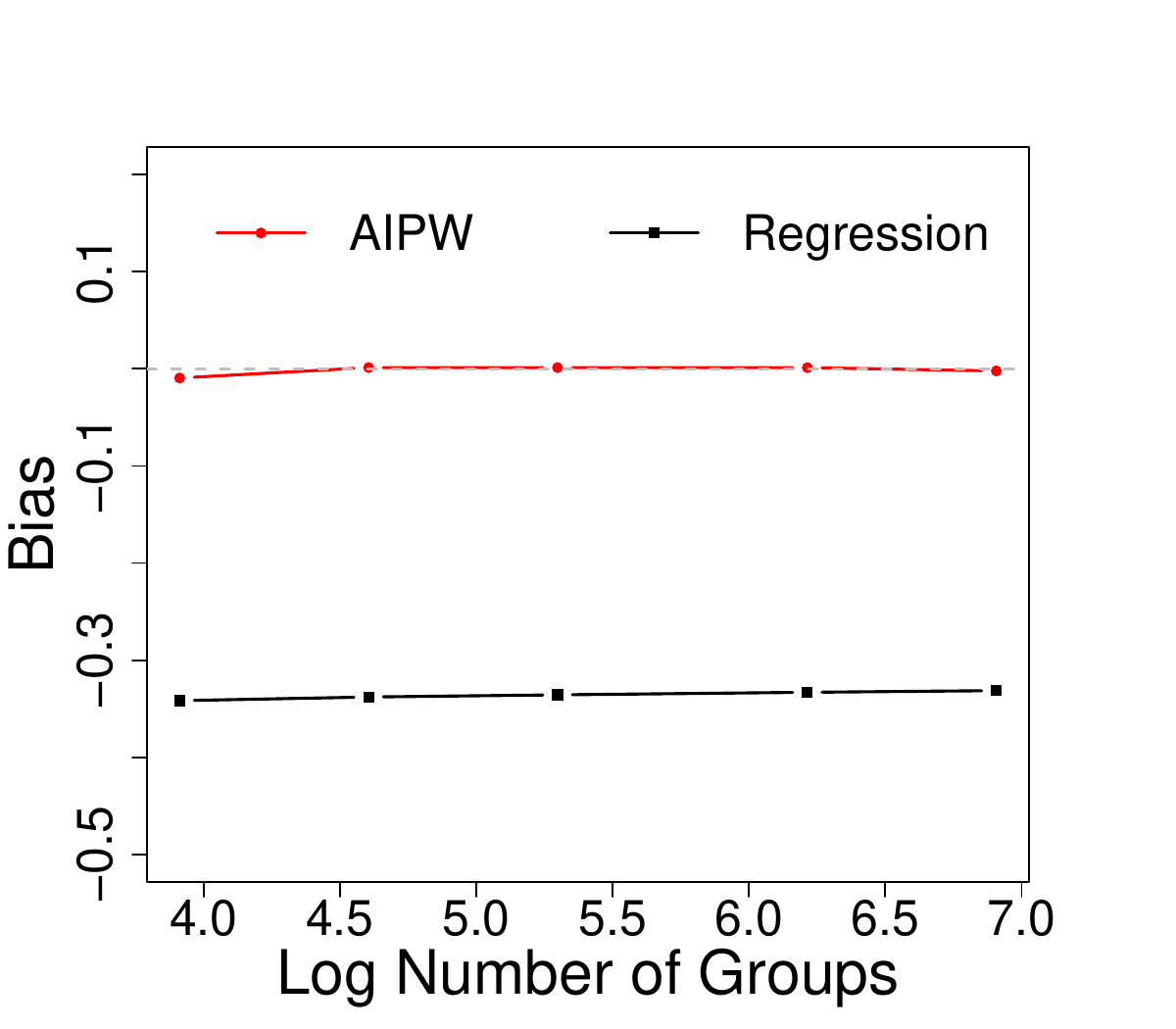}} &
   \raisebox{-0.5\height}{\includegraphics[scale=0.3]{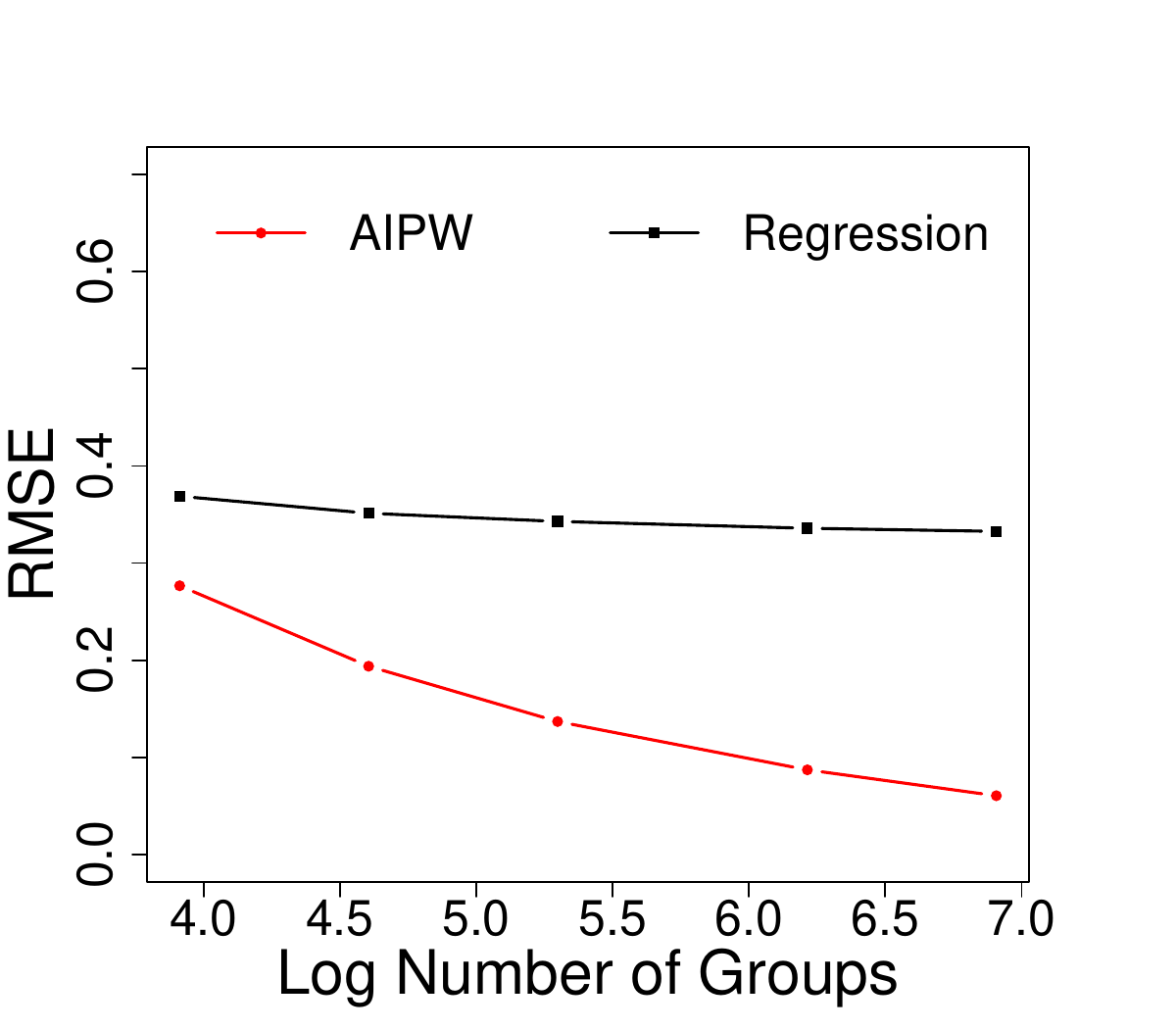}}  
   \end{tabular}
   \caption{Finite Sample Properties (Spillover Effect)}\label{sam1}
   \end{center}
   \end{figure}

\begin{revblock}
\paragraph{Marginal score with and without group covariates.} Appendix~\ref{abc} compares the 
AIPW estimator with the marginal score~(\ref{fact}) (``AIPW-JM'', for joint modeling) with an AIPW 
estimator whose score is the product of a probit model for $D_{ig}$ given $X_{ig}$ and a 
multinomial logit model for $Z_{ig}$ given $X_{ig}$ only (``AIPW-SEP'', for separate modeling of 
the two components of the exposure), with the same outcome model. AIPW-SEP uses the factorization 
$Pr(T_{ig} = t \mid X_{ig}) = Pr(D_{ig} = d \mid X_{ig}) Pr(Z_{ig} = z \mid X_{ig})$ that one would use 
if the exposure were treated as a multi-valued treatment with own covariates as the only 
confounders; it conditions neither on the realized network nor on the covariates of other units. 
Under DGP~(\ref{dgp1})--(\ref{dgp2}) both estimators are consistent for the direct effect and 
perform almost identically (Figure~\ref{sam1_d_ps} in Appendix~\ref{abc}). For the spillover 
effect $\tau((1,1),(1,0))$, AIPW-SEP has a small but persistent bias that does not vanish as $G$ 
grows, and a larger RMSE (Figures~\ref{sam1_ps}--\ref{sam_ps}). The reason is omitted confounding 
in the score. $Z_{ig}$ depends on the covariates of the other units in the group, through the 
homophily term in the link model and through their treatment probabilities, and so does the 
outcome, through the contextual term in~(\ref{dgp2}). A score that conditions on $X_{ig}$ alone 
omits $\bm{X}_{-ig}$, which Assumption~\ref{asm:ig} requires, and the outcome model omits it too, 
so neither component is correct. The marginal score~(\ref{fact}) conditions on $\bm{X}_g$ through 
the formation model, which is the practical content of modeling network formation: it is how the 
covariates of potential neighbors enter the score. This comparison is about the conditioning set, 
not about conditioning on the realized network, and it says nothing about 
Propositions~\ref{prop:dec}--\ref{prop:eff}. Table~\ref{tab:cond} reports the comparison that 
Section~\ref{sec:cond} calls for.

\paragraph{Population versus feasible-population effects.} We compare AIPW-JM with an AIPW 
estimator that uses the closed-form network-conditional score $r^{c}$ (a Poisson-binomial over 
the realized neighbors with success probabilities $\hat{P}^1_{jg}$, Section~\ref{sec:joint}), 
restricted to the common support of the two exposure levels, that is, to units with at least one 
tie (``AIPW-COND''). By Proposition~\ref{prop:idpos} it targets the feasible-population effect 
$\tau(t, t' \mid \mathcal{A})$. A third estimator (``AIPW-COND, naive'') uses $r^{c}$ but keeps all 
units at both levels, replacing a zero score by one for units that cannot reach a level; their 
exposure indicator is zero, so they contribute only the outcome-model prediction. 

We use two DGPs: the constant-effect DGP above, and a DGP with heterogeneous effects in which 
$Y_{ig}(d, z)$ gains the term $X_{ig} d z$ and the link index $0.1 - 0.25 |X_{ig} - X_{jg}|$ gains 
a sociability term $0.3 (X_{ig} + X_{jg})$, so that well-connected units have higher $X_{ig}$. 
(Under homophily on $|X_{ig} - X_{jg}|$ alone the composition term of Proposition~\ref{prop:dec}(b) 
is zero by symmetry.) The outcome model is correctly specified in both cases. The 
feasible-population effects, computed by simulation, equal the population effects under constant 
effects and exceed them by $0.15$ under heterogeneous effects, because units with at least one 
tie have $\mathbbm{E}[X_{ig} \mid \mathcal{A}] = 0.15$. The two DGPs have the same link intercept 
and almost the same density, so about a fifth of the units cannot reach level $(1, 1)$ in either. 

Table~\ref{tab:cond} shows the pattern Proposition~\ref{prop:dec} predicts. Under constant 
effects the two routes estimate the same number, both with nominal coverage, and AIPW-JM has the 
smaller RMSE, the empirical counterpart of Proposition~\ref{prop:eff}, with the caveat that 
AIPW-COND uses only the units with at least one tie. Under heterogeneous effects AIPW-COND is 
unbiased for the feasible-population effect and off by the composition term for the population 
effect, with coverage of the latter collapsing as $G$ grows, while AIPW-JM is unbiased for the 
population effect. 

The naive comparison is unbiased for the population effect here only because the correctly 
specified outcome model does the extrapolation for the units without ties. But the 
moment is then no longer orthogonal to the outcome-model parameters, since the units without 
ties contribute $\hat{m}(t) - \hat{m}(t')$ only and the variance estimator ignores the estimation 
error in that term, so its coverage falls short of nominal. The same mechanism applies in the 
application to the direct effects on adopters, where only 17 or 28 units carry a weighting term.

\begin{table}[h]
\centering
\caption{Marginal versus network-conditional score: bias and coverage for the population and 
feasible-population effects}\label{tab:cond}
\footnotesize
\begin{tabular}{l l *{6}{S[table-format=-1.3]}}
\hline\hline
 & & \multicolumn{2}{c}{$G = 50$} & \multicolumn{2}{c}{$G = 200$} & \multicolumn{2}{c}{$G = 1000$} \\
\cline{3-4} \cline{5-6} \cline{7-8}
Estimator & Truth & {Bias} & {Cov.} & {Bias} & {Cov.} & {Bias} & {Cov.} \\
\hline
\multicolumn{8}{l}{\textit{Constant effects, direct effect $\tau((1,1),(0,1))$}} \\
AIPW-JM          & population          &  0.002 & 0.936 &  0.001 & 0.951 &  0.000 & 0.948 \\
AIPW-COND        & feasible population &  0.000 & 0.931 &  0.002 & 0.948 & -0.001 & 0.944 \\
AIPW-COND, naive & population          &  0.000 & 0.899 &  0.002 & 0.922 & -0.001 & 0.923 \\
\multicolumn{8}{l}{\textit{Constant effects, spillover effect $\tau((1,1),(1,0))$}} \\
AIPW-JM          & population          &  0.001 & 0.951 &  0.000 & 0.955 &  0.000 & 0.946 \\
AIPW-COND        & feasible population & -0.005 & 0.937 & -0.003 & 0.956 &  0.001 & 0.956 \\
AIPW-COND, naive & population          & -0.003 & 0.935 & -0.002 & 0.952 &  0.001 & 0.959 \\
\hline
\multicolumn{8}{l}{\textit{Heterogeneous effects, direct effect}} \\
AIPW-JM          & population          &  0.010 & 0.934 &  0.005 & 0.941 &  0.003 & 0.949 \\
AIPW-COND        & feasible population &  0.006 & 0.933 &  0.005 & 0.938 &  0.002 & 0.946 \\
AIPW-COND        & population          &  0.159 & 0.887 &  0.157 & 0.776 &  0.155 & 0.269 \\
AIPW-COND, naive & population          &  0.007 & 0.902 &  0.005 & 0.914 &  0.003 & 0.927 \\
\multicolumn{8}{l}{\textit{Heterogeneous effects, spillover effect}} \\
AIPW-JM          & population          &  0.004 & 0.931 & -0.003 & 0.953 &  0.001 & 0.952 \\
AIPW-COND        & feasible population & -0.002 & 0.945 & -0.006 & 0.955 &  0.000 & 0.951 \\
AIPW-COND        & population          &  0.151 & 0.906 &  0.146 & 0.848 &  0.153 & 0.469 \\
AIPW-COND, naive & population          &  0.000 & 0.930 & -0.005 & 0.947 &  0.001 & 0.944 \\
\hline\hline
\end{tabular}
\parbox{0.95\textwidth}{\footnotesize{\bf Notes:} 2,000 replications; correctly specified outcome 
models; cluster-robust 95\% intervals. Population truths: $2$ and $1.5$. Feasible-population 
truths under heterogeneous effects: $2.15$ and $1.65$ (by simulation with $G = 20{,}000$). Both 
DGPs use the link index $0.1 - 0.25|X_{ig} - X_{jg}|$, to which the second adds $0.3(X_{ig} + X_{jg})$.}
\end{table}
\end{revblock}

Two further Monte Carlo studies look at heterogeneous treatment effects and at misspecification of 
the network and treatment models. In the first, the DGP for the potential outcome lets the effects 
vary with the covariate $X_{ig}$. In the second, an unobserved binary covariate affects \rev{both} 
network formation and treatment adoption. The AIPW estimator remains unbiased under effect 
heterogeneity. \rev{In the second design the unobserved covariate affects adoption and tie 
formation but not the outcome, so the score is misspecified without the covariate being a 
confounder, and the AIPW estimator stays unbiased with nominal coverage. Misspecification of the 
formation model matters only when the omitted variable also affects outcomes, and a design in which 
it does is needed to calibrate the sensitivity analysis of Section~\ref{sec:sens}. 
Appendix~\ref{app:conf} reports that design. With the unobserved covariate entering the outcome 
with coefficient $\gamma_y$, the bias of the AIPW estimator is close to $-0.12 \gamma_y$ for the 
direct effect and $-0.055 \gamma_y$ for the spillover effect, and coverage drops below $0.90$ once 
$\gamma_y$ exceeds about $0.5$, that is, once the omitted covariate shifts the outcome by half a 
standard deviation of the outcome error.} Appendix~\ref{B} gives the details of both studies.

  \section{Empirical Analysis}\label{ea}

  We now apply the method to the motivating example, the effect of U-Bridge adoption on political 
  participation in Uganda, and compare it with regression models, discussing how plausible the 
  assumptions behind each are in the setting of Section~\ref{me}.

  \subsection{Regression baselines}

  We start with regression models \rev{nested in} the second line of~(\ref{emp_mod}), where 
  $\alpha_g$ is a village fixed effect\rev{ and $\bar{X}_{-ig}$ collects the village means of the 
  other residents' covariates, the contextual variables of Section~\ref{sec:setup}}. The first 
  model ignores the network and assumes SUTVA. Since resident-level covariates may affect both 
  adoption and participation, it controls for them. Column (1) of Table~\ref{reg} shows a positive 
  and statistically significant coefficient on adoption, so adopting U-Bridge appears to raise 
  participation.
  \begin{equation}\label{emp_mod}
      \begin{gathered}
          T_{ig} = e(\{ig\}, \bm{D}_g, \bm{W}_g) = (D_{ig}, \sum_{\{jg\} \in \mathcal{N}_g} W_{ijg} D_{jg}) 
  = (D_{ig}, Z_{ig}), \\
  Y_{ig}=\alpha_g +\tau D_{ig}+\gamma Z_{ig} + \delta D_{ig} \cdot Z_{ig} + 
  X_{ig}^{\prime} \beta \rev{+ \bar{X}_{-ig}^{\prime} \beta_2} + \varepsilon_{ig}.
      \end{gathered}
  \end{equation}
  
  Residents are connected by several kinds of social tie within each village, so the next question 
  is whether one resident's adoption affects the participation of others. Such spillovers are 
  plausible because information travels through the network, and \citet{sanchez2022network} find 
  a sizable spillover from attending U-Bridge meetings on participation.\footnote{We include the 
  binary indicator of attending a U-Bridge meeting as a covariate.} To identify spillovers we 
  need an exposure mapping, and we use the first line of~(\ref{emp_mod}): participation depends 
  on a resident's own adoption and on the number of her ties who adopt, which we call her 
  adopter friends ($Z_{ig}$, the treated neighbors of Section~\ref{sec:setup}). The second model 
  adds $Z_{ig}$ as a further treatment variable, with its coefficient as the spillover effect. 
  Column (2) of Table~\ref{reg} shows a positive and sizable spillover, and the coefficient on 
  $D_{ig}$, the direct effect, falls but stays positive and significant. The third model adds the 
  interaction of $D_{ig}$ and $Z_{ig}$ to allow for effect heterogeneity. In Column (3) the 
  coefficients on $D_{ig}$ and $Z_{ig}$ both rise slightly, and the interaction is negative and 
  marginally significant, so the effect of adoption falls as $Z_{ig}$ rises. 
  \rev{The drop from Column (1) to Column (2) shows that ignoring interference loads part of the 
  spillover onto adoption, and Column (3) shows that the direct effect of adoption is not 
  constant in the number of adopter friends. These coefficients are causal only if the linear 
  model is right and adjusting for individual and contextual covariates suffices. The next 
  subsection relaxes both.}

\begin{table}[h]
  \centering
  \caption{Regression Models for the Effect of U-Bridge on Political Participation} \label{reg} 
  \begin{tabular}{l *{3}{S[table-format=-1.3, table-space-text-post={***}, table-align-text-post=false]}}
  \hline\hline
  {Outcome Variable} & \multicolumn{3}{c}{{Political Participation Index ($Y_{ig}$)}} \\
  & {Model (1)} & {Model (2)} & {Model (3)} \\ 
  \hline
  Adoption ($D_{ig}$) & 0.265***  & 0.204**  & 0.309*** \\
                  & {(0.066)}   & {(0.065)}   & {(0.084)}  \\
  Interference ($Z_{ig}$) &       & 0.044*** & 0.049*** \\
                                    &       & {(0.007)}  & {(0.007)}  \\   
  Interaction ($D_{ig} \cdot Z_{ig}$) &       &       & -0.029* \\
                                             &       &       & {(0.014)}  \\              
  Control Variable ($X_{ig}$) & $\checkmark$ & $\checkmark$ & $\checkmark$ \\
  \rev{Contextual covariates ($\bar{X}_{-ig}$)} & $\checkmark$ & $\checkmark$ & $\checkmark$ \\
  Village Fixed Effects ($\alpha_g$)    & $\checkmark$ & $\checkmark$ & $\checkmark$ \\
  Observations             & {3,018}        & {3,018}        & {3,018} \\
  \hline\hline
  \end{tabular}

  \parbox{0.8\textwidth}{
  \footnotesize{\bf Notes:} Robust standard errors clustered at the village level are in parentheses. \\
  *** $p<0.01$, ** $p<0.05$, * $p<0.1$.}
\end{table}

  \subsection{Generalized propensity score: specification and plausibility}\label{sec:expo}

Regression models are flexible and easy to fit, but their estimates can be biased by 
misspecification, and the coefficients are hard to interpret when the network itself is part of 
the exposure mapping. We therefore turn to the proposed method, \rev{which models treatment 
assignment and network formation explicitly}. The third regression specification serves as the 
outcome model of the AIPW estimator, and the network and adoption models in~(\ref{emp1}) give 
the scores.\footnote{The network effect in adoption documented by \citet{ferrali2020takes} is 
consistent with joint modeling of treatment assignment and network formation, since the causal 
chain $ D_{jg} \leftarrow X_{jg} \rightarrow W_{ijg} \leftarrow X_{ig} \rightarrow D_{ig} $ implies 
correlation between $D_{ig}$ and $D_{jg}$.}
  \begin{equation}\label{emp1}
      \begin{gathered}
          Pr(W_{ijg} = 1 \mid \bm{X}_g, \bm{U}_g) = \Phi(|X_{ig} - X_{jg}|'\beta_{w1,g} + X_{ijg}'\beta_{w2,g} + 
          \alpha_{ig} + \alpha_{jg} + \tilde{U}_{ig}'\tilde{U}_{jg} ), \\
          Pr(D_{ig} = 1 \mid \bm{X}_g) = \text{logit}^{-1}(\xi_{g} + X_{ig}'\beta_{d}).
      \end{gathered}
  \end{equation}

\begin{revblock}
\paragraph{Common exposure: a working model that the data push back on.} 
Assumption~\ref{asm:ce}(a) requires that, given covariates, a resident's adoption does not depend 
on her ties and her ties do not depend on her adoption. The data contain a direct check on the 
first half. Adding the number of adopter friends to the adoption model gives a coefficient of 
$0.35$ with a standard error of $0.057$ on the log-odds scale (profile-likelihood interval 
$[0.24, 0.47]$), so adoption is dependent among friends even given meeting attendance, the 
covariates and the village effects. Common exposure is therefore a working model, and we 
estimate the treatment-diffusion factorization of Section~\ref{sec:interdep}, with the number of 
adopter friends entering the adoption model, alongside it.

There are still reasons to report the common-exposure specification. The ties were elicited as 
family, friendship, lending and problem-solving relationships, types that are usually 
long-standing relative to a program introduced two years before the survey, so the second half of 
the assumption, that ties do not respond to adoption, is plausible, although the survey does not 
date the ties. The main channel through which information about U-Bridge travelled was the 
village meetings, and meeting attendance is in the adoption model, so part of the correlation 
between a resident's adoption and her friends' adoption that \citet{ferrali2020takes} document is 
what common exposure predicts through homophily. The score under common exposure has the closed 
form~(\ref{fact}), and, as Appendix~\ref{C} shows, the diffusion specification gives estimates 
that are close to it.

Appendix~\ref{C} implements the diffusion specification as follows. The adoption model with the 
number of adopter friends as a regressor defines the full conditionals of an autologistic model, 
of which the Firth logit is the maximum pseudo-likelihood estimate. The scores at every level 
are then recomputed by Gibbs sampling from the implied joint law of adoption given simulated 
networks, as in Section~\ref{sec:interdep}, with 2,000 joint draws per village. The resulting 
spillover estimates (Table~\ref{tab:rob}) are $0.025$ and $0.072$, against $0.034$ and $0.051$ 
under common exposure, and the direct effects on adopters are $0.37$ and $0.19$, against $0.36$ 
and $0.19$. The conclusions do not depend on the factorization. 

The diffusion specification does not, however, repair the misfit of the exposure distribution. 
The joint law implied by the estimated full conditionals puts $0.45$ of residents at zero 
adopter friends and $0.12$ at three or more, no closer to the observed $0.33$ and $0.19$ than 
common exposure. In several villages the adoption rate implied by the autologistic law also 
differs markedly from the observed one, a known property of pseudo-likelihood estimates of such 
models. Peer dependence in adoption given the observed covariates therefore does not explain the 
clustering of adopters among friends, which points to unobserved attributes shared by friends. 
The latent-augmented analysis of Section~\ref{sec:sens} supports this reading.

\paragraph{Unconfounded network formation.} 
Assumption~\ref{asm:ig} requires that no unobserved resident characteristic drives both ties (or 
adoption) and participation. The survey measures the characteristics that the ethnographic and 
quantitative literature on these villages identifies as the main sources of homophily and of 
participation: age, gender, income, education, formal leadership, phone ownership, and a 
behavioral measure of care for the community. We do not rely on observed homophily alone. The AME 
network model in~(\ref{emp1}) has additive node effects $\alpha_{ig}$, which absorb unobserved 
differences in sociability or status that lead leaders and other central residents to form more 
ties, and multiplicative latent positions $\tilde{U}_{ig}$, which absorb unobserved similarity 
among connected residents. In the terms of Section~\ref{sec:latent}, the score conditions on 
these estimated latent attributes and averages over the idiosyncratic component of ties. 
15 of the 16 villages have between 102 and 282 surveyed residents, enough to estimate 
node-level latent positions with reasonable precision 
\citep{mcfowland2023estimating, li2023statistical}. The 16th (village 11) has 28 residents, and 
its model is fitted with one multiplicative dimension and without the phone covariate, which 
predicts the ties in that village perfectly (complete separation; Appendix~\ref{C}). The 
assumption still excludes an unobserved attribute that affects participation and adoption or 
ties but leaves no trace in the network structure. Section~\ref{sec:sens} assesses sensitivity to 
this possibility.

\paragraph{Choice of the exposure mapping.} We take the effective treatment to be own adoption 
and the number of adopter friends, $Z_{ig}$. Alternatives are the share of adopter friends, 
exposure weighted by tie type, or the covariates of adopter friends. The count is the natural 
choice here for two reasons. Substantively, the mechanism is informational: each adopter friend 
is one more source of information about the technology and about the responsiveness of officials, 
and the marginal value of one more source need not depend on how many non-adopter friends one 
has. Statistically, each added dimension of the exposure thins the support. With 135 adopters 
among 3,018 residents, most residents have zero or one adopter friend, and exposure levels 
defined by shares or by tie type would have too few observations for credible overlap, the 
trade-off between richness and positivity that \citet{savje2023causal} emphasize. The count also 
has a transparent interpretation under Section~\ref{sec:cond}: $\mu(d, z)$ is the average 
participation that would obtain if a resident drawn from the population were assigned adoption 
status $d$ and $z$ adopter friends, through whatever ties the formation process produces. If the 
true exposure mechanism is richer, the estimated contrasts are exposure effects in the sense of 
\citet{savje2023causal}: well-defined averages over the finer exposures that the count 
aggregates, but no longer the effect of a single intervention. As a check, 
Section~\ref{sec:results} also reports estimates for the exposure mapping 
$(D_{ig}, \mathbbm{1}\{Z_{ig} \geq 1\})$ among residents with at least one tie, for which every unit 
can reach both levels and Proposition~\ref{prop:eff} applies, and which pools the 130 adopters 
with at least one adopter friend.
\end{revblock}
  
  We fit the network model separately for each village to allow for village-level 
  heterogeneity, using the additive and multiplicative effects model 
  (AME, \citealp[]{hoff2021additive}), the first line of~(\ref{emp1}), for the tie between a pair 
  of residents. Here $\alpha_{ig}$ and $\tilde{U}_{ig}$ are unobserved resident-level random 
  effects.\footnote{The AME model is estimated by Markov chain Monte Carlo with priors on these 
  terms, so they are random rather than fixed effects.} The controls are absolute differences in 
  age, gender, income, secondary education, and the proxy for care about the community, which 
  capture observed homophily, plus indicators for whether at least one of the pair owns a phone 
  and whether at least one holds a leadership role. The estimated coefficients vary across 
  villages, but in most of them the homophily coefficients are negative\rev{, with posterior 
  standard deviations small relative to the point estimates}, and holding a leadership role 
  raises tie formation. Table~\ref{nreg2} in Appendix~\ref{C} reports the estimates.
  
  For treatment assignment we fit a \rev{Firth-penalized} logit model, the second line 
  of~(\ref{emp1}), for adoption of U-Bridge with resident-level covariates and village fixed 
  effects ($\xi_{g}$) for unobserved village-level heterogeneity\rev{; the penalty guards against 
  separation with 135 adopters and 16 village effects}. Table~\ref{reg2} in Appendix~\ref{C} 
  reports the estimates. Age, gender, education and leadership have statistically significant 
  coefficients, income does not, and \rev{owning} a phone and attending a U-Bridge meeting are 
  strong predictors of adoption. 

  We then compute the score from~(\ref{fact}) with the fitted network and adoption 
  models\rev{. The latent node effects are handled by the plug-in of Section~\ref{sec:latent}. 
  The link probability of each pair is the posterior-predictive probability of the tie, 
  $\hat{p}_{ijg} = \mathbbm{E}[\Phi(\eta_{ijg}) \mid \bm{W}_g]$, where $\eta_{ijg}$ is the AME linear 
  predictor including the latent effects, computed as the posterior mean of the simulated tie 
  over the stored MCMC scans. These probabilities enter the closed-form score~(\ref{fact}) 
  directly, at every level $(d, z)$ with $z$ up to the largest observed value. The plug-in 
  integrates over the posterior of $\bm{U}_g$ given the observed network, not over the prior as 
  in~(\ref{gpslat}). Plugging the posterior mean of $\eta_{ijg}$ into $\Phi(\cdot)$ instead would 
  under-predict degrees by about fifteen percent, because the posterior mean of the 
  multiplicative effects is shrunk by their rotational invariance; the posterior-predictive 
  probabilities reproduce the observed degree of every resident almost exactly (correlation above 
  $0.99$ in every village). Through the additive effects the score therefore conditions on each 
  resident's realized degree, a point we come back to when comparing it with the 
  network-conditional score. The main specification does not include $\hat{\bm{U}}_g$ in the 
  adoption and outcome models, so it is valid if the latent effects affect ties only; 
  Section~\ref{sec:sens} relaxes this.} 

  As a diagnostic, the average of the estimated score over residents, 
  $N^{-1} \sum \hat{r}(\{ig\}, t; \bm{X}_g)$, should reproduce the marginal frequency of each 
  exposure level, since $\mathbbm{E}[r(\{ig\}, t; \bm{X}_g)] = Pr(T_{ig} = t)$. \rev{Figure~\ref{eme} in 
  Appendix~\ref{C} compares the implied distribution of the number of adopter friends with its 
  empirical frequency. Under common exposure the fitted models reproduce the share of residents 
  with exactly one adopter friend ($0.32$ in both) but put more mass at zero ($0.42$ against 
  $0.33$) and less at three or more ($0.12$ against $0.19$): adopters are more clustered among 
  friends than independent adoption given covariates implies. This is the network effect in 
  adoption that \citet{ferrali2020takes} document, and it is the evidence against 
  Assumption~\ref{asm:ce}(a) discussed above. The figure checks the fit of the models rather than 
  balance; Table~\ref{tab:bal} in Appendix~\ref{C} reports balance with the weights and samples 
  the estimators use, and Section~\ref{sec:results} discusses it with the results.}

  \subsection{Estimated direct and spillover effects}\label{sec:results}

\begin{revblock}
    \paragraph{Estimands.} We report two kinds of effect, written for a resident with $z$ 
    adopter friends:
    \begin{equation}\label{est:app}
    \tau^{s}(z) = \tau((0, z), (0, 0)), \qquad 
    \tau^{e}(z) = \mathbbm{E}[Y_{ig}(1, z) - Y_{ig}(0, z) \mid T_{ig} = (1, z)].
    \end{equation}
    The spillover effect $\tau^{s}(z)$ is the population effect, for non-adopters, of having $z$ 
    adopter friends instead of none. The direct effect $\tau^{e}(z)$ is the effect on the 
    exposed~(\ref{est:exp}): the effect of adoption on the adopters who have $z$ adopter friends. 
    We also report both for the exposure ``at least one adopter friend,'' written 
    $\tau^{s}(\text{any})$ and $\tau^{e}(\text{any})$. Estimators are labeled as in 
    Section~\ref{SMC}: AIPW-JM and IPW-JM use the marginal score, with and without the outcome 
    model; AIPW-COND uses the network-conditional score; AIPW-ATT and IPW-ATT use the odds weights 
    that identify the effect on the exposed, with and without the outcome model; and Regression is 
    the coefficient contrast from Column (3) of Table~\ref{reg}, which coincides with regression 
    imputation under that linear model. Table~\ref{tab:appest} in Appendix~\ref{C} gives every 
    estimate with its standard error and interval, and Figure~\ref{te} plots them. In the text we 
    quote estimates with village-clustered standard errors in parentheses.

    \paragraph{Why two kinds of estimand.} Overlap is ample for the spillover component and thin 
    for adoption. Non-adopters with zero, one or two adopter friends are common, so the population 
    spillover effects are the natural targets. Adoption is rare: the median estimated adoption 
    probability is $0.007$, so the score at any adopter level $(1, z)$ is below $0.05$ for more 
    than ninety-five percent of residents, and a population direct effect $\mu(1, z) - \mu(0, z)$ 
    cannot be estimated by weighting without extrapolating from a handful of adopters to everyone. 
    (Only 168 residents have both $r(\{ig\}, (1,1); \bm{X}_g)$ and $r(\{ig\}, (0,1); \bm{X}_g)$ in 
    $[0.05, 0.95]$; Table~\ref{tab:appest} reports the resulting estimate.) For the direct 
    effects we therefore target the effect on the exposed, which under Assumption~\ref{asm:ig} is 
    identified by weighting non-adopters with $z$ adopter friends by the odds 
    $r(\{ig\}, (1,z); \bm{X}_g) / r(\{ig\}, (0,z); \bm{X}_g)$. These odds are small and stable (they 
    sum to $24$ over the 934 non-adopters with one adopter friend and to $26$ over the 463 with 
    two), so no trimming is needed, and the estimator is the usual doubly robust estimator of an 
    effect on the treated. The scores are computed at every observed level, so no resident is 
    excluded because of her observed exposure.

    \paragraph{Trimming.} For the spillover effects a resident is kept if both scores 
    $r(\{ig\}, (0,z); \bm{X}_g)$ and $r(\{ig\}, (0,0); \bm{X}_g)$ lie in $[0.05, 0.95]$, a restriction 
    by covariate profile in the sense of \citet{crump2009dealing} (Table~\ref{tab:trim} in 
    Appendix~\ref{C}). For $\tau^{s}(1)$ this keeps 2,803 residents; the 215 excluded have a high 
    probability of adopting themselves (two thirds have secondary education, a third are leaders, 
    forty percent attended a meeting). For $\tau^{s}(2)$ it keeps 2,132; the 886 excluded are 
    residents for whom two adopter friends are unlikely given their expected number of adopter 
    friends, mostly residents with few ties, and their covariates are close to those of the kept 
    residents. The estimands are the population effects on these trimmed populations.

    \paragraph{Direct effects on adopters.} The right panel of Figure~\ref{te} shows the 
    picture: the direct effect is about $0.3$ on every estimator. The effect of adoption on the 
    17 adopters with one adopter friend is $0.36$ ($0.09$) by AIPW-ATT, $0.42$ ($0.08$) by IPW-ATT 
    and $0.28$ ($0.07$) by Regression. For the 28 adopters with two adopter friends the estimates 
    are $0.19$ ($0.11$), $0.23$ ($0.16$) and $0.25$ ($0.07$). For the 130 adopters with at least 
    one adopter friend they are $0.27$ ($0.07$) and $0.25$ ($0.09$) by AIPW-ATT and IPW-ATT. The 
    odds weights balance the covariates well (Table~\ref{tab:bal}). Before weighting, adopters 
    with one adopter friend differ from non-adopters with one adopter friend by $1.75$ standard 
    deviations in education and $1.40$ in meeting attendance; after weighting the differences are 
    $0.46$ and $0.55$, and all others are below $0.2$. For two adopter friends all weighted 
    differences are below $0.35$, and for ``any'' below $0.15$. The AIPW-ATT estimate splits into 
    the mean residual of the adopters' outcomes from the outcome model ($0.41$ for $z = 1$, $0.10$ 
    for $z = 2$) and a weighted correction from the comparison non-adopters ($-0.05$ and 
    $+0.08$). The weighting evidence rests on 17 and 28 adopters. It agrees with the regression, 
    but it cannot tell an effect that declines with the number of adopter friends from a 
    constant one.

    \paragraph{Spillover effects.} The left panel of Figure~\ref{te} shows a positive but small 
    spillover that the regression overstates at two adopter friends. For non-adopters, having one 
    adopter friend instead of none raises participation by $0.034$ ($0.016$) by AIPW-JM, $0.018$ 
    ($0.020$) by AIPW-COND, $0.023$ ($0.027$) by IPW-JM and $0.049$ ($0.007$) by Regression. 
    Having at least one adopter friend raises it by $0.052$ ($0.016$) by AIPW-JM. For two adopter 
    friends the AIPW-JM estimate is $0.051$ ($0.034$) and the Regression estimate $0.098$ 
    ($0.013$). The AIPW-JM estimates split into the regression contrast ($0.049$ and $0.098$) and 
    a weighting correction of $-0.015$ and $-0.047$. The weighting evidence pulls both spillovers 
    below the regression's values, and more strongly at two adopter friends, where the regression 
    imposes that the effect is exactly twice the effect of one. Weighting balances the covariates 
    across the spillover levels, with all weighted standardized differences below $0.22$. 

    Substantively, a non-adopter's participation rises by about $0.03$ to $0.05$ when a friend 
    adopts, an effect distinguishable from zero for ``at least one adopter friend'' and marginally 
    so for ``exactly one.'' The evidence that a second adopter friend adds as much again comes from 
    the linearity of the regression, not from the residents with two adopter friends. The AIPW-COND 
    estimates are close to the AIPW-JM estimates, but this says nothing about 
    Proposition~\ref{prop:dec}. Through the additive effects of the AME model, the plug-in score 
    reproduces each resident's realized degree almost exactly, so the two scores carry the same 
    information about ties, and the same outcome model dominates both estimators.

    \paragraph{Inference with 16 villages.} Table~\ref{tab:appinf} in Appendix~\ref{C} reports 
    three checks on the cluster-robust standard errors. Intervals with $t_{15}$ critical values 
    are wider by about $0.003$ at each end for the spillovers and $0.02$ for the direct effects. 
    A wild cluster bootstrap that re-estimates the outcome model under Rademacher-perturbed 
    residuals by village, with the scores held fixed, gives standard errors within ten percent 
    of the cluster-robust ones, so estimating the outcome model adds little to the uncertainty 
    once the influence function is clustered. The within-village estimator~(\ref{hac2}), a 
    heuristic here (Remark~\ref{rem:few}), gives larger standard errors for the spillovers 
    ($0.023$ for $\tau^{s}(1)$ against $0.016$), and with it the interval for $\tau^{s}(1)$ 
    includes zero. Our reading is that the spillover of one adopter friend is positive but small 
    enough that its statistical significance depends on the variance estimator, while the direct 
    effect on adopters with one adopter friend does not.
\end{revblock}

   \begin{figure}[!ht]
   \begin{center}
   \includegraphics[width=0.95\textwidth]{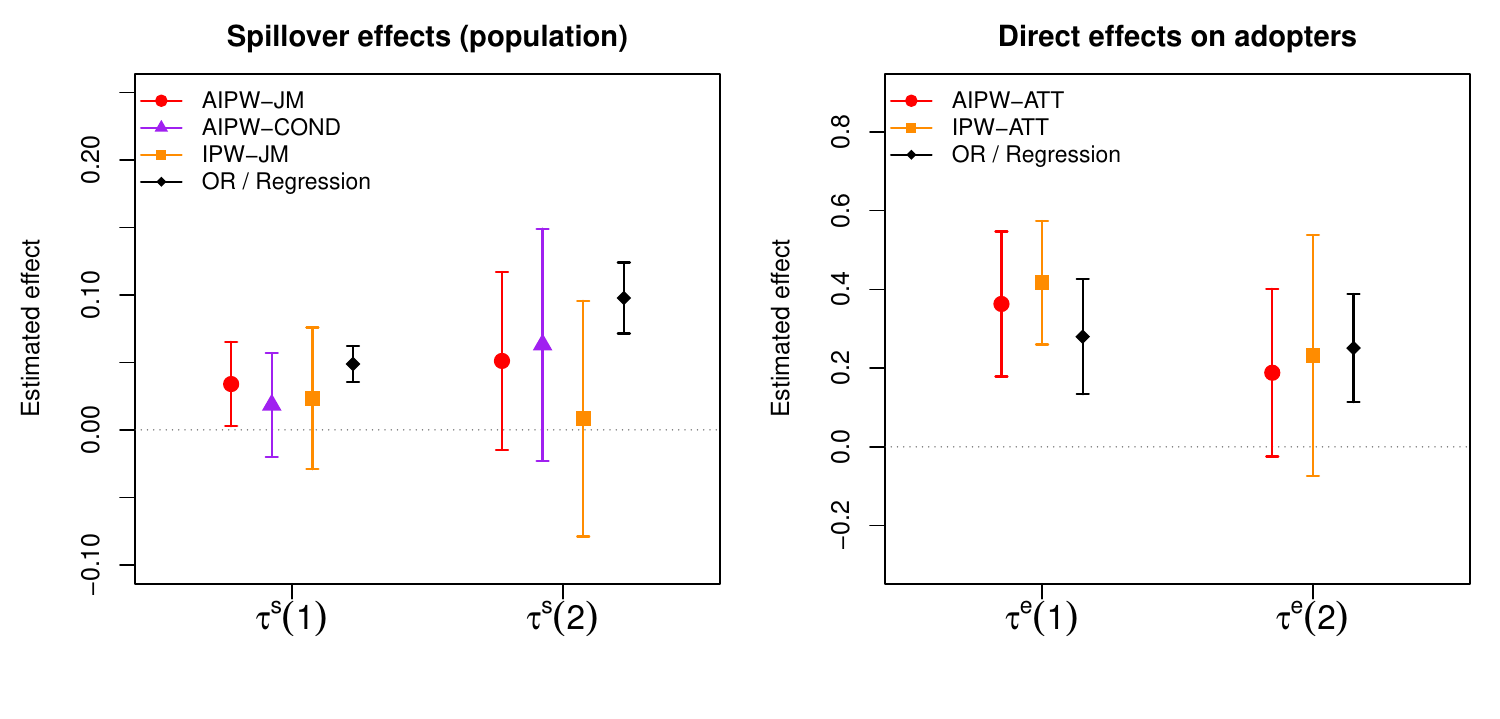}
   \end{center}
   \caption{Estimated Effects of U-Bridge Adoption on Political Participation}\label{te}
   \footnotesize\textbf{Note:} \rev{Point estimates and 95\% intervals with village-clustered 
   variances. Left: population spillover effects $\tau^{s}(1)$ and $\tau^{s}(2)$ by AIPW-JM 
   (marginal score), AIPW-COND (network-conditional score), IPW-JM, and Regression (the 
   coefficient contrast, identical to regression imputation). Right: direct effects on adopters 
   with one and two adopter friends, $\tau^{e}(1)$ and $\tau^{e}(2)$, by AIPW-ATT (odds weights 
   with the outcome model), IPW-ATT (odds weights alone), and Regression. Estimates and standard 
   errors are in Table~\ref{tab:appest}.}
   \end{figure}

\begin{revblock}
  \subsection{Sensitivity to unobserved confounding}\label{sec:sens}

  Assumption~\ref{asm:ig} cannot be tested. We assess sensitivity in two ways that use the 
  structure of the model (Tables~\ref{tab:appsens} and~\ref{tab:sensgrid} in Appendix~\ref{C}). 
  First, the latent node effects estimated by the AME model are the network's own evidence about 
  unobserved resident attributes. Re-estimating the adoption and outcome models with the posterior 
  means of $(\alpha_{ig}, \tilde{U}_{ig})$ added as covariates, and recomputing the scores with the 
  new adoption probabilities, implements the latent-augmented version of Assumption~\ref{asm:ig} 
  described in Section~\ref{sec:latent}. The direct effects on adopters move little ($0.32$ and 
  $0.17$ against $0.36$ and $0.19$). The spillover effects fall to $0.010$ ($0.022$) and $0.028$ 
  ($0.034$) from $0.034$ and $0.051$, and neither remains distinguishable from zero. Residents 
  who are more sociable than their covariates predict (large $\alpha_{ig}$) have both more 
  adopter friends and higher participation, and once this is adjusted for, less of the 
  association between adopter friends and participation remains. We regard this as the main 
  caveat on the spillover estimates.

  Second, for attributes that leave no trace in the network, we consider a hypothetical binary 
  attribute $U_{ig}$ with $Pr(U_{ig} = 1) = 1/2$ that lowers the adoption log-odds by $\gamma_d$ 
  and raises participation by $\gamma_y$. Given the fitted adoption model, the conditional 
  distribution of $U_{ig}$ given each resident's observed adoption and covariates follows from 
  Bayes' rule. We draw $U_{ig}$ from it, add $\gamma_y U_{ig}$ to the outcome, re-estimate ignoring 
  $U_{ig}$, and average the shift in the estimates over draws. The realized network and the scores 
  are held fixed, so the exercise measures confounding of adoption only, and the fitted log-odds 
  are treated as those of residents with $U_{ig} = 0$. The shift in the direct effects on adopters 
  is 
  \begin{equation}\label{sensshift}
  -\gamma_y \left[ Pr(U_{ig} = 1 \mid D_{ig} = 1, \bm{X}_g) - Pr(U_{ig} = 1 \mid D_{ig} = 0, \bm{X}_g) \right], 
  \end{equation}
  which for rare adoption is about $-0.13 \gamma_y$, $-0.22 \gamma_y$ and $-0.41 \gamma_y$ at 
  $\gamma_d = 0.5$, $1$ and $2$, and cannot exceed $-0.5 \gamma_y$ in magnitude. A confounder as 
  strong as secondary education in the adoption model ($\gamma_d = 2$) that also raises 
  participation by $0.5$ reduces $\hat{\tau}^{e}(1)$ from $0.36$ to $0.16$; bringing it to zero 
  would take $\gamma_y$ of at least $0.7$ whatever the value of $\gamma_d$. The spillover effects 
  do not move in this exercise because it leaves ties untouched. The Monte Carlo design of 
  Appendix~\ref{app:conf}, in which the omitted covariate also drives tie formation, shows that 
  such a confounder biases the spillover effects as well. Its magnitudes do not transfer to the 
  villages, because the bias depends on how strongly the confounder shifts link probabilities and 
  on the network density, both of which differ between the simulated groups of four and villages 
  with a mean degree of fifteen.
\end{revblock}

\section{Conclusion}\label{sec:conc}

This paper proposes a generalized propensity score for treatment effects under interference when 
social ties, like the treatment, depend on covariates that affect outcomes. The score is built 
from a model for treatment assignment and a model for network formation, and average potential 
outcomes and treatment effects are estimated by augmented inverse probability weighting.

\begin{revblock}
The main conceptual point is that treating the network as a component of a joint assignment 
turns the choice between conditioning on the realized network and averaging over it into a 
choice of estimand and assumptions, in the same way as the choice between effects on the treated 
and average effects. Conditioning requires less and is robust to latent homophily that does not 
affect adoption. Averaging requires network formation to be unconfounded given observed 
covariates, and in exchange delivers a population estimand, effects at exposure levels that 
poorly connected units cannot reach, and, where both routes target the same estimand, a weakly 
smaller variance. Latent-variable network models let one take a position in between. 

For practice, the paper's advice is specific. Choose the estimand that matches the question, 
whether what the program did among the units who could have been exposed or what it would do in 
a population whose ties have not been observed, and then choose the score that identifies it. 
Check the fitted assignment models against the observed exposure distribution, and re-estimate 
with the latent node effects of the network model added as covariates, which in the application 
left the direct effects intact and halved the spillovers.
\end{revblock}

Monte Carlo studies examine the finite sample properties of the estimator, and an application to 
the adoption of a political communication technology in Uganda compares it with regression. 
\rev{In the application the direct effects agree across estimators, while the spillover estimates 
differ because the regression imposes linearity in the number of adopter friends. The weighting 
evidence on direct effects is thin because adoption is rare, and the spillover estimates are 
sensitive to the latent node attributes recovered from the network.}

The method has limitations. It \rev{focuses} on multiple disjoint groups rather than a single 
network. \rev{Section~\ref{sec:single} shows that identification, positivity and estimation of the 
score extend to a single sparse network, but the inference results there are a sketch, and a 
variance estimator that is valid when only the score is correctly specified remains open.} It 
relies on parametric forms for the treatment assignment mechanism, the network formation process 
and the exposure mapping, and misspecification of any of these biases the score. \rev{In the 
many-small-groups regime it cannot accommodate latent node attributes that affect outcomes, and 
the plug-in of Section~\ref{sec:latent} handles them only heuristically in large groups.} It is 
more computationally demanding than approaches that estimate the score directly from a 
parametric model, which limits it to moderately sized datasets. And it is designed for 
cross-sectional data in which the network is random but static. Extending the framework to 
feedback among treatment assignment, network dynamics and outcomes in longitudinal network data 
\citep{snijders2017stochastic} is left for future work.

\paragraph{Acknowledgments}
We used Claude and Codex, the publicly available 2026 versions, for code assistance, copy editing, and checking mathematical proofs. We verified all outputs. The author is entirely responsible for the scientific content of the article.

\newpage
\bibliographystyle{chicago}
\bibliography{liter}

\clearpage

  \newpage
 \begin{center}
 {\large\bf SUPPLEMENTARY INFORMATION for \\ \textit{Identification and Estimation of Treatment Effects from Network Data with Confounded Social Links}}
 \end{center}

 \appendix

\section{Proofs}\label{A}

    \subsection{Proof of Proposition~\ref{prop:b} (balancing property)} 
    
    First, since the GPS $r(\{ig\}, t; \bm{X}_g)$ is a function of $\bm{X}_g$, 
\begin{equation}
    \begin{aligned}
        & Pr(T_{ig} = t|\bm{X}_g, r(\{ig\}, t; \bm{X}_g)) \\
        = &\mathbbm{E}\left[\mathbbm{1}\{e(\{ig\}, \bm{D}_g, \bm{W}_g) = t \}| \bm{X}_g,  
        r(\{ig\}, t; \bm{X}_g) \right] \\
        = &\mathbbm{E}\left[\mathbbm{1}\{e(\{ig\}, \bm{D}_g, \bm{W}_g) = t \}| \bm{X}_g \right] \\
        = & Pr(T_{ig} = t|\bm{X}_g) \\
        = & r(\{ig\}, t; \bm{X}_g).
    \end{aligned}
\end{equation}
By the law of iterated expectation, we have
\begin{equation}
    \begin{aligned}
        & Pr(T_{ig} = t|r(\{ig\}, t; \bm{X}_g)) \\
        = &\mathbbm{E}\left[\mathbbm{1}\{e(\{ig\}, \bm{D}_g, \bm{W}_g) = t \}| r(\{ig\}, t; \bm{X}_g) \right] \\
        = &\mathbbm{E}\left[\mathbbm{E}[\mathbbm{1}\{e(\{ig\}, \bm{D}_g, \bm{W}_g) = t \}| \bm{X}_g] | 
        r(\{ig\}, t; \bm{X}_g) \right] \\
        = & \mathbbm{E}\left[r(\{ig\}, t; \bm{X}_g)|r(\{ig\}, t; \bm{X}_g)\right] \\
        = & r(\{ig\}, t; \bm{X}_g).
    \end{aligned}
\end{equation}
Therefore, $Pr(T_{ig} = t|\bm{X}_g, r(\{ig\}, t; \bm{X}_g)) = Pr(T_{ig} = t|r(\{ig\}, t; \bm{X}_g)) $. 
Next, we show the balancing property of the GPS.
\begin{equation}\label{bal00}
    \begin{aligned}
        & Pr(T_{ig} = t| Y_{ig}(t),  r(\{ig\}, t; \bm{X}_g)) \\
        = &\mathbbm{E}\left[\mathbbm{1}\{e(\{ig\}, \bm{D}_g, \bm{W}_g) = t \}|Y_{ig}(t),  
        r(\{ig\}, t; \bm{X}_g) \right] \\
        = &\mathbbm{E}\left[ \mathbbm{E}
        [\mathbbm{1}\{e(\{ig\}, \bm{D}_g, \bm{W}_g)  = t \}|\bm{X}_g, Y_{ig}(t), 
        r(\{ig\}, t; \bm{X}_g)] | 
        Y_{ig}(t), r(\{ig\}, t; \bm{X}_g)) \right] \\
        = &\mathbbm{E}\left[ \mathbbm{E}
        [\mathbbm{1}\{e(\{ig\}, \bm{D}_g, \bm{W}_g)  = t \}| \bm{X}_g, Y_{ig}(t)] |  
        Y_{ig}(t), r(\{ig\}, t; \bm{X}_g) \right] \\
        = &\mathbbm{E}\left[ \mathbbm{E}
        [\mathbbm{1}\{e(\{ig\}, \bm{D}_g, \bm{W}_g)  = t \}| \bm{X}_g] | 
        Y_{ig}(t), r(\{ig\}, t; \bm{X}_g)\right] \\
         = &\mathbbm{E}\left[ r(\{ig\}, t; \bm{X}_g)| 
        Y_{ig}(t), r(\{ig\}, t; \bm{X}_g) \right] \\
        = & r(\{ig\}, t; \bm{X}_g).
    \end{aligned}
\end{equation}
We have the third line in Equation~(\ref{bal00}) by the law of iterated expectation, and then the 
fourth line because $\bm{X}_g$ contains all information about GPS 
$r(\{ig\}, t; \bm{X}_g)$. By Assumption~\ref{asm:ig}, we have the fifth line.

\subsection{Proof of Proposition~\ref{prop:a} (identification)}

\begin{equation}
    \begin{aligned}
       & \mathbbm{E}\{ \frac{Y_{ig}\mathbbm{1}\{ T_{ig} = t \} }{ r(\{ig\}, t; \bm{X}_g)} \}  \\
   = & \mathbbm{E}\{ \mathbbm{E}\{ \frac{Y_{ig}\mathbbm{1}\{ T_{ig} = t \} }{ r(\{ig\}, t; \bm{X}_g)} |\bm{X}_g \} \} \\
   = & \mathbbm{E}\{ \mathbbm{E}\{ \frac{Y_{ig}}{ r(\{ig\}, t; \bm{X}_g)} |T_{ig} = t, \bm{X}_g \} Pr(T_{ig} = t|\bm{X}_g) \} \\
   = & \mathbbm{E}\{ \mathbbm{E}\{ \frac{Y_{ig}}{ r(\{ig\}, t; \bm{X}_g)} |T_{ig} = t, \bm{X}_g \} r(\{ig\}, t; \bm{X}_g) \} \\
   = & \mathbbm{E}\{ \mathbbm{E}\{ Y_{ig} |T_{ig} = t, \bm{X}_g \}  \} \\
   = & \mathbbm{E}\{ \mathbbm{E}\{ Y_{ig}(t) | \bm{X}_g \}  \} \\
   = &  \mathbbm{E}\{ Y_{ig}(t)\} \\
   = & \mu(t).
    \end{aligned}
\end{equation}
\rev{The chain of equalities holds conditionally on $\bm{X}_g$ as well, which gives the first 
statement of Proposition~\ref{prop:a}; summing the conditional identity over the units of the group, 
taking expectations and dividing by $\mathbbm{E}[N_g]$ gives the pooled statement with $\mu(t)$ as 
defined in~(\ref{mudef}).} Therefore, average potential outcome $\mu(t)$ is identified given the GPS $r(\{ig\}, t; \bm{X}_g)$.

\begin{revblock}
\subsection{Proof of Proposition~\ref{prop:dec} (realized-sample, feasible-population and population effects)}\label{app:dec}

Throughout, Assumption~\ref{asm:ig} implies $Y_{ig}(t) \indep \bm{W}_g \mid \bm{X}_g$ (decomposition 
property), so $\mathbbm{E}[Y_{ig}(t) \mid \bm{X}_g, \bm{W}_g] = \mathbbm{E}[Y_{ig}(t) \mid \bm{X}_g] = m_t(\bm{X}_g)$ 
and $\mathbbm{E}[Y_{ig}(t) - Y_{ig}(t') \mid \bm{X}_g, \bm{W}_g] = \tau(t, t' \mid \bm{X}_g)$.

(a) Substituting into~(\ref{est:sample}) gives $\tau^{S}_g(t, t') = N_g^{-1} \sum_i \tau(t, t' \mid \bm{X}_g)$, 
which is a function of $\bm{X}_g$ alone. Multiplying by $N_g$, taking expectations and dividing by 
$\mathbbm{E}[N_g]$ gives $\mathbbm{E}[\sum_i \tau(t, t' \mid \bm{X}_g)] / \mathbbm{E}[N_g] = \tau(t, t')$ by 
the definition~(\ref{mudef}). If $N_g$ is constant, or independent of $\tau^{S}_g(t, t')$, then 
$\mathbbm{E}[N_g \tau^{S}_g] = \mathbbm{E}[N_g] \mathbbm{E}[\tau^{S}_g]$ and 
$\mathbbm{E}[\tau^{S}_g(t, t')] = \tau(t, t')$.

(b) The event $\mathcal{A}_{ig}(t, t')$ is a function of $(\bm{X}_g, \bm{W}_g)$. By iterated 
expectations, 
$\mathbbm{E}[Y_{ig}(t) - Y_{ig}(t') \mid \mathcal{A}_{ig}] = 
\mathbbm{E}[\mathbbm{E}[Y_{ig}(t) - Y_{ig}(t') \mid \bm{X}_g, \bm{W}_g] \mid \mathcal{A}_{ig}] = 
\mathbbm{E}[\tau(t, t' \mid \bm{X}_g) \mid \mathcal{A}_{ig}]$. If $\mathbbm{1}\{\mathcal{A}_{ig}\} \indep \bm{X}_g$, 
then $\mathbbm{E}[\tau(t, t' \mid \bm{X}_g) \mid \mathcal{A}_{ig}] = \mathbbm{E}[\tau(t, t' \mid \bm{X}_g)] = \tau(t, t')$; 
if $\tau(t, t' \mid \bm{X}_g) \equiv c$, both sides equal $c$.

(c) Identical to (b) with the conditioning event $\{\bm{W}_g = \bm{w}_g\}$, which has positive 
probability by assumption; if $\bm{W}_g \indep \bm{X}_g$, $\mathbbm{E}[h(\bm{X}_g) \mid \bm{W}_g = \bm{w}_g] = \mathbbm{E}[h(\bm{X}_g)]$ 
for every integrable $h$. \hfill $\square$

\subsection{Proof of Proposition~\ref{prop:idpos} (identification with and without conditioning)}

(a) On the event $\{r^{c}(\{ig\}, t; \bm{X}_g, \bm{W}_g) > 0\}$ the conditional expectation given 
$T_{ig} = t$ is well defined. By consistency and Assumption~\ref{asm:em}, 
$Y_{ig} = Y_{ig}(t)$ on $\{T_{ig} = t\}$, so 
$\mathbbm{E}[Y_{ig} \mid T_{ig} = t, \bm{X}_g, \bm{W}_g] = \mathbbm{E}[Y_{ig}(t) \mid T_{ig} = t, \bm{X}_g, \bm{W}_g]$. 
Since $T_{ig}$ is a function of $(\bm{D}_g, \bm{W}_g)$, Assumption~\ref{asm:ig2} gives 
$Y_{ig}(t) \indep T_{ig} \mid \bm{X}_g, \bm{W}_g$, so the last expression equals 
$\mathbbm{E}[Y_{ig}(t) \mid \bm{X}_g, \bm{W}_g]$. If every unit in the group has 
$\mathcal{A}_{ig}(t, t')$, then $\tau^{S}_g(t, t') = N_g^{-1} \sum_i (\mathbbm{E}[Y_{ig} \mid T_{ig} = t, \bm{X}_g, \bm{W}_g] 
- \mathbbm{E}[Y_{ig} \mid T_{ig} = t', \bm{X}_g, \bm{W}_g])$ is identified. In general, on 
$\{r^{c} > 0\}$,
$$
\mathbbm{E}\left[ \frac{Y_{ig} \mathbbm{1}\{T_{ig} = t\}}{r^{c}(\{ig\}, t; \bm{X}_g, \bm{W}_g)} \,\Big|\, \bm{X}_g, \bm{W}_g \right] 
= \frac{\mathbbm{E}[Y_{ig}(t) \mid \bm{X}_g, \bm{W}_g] \, r^{c}(\{ig\}, t; \bm{X}_g, \bm{W}_g)}{r^{c}(\{ig\}, t; \bm{X}_g, \bm{W}_g)}
= \mathbbm{E}[Y_{ig}(t) \mid \bm{X}_g, \bm{W}_g],
$$
and averaging over $(\bm{X}_g, \bm{W}_g)$ restricted to $\{r^{c} > 0\}$ identifies 
$\mathbbm{E}[Y_{ig}(t) \mid r^{c} > 0]$; averaging over $\mathcal{A}_{ig}(t, t')$ for both levels identifies 
$\tau(t, t' \mid \mathcal{A})$.

(b) Assumption~\ref{asm:ig} implies Assumption~\ref{asm:ig2} by the weak union property of 
conditional independence: $Y \indep (A, B) \mid C$ implies $Y \indep A \mid (B, C)$. The statement 
for $r$ is Proposition~\ref{prop:a}. The statement for $r^{c}$ follows from (a): the identified 
quantity is $\mathbbm{E}[Y_{ig}(t) \mid r^{c} > 0]$, which equals $\mu(t)$ when $Pr(r^{c} > 0) = 1$, and 
in general differs from it.

(c) For the exposure mapping~(\ref{epp}), $Z_{ig} = \sum_{\{jg\}} W_{ijg} D_{jg} \leq \deg_{ig}(\bm{W}_g)$ 
almost surely given $\bm{W}_g$, so $Pr(Z_{ig} = z \mid \bm{X}_g, \bm{W}_g) = 0$ whenever 
$z > \deg_{ig}(\bm{W}_g)$; under a dyadic model with link probabilities below one, 
$Pr(\deg_{ig}(\bm{W}_g) < z) > 0$ for every $z \geq 1$. For the marginal score, fix $d$ and 
$z \leq N_g - 1$ and choose any configuration $(\bm{d}^*_g, \bm{w}^*_g)$ with $d^*_{ig} = d$ and exactly 
$z$ units $\{jg\} \neq \{ig\}$ having $d^*_{jg} = 1$ and $w^*_{ijg} = 1$, all other $w^*_{ijg} = 0$. Then 
$r(\{ig\}, (d, z); \bm{X}_g) = \sum_{(\bm{d}_g, \bm{w}_g): e(\{ig\}, \bm{d}_g, \bm{w}_g) = (d, z)} 
f_{\bm{D}_g, \bm{W}_g \mid \bm{X}_g}(\bm{d}_g, \bm{w}_g) \geq 
f_{\bm{D}_g, \bm{W}_g \mid \bm{X}_g}(\bm{d}^*_g, \bm{w}^*_g) > 0$ by full support. \hfill $\square$

\subsection{Proof of Proposition~\ref{prop:eff} (efficiency)}\label{app:eff}

Write, for $s \in \{t, t'\}$ and a generic score $\varrho_{s, ig}$ equal to either 
$r(\{ig\}, s; \bm{X}_g)$ or $r^{c}(\{ig\}, s; \bm{X}_g, \bm{W}_g)$, and with 
$m_s(\bm{X}_g) = \mu(\{ig\}, s, \bm{X}_g)$,
$$
A_{s, ig} = \frac{\mathbbm{1}\{T_{ig} = s\} (Y_{ig} - m_s(\bm{X}_g))}{\varrho_{s, ig}}, 
\qquad 
B_{ig} = m_t(\bm{X}_g) - m_{t'}(\bm{X}_g) - \tau(t, t'),
$$
so that $\varphi_{ig} = A_{t, ig} - A_{t', ig} + B_{ig}$ for either estimator, and 
$\sum_i \varphi_{ig} = \sum_i (A_{t, ig} - A_{t', ig}) + \sum_i B_{ig}$. We use three facts.

\textit{Fact 1.} $\mathbbm{E}[A_{s, ig} \mid \bm{X}_g, \bm{W}_g, \bm{D}_g] = 0$ for both scores. Indeed, 
on $\{T_{ig} = s\}$ we have $Y_{ig} = Y_{ig}(s)$, and by Assumption~\ref{asm:ig}, 
$\mathbbm{E}[Y_{ig}(s) \mid \bm{X}_g, \bm{W}_g, \bm{D}_g] = m_s(\bm{X}_g)$. Since $\varrho_{s, ig}$ and 
$\mathbbm{1}\{T_{ig} = s\}$ are functions of $(\bm{X}_g, \bm{W}_g, \bm{D}_g)$, the claim follows.

\textit{Fact 2.} For $i \neq j$ and any $s, s'$, $\mathbbm{E}[A_{s, ig} A_{s', jg}] = 0$. Conditioning on 
$(\bm{X}_g, \bm{W}_g, \bm{D}_g)$, the product of the indicators and scores factors out, and 
$\mathbbm{E}[(Y_{ig}(s) - m_s(\bm{X}_g))(Y_{jg}(s') - m_{s'}(\bm{X}_g)) \mid \bm{X}_g, \bm{W}_g, \bm{D}_g] 
= \mathbbm{E}[(Y_{ig}(s) - m_s(\bm{X}_g))(Y_{jg}(s') - m_{s'}(\bm{X}_g)) \mid \bm{X}_g] = 0$ by 
Assumption~\ref{asm:ig}, which is stated for the collection of potential outcomes in the group and 
therefore applies to the pair $(Y_{ig}(s), Y_{jg}(s'))$, and by the within-group conditional 
independence of potential outcomes. For $i = j$ and $s \neq s'$ the product of indicators is zero.

\textit{Fact 3.} $\mathbbm{E}[A_{s, ig} B_{jg}] = 0$ for all $i, j$, by Fact 1, since $B_{jg}$ is a 
function of $\bm{X}_g$.

By Facts 1--3, 
$$
\text{var}\Big(\sum_i \varphi_{ig}\Big) = \sum_{s \in \{t, t'\}} \sum_i \mathbbm{E}[A^2_{s, ig}] + 
\text{var}\Big(\sum_i B_{ig}\Big),
$$
and the second term is the same for both scores. For the diagonal terms, 
$\mathbbm{E}[(Y_{ig} - m_s(\bm{X}_g))^2 \mid T_{ig} = s, \bm{X}_g, \bm{W}_g] = 
\text{var}(Y_{ig}(s) \mid \bm{X}_g) = \sigma^2_s(\bm{X}_g)$ by Assumption~\ref{asm:ig}, so
$$
\mathbbm{E}[A^2_{s, ig}] = \mathbbm{E}\left[ \sigma^2_s(\bm{X}_g) \, 
\mathbbm{E}\left[ \frac{\mathbbm{1}\{T_{ig} = s\}}{\varrho_{s, ig}^2} \,\Big|\, \bm{X}_g \right] \right].
$$
With the marginal score, $\varrho_{s, ig} = r(\{ig\}, s; \bm{X}_g)$ is $\bm{X}_g$-measurable and 
$\mathbbm{E}[\mathbbm{1}\{T_{ig} = s\} \mid \bm{X}_g] = r$, so the inner expectation is $1 / r(\{ig\}, s; \bm{X}_g)$. 
With the network-conditional score, $\mathbbm{E}[\mathbbm{1}\{T_{ig} = s\} \mid \bm{X}_g, \bm{W}_g] = r^{c}$, so 
the inner expectation is $\mathbbm{E}[1 / r^{c}(\{ig\}, s; \bm{X}_g, \bm{W}_g) \mid \bm{X}_g]$. 
Because $\mathbbm{E}[r^{c} \mid \bm{X}_g] = r$ and $x \mapsto 1/x$ is strictly convex on $(0, 1]$, 
Jensen's inequality gives $\mathbbm{E}[1 / r^{c} \mid \bm{X}_g] \geq 1 / r$, with equality if and 
only if $r^{c}$ is almost surely constant given $\bm{X}_g$. Summing over $i$ and $s$ 
gives~(\ref{eff}) and the strictness claim. \hfill $\square$

\subsection{Proof of Lemma~\ref{lem:gps} (rate of the estimated score)}\label{app:gps}

Fix a group with $N_g \leq \bar{N}$. Under Assumption~\ref{asm:ce}, 
$f_{\bm{D}_g, \bm{W}_g \mid \bm{X}_g}(\bm{d}_g, \bm{w}_g; \theta) = 
\prod_{i} P_{ig}(\theta_d)^{d_{ig}} (1 - P_{ig}(\theta_d))^{1 - d_{ig}} 
\prod_{i < j} p_{ijg}(\theta_w)^{w_{ijg}} (1 - p_{ijg}(\theta_w))^{1 - w_{ijg}}$ 
(for an undirected network; the directed case is identical with $i \neq j$), a product of 
$K_g = N_g + N_g(N_g - 1)/2 \leq K := \bar{N} + \bar{N}(\bar{N} - 1)/2$ factors in $[0, 1]$. For 
numbers $a_k, b_k \in [0, 1]$, telescoping gives 
$|\prod_{k=1}^{K_g} a_k - \prod_{k=1}^{K_g} b_k| \leq \sum_{k=1}^{K_g} |a_k - b_k|$. More useful than this pointwise bound is a total-variation bound. The law $f(\cdot; \theta)$ is a 
product measure over the $K_g$ binary coordinates, and for product measures the total variation 
distance is at most the sum of the coordinatewise distances (couple the coordinates independently): 
$d_{TV}(f(\cdot; \hat{\theta}), f(\cdot; \theta)) = \tfrac{1}{2} \sum_{(\bm{d}_g, \bm{w}_g)} |f(\bm{d}_g, \bm{w}_g; \hat{\theta}) - f(\bm{d}_g, \bm{w}_g; \theta)| \leq 
\sum_{k=1}^{K_g} |a_k(\hat{\theta}) - a_k(\theta)|$, where $a_k$ is the $k$-th treatment or link 
probability. Each $a_k$ is Lipschitz in $\theta$ with constant $L_F$ uniformly in its other 
arguments, so the total variation distance is at most $K L_F \|\hat{\theta} - \theta\|$. The score 
$r(\{ig\}, t; \bm{X}_g)$ is the probability under $f$ of the event $\{e(\{ig\}, \cdot) = t\}$, and 
$|P(E) - Q(E)| \leq d_{TV}(P, Q)$, hence 
$|\hat{r}(\{ig\}, t; \bm{X}_g) - r(\{ig\}, t; \bm{X}_g)| \leq K L_F \|\hat{\theta} - \theta\|$ 
uniformly in $g$, $\{ig\}$ and $t$. With latent random effects, 
$f(\bm{d}_g, \bm{w}_g; \theta) = \int \prod_i (\cdot) \prod_{i<j} (\cdot)(\bm{u}_g) \, \diff F_U(\bm{u}_g; \theta_w)$; 
for fixed $\bm{u}_g$ the integrand is a product measure and the total variation bound above applies, 
provided the Lipschitz constant of each link probability in $\theta_w$ is uniform in $\bm{u}_g$ (true 
for the probit link~(\ref{ame}) with bounded covariates and bounded latent positions), while for 
fixed $\theta$ in the integrand the mixture is Lipschitz in $\theta_w$ in total variation because 
$F_U$ is; the triangle inequality gives the same bound with a different constant. Since $\hat{\theta} - \theta = O_p(G^{-1/2})$ and $N \leq \bar{N} G$, 
$N^{-1} \sum_{\{ig\}} (\hat{r} - r)^2 \leq \max |\hat{r} - r|^2 = O_p(G^{-1}) = O_p(N^{-1})$.

For the exposure mapping~(\ref{epp}) under Assumption~\ref{asm:ce} without latent variables (with 
latent variables the summands $W_{ijg} D_{jg}$ are dependent across $j$ after marginalizing 
$\bm{U}_g$, and only the general bound applies), $r(\{ig\}, (d, z); \bm{X}_g) = P^d_{ig} \, \pi_{ig}(z; \bm{q}_{ig})$ 
where $\pi_{ig}(\cdot; \bm{q}_{ig})$ is the Poisson-binomial law with success probabilities 
$\bm{q}_{ig} = (q_{ijg})_{j \neq i}$, $q_{ijg} = p_{ijg} P^1_{jg}$. For two Poisson-binomial laws with 
success probabilities $\bm{q}$ and $\bm{q}'$, a coupling that draws the $j$-th Bernoulli pair with 
probability of disagreement $|q_j - q'_j|$ gives $d_{TV}(\pi(\cdot; \bm{q}), \pi(\cdot; \bm{q}')) \leq 
\sum_j |q_j - q'_j|$. Therefore 
$|\hat{r} - r| \leq |\hat{P}^d_{ig} - P^d_{ig}| \, \hat{\pi}_{ig}(z) + P^d_{ig} |\hat{\pi}_{ig}(z) - \pi_{ig}(z)| 
\leq |\hat{P}^d_{ig} - P^d_{ig}| + \sum_{j \neq i} |\hat{q}_{ijg} - q_{ijg}|$, which is the sharper bound. 
\hfill $\square$

\subsection{Proof of Proposition~\ref{ttt} (asymptotic normality)}\label{app:a2}

Throughout, $t$ and $t'$ are fixed, we write $r_{s, ig} = r(\{ig\}, s; \bm{X}_g)$ and 
$\mu_{s, ig} = \mu(\{ig\}, s, \bm{X}_g)$ for $s \in \{t, t'\}$, with hats for estimated quantities. 
Define, for $s \in \{t, t'\}$,
$$
\begin{gathered}
\psi_{s, ig} = \frac{\mathbbm{1}\{T_{ig} = s\}(Y_{ig} - \mu(\{ig\}, s, \bm{X}_g))}{r(\{ig\}, s; \bm{X}_g)} + \mu(\{ig\}, s, \bm{X}_g),
\\ 
\hat{\psi}_{s, ig} = \frac{\mathbbm{1}\{T_{ig} = s\}(Y_{ig} - \hat{\mu}(\{ig\}, s, \bm{X}_g))}{\hat{r}(\{ig\}, s; \bm{X}_g)} + \hat{\mu}(\{ig\}, s, \bm{X}_g),
\end{gathered}
$$
so that $\hat{\tau}(t, t') = N^{-1} \sum_{\{ig\}} (\hat{\psi}_{t, ig} - \hat{\psi}_{t', ig})$ and 
$\varphi_{ig}(t, t') = \psi_{t, ig} - \psi_{t', ig} - \tau(t, t')$. 

\paragraph{Step 1: decomposition.} Direct algebra gives, for each $s$,
\begin{equation}\label{decomp}
\begin{aligned}
\hat{\psi}_{s, ig} - \psi_{s, ig} = 
&\underbrace{\frac{\mathbbm{1}\{T_{ig} = s\}(Y_{ig} - \mu_{s, ig})(r_{s, ig} - \hat{r}_{s, ig})}{\hat{r}_{s, ig} r_{s, ig}}}_{\rho^{(1)}_{s, ig}}
+ \underbrace{(\hat{\mu}_{s, ig} - \mu_{s, ig})\Big(1 - \frac{\mathbbm{1}\{T_{ig} = s\}}{r_{s, ig}}\Big)}_{\rho^{(2)}_{s, ig}} \\
&+ \underbrace{\frac{\mathbbm{1}\{T_{ig} = s\}(\hat{\mu}_{s, ig} - \mu_{s, ig})(\hat{r}_{s, ig} - r_{s, ig})}{\hat{r}_{s, ig} r_{s, ig}}}_{\rho^{(3)}_{s, ig}}.
\end{aligned}
\end{equation}
Hence
$$
\sqrt{N}(\hat{\tau}(t, t') - \tau(t, t')) = \frac{1}{\sqrt{N}} \sum_{g} S_g(t, t') 
+ \sum_{k=1}^{3} \big( R^{(k)}_t - R^{(k)}_{t'} \big), 
\quad \text{where } R^{(k)}_s = \frac{1}{\sqrt{N}} \sum_{\{ig\}} \rho^{(k)}_{s, ig}.
$$

\paragraph{Step 2: the leading term.} Under Assumption~\ref{asm:samp}(a), the group sums 
$S_g(t, t') = \sum_{\{ig\} \in \mathcal{N}_g} \varphi_{ig}(t, t')$ are i.i.d. across $g$. By the 
pooled statement of Proposition~\ref{prop:a} and the definition~(\ref{mudef}), 
$\mathbbm{E}[\sum_i (\psi_{t, ig} - \psi_{t', ig})] = \mathbbm{E}[N_g] \tau(t, t')$, so $\mathbbm{E}[S_g] = 0$. By 
Assumption~\ref{asm:mom}(a)--(b) and $N_g \leq \bar{N}$, 
$\mathbbm{E}[S_g^2] \leq \bar{N} \sum_{i} \mathbbm{E}[\varphi_{ig}^2] \leq \bar{N}^2 C < \infty$, where 
$C$ depends only on $M$ and $\underline{\pi}$. The 
Lindeberg--L\'evy central limit theorem gives $G^{-1/2} \sum_g S_g \xrightarrow{d} \mathcal{N}(0, \mathbbm{E}[S_g^2])$, 
and the strong law gives $N / G = G^{-1} \sum_g N_g \to \mathbbm{E}[N_g]$ almost surely. By Slutsky's 
theorem, $N^{-1/2} \sum_g S_g = (G/N)^{1/2} G^{-1/2} \sum_g S_g \xrightarrow{d} 
\mathcal{N}(0, \mathbbm{E}[S_g^2] / \mathbbm{E}[N_g]) = \mathcal{N}(0, \sigma^2_{t, t'})$.

\paragraph{Step 3: the product remainder $R^{(3)}_s$.} By Assumption~\ref{asm:mom}(b) and the 
Cauchy--Schwarz inequality,
$$
\begin{aligned}
|R^{(3)}_s| &\leq \frac{\sqrt{N}}{\underline{\pi}^2} \, \frac{1}{N} \sum_{\{ig\}} |\hat{\mu}_{s, ig} - \mu_{s, ig}| |\hat{r}_{s, ig} - r_{s, ig}| \\
&\leq \frac{\sqrt{N}}{\underline{\pi}^2} \Big[ \frac{1}{N} \sum (\hat{\mu}_{s, ig} - \mu_{s, ig})^2 \Big]^{1/2} 
\Big[ \frac{1}{N} \sum (\hat{r}_{s, ig} - r_{s, ig})^2 \Big]^{1/2} = o_p(1)
\end{aligned}
$$
by the product-rate condition in Assumption~\ref{asm:rate}.

\paragraph{Step 4: the remainder $R^{(1)}_s$ under cross-fitting (Assumption~\ref{asm:rate}(a)).} Let the groups be partitioned into folds 
$\mathcal{I}_1, \ldots, \mathcal{I}_K$ and let $\hat{r}^{(-k)}$ denote the score estimated without 
fold $k$. Write $R^{(1)}_s = \sum_k R^{(1)}_{s, k}$ with $R^{(1)}_{s, k}$ the sum over groups in fold 
$k$. Conditional on the data outside fold $k$, $\hat{r}^{(-k)}$ is fixed, and for $g \in \mathcal{I}_k$ 
the summands $\rho^{(1)}_{s, ig}$ satisfy 
$\mathbbm{E}[\rho^{(1)}_{s, ig} \mid \bm{X}_g, \bm{W}_g, \bm{D}_g, \hat{r}^{(-k)}] = 0$, because 
$\mathbbm{E}[Y_{ig} - \mu_{s, ig} \mid T_{ig} = s, \bm{X}_g, \bm{W}_g, \bm{D}_g] = 
\mathbbm{E}[Y_{ig}(s) \mid \bm{X}_g] - \mu_{s, ig} = 0$ by Assumption~\ref{asm:ig} and correct 
specification of $\mu$. Groups are independent, so, using $(\sum_{i=1}^{N_g} a_i)^2 \leq N_g \sum_i a_i^2$ 
without any assumption on the within-group dependence of outcomes,
$$
\begin{aligned}
\mathbbm{E}\big[ (R^{(1)}_{s, k})^2 \mid \text{data outside fold } k \big] 
&= \frac{1}{N} \sum_{g \in \mathcal{I}_k} \mathbbm{E}\Big[ \Big( \sum_{\{ig\}} \rho^{(1)}_{s, ig} \Big)^2 \,\Big|\, \cdot \Big] \\
&\leq \frac{\bar{N}}{N} \sum_{g \in \mathcal{I}_k} \sum_{\{ig\}} 
\mathbbm{E}\Big[ \frac{\mathbbm{1}\{T_{ig} = s\} \sigma^2_s(\bm{X}_g) (r_{s, ig} - \hat{r}_{s, ig})^2}{\hat{r}^2_{s, ig} r^2_{s, ig}} \,\Big|\, \cdot \Big] \\
&\leq \frac{\bar{N} M^{1/2}}{\underline{\pi}^4} \, \frac{1}{N} \sum_{g \in \mathcal{I}_k} \sum_{\{ig\}} (\hat{r}_{s, ig} - r_{s, ig})^2 ,
\end{aligned}
$$
where $\sigma^2_s(\bm{X}_g) \leq M^{1/2}$ by Assumption~\ref{asm:mom}(a). The bound is $o_p(1)$ by 
Assumption~\ref{asm:rate}. Conditional Chebyshev and the fact that $K$ is fixed give $R^{(1)}_s = o_p(1)$.

\paragraph{Step 5: the remainder $R^{(2)}_s$ under cross-fitting.} With the same fold structure, conditional on the 
data outside fold $k$, $\mathbbm{E}[1 - \mathbbm{1}\{T_{ig} = s\} / r_{s, ig} \mid \bm{X}_g] = 0$ by 
correct specification of $r$, so $\mathbbm{E}[\rho^{(2)}_{s, ig} \mid \bm{X}_g, \hat{\mu}^{(-k)}] = 0$. 
Within a group, $T_{ig}$ and $T_{jg}$ are dependent, but 
$|\mathbbm{E}[\rho^{(2)}_{s, ig} \rho^{(2)}_{s, jg} \mid \bm{X}_g, \hat{\mu}^{(-k)}]| \leq 
\underline{\pi}^{-2} |\hat{\mu}_{s, ig} - \mu_{s, ig}| |\hat{\mu}_{s, jg} - \mu_{s, jg}|$. Therefore
$$
\begin{aligned}
\mathbbm{E}\big[ (R^{(2)}_{s, k})^2 \mid \cdot \big] 
&\leq \frac{1}{\underline{\pi}^2 N} \sum_{g \in \mathcal{I}_k} \Big( \sum_{\{ig\}} |\hat{\mu}_{s, ig} - \mu_{s, ig}| \Big)^2 \\
&\leq \frac{\bar{N}}{\underline{\pi}^2} \, \frac{1}{N} \sum_{g \in \mathcal{I}_k} \sum_{\{ig\}} (\hat{\mu}_{s, ig} - \mu_{s, ig})^2 = o_p(1),
\end{aligned}
$$
using $(\sum_{i=1}^{N_g} a_i)^2 \leq N_g \sum_i a_i^2$ and Assumption~\ref{asm:rate}. Hence 
$R^{(2)}_s = o_p(1)$.

\paragraph{Steps 4$'$ and 5$'$: $R^{(1)}_s$ and $R^{(2)}_s$ under parametric nuisances (Assumption~\ref{asm:rate}(b)).} 
The smoothness required of the parametric nuisances is the following: $r(\cdot; \theta)$ and 
$\mu(\cdot; \beta)$ are continuously differentiable in the parameters, with derivatives that are 
bounded and uniformly continuous in a neighborhood of the true values, uniformly in the other 
arguments; for $\mu$, the derivatives are bounded by a function of $\bm{X}_g$ with finite fourth 
moment and are uniformly continuous after division by that function. 
Write $\hat{r}_{s, ig} = r_{s, ig}(\hat{\theta})$ and $\hat{\mu}_{s, ig} = \mu_{s, ig}(\hat{\beta})$, and 
let $h_{s, ig}(\theta) = 1 / r_{s, ig}(\theta)$. From~(\ref{decomp}), 
$\rho^{(1)}_{s, ig} = \mathbbm{1}\{T_{ig} = s\}(Y_{ig} - \mu_{s, ig})[h_{s, ig}(\hat{\theta}) - h_{s, ig}(\theta)]$, 
and by the mean value theorem there are unit-specific intermediate points $\bar{\theta}_{ig}$ between 
$\hat{\theta}$ and $\theta$ with 
$h_{s, ig}(\hat{\theta}) - h_{s, ig}(\theta) = \nabla_\theta h_{s, ig}(\bar{\theta}_{ig})' (\hat{\theta} - \theta)$. 
Hence
$$
R^{(1)}_s = Q_N' (\hat{\theta} - \theta) + \Delta_N' (\hat{\theta} - \theta), 
\qquad 
Q_N = \frac{1}{\sqrt{N}} \sum_{\{ig\}} \mathbbm{1}\{T_{ig} = s\} (Y_{ig} - \mu_{s, ig}) \nabla_\theta h_{s, ig}(\theta),
$$
$$
\Delta_N = \frac{1}{\sqrt{N}} \sum_{\{ig\}} \mathbbm{1}\{T_{ig} = s\} (Y_{ig} - \mu_{s, ig}) 
\big[ \nabla_\theta h_{s, ig}(\bar{\theta}_{ig}) - \nabla_\theta h_{s, ig}(\theta) \big].
$$
The summands of $Q_N$ have conditional mean zero given $(\bm{X}_g, \bm{W}_g, \bm{D}_g)$ by 
Assumption~\ref{asm:ig} and correct specification of $\mu$, their second moments are bounded by 
Assumptions~\ref{asm:mom} and~\ref{asm:rate}(b), and groups are independent, so 
$Q_N = O_p(1)$ by Chebyshev's inequality applied to the normalized sum of independent group-level 
sums (using $(\sum_{i} a_i)^2 \leq N_g \sum_i a_i^2$ within groups). For $\Delta_N$, 
$\|\Delta_N\| \leq \sqrt{N} \cdot N^{-1} \sum_{\{ig\}} |Y_{ig} - \mu_{s, ig}| \cdot 
\sup_{\{ig\}} \|\nabla_\theta h_{s, ig}(\bar{\theta}_{ig}) - \nabla_\theta h_{s, ig}(\theta)\|$, where 
$N^{-1} \sum |Y_{ig} - \mu_{s, ig}| = O_p(1)$ and the supremum is $o_p(1)$ by the uniform continuity 
of the derivatives in Assumption~\ref{asm:rate}(b) together with 
$\max_{ig} \|\bar{\theta}_{ig} - \theta\| \leq \|\hat{\theta} - \theta\| \to_p 0$. Therefore 
$R^{(1)}_s = O_p(1) O_p(N^{-1/2}) + o_p(\sqrt{N}) O_p(N^{-1/2}) = o_p(1)$. The same argument with 
$\rho^{(2)}_{s, ig} = [\mu_{s, ig}(\hat{\beta}) - \mu_{s, ig}(\beta)](1 - \mathbbm{1}\{T_{ig} = s\} / r_{s, ig})$ 
gives $R^{(2)}_s = P_N' (\hat{\beta} - \beta) + o_p(1)$ with 
$P_N = N^{-1/2} \sum_{\{ig\}} \nabla_\beta \mu_{s, ig}(\beta) (1 - \mathbbm{1}\{T_{ig} = s\} / r_{s, ig})$, 
whose summands have conditional mean zero given $\bm{X}_g$ by correct specification of $r$ and 
second moments bounded by the dominating function with finite fourth moment, so $P_N = O_p(1)$ and 
$R^{(2)}_s = o_p(1)$; the remainder uses the uniform continuity of $\nabla_\beta \mu$ after division 
by the dominating function. This is the standard argument for parametric nuisances 
(see \citealp{farrell2015robust} for the Donsker version), and it covers the implementation 
in Sections~\ref{SMC} and~\ref{ea}, where the nuisances are fitted once on the full sample. 
Finally, under Assumption~\ref{asm:rate}(b) the bound $\hat{r} \in [\underline{\pi}, \bar{\pi}]$ holds 
with probability approaching one rather than almost surely; all statements above are on that event.

\paragraph{Step 6: conclusion and double robustness.} Combining Steps 2--5 with Slutsky's theorem 
yields~(\ref{asy}). If only $r$ is correctly specified, Steps 3 and 5 (or 5$'$) go through with 
$\mu$ replaced by the probability limit $\tilde{\mu}$ of $\hat{\mu}$, Step 4 (or 4$'$) no longer gives a 
remainder of order $o_p(1)$ but $R^{(1)}_s / \sqrt{N} = o_p(1)$ by Cauchy--Schwarz and consistency of $\hat{r}$, 
and $N^{-1} \sum_g S_g \to 0$ by the law of large numbers because 
$\mathbbm{E}[\sum_i \{\mathbbm{1}\{T_{ig} = s\}(Y_{ig} - \tilde{\mu}_{s, ig}) / r_{s, ig} + \tilde{\mu}_{s, ig}\}] = \mathbbm{E}[N_g] \mu(s)$ 
under correct $r$; the symmetric argument applies when only $\mu$ is correct. This establishes 
consistency under either model. In the first case $R^{(1)}_s$ is in general $O_p(1)$ rather than 
$o_p(1)$: it equals $Q_N'(\hat{\theta} - \theta)$ with $Q_N$ no longer centered, so the asymptotic 
variance acquires the first-stage term $\text{plim}\, N^{-1/2} Q_N \cdot \sqrt{N}(\hat{\theta} - \theta)$, 
which the estimator~(\ref{hac}) does not include. \hfill $\square$

\paragraph{Growing group sizes (Remark~\ref{rem:growth}).} Suppose Assumption~\ref{asm:samp}(b) is 
dropped, groups are independent but not identically distributed, outcomes are bounded so that 
$|S_g| \leq C N_g$, and $\sum_g \text{var}(S_g) \geq c N$ for some $c > 0$ (a non-degeneracy 
condition that must be assumed). Step 2 then uses the Lyapunov central limit theorem with exponent 
$2 + \delta$:
$$
\frac{\sum_g \mathbbm{E}|S_g|^{2 + \delta}}{(\sum_g \text{var}(S_g))^{1 + \delta/2}} 
\leq \frac{C^{2 + \delta} \max_g N_g^{1 + \delta} \sum_g N_g}{(c N)^{1 + \delta/2}} 
= C' \, \frac{\max_g N_g^{1 + \delta}}{N^{\delta/2}} ,
$$
which tends to zero if $\max_g N_g = o(N^{\delta / (2 + 2\delta)})$. Since $\delta$ can be taken as 
large as desired for bounded outcomes, $\max_g N_g = o(N^{1/2 - \epsilon})$ for some $\epsilon > 0$ 
suffices; with only fourth moments ($\delta = 2$) the condition is $\max_g N_g = o(N^{1/3})$. Steps 
3--5 go through with $\bar{N}$ replaced by $\max_g N_g$, which then requires 
$\max_g N_g \cdot N^{-1} \sum (\hat{\mu} - \mu)^2 = o_p(1)$ and the analogous condition for 
$\hat{r}$. The score rate in Lemma~\ref{lem:gps} is the binding constraint and is addressed in 
Appendix~\ref{app:single}. The variance consistency argument of Appendix~\ref{app:var} goes through 
with $\sigma^2_{t, t'}$ replaced by $N^{-1} \sum_g \text{var}(S_g)$.

\subsection{Proof of the consistency of the variance estimator in Proposition~\ref{ttt}}\label{app:var}

Write $\hat{S}_g = \sum_{\{ig\} \in \mathcal{N}_g} \hat{\varphi}_{ig}(t, t')$ so that 
$\hat{\sigma}^2_{t, t'} = N^{-1} \sum_g \hat{S}_g^2$. First, $N^{-1} \sum_g S_g^2 = (G / N) G^{-1} \sum_g S_g^2 
\to \mathbbm{E}[S_g^2] / \mathbbm{E}[N_g] = \sigma^2_{t, t'}$ almost surely by the strong law, since 
$\mathbbm{E}[S_g^2] < \infty$. Second, 
$|\hat{S}_g^2 - S_g^2| \leq (\hat{S}_g - S_g)^2 + 2 |S_g| |\hat{S}_g - S_g|$, so by Cauchy--Schwarz it 
suffices to show $N^{-1} \sum_g (\hat{S}_g - S_g)^2 = o_p(1)$. Now 
$\hat{S}_g - S_g = \sum_{\{ig\}} (\hat{\varphi}_{ig} - \varphi_{ig})$ and, by~(\ref{decomp}) together 
with $|1 / \hat{r} - 1 / r| \leq |\hat{r} - r| / \underline{\pi}^2$ and Assumption~\ref{asm:mom},
$$
|\hat{\varphi}_{ig} - \varphi_{ig}| \leq \sum_{s \in \{t, t'\}} \Big[ 
\frac{|Y_{ig}| + |\hat{\mu}_{s, ig}|}{\underline{\pi}^2} |\hat{r}_{s, ig} - r_{s, ig}| + 
\Big(1 + \frac{1}{\underline{\pi}}\Big) |\hat{\mu}_{s, ig} - \mu_{s, ig}| \Big] + |\hat{\tau}(t, t') - \tau(t, t')|.
$$
Using $(\sum_{i=1}^{N_g} a_i)^2 \leq \bar{N} \sum_i a_i^2$,
$$
\frac{1}{N} \sum_g (\hat{S}_g - S_g)^2 \leq C \bar{N} \sum_{s} \Big[ 
\frac{1}{N} \sum_{\{ig\}} (|Y_{ig}| + |\hat{\mu}_{s, ig}|)^2 (\hat{r}_{s, ig} - r_{s, ig})^2 + 
\frac{1}{N} \sum_{\{ig\}} (\hat{\mu}_{s, ig} - \mu_{s, ig})^2 \Big] + C \bar{N} (\hat{\tau} - \tau)^2 .
$$
The second and third terms are $o_p(1)$ by Assumption~\ref{asm:rate} and Proposition~\ref{ttt}. For 
the first, Cauchy--Schwarz and $|\hat{r} - r| \leq 1$ give 
$N^{-1} \sum (|Y_{ig}| + |\hat{\mu}_{s, ig}|)^2 (\hat{r} - r)^2 \leq [N^{-1} \sum (|Y_{ig}| + |\hat{\mu}_{s, ig}|)^4]^{1/2} 
[N^{-1} \sum (\hat{r} - r)^2]^{1/2} = O_p(1) \, o_p(1)$ by the fourth-moment conditions in 
Assumption~\ref{asm:mom}(a). \hfill $\square$
\end{revblock}

\section{Other Factorizations of Treatment Assignment and Network Formation}\label{D}


The paper's main case is the ``common exposure'' mechanism (upper left panel of 
Figure~\ref{factorize}), in which treatment assignment and network formation are conditionally 
independent. Since they are two separate processes, one may also precede and act on the other. 
The network may drive the diffusion of adoption (lower left panel of Figure~\ref{factorize}); 
then one models network formation first and treatment diffusion given the network, following the 
factorization of \citet{goldsmith2013social}. If treatment precedes network formation instead 
\citep{comola2021treatment}, the network is a mediator (upper right panel), and the mediation 
framework of \citet{imai2010identification} could be extended to network data to isolate the 
spillover of treatment that runs through tie formation. For a model-based approach, the 
instrumental variable strategy of \citet{comola2021treatment}, which models network formation as 
a consequence of treatment, gives the corresponding factorization. In both cases the two 
processes are modeled sequentially. Finally, treatment assignment and network formation may 
reinforce each other, a co-evolution process (lower right panel)\footnote{This case can be read 
as one in which the network and the treatment are measured at the same time and both may depend 
on their earlier values when longitudinal data are available. Longitudinal data are not required, 
though: the network and treatment measured together may be an equilibrium of co-evolution.}, 
which calls for a simultaneous equation model\footnote{One estimation strategy for such models is 
the latent variable model of \citet{han2021spatial}, built on the latent space model of 
\citet{hoff2021additive} for network formation.}. 

   \begin{figure}[!ht]
  \begin{center}
  \begin{tabular}{cc}
   \raisebox{-0.5\height}{\includegraphics[scale=0.6]{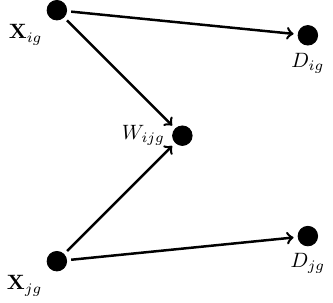}} &
   \raisebox{-0.5\height}{\includegraphics[scale=0.6]{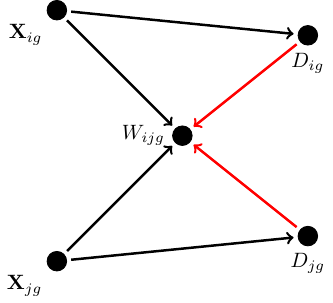}} \\
   Common Exposure & Endogenous Network \\
   \raisebox{-0.5\height}{\includegraphics[scale=0.6]{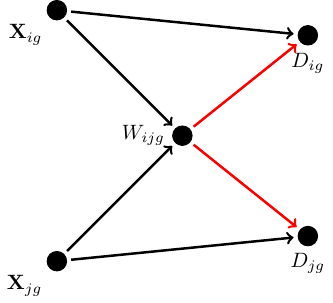}} &
   \raisebox{-0.5\height}{\includegraphics[scale=0.6]{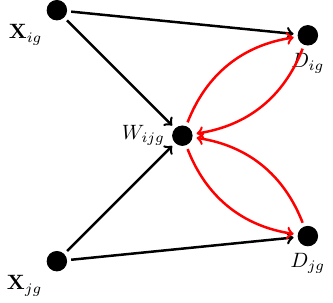}} \\
   Treatment Diffusion & Co-evolution 
   \end{tabular}
   \end{center}
   \caption{Factorization of Treatment Assignment and Network Formation}\label{factorize}
   \footnotesize\textbf{Note:} Arrows in red represent interactions between treatment assignment 
   and network formation.
   \end{figure}

\begin{revblock}
\section{Details for the Single-Network Extension}\label{app:single}

This appendix supports Section~\ref{sec:single}. There is one network of $N$ units; 
$X_1, \ldots, X_N$ are i.i.d.; Assumptions~\ref{asm:em}, \ref{asm:ig} and~\ref{asm:ce} hold with 
exposure mapping~(\ref{epp}); $P^1_i = Pr(D_i = 1 \mid \bm{X}) \in [\underline{\pi}_d, \bar{\pi}_d] \subset (0, 1)$; and 
links are independent given $\bm{X}$ with $p_{ij} = \rho_N h(X_i, X_j)$, $0 < \underline{h} \leq h \leq \bar{h} < \infty$, 
$N \rho_N \to \lambda \in (0, \infty)$ (the sparse regime of \citealp{graham2024sparse}). Let 
$q_{ij} = p_{ij} P^1_j$ and $\Lambda_i = \sum_{j \neq i} q_{ij}$.

\subsection{Proof of Proposition~\ref{prop:sparsepos}}

Given $\bm{X}$, $Z_i = \sum_{j \neq i} W_{ij} D_j$ is a sum of independent Bernoulli variables with 
success probabilities $q_{ij}$. By Le Cam's inequality (for the $\ell_1$ distance between the two laws), 
$\sum_z |Pr(Z_i = z \mid \bm{X}) - e^{-\Lambda_i} \Lambda_i^z / z!| \leq 2 \sum_{j \neq i} q_{ij}^2 
\leq 2 (N - 1) \rho_N^2 \bar{h}^2 = O(N^{-1})$. Moreover 
$\underline{\pi}_d \underline{h} (N-1) \rho_N \leq \Lambda_i \leq \bar{\pi}_d \bar{h} (N - 1) \rho_N$, so 
$\Lambda_i \in [\underline{\lambda}, \bar{\lambda}] \subset (0, \infty)$ for all $i$ and all large $N$. 
Since $\lambda \mapsto e^{-\lambda} \lambda^z / z!$ is bounded below by a positive constant on 
$[\underline{\lambda}, \bar{\lambda}]$ for every fixed $z$, 
$\liminf_N \min_i Pr(Z_i = z \mid \bm{X}) > 0$, and $r(i, (d, z); \bm{X}) = P^d_i Pr(Z_i = z \mid \bm{X})$ 
is bounded below by $\underline{\pi}_d$ times that constant. For the network-conditional score, 
$Z_i \leq \deg_i(\bm{W})$, so $r^{c}(i, (d, z); \bm{X}, \bm{W}) = 0$ when $\deg_i(\bm{W}) < z$. The 
degree $\deg_i(\bm{W}) \mid \bm{X}$ is Poisson-binomial with mean in 
$[\underline{h} (N-1) \rho_N, \bar{h} (N-1) \rho_N]$, so by the same Poisson approximation 
$Pr(\deg_i(\bm{W}) < z \mid \bm{X})$ is bounded below by a positive constant for every $z \geq 1$. 
The share of such units has conditional mean equal to the average of these probabilities, and the 
indicators $\mathbbm{1}\{\deg_i(\bm{W}) < z\}$ and $\mathbbm{1}\{\deg_j(\bm{W}) < z\}$ share only the 
link $W_{ij}$, so their covariance is $O(\rho_N)$; the variance of the share is therefore 
$O(N^{-1} + \rho_N) = O(N^{-1})$ and the share converges to the limit of its mean. \hfill $\square$

\subsection{Rate of the estimated score}

\begin{lemma}\label{lem:single}
Suppose in addition that treatment depends on own covariates only, $P^1_i = F_d(X_i; \theta_d)$, 
and $p_{ij} = F_w(X_i, X_j; \theta_w)$ with 
$F_d$ Lipschitz in $\theta_d$ uniformly in $x$, $|\partial F_w / \partial \theta_w| \leq C F_w$ 
in a neighborhood of $\theta_w$ (as for the logit link with bounded covariates and the sparsity rate 
absorbed in the intercept), $\hat{\theta}_d - \theta_d = O_p(N^{-1/2})$ and $\hat{\theta}_w - \theta_w = O_p(N^{-1/2})$. 
Then $\max_i |\hat{r}(i, (d, z); \bm{X}) - r(i, (d, z); \bm{X})| = O_p(N^{-1/2})$.
\end{lemma}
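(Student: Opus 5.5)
The plan is to reduce the claim to the sharper bound already proved in Lemma~\ref{lem:gps}. That bound does not involve group size, so it holds verbatim for the single network:
$$
|\hat{r}(i, (d,z); \bm{X}) - r(i, (d,z); \bm{X})| \leq |\hat{P}^d_i - P^d_i| + \sum_{j \neq i} |\hat{q}_{ij} - q_{ij}|.
$$
It comes from the coupling of the two Poisson-binomial laws $\pi(\cdot; \bm{q}_i)$ and $\pi(\cdot; \hat{\bm{q}}_i)$ together with $|P(E) - Q(E)| \leq d_{TV}(P, Q)$. The task is therefore to show that the right-hand side is $O_p(N^{-1/2})$ uniformly in $i$.

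The first term is easy. Since $F_d$ is Lipschitz in $\theta_d$ uniformly in $x$, we have $\max_i |\hat{P}^d_i - P^d_i| \leq L_d \|\hat{\theta}_d - \theta_d\| = O_p(N^{-1/2})$. This uses $|\hat{P}^0_i - P^0_i| = |\hat{P}^1_i - P^1_i|$.

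For the second term, I would split $\hat{q}_{ij} - q_{ij} = (\hat{p}_{ij} - p_{ij}) \hat{P}^1_j + p_{ij}(\hat{P}^1_j - P^1_j)$, which gives
$$
\sum_{j \neq i} |\hat{q}_{ij} - q_{ij}| \leq \sum_{j \neq i} |\hat{p}_{ij} - p_{ij}| + \max_j |\hat{P}^1_j - P^1_j| \sum_{j \neq i} p_{ij}.
$$
Under sparsity, $\sum_{j \neq i} p_{ij} \leq \bar{h} (N-1) \rho_N = O(1)$ uniformly in $i$, so the last piece is $O_p(N^{-1/2})$.

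The crux is the link piece. A plain Lipschitz bound $|\hat{p}_{ij} - p_{ij}| \leq L \|\hat{\theta}_w - \theta_w\|$ summed over $N - 1$ partners would give $O_p(N^{1/2})$, which is useless. This is exactly why the lemma assumes the relative-derivative condition $|\partial F_w / \partial \theta_w| \leq C F_w$. I would use the mean value theorem along the segment from $\theta_w$ to $\hat{\theta}_w$. On the event $\|\hat{\theta}_w - \theta_w\| \leq \delta$, whose probability tends to one, the segment lies in the neighborhood where the condition holds. The condition says $|\partial \log F_w / \partial \theta_w| \leq C$, so Gr\"onwall, or simply integrating $\log F_w$ along the segment, gives
$$
F_w(X_i, X_j; \bar{\theta}) \leq e^{C \delta} F_w(X_i, X_j; \theta_w)
$$
for every intermediate point $\bar{\theta}$. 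It follows that
$$
|\hat{p}_{ij} - p_{ij}| \leq C e^{C\delta} p_{ij} \|\hat{\theta}_w - \theta_w\|,
$$
uniformly in $(i, j)$. Summing over $j$ and using $\sum_j p_{ij} = O(1)$ again yields $\max_i \sum_{j \neq i} |\hat{p}_{ij} - p_{ij}| = O_p(N^{-1/2})$.

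The main obstacle is making this step rigorous while the intercept absorbs $\log \rho_N$. The parameter must be centred so that the neighborhood and the constant $C$ do not depend on $N$, which is how I read the parenthetical in the statement. The rate $\hat{\theta}_w - \theta_w = O_p(N^{-1/2})$ is taken from \citet{graham2024sparse} as stated. Collecting the three pieces gives the claim, with the maximum over $i$ and over the finitely many $(d, z)$ of interest.
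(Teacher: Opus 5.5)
Your proposal is correct and follows the paper's proof step by step: the Poisson-binomial total-variation bound from Lemma~\ref{lem:gps}, the split $|\hat{q}_{ij} - q_{ij}| \leq |\hat{p}_{ij} - p_{ij}| + p_{ij}|\hat{P}^1_j - P^1_j|$, the mean value theorem combined with the relative-derivative condition, and $\sum_{j \neq i} p_{ij} = O(1)$ under sparsity. The one addition is your integration of $\log F_w$ along the segment, which justifies the step $\tilde{p}_{ij} \leq C' p_{ij}$ that the paper asserts without detail.
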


\begin{proof}
By the total-variation bound in Lemma~\ref{lem:gps}, 
$|\hat{r}_i - r_i| \leq |\hat{P}^d_i - P^d_i| + \sum_{j \neq i} |\hat{q}_{ij} - q_{ij}|$. The first term 
is $O_p(N^{-1/2})$ uniformly in $i$. For the second, 
$|\hat{q}_{ij} - q_{ij}| \leq |\hat{p}_{ij} - p_{ij}| + p_{ij} |\hat{P}^1_j - P^1_j|$, and by the mean 
value theorem $|\hat{p}_{ij} - p_{ij}| \leq C \tilde{p}_{ij} \|\hat{\theta}_w - \theta_w\|$ with 
$\tilde{p}_{ij} \leq C' p_{ij}$ for $\hat{\theta}_w$ in a neighborhood of $\theta_w$. Hence 
$\sum_{j \neq i} |\hat{q}_{ij} - q_{ij}| \leq C'' \big( \sum_{j \neq i} p_{ij} \big) 
(\|\hat{\theta}_w - \theta_w\| + \max_j |\hat{P}^1_j - P^1_j|) = O(1) \, O_p(N^{-1/2})$, 
because $\sum_{j \neq i} p_{ij} \leq \bar{h} (N - 1) \rho_N = O(1)$ under sparsity. The rate 
$\hat{\theta}_w - \theta_w = O_p(N^{-1/2})$ for the sparse dyadic logit is Theorem 1 of 
\citet{graham2024sparse}, whose rate is the inverse square root of the expected number of links, 
of order $N^2 \rho_N \asymp N$.
\end{proof}

Node-level latent positions are not covered by this lemma: in the bounded-degree regime each 
node carries $O_p(1)$ links and its latent position cannot be estimated consistently. The plug-in 
score of Section~\ref{sec:latent} requires degrees that grow with $N$, the regime in which 
\citet{li2023statistical} obtain uniform consistency of latent positions; a plug-in version of this 
lemma for that regime would replace $\|\hat{\theta}_w - \theta_w\|$ by $\max_i \|\hat{U}_i - U_i\|$ 
and is not developed here.

\subsection{Inference conditional on the network and a variance estimator}

Let $a_i = \mathbbm{1}\{T_i = t\}(Y_i - m_t(X_i)) / r(i, t; \bm{X}) - \mathbbm{1}\{T_i = t'\}(Y_i - m_{t'}(X_i)) / r(i, t'; \bm{X})$ 
and $b_i = m_t(X_i) - m_{t'}(X_i) - \tau(t, t')$, so that $\varphi_i = a_i + b_i$ and, for the 
oracle estimator with known nuisances, $\sqrt{N}(\hat{\tau} - \tau) = N^{-1/2} \sum_i a_i + N^{-1/2} \sum_i b_i$. 
We say that the outcome model is correctly specified if $m_s(X_i) = \mathbbm{E}[Y_i(s) \mid \bm{X}]$, 
that is, the conditional mean of the potential outcome depends on $\bm{X}$ only through own 
covariates. (If it depends on summaries of $\bm{X}_{-i}$ such as neighborhood averages, 
$N^{-1/2} \sum_i b_i$ is no longer a sum of i.i.d. terms and $\sigma^2_b$ below must be replaced by 
the variance of the corresponding U-statistic; we do not pursue this.)

\begin{proposition}[Conditional CLT and variance estimation on a single network]\label{prop:singleclt}
Suppose the outcome model is correctly specified, the collection of potential outcomes is 
independent of $(\bm{D}, \bm{W})$ given $\bm{X}$ (Assumption~\ref{asm:ig} for the single group) and 
potential outcomes are independent across units given $\bm{X}$, $\mathbbm{E}[Y_i(s)^4 \mid \bm{X}] \leq M$, 
the scores lie in $[\underline{\pi}, \bar{\pi}]$, and 
$\sigma^2_{a, N}(\bm{X}, \bm{W}) = N^{-1} \sum_i \mathbbm{E}[a_i^2 \mid \bm{X}, \bm{W}] \xrightarrow{p} \sigma^2_a > 0$. 
Then, for the oracle estimator,
$$
\sqrt{N}(\hat{\tau}(t, t') - \tau(t, t')) \xrightarrow{d} \mathcal{N}(0, \sigma^2_a + \sigma^2_b), 
\qquad \sigma^2_b = \text{var}(b_i).
$$
If in addition the nuisances satisfy Assumption~\ref{asm:rate}(b) (parametric nuisances) with 
$\hat{\theta} - \theta = O_p(N^{-1/2})$ for the sparse dyadic model and the treatment model, the same 
limit holds for the feasible estimator and $\hat{\sigma}^2$ in~(\ref{hac2}) is consistent for 
$\sigma^2_a + \sigma^2_b$.
\end{proposition}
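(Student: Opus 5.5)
We write $\sqrt{N}(\hat\tau-\tau)=N^{-1/2}\sum_i a_i+N^{-1/2}\sum_i b_i$ for the oracle estimator and treat the two pieces by conditioning on $\mathcal{F}=\sigma(\bm{X},\bm{W})$. The term $N^{-1/2}\sum_i b_i$ is a normalized sum of i.i.d. mean-zero functions of $X_i$, since the outcome model depends on own covariates only and $\mathbbm{E}[m_t(X_i)-m_{t'}(X_i)]=\tau(t,t')$ by Proposition~\ref{prop:a}. It is $\mathcal{F}$-measurable, and the Lindeberg--L\'evy theorem gives $\mathcal{N}(0,\sigma^2_b)$. For the $a$-part we use Fact~1 of Appendix~\ref{app:eff}, now conditional on $(\bm{X},\bm{W},\bm{D})$: $\mathbbm{E}[a_i\mid \bm{X},\bm{W},\bm{D}]=0$. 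Hence $\mathbbm{E}[a_i\mid\mathcal{F}]=0$, so the conditional mean of the estimator does not depend on $\bm{W}$. Fact~2 gives $\mathbbm{E}[a_ia_j\mid\mathcal{F}]=0$ for $i\neq j$, so $\text{var}(N^{-1/2}\sum a_i\mid\mathcal{F})=\sigma^2_{a,N}(\bm{X},\bm{W})$.

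The main step is a conditional CLT for $N^{-1/2}\sum_i a_i$ given $\mathcal{F}$. Given $\mathcal{F}$, $a_i$ is a function of $D_i$, of $\{D_j: W_{ij}=1\}$ and of $Y_i(\cdot)$. The potential outcomes are conditionally independent across units and independent of $\bm{D}$, and the $D_j$ are independent given $\bm{X}$ under Assumption~\ref{asm:ce}(b). So the dependency graph of $\{a_i\}$ given $\mathcal{F}$ links $i$ and $j$ only when they are within distance two in $\bm{W}$, with maximal degree $\Delta_N\le d_{\max}^2+d_{\max}$.

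We will apply Stein's-method bounds for local dependence, for instance \citet{ross2011fundamentals}, Theorem 3.5. These bound the Wasserstein distance to $\mathcal{N}(0,1)$ of $\sum a_i/\sqrt{N\sigma^2_{a,N}}$ by
\[
C\Big(\frac{\Delta_N^{2}\sum_i\mathbbm{E}[|a_i|^3\mid\mathcal{F}]}{(N\sigma^2_{a,N})^{3/2}}+\frac{\Delta_N^{3/2}\big(\sum_i\mathbbm{E}[a_i^4\mid\mathcal{F}]\big)^{1/2}}{N\sigma^2_{a,N}}\Big).
\]
The conditional fourth moments of $a_i$ are bounded uniformly because of the moment and overlap conditions. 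The bound is therefore $O_p(\Delta_N^2N^{-1/2})$.

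The obstacle is controlling $d_{\max}$. In the sparse inhomogeneous graph, degrees are Poisson-binomial with bounded means. A union bound with Poisson tails gives $d_{\max}=O_p(\log N/\log\log N)$, so the bound tends to zero. This yields $\sigma_{a,N}^{-1}N^{-1/2}\sum a_i\mid\mathcal{F}\Rightarrow\mathcal{N}(0,1)$ in probability, and with $\sigma^2_{a,N}\to_p\sigma^2_a$ we have conditional convergence to $\mathcal{N}(0,\sigma^2_a)$. We then combine the two pieces through characteristic functions. By the tower property,
\[
\mathbbm{E}\,e^{iu(A+B)}=\mathbbm{E}\big[e^{iuB}\,\mathbbm{E}[e^{iuA}\mid\mathcal{F}]\big],
\]
where $A=N^{-1/2}\sum_i a_i$ and $B=N^{-1/2}\sum_i b_i$. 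The inner factor converges in probability to $e^{-u^2\sigma^2_a/2}$ and is bounded. Dominated convergence together with the CLT for $B$ gives the limit $e^{-u^2(\sigma^2_a+\sigma^2_b)/2}$. This is how the orthogonality of the two sources of variability enters.

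For the feasible estimator we reuse the decomposition (\ref{decomp}) of Appendix~\ref{app:a2}, replacing group independence by the same conditional structure. The product remainder $R^{(3)}$ is $o_p(1)$ by Cauchy--Schwarz, since Lemma~\ref{lem:single} gives $\|\hat r-r\|_N=O_p(N^{-1/2})$ and $\|\hat\mu-\mu\|_N=O_p(N^{-1/2})$. For $R^{(1)}$ and $R^{(2)}$ we follow Steps 4$'$ and 5$'$ and write each as $Q_N'(\hat\theta-\theta)+o_p(1)$. Here $Q_N$ is $N^{-1/2}$ times a sum of terms that are conditionally mean zero given $(\bm{X},\bm{W},\bm{D})$ for $R^{(1)}$, and given $\bm{X}$ for $R^{(2)}$. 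Their conditional second moment is $O(1)$: for $R^{(1)}$ by Fact~2, and for $R^{(2)}$ because covariances of $1-\mathbbm{1}\{T_i=s\}/r_i$ vanish unless $i,j$ share a potential neighbor. This last point is the delicate one. Under Assumption~\ref{asm:ce}, $T_i$ and $T_j$ are independent given $\bm{X}$ unless they share a $D_k$ with $k$ a potential neighbor of both, or share the link $W_{ij}$. The sum of covariances is then $O(N\cdot N\rho_N^2+N\rho_N)=O(1)\cdot N$ under sparsity. Hence $Q_N=O_p(1)$ and the remainders vanish.

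Finally, $N^{-1}\sum\hat a_i^2-N^{-1}\sum a_i^2\to_p0$ and $N^{-1}\sum\hat b_i^2\to_p\sigma^2_b$ follow as in Appendix~\ref{app:var}, using fourth moments and the rates above. In addition, $N^{-1}\sum_i(a_i^2-\mathbbm{E}[a_i^2\mid\mathcal{F}])\to_p0$ by a conditional variance computation over the dependency graph, which is $O_p(\Delta_N/N)$. Together these give consistency of (\ref{hac2}).
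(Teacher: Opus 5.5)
Your proposal is correct and follows essentially the same route as the paper's proof sketch: the $a$/$b$ split with Facts 1--2 applied conditionally on $(\bm{X},\bm{W})$, the distance-two dependency graph with the Stein bound of \citet{ross2011fundamentals} and a logarithmic maximum-degree bound, the characteristic-function combination, and Steps 3, 4$'$ and 5$'$ of Appendix~\ref{app:a2} together with Appendix~\ref{app:var} for the feasible estimator and $\hat{\sigma}^2$. The one real difference is in $R^{(2)}$: you bound the covariances of the $P_N$ summands given $\bm{X}$ directly, as $O(\rho_N + N\rho_N^2) = O(N^{-1})$ per pair, rather than invoking a dependency graph given $\bm{X}$ as the paper does; this is the tighter version, since it gives $P_N = O_p(1)$ and given $\bm{X}$ alone every pair shares potential neighbors, but you should replace ``vanish unless'' by ``are $O(N^{-1})$'', which your own count already uses and a coupling argument makes rigorous.
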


\begin{proof}[Proof sketch]
\textit{Conditional mean and conditional uncorrelatedness.} By Assumption~\ref{asm:ig} and correct 
specification, $\mathbbm{E}[Y_i(s) - m_s(X_i) \mid \bm{X}, \bm{W}, \bm{D}] = 0$, so 
$\mathbbm{E}[a_i \mid \bm{X}, \bm{W}] = 0$ and $\mathbbm{E}[\hat{\tau} \mid \bm{X}, \bm{W}] - \tau = N^{-1} \sum_i b_i$ 
does not depend on $\bm{W}$. For $i \neq j$, conditioning on $(\bm{X}, \bm{W}, \bm{D})$ factors out the 
indicators and scores, and the conditional independence of potential outcomes across units gives 
$\mathbbm{E}[a_i a_j \mid \bm{X}, \bm{W}] = 0$ (Fact 2 of Appendix~\ref{app:eff}). Hence 
$\text{var}(\sum_i a_i \mid \bm{X}, \bm{W}) = \sum_i \mathbbm{E}[a_i^2 \mid \bm{X}, \bm{W}] = N \sigma^2_{a, N}$: 
within-network dependence contributes no covariance terms to the variance.

\textit{Maximum degree.} Under the sparsity condition, $\deg_i(\bm{W}) \mid \bm{X}$ is a sum of 
independent Bernoulli variables with mean at most $\bar{h} (N - 1) \rho_N = O(1)$, so by a Chernoff 
bound $Pr(\deg_i(\bm{W}) \geq c \log N \mid \bm{X}) \leq N^{-2}$ for $c$ large enough, and a union 
bound over $i$ gives $\Delta_N = \max_i \deg_i(\bm{W}) = O_p(\log N)$; this is a consequence of the 
model, not an additional hypothesis.

\textit{Dependency graph.} Given $(\bm{X}, \bm{W})$, $a_i$ is a function of 
$(D_i, \{D_j : W_{ij} = 1\}, \epsilon^y_i)$. The $D_j$ are independent given $\bm{X}$, and the 
$\epsilon^y_i$ are independent across units and of $(\bm{D}, \bm{W})$ given $\bm{X}$; this is what the 
joint Assumption~\ref{asm:ig} together with the conditional independence of potential outcomes 
across units delivers. Hence $a_i$ and $a_j$ are 
conditionally independent unless they share a treatment indicator, which happens only if 
$W_{ij} = 1$ or $i$ and $j$ have a common neighbor: the graph with edges $\{(i, j): \ell_{\bm{W}}(i, j) \leq 2\}$ 
is a dependency graph for $\{a_i\}$ given $(\bm{X}, \bm{W})$, with maximal degree 
$D_N \leq \Delta_N + \Delta_N^2$. Theorem 3.6 of \citet{ross2011fundamentals} (a consequence of 
\citealp{baldi1989normal} and \citealp{chen2004normal}) bounds the Wasserstein distance between 
$(N \sigma^2_{a, N})^{-1/2} \sum_i a_i$ and a standard normal, conditional on $(\bm{X}, \bm{W})$, by 
$D_N^2 (N \sigma^2_{a,N})^{-3/2} \sum_i \mathbbm{E}[|a_i|^3 \mid \cdot] + 
\sqrt{28} D_N^{3/2} (\sqrt{\pi} N \sigma^2_{a,N})^{-1} (\sum_i \mathbbm{E}[a_i^4 \mid \cdot])^{1/2}$, 
where the conditional third and fourth moments of $a_i$ are bounded by constants depending on $M$ 
and $\underline{\pi}$. The bound is $O_p((\log N)^4 N^{-1/2})$. Hence 
$(N \sigma^2_{a, N})^{-1/2} \sum_i a_i \mid (\bm{X}, \bm{W}) \xrightarrow{d} \mathcal{N}(0, 1)$ 
in probability, and by $\sigma^2_{a,N} \xrightarrow{p} \sigma^2_a$, 
$N^{-1/2} \sum_i a_i \mid (\bm{X}, \bm{W}) \xrightarrow{d} \mathcal{N}(0, \sigma^2_a)$ in probability.

\textit{Combining the two components.} $N^{-1/2} \sum_i b_i$ is a sum of i.i.d. functions of $X_i$ 
and converges to $\mathcal{N}(0, \sigma^2_b)$ by the classical central limit theorem. For any real $u$, 
$\mathbbm{E}[\exp\{iu N^{-1/2}\sum_i (a_i + b_i)\}] = 
\mathbbm{E}[\exp\{iu N^{-1/2}\sum_i b_i\} \, \mathbbm{E}[\exp\{iu N^{-1/2}\sum_i a_i\} \mid \bm{X}, \bm{W}]]$, 
and the inner conditional characteristic function converges in probability to $e^{-u^2 \sigma_a^2/2}$, 
a constant; dominated convergence gives the limit $e^{-u^2(\sigma_a^2 + \sigma_b^2)/2}$.

\textit{Variance estimator.} $N^{-1} \sum_i a_i^2 - \sigma^2_{a, N} \to_p 0$ because the $a_i^2$ are 
locally dependent with bounded second moments, so the conditional variance of their average is 
$O(D_N / N) = o_p(1)$; and $N^{-1} \sum_i b_i^2 \to_p \sigma^2_b$ by the law of large numbers, while 
the cross term $N^{-1} \sum_i a_i b_i$ has conditional mean zero and vanishes. Replacing $(a_i, b_i)$ 
by $(\hat{a}_i, \hat{b}_i)$ changes the average by $o_p(1)$ under the rate conditions by the argument 
of Appendix~\ref{app:var}, with $\bar{N}$ replaced by $1$ since the sum is over units.

\textit{Feasible estimator.} Under parametric nuisances, the decomposition~(\ref{decomp}) applies 
unit by unit. The product remainder $R^{(3)}_s$ is handled as in Step 3 of Appendix~\ref{app:a2}, 
using $\max_i |\hat{r}_i - r_i| = O_p(N^{-1/2})$ from Lemma~\ref{lem:single}. For $R^{(1)}_s$ and 
$R^{(2)}_s$, the mean-value expansions of Steps 4$'$ and 5$'$ apply. The summands of $Q_N$ are 
conditionally independent given $(\bm{X}, \bm{W}, \bm{D})$, since only the outcome errors are then 
random, so $Q_N = O_p(1)$ as before. The summands of $P_N$ are locally dependent given $\bm{X}$ 
through shared treatment indicators and links, with conditional mean zero; their covariances vanish 
outside the dependency graph, so $\text{var}(P_N \mid \bm{X}) = O(D_N) = O_p((\log N)^2)$ and 
$R^{(2)}_s = O_p(N^{-1/2}) O_p(\log N) = o_p(1)$.
\end{proof}

\paragraph{Correlated outcomes.} If potential outcomes of neighbors are correlated given $\bm{X}$, 
the conditional covariances $\mathbbm{E}[a_i a_j \mid \bm{X}, \bm{W}]$ for $\ell_{\bm{W}}(i, j) \leq 2$ 
no longer vanish, and the first term of~(\ref{hac2}) must be replaced by the network HAC sum 
$N^{-1} \sum_i \sum_{j: \ell_{\bm{W}}(i, j) \leq 2} \hat{a}_i \hat{a}_j$ 
\citep{kojevnikov2021limit, leung2023network}, applied to $\hat{a}_i$ only. Applying it to 
$\hat{\varphi}_i = \hat{a}_i + \hat{b}_i$ would add cross-products $b_i b_j$ over linked pairs, 
which are not part of the variance but have non-zero expectation under homophily on covariates 
that enter $m_s$.

\paragraph{What is missing when only the score is correct.} If $m_s$ is misspecified but 
$r$ is correct, then with $a_i$ defined using the misspecified $m_s$, 
$\mathbbm{E}[a_i \mid \bm{X}, \bm{W}] = (r^{c}(i, t; \bm{X}, \bm{W}) / r(i, t; \bm{X}))(\mathbbm{E}[Y_i(t) \mid \bm{X}] - m_t(X_i)) 
- (r^{c}(i, t'; \bm{X}, \bm{W}) / r(i, t'; \bm{X}))(\mathbbm{E}[Y_i(t') \mid \bm{X}] - m_{t'}(X_i))$, 
which varies with the realized network through $r^{c}$. Its variance across network realizations is of order $N^{-1}$ and involves the joint law of 
$\{r^{c}(i, \cdot)\}_i$, which depends on the formation model. A model-based bootstrap that 
redraws $(\bm{D}, \bm{W})$ from the fitted models, recomputes $r^{c}$ and the conditional mean, and 
adds the resulting variance to the first term of~(\ref{hac2}) is the natural estimator; we have not 
established its validity. In the many-networks design of Section~\ref{sec:est} this component is 
estimated without modeling, because each group is an independent draw of the network.

\begin{revblock}
\paragraph{A Monte Carlo check.} We simulated one network of $N = 2{,}000$ units with 
$X_i \sim N(0, 1)$, $p_{ij} = (4 / N) \exp(-0.25 |X_i - X_j|)$ (expected degree about four), 
$D_i = \mathbbm{1}\{0.1 + 0.5 X_i + e_i \geq 0\}$ and $Y_i(d, z) = 1 + X_i + d + 0.5 z + d z + \varepsilon_i$, 
and estimated $\tau((1,1),(0,1))$ and $\tau((1,1),(1,0))$ by AIPW with the marginal score from a 
sparse dyadic logit and the variance estimator~(\ref{hac2}), over 500 replications. About five 
percent of units have no ties, so the network-conditional score is zero for them at $z = 1$. With 
the correctly specified outcome model $y \sim d \cdot z + x$, the bias is below $0.01$ and the 
coverage of 95\% intervals is $0.95$ for the direct and $0.97$ for the spillover effect, with the 
average estimated standard error ($0.087$ and $0.094$) matching the Monte Carlo standard deviation 
($0.090$ for both). With the misspecified outcome model $y \sim d + z + x$ the estimator remains 
unbiased and coverage is $0.96$ for both effects. The check has two limits. With 
the correct outcome model the contrast $\hat{m}(1,1) - \hat{m}(0,1)$ is the same for every unit, so 
the second term of~(\ref{hac2}) is identically zero and the check cannot detect an error in it; and 
the misspecification omits only the term $d z$, so the between-network component discussed above is 
small by construction. The design thus speaks to the first term of~(\ref{hac2}) and 
is silent about the rest. The dyadic model is fitted as a logit although the DGP is exponential in 
$|x_i - x_j|$; with link probabilities of order $1/N$ the two coincide to first order.
\end{revblock}

\paragraph{Few large groups.} With $G$ fixed and $N_g \to \infty$ for each $g$, the argument above 
applies within each group, and groups are independent, so the variance of 
$\sqrt{N}(\hat{\tau} - \tau)$ is estimated by applying~(\ref{hac2}) within each group and summing 
the group-level terms with weights $N_g / N$. With group-level outcome shocks or a group-specific 
outcome mean, $b_i$ is no longer i.i.d. across units and the second term must be computed within 
groups as in~(\ref{hac}).
\end{revblock}

\section{Additional Results for Monte Carlo Studies}\label{B}

\rev{All results use 2,000 replications. The replication package documents which script 
produced each table and figure.}

\subsection{The case of constant effect: outcome regression model}
Tables~\ref{dir10} and~\ref{spi10} report the estimates from the regression model~(\ref{reg_ms}) 
under DGP~(\ref{dgp1}). The regression is biased because it is misspecified. 

  \begin{table}[h]
  \centering
  \caption{Regression: Finite Sample Properties: $\tau((1,1),(0,1))$ \label{dir10}}
  \begin{tabular}{l *{5}{S[table-format=-1.3]}}
  \hline\hline  
  {\rev{$G$}} & {Bias} & {RMSE} & {Coverage Rate} & {SE} & {SE - SD}  \\ 
  \hline
  50   &-0.304 &0.362    &0.627 &0.188  &-0.008 \\
  100  &-0.300 &0.327    &0.388 &0.133   &0.002 \\
  200  &-0.296 &0.311    &0.115 &0.094   &0.000 \\
  500  &-0.299 &0.305    &0.000 &0.059   &0.000 \\
  1000 &-0.300 &0.303    &0.000 &0.042  &-0.001 \\
  \hline\hline  
 \end{tabular}
\end{table}

  \begin{table}[h]
  \centering
  \caption{Regression: Finite Sample Properties: $\tau((1,1),(1,0))$ \label{spi10}}
  \begin{tabular}{l *{5}{S[table-format=-1.3]}}
  \hline\hline  
  {\rev{$G$}} & {Bias} & {RMSE} & {Coverage Rate} & {SE} & {SE - SD}  \\ 
  \hline
  50   &-0.341 &0.369    &0.312 &0.138  &-0.002 \\
  100  &-0.338 &0.351    &0.076 &0.097   &0.001 \\
  200  &-0.336 &0.343    &0.002 &0.069  &-0.002 \\
  500  &-0.333 &0.336    &0.000 &0.044  &-0.001 \\
  1000 &-0.331 &0.333    &0.000 &0.031   &0.000 \\
  \hline\hline  
 \end{tabular}
\end{table}

\subsection{The case of heterogeneous treatment effects}

To check that the method handles heterogeneous treatment effects (HTE), we use an alternative 
potential outcome model: 
  \begin{equation}\label{dgp3}
  Y_{ig}(d, z) = 1 + X_{ig} + \frac{\sum_{\{jg\} \in \mathcal{N}_g, \{jg\} \neq \{ig\}} X_{jg}}{3} + d + 0.5z + d\cdot z + X_{ig}\cdot d\cdot z + \varepsilon_{ig}.
  \end{equation}
Here the individual effect $ Y_{ig}(d, z) - Y_{ig}(\tilde{d}, \tilde{z}) $ depends on the covariate 
$X_{ig}$. Tables~\ref{dir2} and~\ref{spi2} report the AIPW estimates of the two effects and 
Tables~\ref{dir20} and~\ref{spi20} the regression estimates; Figure~\ref{sam_app_1} plots bias and 
RMSE for the spillover effect. The results are close to those under constant effects, although the 
sampling variation of the AIPW estimator is larger\rev{, and the coverage rate of the spillover effect drifts slightly above 
the nominal level (0.96--0.97) for large $G$: the average estimated standard error exceeds the 
sampling standard deviation by about five to ten percent in Table~\ref{spi2}, so the variance 
estimator is mildly conservative in this design}. 

  \begin{table}[h]
  \centering
  \caption{AIPW: Finite Sample Properties (HTE): $\tau((1,1),(0,1))$ \label{dir2}}
  \begin{tabular}{l *{5}{S[table-format=-1.3]}}
  \hline\hline  
  {\rev{$G$}} & {Bias} & {RMSE} & {Coverage Rate} & {SE} & {SE - SD}  \\ 
  \hline
    50  &0.006 &0.360   & 0.946 &0.349  &-0.017 \\
   100  &0.010 &0.250   & 0.950 &0.248  &-0.006 \\
   200  &0.001 &0.177   & 0.952 &0.180  &-0.001 \\
   500 &-0.002 &0.108   & 0.961 &0.114   &0.003 \\
  1000  &0.001 &0.078   & 0.956 &0.081   &0.001 \\
  \hline\hline  
 \end{tabular}
\end{table}

  \begin{table}[h]
  \centering
  \caption{Regression: Finite Sample Properties (HTE): $\tau((1,1),(0,1))$ \label{dir20}}
  \begin{tabular}{l *{5}{S[table-format=-1.3]}}
  \hline\hline  
  {\rev{$G$}} & {Bias} & {RMSE} & {Coverage Rate} & {SE} & {SE - SD}  \\ 
  \hline
   50 &-0.317 &0.377    &0.680 &0.211   &0.003 \\
   100 &-0.305 &0.340    &0.461 &0.149  &-0.006 \\
   200 &-0.306 &0.323    &0.157 &0.105  &-0.001 \\
   500 &-0.306 &0.312    &0.002 &0.066   &0.001 \\
  1000 &-0.307 &0.310    &0.000 &0.047   &0.000 \\
  \hline\hline  
 \end{tabular}
\end{table}

  \begin{table}[h]
  \centering
  \caption{AIPW: Finite Sample Properties (HTE): $\tau((1,1),(1,0))$ \label{spi2}}
  \begin{tabular}{l *{5}{S[table-format=-1.3]}}
  \hline\hline  
  {\rev{$G$}} & {Bias} & {RMSE} & {Coverage Rate} & {SE} & {SE - SD}  \\ 
  \hline
  50   &0.007 &0.310    &0.951 &0.316  &-0.001 \\
 100   &0.011 &0.215    &0.951 &0.225   &0.007 \\
 200  &-0.002 &0.152    &0.965 &0.164   &0.009 \\
 500  &-0.001 &0.094    &0.971 &0.104   &0.006 \\
 1000  &0.001 &0.070    &0.965 &0.074   &0.002 \\
  \hline\hline  
 \end{tabular}
\end{table}

  \begin{table}[h]
  \centering
  \caption{Regression: Finite Sample Properties (HTE): $\tau((1,1),(1,0))$ \label{spi20}}
  \begin{tabular}{l *{5}{S[table-format=-1.3]}}
  \hline\hline  
  {\rev{$G$}} & {Bias} & {RMSE} & {Coverage Rate} & {SE} & {SE - SD}  \\ 
  \hline
     50 &-0.172 &0.254    &0.795 &0.184  &-0.006 \\
    100 &-0.173 &0.218    &0.709 &0.131  &-0.005 \\
    200 &-0.169 &0.194    &0.552 &0.095  &-0.001 \\
    500 &-0.168 &0.178    &0.206 &0.060   &0.000 \\
   1000 &-0.169 &0.174    &0.024 &0.042   &0.001 \\
  \hline\hline  
 \end{tabular}
\end{table}

\begin{figure}[!ht]
  \begin{center}
  \begin{tabular}{cc}
   \raisebox{-0.5\height}{\includegraphics[scale=0.3]{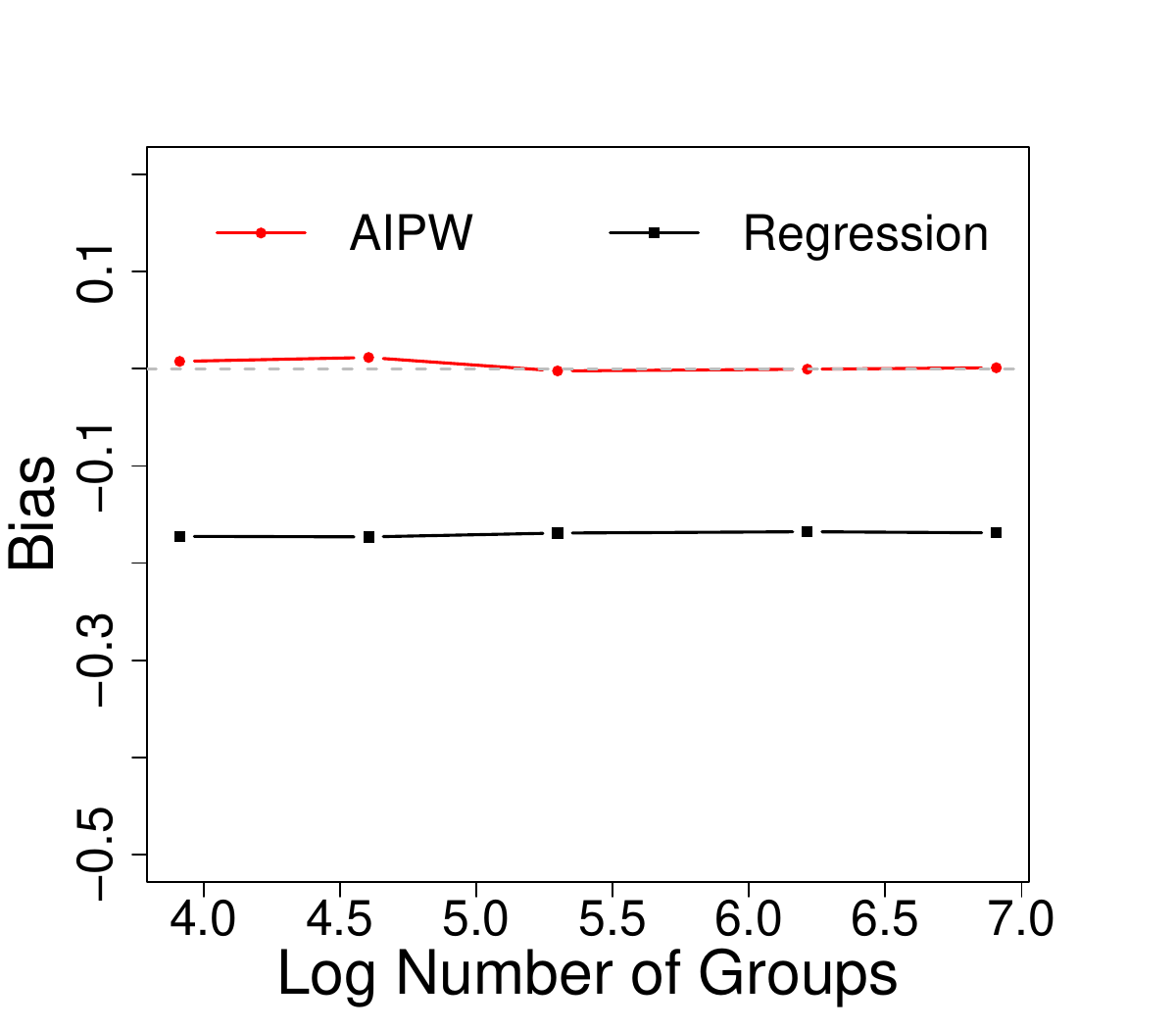}} &
   \raisebox{-0.5\height}{\includegraphics[scale=0.3]{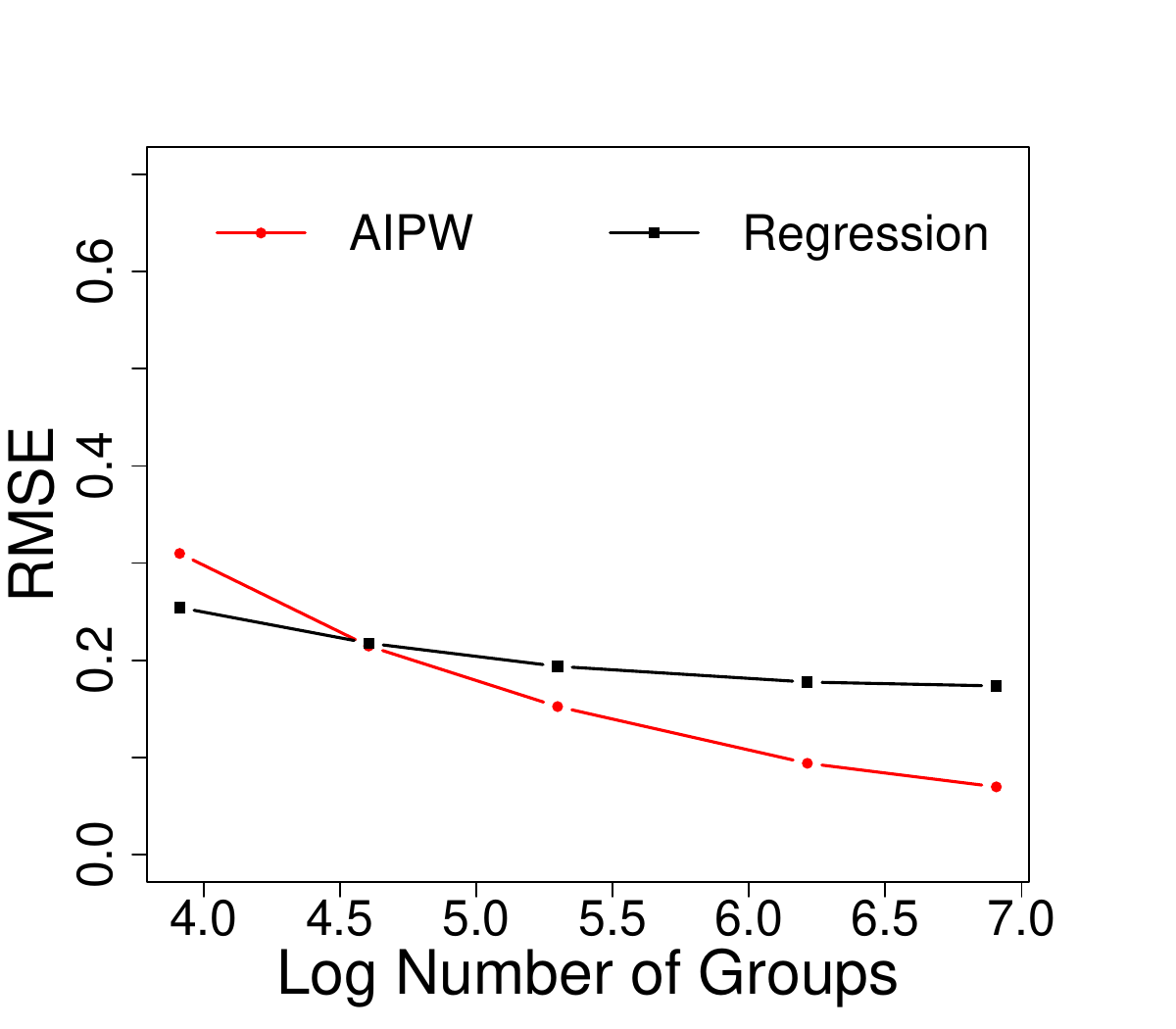}}  
   \end{tabular}
   \caption{Finite Sample Properties (Spillover Effect, HTE)}\label{sam_app_1}
   \end{center}
   \end{figure}

\subsection{The case of mis-specified models for network formation and treatment assignment}

Because the generalized propensity score is built from the network and adoption models, 
misspecifying them may bias the score and the treatment effects. To examine this, we use the 
following DGP for network formation and treatment assignment: 
 \begin{equation}\label{dgp4}
  \begin{gathered}
  W_{ijg} = 
      \mathbbm{1}\{ 0.1 - 0.25 |X_{ig} - X_{jg}| - 2|U_{ig} - U_{jg}| + \nu_{ijg} \geq 0 \}, \\
    D_{ig} = \mathbbm{1}\{ 0.1 + 0.5 X_{ig} - 0.3U_{ig} + e_{ig} \geq 0 \}, 
  \end{gathered}
  \end{equation}
  where $ U_{ig} \sim Bern(0.5) $ is a binary variable. \rev{The 
  potential outcome model is unchanged from~(\ref{dgp2}): $U_{ig}$ affects adoption and tie 
  formation but not the outcome.} 
 We fit the same network, adoption and outcome models as in Section~\ref{SMC} and treat $U_{ig}$ 
 as unobserved, so the network and adoption models are misspecified. Tables~\ref{dir3} 
 and~\ref{spi3} report the AIPW estimates and Tables~\ref{dir30} and~\ref{spi30} the regression 
 estimates; Figure~\ref{sam_app_2} plots bias and RMSE for the spillover effect. \rev{The AIPW estimator remains essentially 
 unbiased, with coverage close to nominal, even though both components of its score are 
 misspecified. This is expected: because $U_{ig}$ does not affect the outcome, 
 Assumption~\ref{asm:ig} holds given $\bm{X}_g$ alone, the marginal score given $\bm{X}_g$ is still 
 a valid weight, and the misspecified score is a different function of $\bm{X}_g$ whose error is 
 partly compensated by the outcome component. The regression estimator's bias grows relative to 
 Appendix~\ref{B}.1 because the omitted interaction and contextual terms now covary with 
 $U_{ig}$-driven exposure. The design thus shows robustness to misspecification of the 
 formation and adoption models in a direction unrelated to the outcome. It is not a test of 
 unobserved confounding, which requires $U_{ig}$ to enter the outcome as well.} \rev{Appendix~\ref{app:conf} adds the confounded design.}

  \begin{table}[h]
  \centering
  \caption{AIPW: Finite Sample Properties (Mis-specification): $\tau((1,1),(0,1))$ \label{dir3}}
  \begin{tabular}{l *{5}{S[table-format=-1.3]}}
  \hline\hline  
  {\rev{$G$}} & {Bias} & {RMSE} & {Coverage Rate} & {SE} & {SE - SD}  \\ 
  \hline
      50 &-0.006 &0.391    &0.947 &0.379  &-0.012 \\
     100 &-0.005 &0.270    &0.947 &0.270  &-0.001 \\
     200 &-0.013 &0.201    &0.951 &0.194  &-0.007 \\
     500 &-0.005 &0.129    &0.943 &0.124  &-0.004 \\
    1000 &-0.010 &0.087    &0.952 &0.088   &0.001 \\
  \hline\hline  
 \end{tabular}
\end{table}

  \begin{table}[h]
  \centering
  \caption{Regression: Finite Sample Properties (Mis-specification): $\tau((1,1),(0,1))$ \label{dir30}}
  \begin{tabular}{l *{5}{S[table-format=-1.3]}}
  \hline\hline  
  {\rev{$G$}} & {Bias} & {RMSE} & {Coverage Rate} & {SE} & {SE - SD}  \\ 
  \hline
   50 &-0.681 &0.705    &0.036 &0.182   &0.000 \\
  100 &-0.677 &0.689    &0.000 &0.128  &-0.003 \\
  200 &-0.675 &0.680    &0.000 &0.091   &0.003 \\
  500 &-0.671 &0.674    &0.000 &0.057  &-0.001 \\
 1000 &-0.673 &0.674    &0.000 &0.041   &0.001 \\
  \hline\hline  
 \end{tabular}
\end{table}

  \begin{table}[h]
  \centering
  \caption{AIPW: Finite Sample Properties (Mis-specification): $\tau((1,1),(1,0))$ \label{spi3}}
  \begin{tabular}{l *{5}{S[table-format=-1.3]}}
  \hline\hline  
  {\rev{$G$}} & {Bias} & {RMSE} & {Coverage Rate} & {SE} & {SE - SD}  \\ 
  \hline
    50  &0.001 &0.332    &0.939 &0.313  &-0.019 \\
   100 &-0.002 &0.230    &0.950 &0.223  &-0.007 \\
   200 &-0.004 &0.164    &0.940 &0.161  &-0.003 \\
   500  &0.001 &0.103    &0.948 &0.103  & 0.000 \\
  1000 &-0.003 &0.073    &0.947 &0.073  &-0.001 \\
  \hline\hline  
 \end{tabular}
\end{table}

  \begin{table}[h]
  \centering
  \caption{Regression: Finite Sample Properties (Mis-specification): $\tau((1,1),(1,0))$ \label{spi30}}
  \begin{tabular}{l *{5}{S[table-format=-1.3]}}
  \hline\hline  
  {\rev{$G$}} & {Bias} & {RMSE} & {Coverage Rate} & {SE} & {SE - SD}  \\ 
  \hline
   50 &-0.398 &0.440    &0.414 &0.182  &-0.006 \\
  100 &-0.391 &0.414    &0.167 &0.130  &-0.004 \\
  200 &-0.388 &0.399    &0.011 &0.092   &0.002 \\
  500 &-0.385 &0.389    &0.000 &0.058   &0.001 \\
 1000 &-0.386 &0.388    &0.000 &0.041   &0.001 \\
  \hline\hline  
 \end{tabular}
\end{table}

\begin{figure}[!ht]
  \begin{center}
  \begin{tabular}{cc}
   \raisebox{-0.5\height}{\includegraphics[scale=0.3]{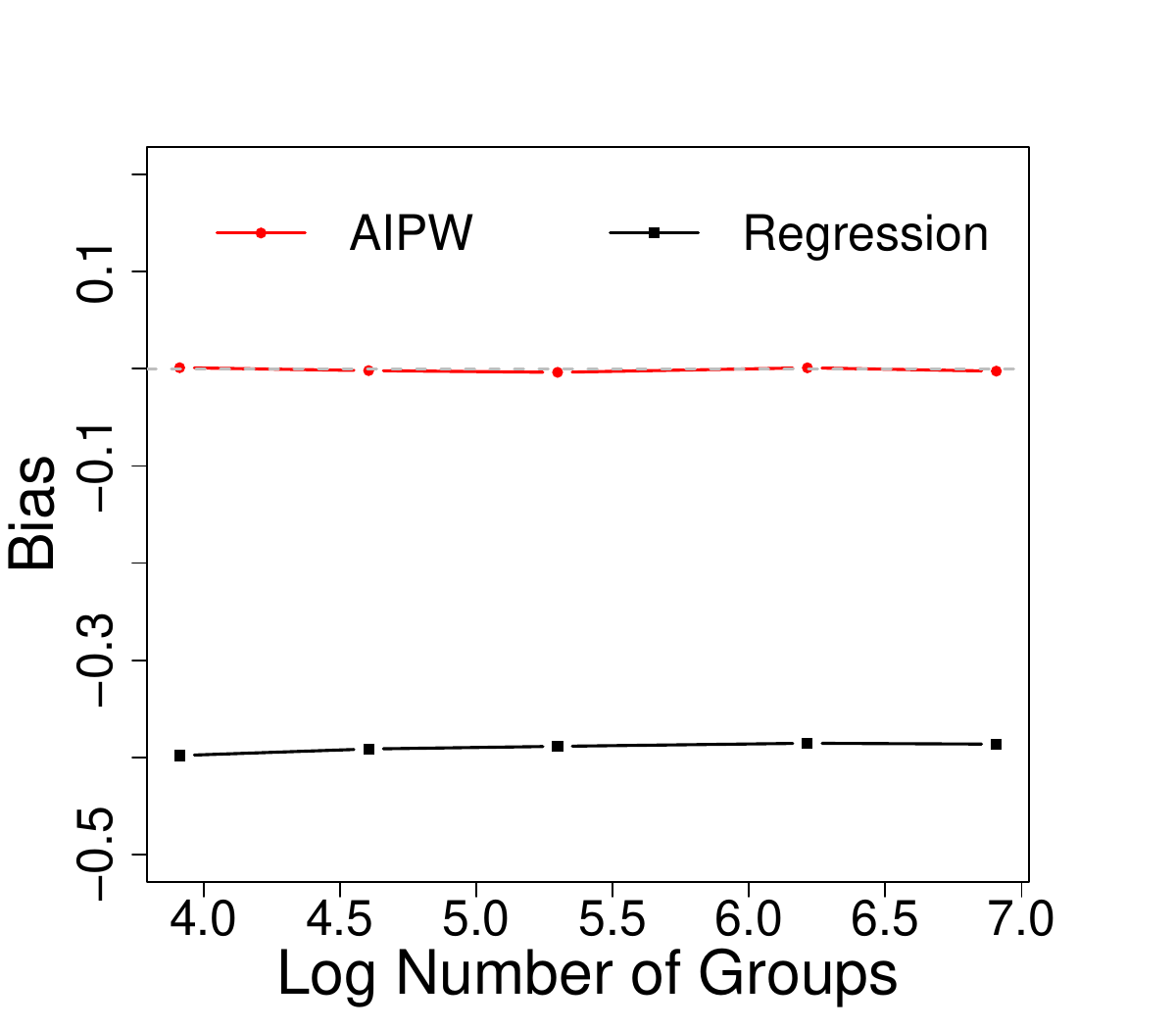}} &
   \raisebox{-0.5\height}{\includegraphics[scale=0.3]{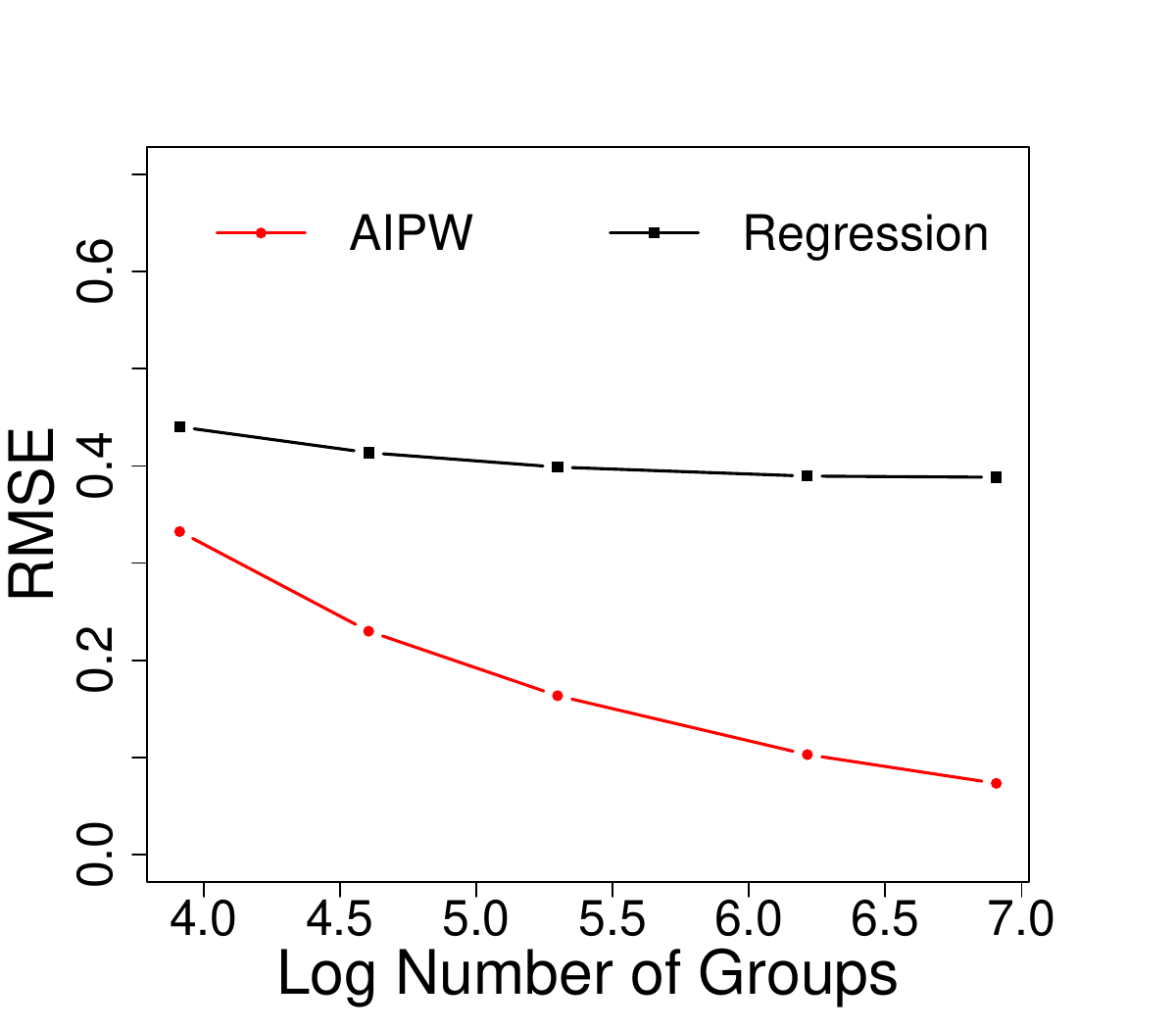}}  
   \end{tabular}
   \caption{Finite Sample Properties (Spillover Effect, Mis-specification)}\label{sam_app_2}
   \end{center}
   \end{figure}

\begin{revblock}
\subsection{A genuinely confounded unobserved covariate}\label{app:conf}

To calibrate the sensitivity analysis of Section~\ref{sec:sens}, we let the unobserved binary 
covariate $U_{ig} \sim \text{Bern}(0.5)$ of the previous subsection also enter the outcome: 
$Y_{ig}(d, z)$ in~(\ref{dgp2}) is augmented by $\gamma_y U_{ig}$, with $U_{ig}$ entering tie 
formation and adoption with the coefficients of~(\ref{dgp4}). The AIPW estimator omits $U_{ig}$ from 
all three models (``AIPW-JM''); an oracle estimator includes it (``Oracle''). Table~\ref{tab:conf} 
reports bias and coverage for $G = 500$ and 1,000 replications as $\gamma_y$ varies. The bias is 
linear in $\gamma_y$, about $-0.12 \gamma_y$ for the direct effect and $-0.055 \gamma_y$ for the 
spillover effect, and coverage of the direct effect falls to $0.90$ at $\gamma_y = 0.5$ and below it beyond. 
Since the outcome error has unit variance, $\gamma_y = 0.5$ corresponds to an omitted covariate 
that shifts the outcome by half a standard deviation of the idiosyncratic error while reducing 
the adoption index by $0.3$ and the link index by $2$ for dissimilar pairs; this is the reference 
point used in Section~\ref{sec:sens}. The oracle has a larger RMSE than AIPW-JM at 
$\gamma_y = 0$ ($0.17$ against $0.10$) because the coefficient of $2$ on $|U_{ig} - U_{jg}|$ in 
the link index makes the scores extreme once $U_{ig}$ is included, a feature of the design rather 
than an error.

\begin{table}[h]
\centering
\caption{Bias and coverage with an unobserved covariate $U_{ig}$ in ties, adoption and outcome 
($G = 500$)}\label{tab:conf}
\begin{tabular}{S[table-format=1.2] *{4}{S[table-format=-1.3]} c *{4}{S[table-format=-1.3]}}
\hline\hline
 & \multicolumn{4}{c}{AIPW-JM (omits $U_{ig}$)} & & \multicolumn{4}{c}{Oracle (includes $U_{ig}$)} \\
\cline{2-5} \cline{7-10}
 & \multicolumn{2}{c}{Direct} & \multicolumn{2}{c}{Spillover} & & \multicolumn{2}{c}{Direct} & \multicolumn{2}{c}{Spillover} \\
{$\gamma_y$} & {Bias} & {Cov.} & {Bias} & {Cov.} & & {Bias} & {Cov.} & {Bias} & {Cov.} \\
\hline
0.00 &  0.006 & 0.961 &  0.005 & 0.946 & &  0.000 & 0.951 &  0.005 & 0.956 \\
0.25 & -0.036 & 0.932 & -0.016 & 0.942 & & -0.005 & 0.956 & -0.006 & 0.940 \\
0.50 & -0.061 & 0.905 & -0.027 & 0.934 & & -0.004 & 0.947 &  0.003 & 0.952 \\
1.00 & -0.121 & 0.826 & -0.053 & 0.931 & &  0.003 & 0.966 &  0.006 & 0.961 \\
2.00 & -0.244 & 0.608 & -0.120 & 0.837 & &  0.003 & 0.953 & -0.001 & 0.957 \\
\hline\hline
\end{tabular}
\end{table}
\end{revblock}

\subsection{Comparison with a score that conditions on own covariates only}\label{abc}


\rev{This comparison, carried over from the 2025 version of the paper, is between the AIPW 
estimator with the marginal score (``AIPW-JM'') and an AIPW estimator with the same outcome model 
whose score is $\Pr(T_{ig} = t \mid X_{ig}) = \Pr(D_{ig} = d \mid X_{ig}) \Pr(Z_{ig} = z \mid X_{ig})$, 
with a probit model for the first factor and a multinomial logit for the second over the four 
levels of $Z_{ig}$ (``AIPW-SEP'', for separate modeling of the two components). AIPW-SEP conditions 
on the unit's own covariate only, neither on the realized network nor on the covariates of the 
other units in the group. It is not the network-conditional score $r^{c}$ of 
Section~\ref{sec:cond}; the comparison with $r^{c}$ is in Table~\ref{tab:cond} of the main text.}

For the direct effect $\tau((1,1),(0,1))$ the two estimators perform about equally well 
(Figure~\ref{sam1_d_ps}). For the spillover effect $\tau((1,1),(1,0))$ AIPW-JM does much better 
(Figure~\ref{sam1_ps}). The sampling distributions in Figure~\ref{sam_ps} show that it stays 
unbiased even with few groups, whereas AIPW-SEP has a small but persistent bias that does not 
shrink as the number of groups grows. \rev{As explained in Section~\ref{SMC}, the bias of AIPW-SEP 
is omitted-confounder bias: the covariates of the other units in the group enter both the 
exposure and the outcome, and AIPW-SEP's score and outcome model both omit them.}

\begin{figure}[!ht]
  \begin{center}
  \begin{tabular}{cc}
   \raisebox{-0.5\height}{\includegraphics[scale=0.3]{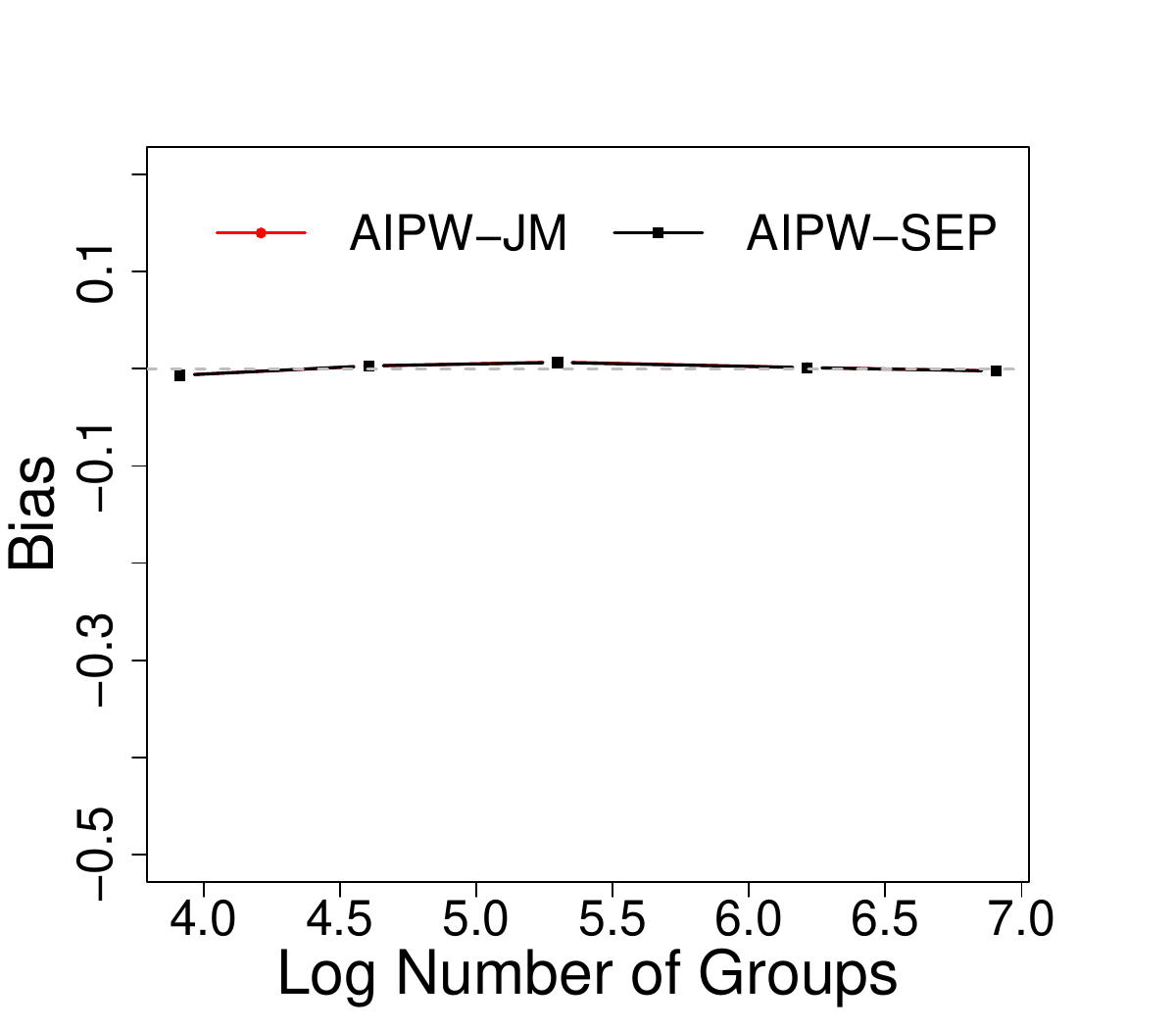}} &
   \raisebox{-0.5\height}{\includegraphics[scale=0.3]{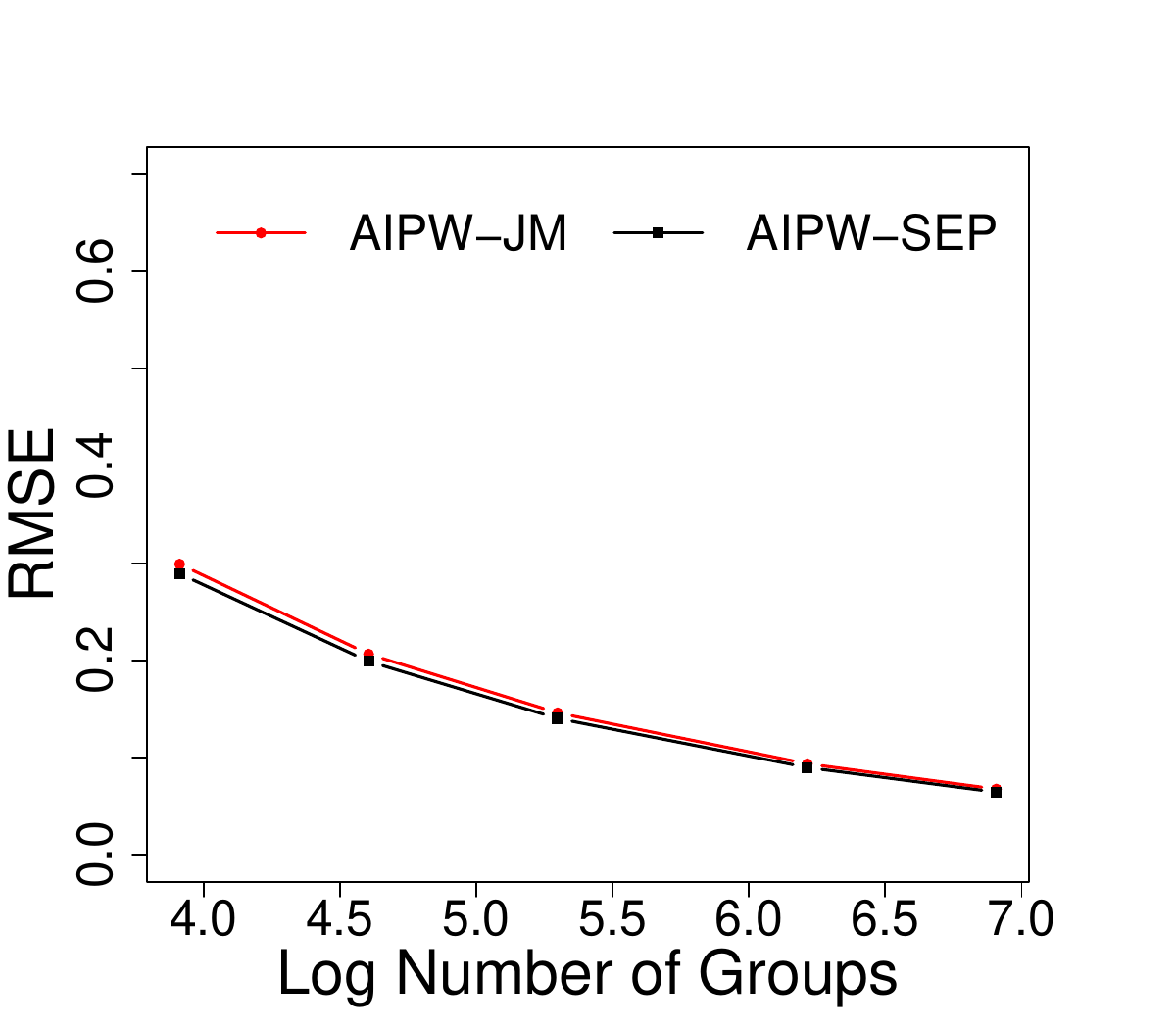}}  
   \end{tabular}
   \caption{Finite Sample Properties (Direct Effect)}\label{sam1_d_ps}
   \end{center}
   \end{figure}

\begin{figure}[!ht]
  \begin{center}
  \begin{tabular}{cc}
   \raisebox{-0.5\height}{\includegraphics[scale=0.3]{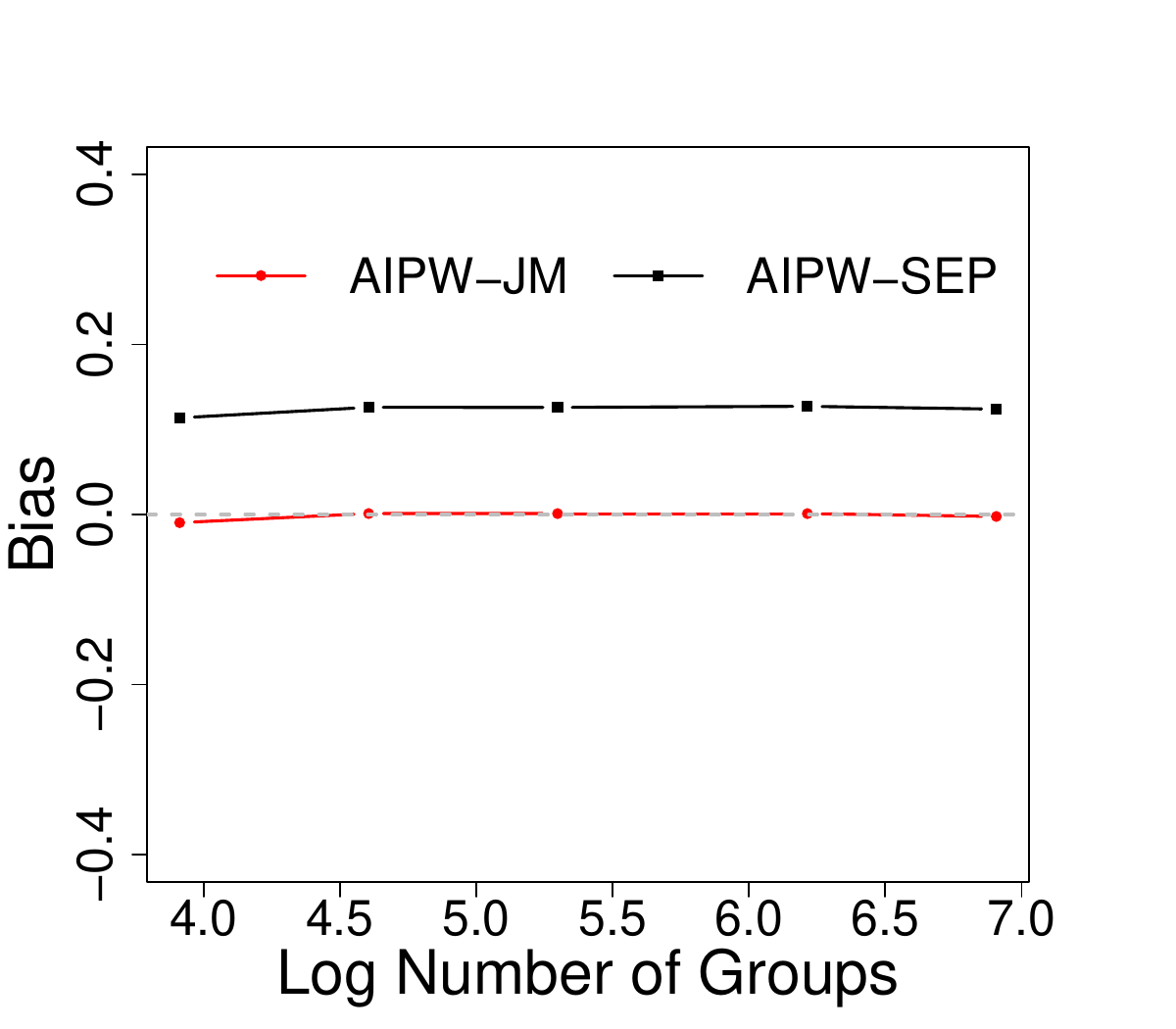}} &
   \raisebox{-0.5\height}{\includegraphics[scale=0.3]{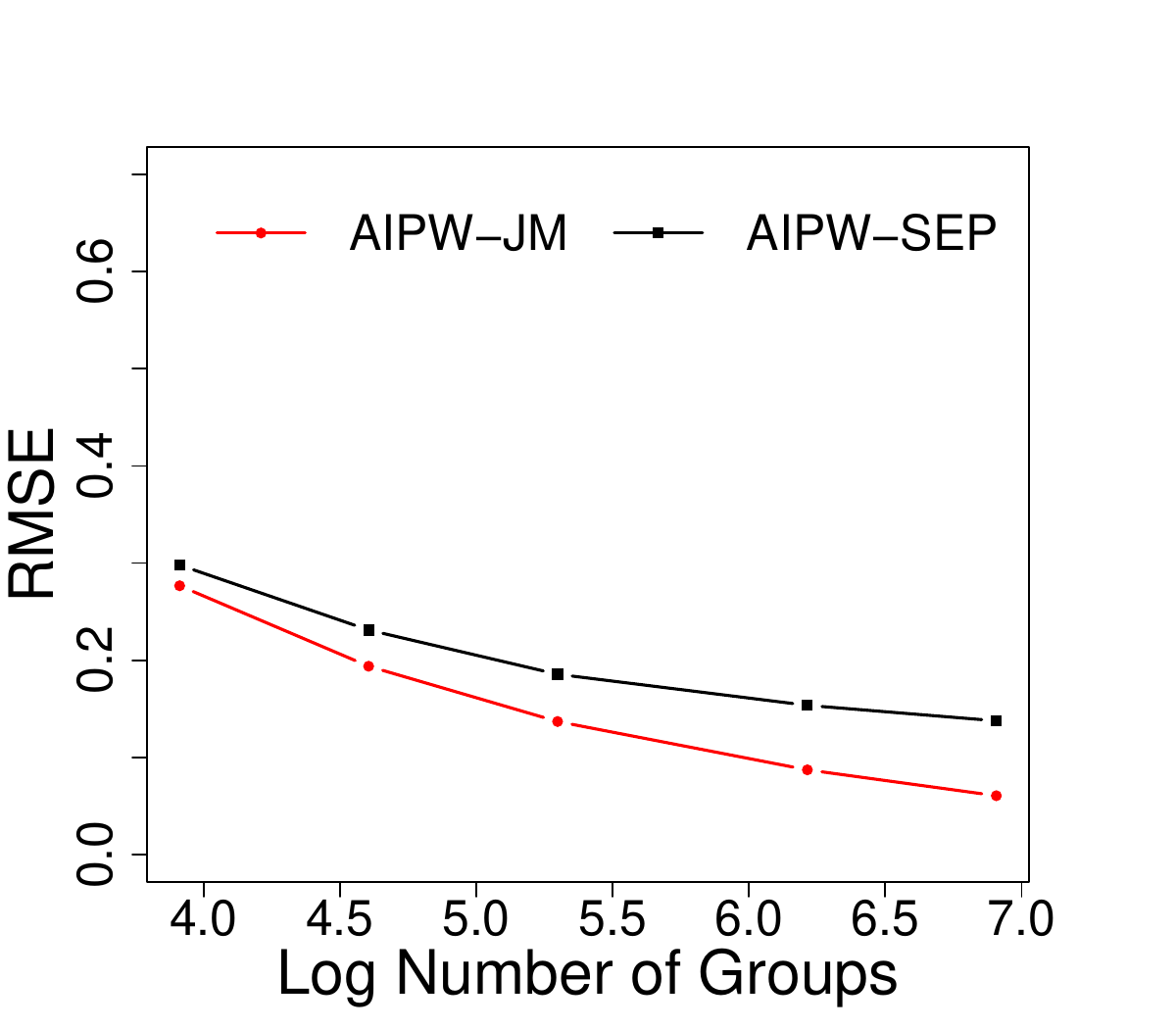}}  
   \end{tabular}
   \caption{Finite Sample Properties (Spillover Effect)}\label{sam1_ps}
   \end{center}
   \end{figure}

\begin{figure}[!ht]
  \begin{center}
  \begin{tabular}{ccc}
   \raisebox{-0.5\height}{\includegraphics[scale=0.2]{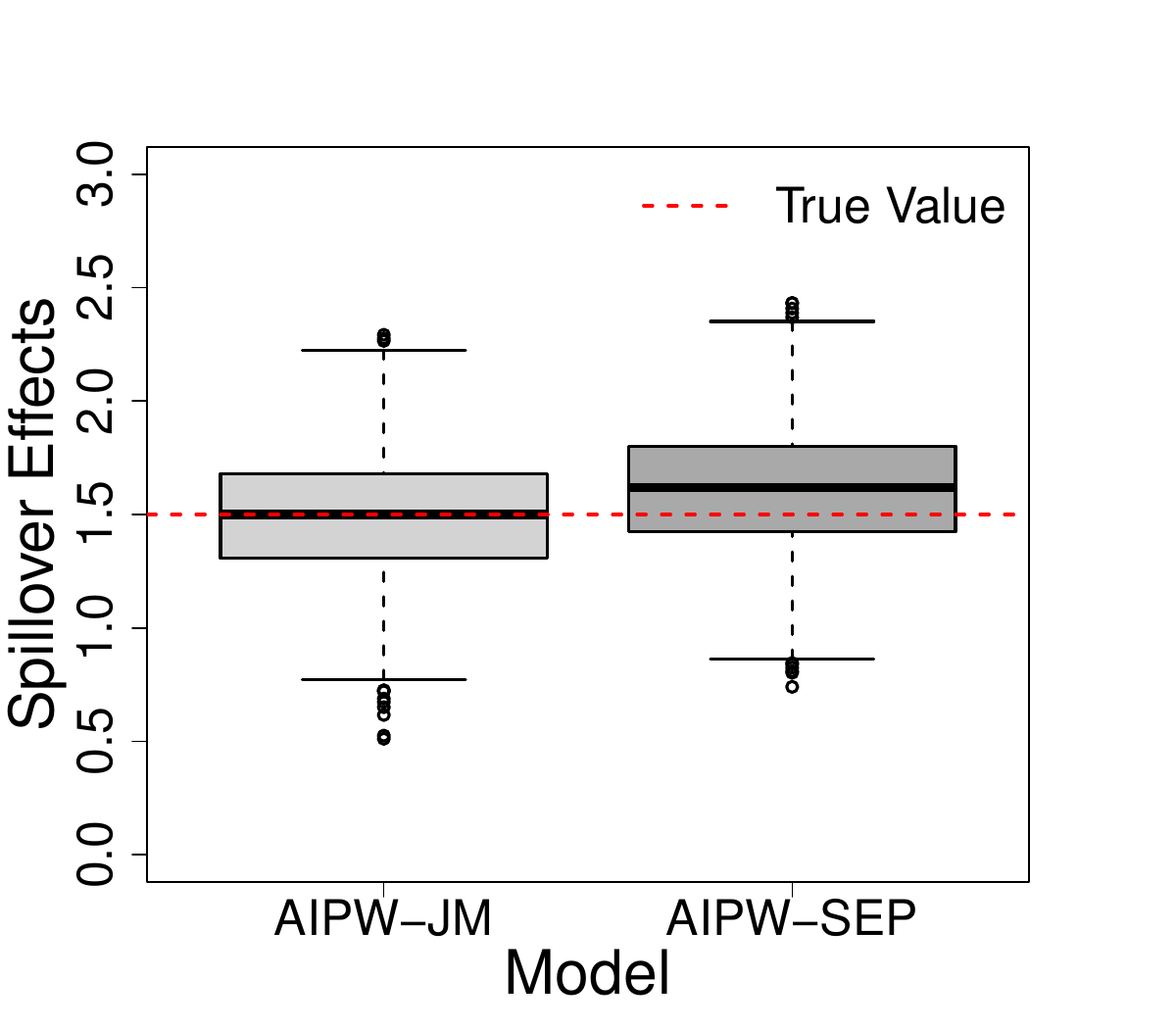}} &
   \raisebox{-0.5\height}{\includegraphics[scale=0.2]{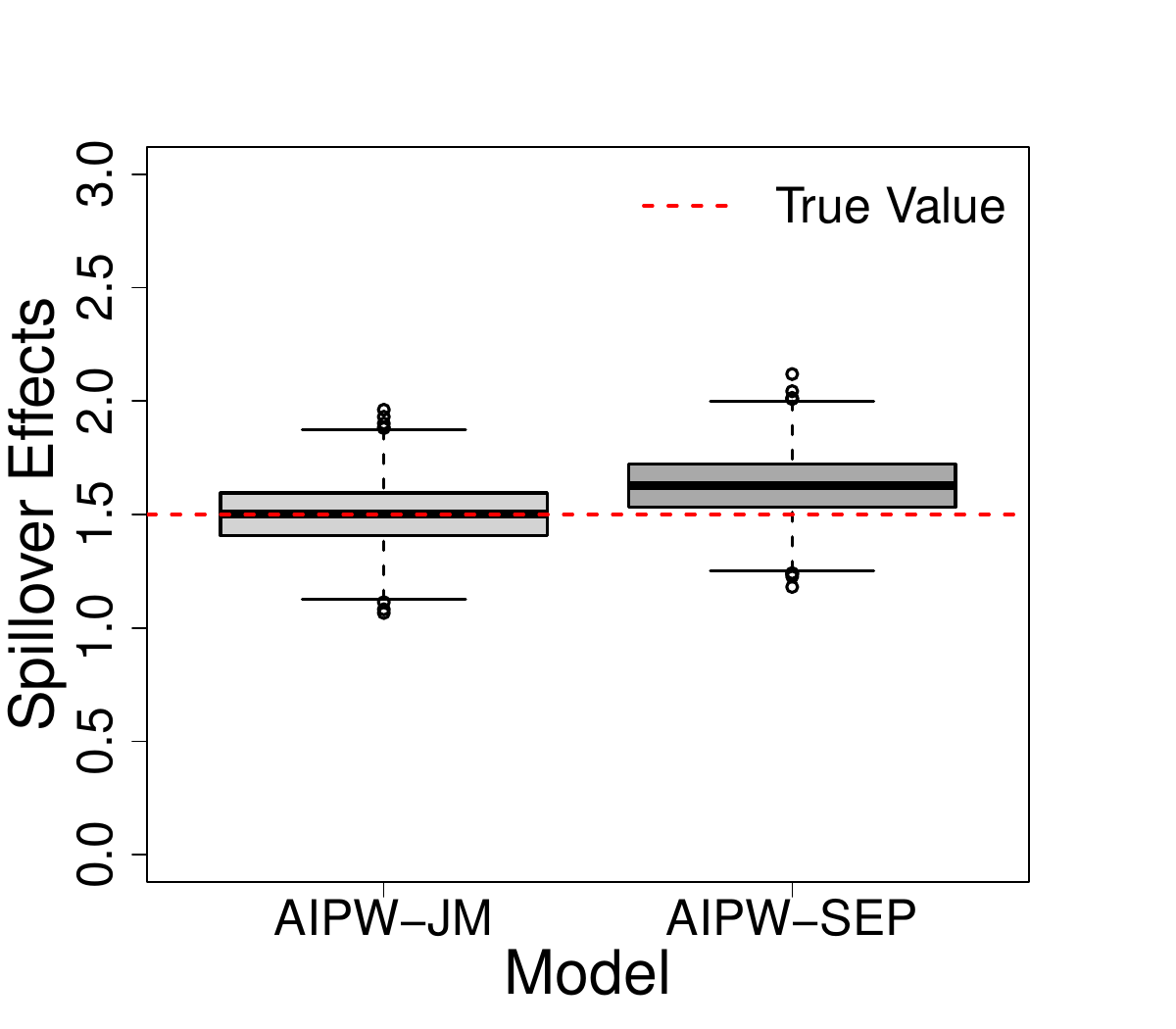}} &
   \raisebox{-0.5\height}{\includegraphics[scale=0.2]{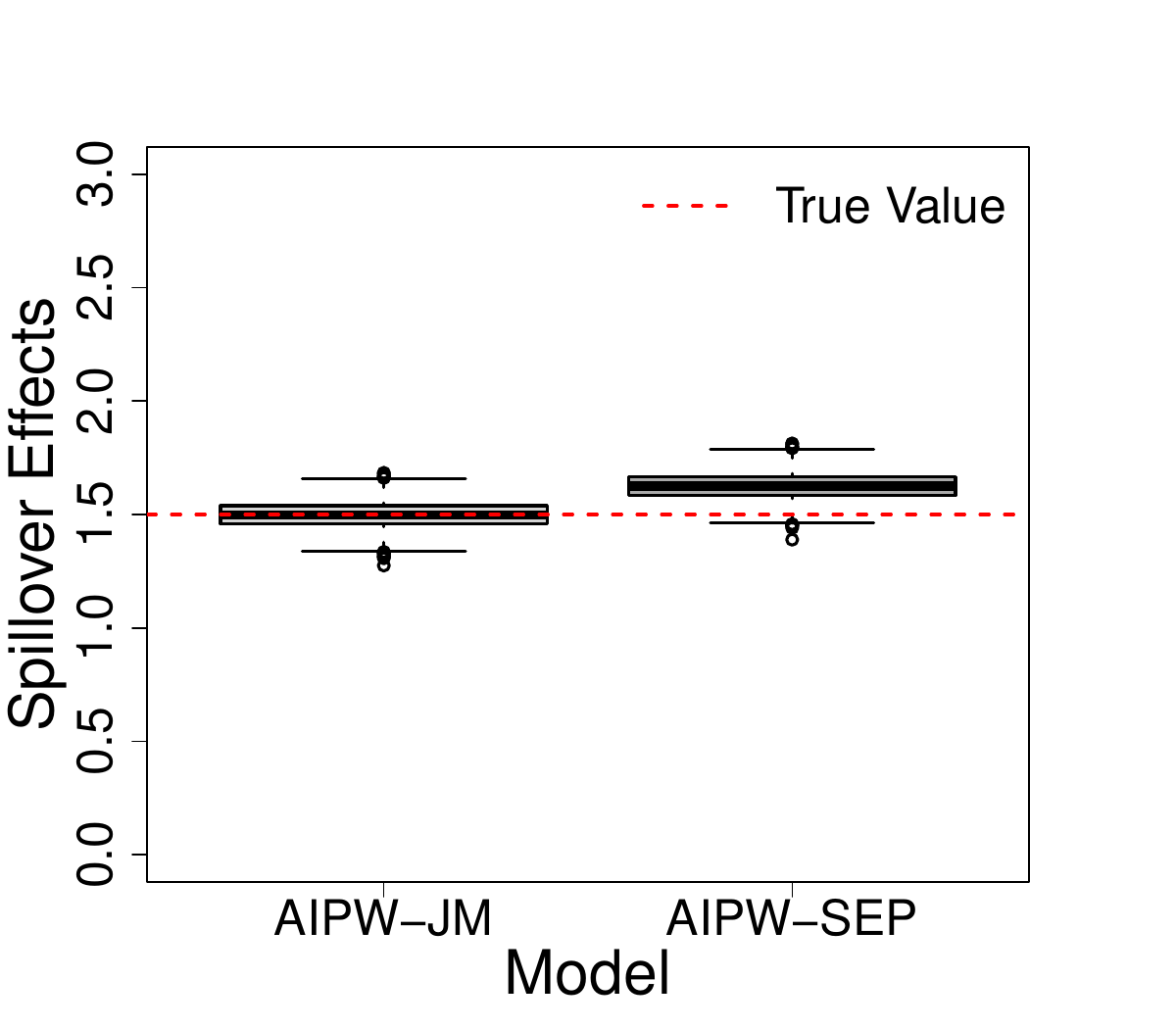}} \\
   50 Groups & 200 Groups & 1000 Groups
   \end{tabular}
   \caption{Sampling Distribution of the Spillover Effects with Different Number of Groups}\label{sam_ps}
   \end{center}
   \end{figure}

\begin{revblock}
Table~\ref{tab:sep} reports the finite-sample properties of the two estimators for the spillover 
effect, from the same run as Table~\ref{tab:bench}. AIPW-SEP has a bias of about $0.12$ that does not shrink with $G$, so its coverage falls from 
$0.92$ at $G = 50$ to $0.47$ at $G = 1000$; its sampling variation is almost identical to that of 
AIPW-JM, so the comparison is about bias, not variance. For the direct effect the two estimators are 
indistinguishable (bias below $0.015$, coverage $0.94$--$0.96$ for both).

\begin{table}[h]
\centering
\caption{Marginal score with (AIPW-JM) and without (AIPW-SEP) the covariates of other units: 
spillover effect $\tau((1,1),(1,0))$}\label{tab:sep}
\begin{tabular}{l *{3}{S[table-format=-1.3]} c *{3}{S[table-format=-1.3]}}
\hline\hline
 & \multicolumn{3}{c}{AIPW-JM} & & \multicolumn{3}{c}{AIPW-SEP} \\
\cline{2-4} \cline{6-8}
{$G$} & {Bias} & {RMSE} & {Coverage} & & {Bias} & {RMSE} & {Coverage} \\
\hline
50   & -0.003 & 0.277 & 0.943 & & 0.117 & 0.303 & 0.917 \\
100  & -0.002 & 0.196 & 0.950 & & 0.125 & 0.232 & 0.892 \\
200  & -0.002 & 0.137 & 0.948 & & 0.124 & 0.184 & 0.846 \\
500  &  0.000 & 0.087 & 0.948 & & 0.125 & 0.152 & 0.695 \\
1000 &  0.002 & 0.061 & 0.956 & & 0.127 & 0.141 & 0.470 \\
\hline\hline
\end{tabular}
\parbox{0.85\textwidth}{\footnotesize{\bf Notes:} 2,000 replications (the run of Table~\ref{tab:bench}); cluster-robust 95\% intervals. 
Both estimators use the outcome model $y \sim d + z + x$.}
\end{table}
\end{revblock}

\section{Additional Results and Plots for the Empirical Application}\label{C}

This appendix collects additional results and plots for the empirical application. 

\subsection{Visualization of network structure in each village}

Figures~\ref{ns2} and~\ref{ns3} show the networks of the 16 villages. Residents who adopted 
U-Bridge are in blue and residents who hold a formal leadership role in red.

\subsection{Estimation results for network formation and treatment assignment modeling}

Table~\ref{nreg2} reports the estimated coefficients, with 95\% intervals, of the additive and 
multiplicative effects (AME) network models for the 16 villages. Table~\ref{reg2} reports the 
logit model for adoption.

\begin{landscape}
\begin{figure}[!ht]
  \begin{center}
  \begin{tabular}{cccc}
   \raisebox{-0.5\height}{\includegraphics[scale=0.2]{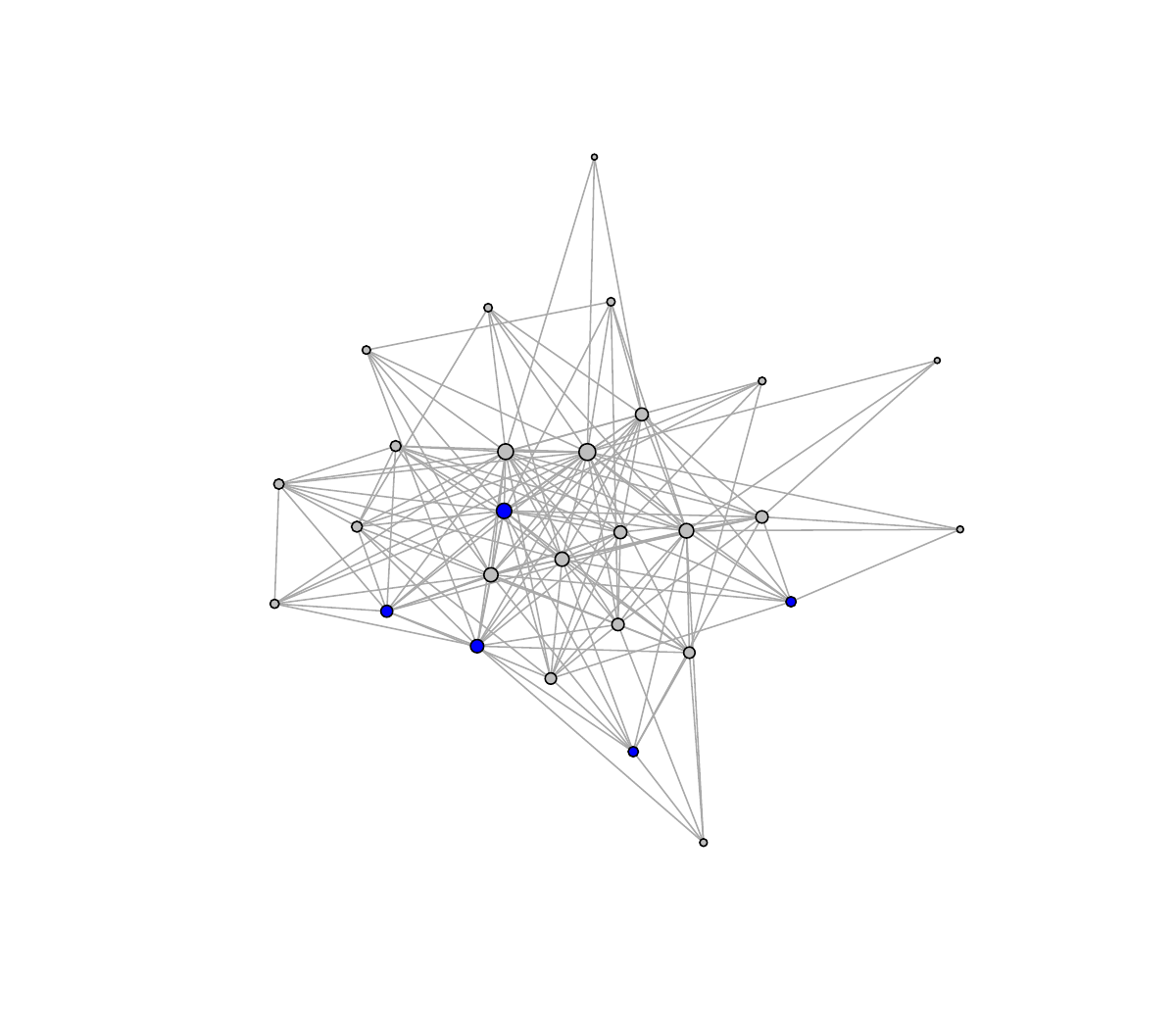}} &
   \raisebox{-0.5\height}{\includegraphics[scale=0.2]{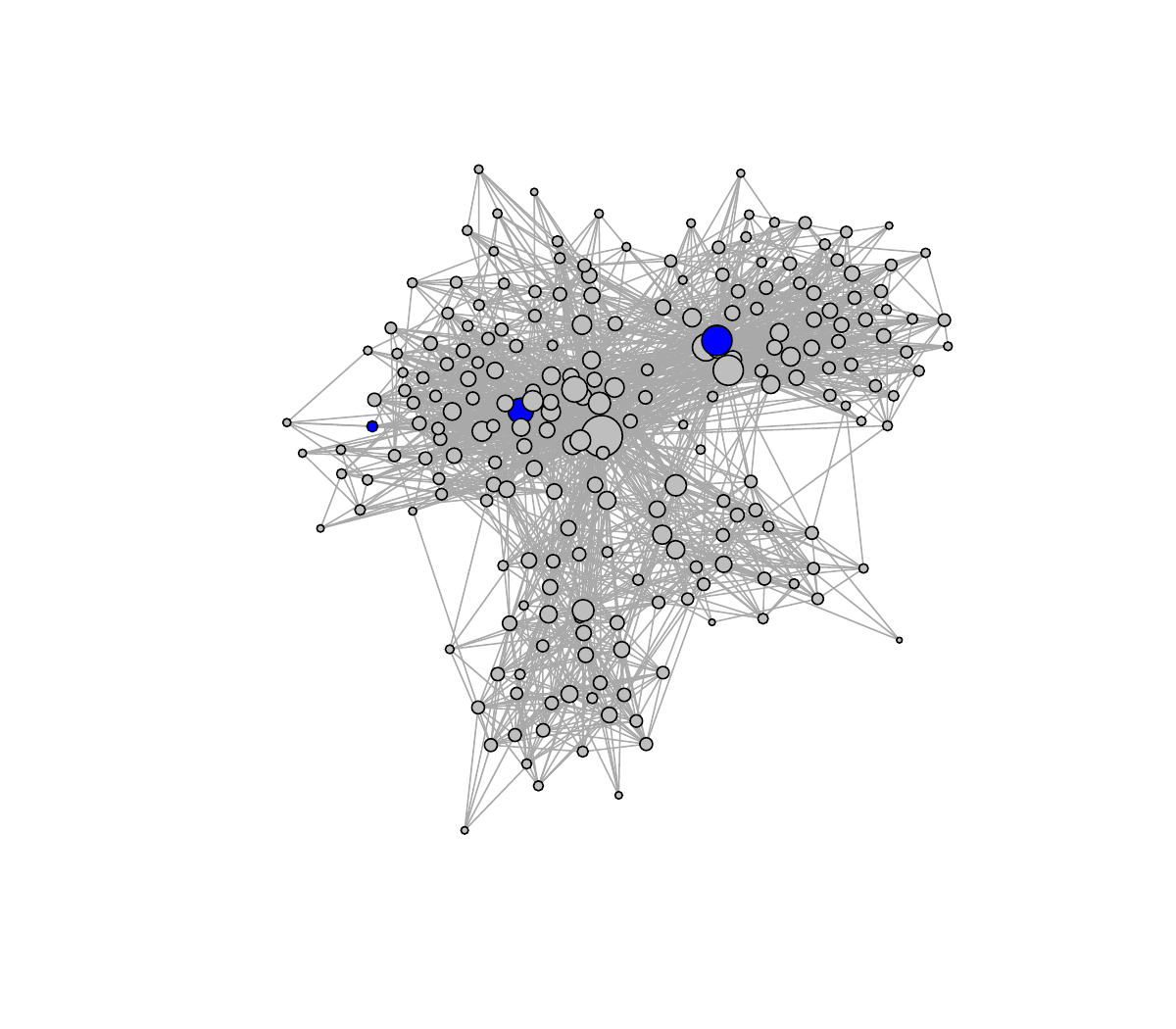}} &
   \raisebox{-0.5\height}{\includegraphics[scale=0.2]{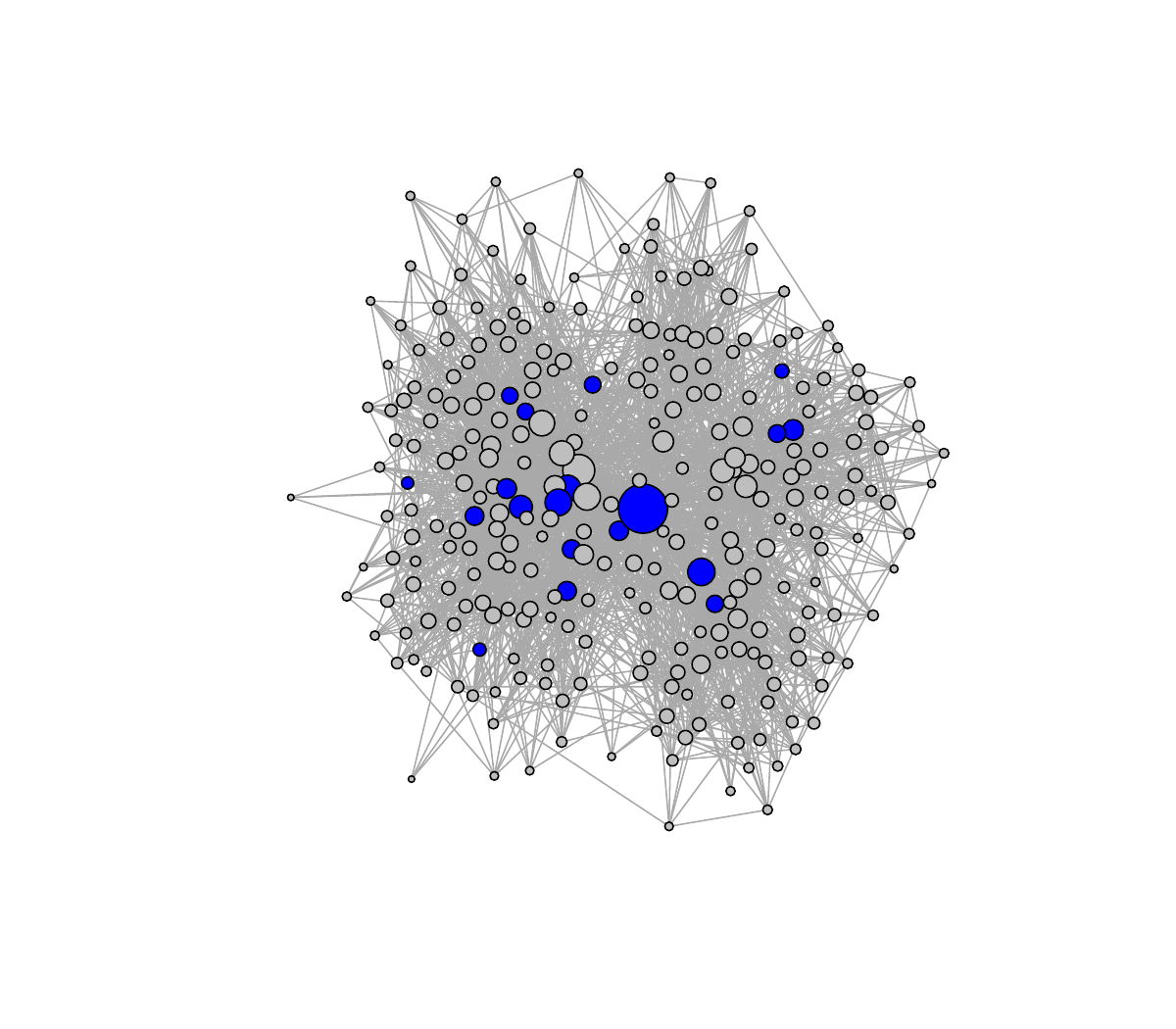}} &
   \raisebox{-0.5\height}{\includegraphics[scale=0.2]{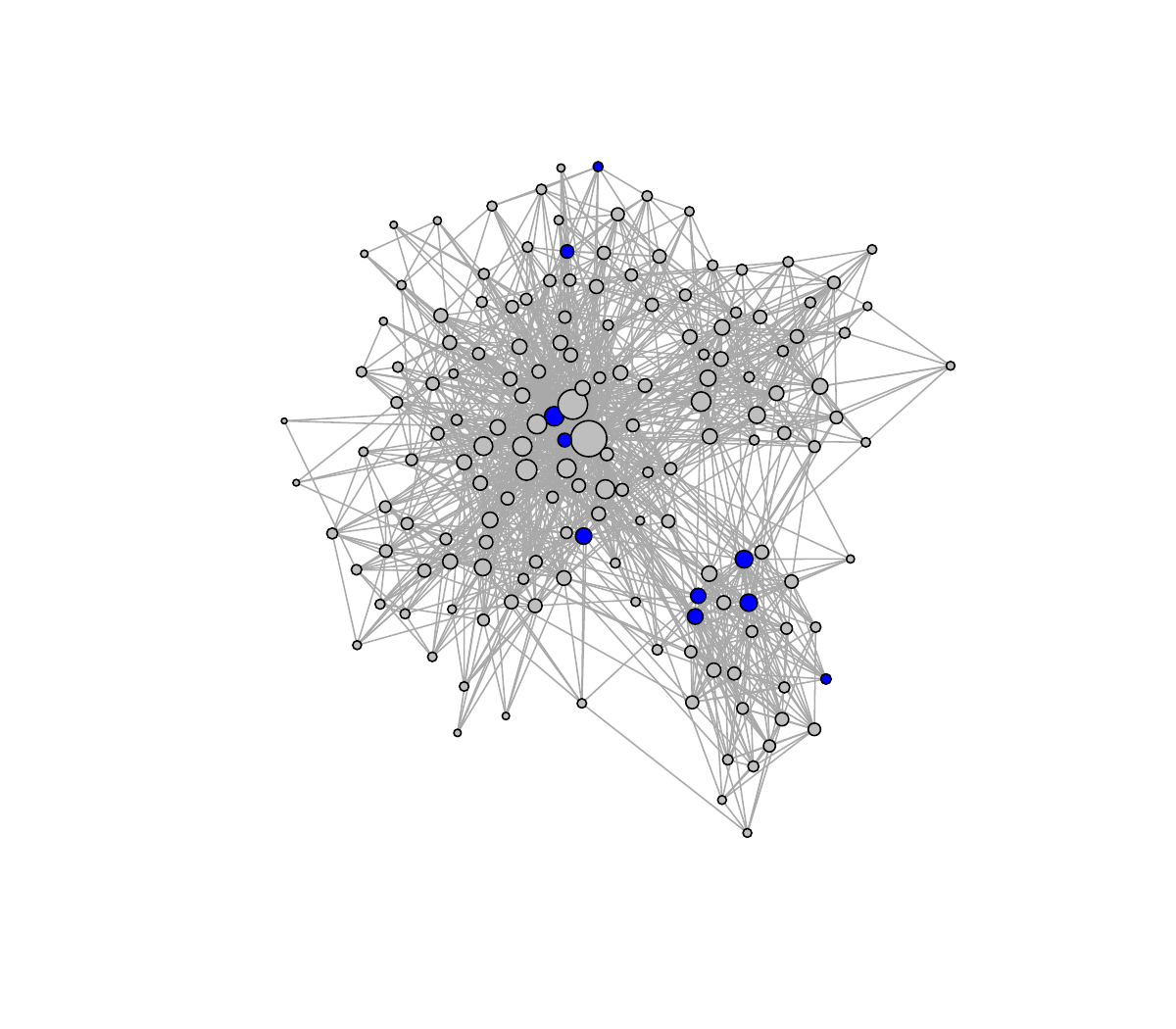}} \\
   \raisebox{-0.5\height}{\includegraphics[scale=0.2]{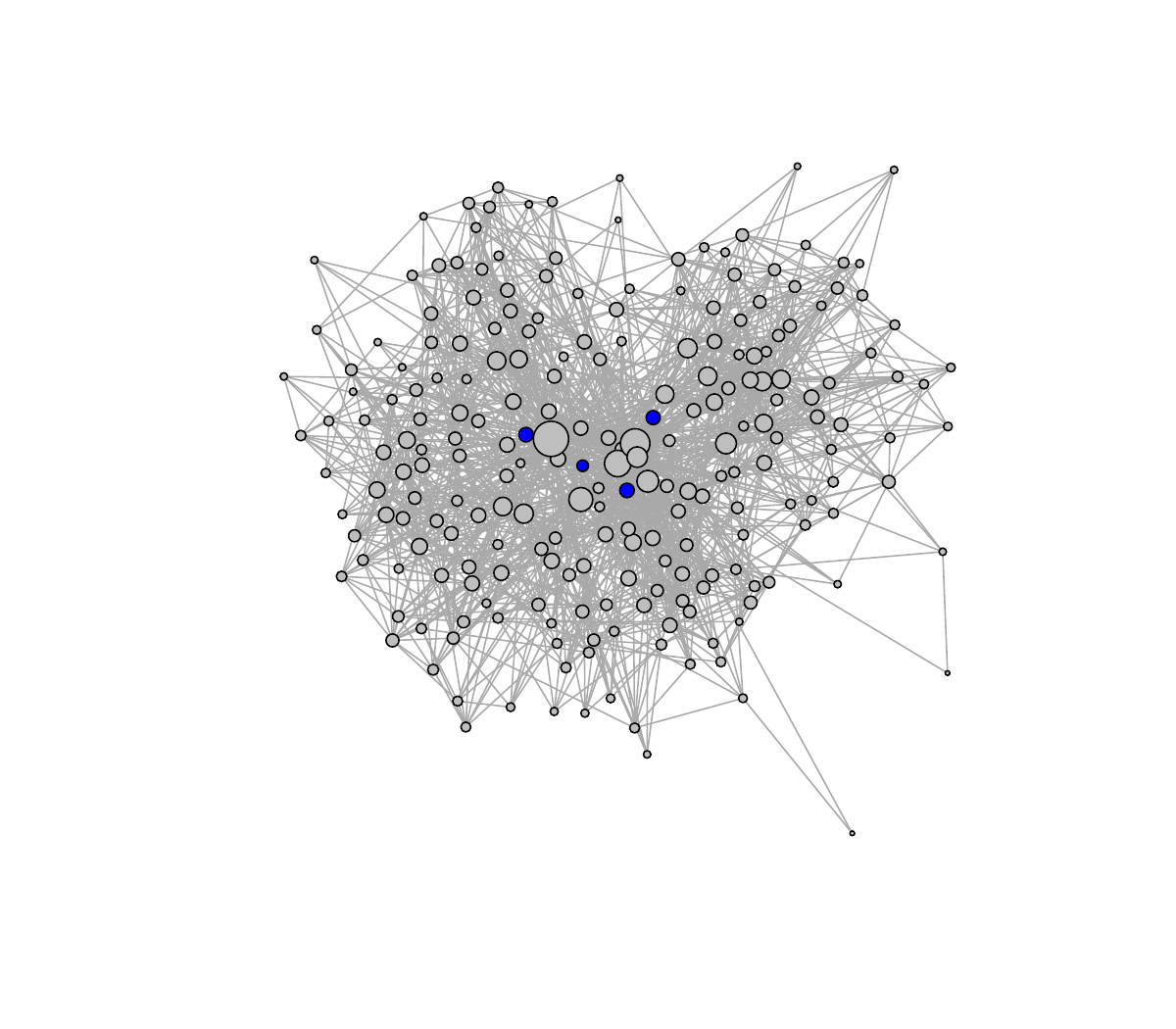}} &
   \raisebox{-0.5\height}{\includegraphics[scale=0.2]{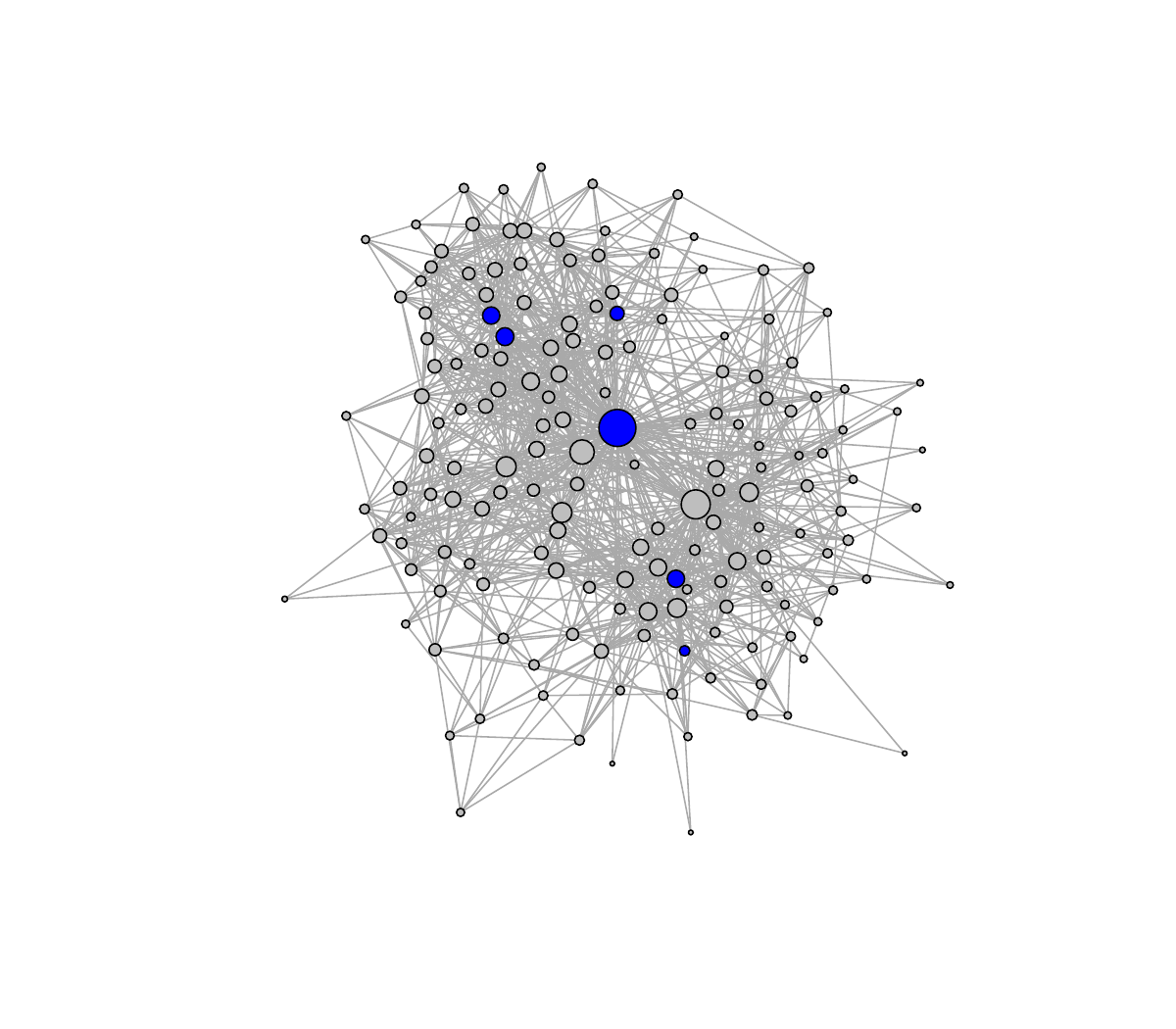}} &
   \raisebox{-0.5\height}{\includegraphics[scale=0.2]{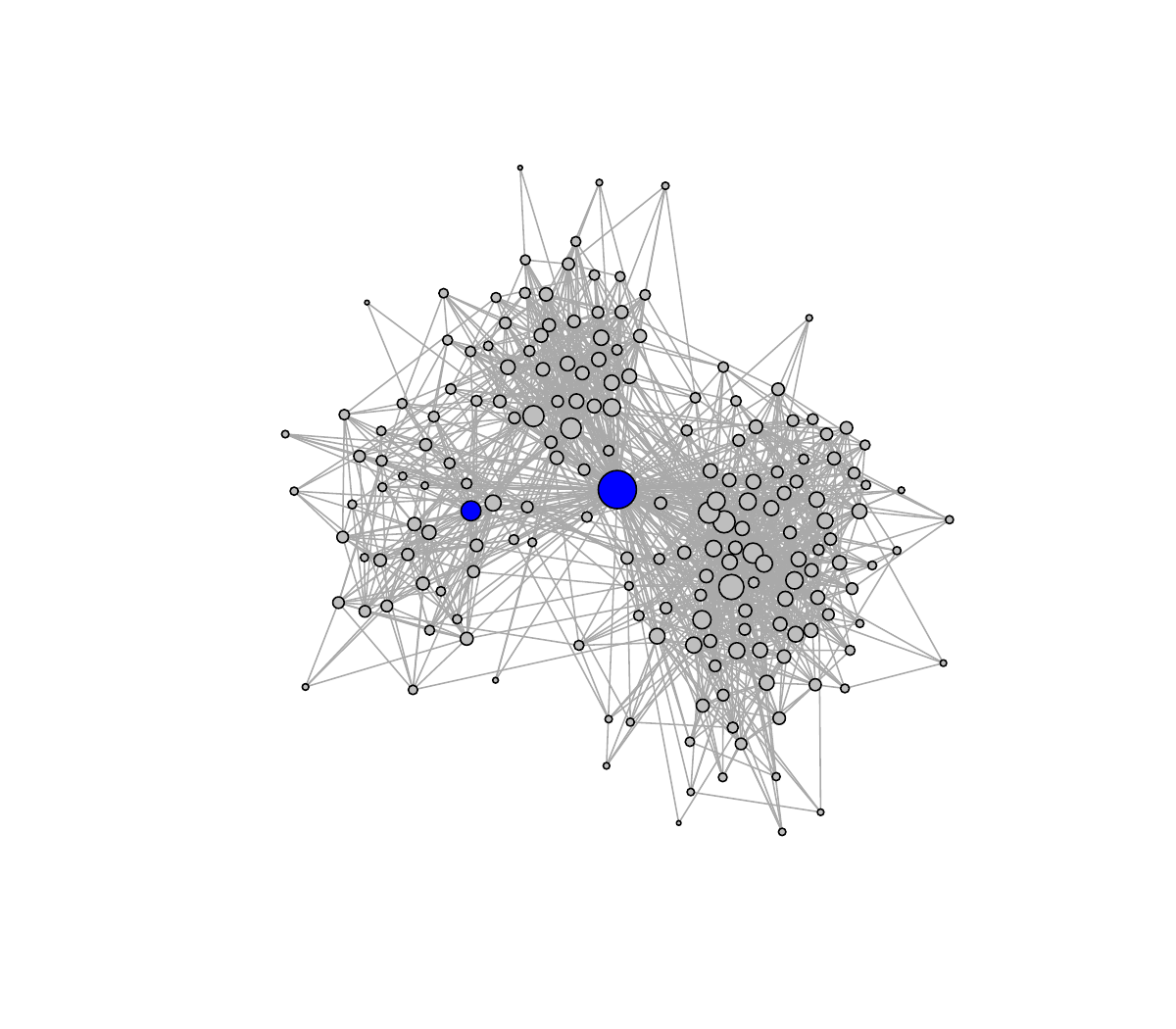}} &
   \raisebox{-0.5\height}{\includegraphics[scale=0.2]{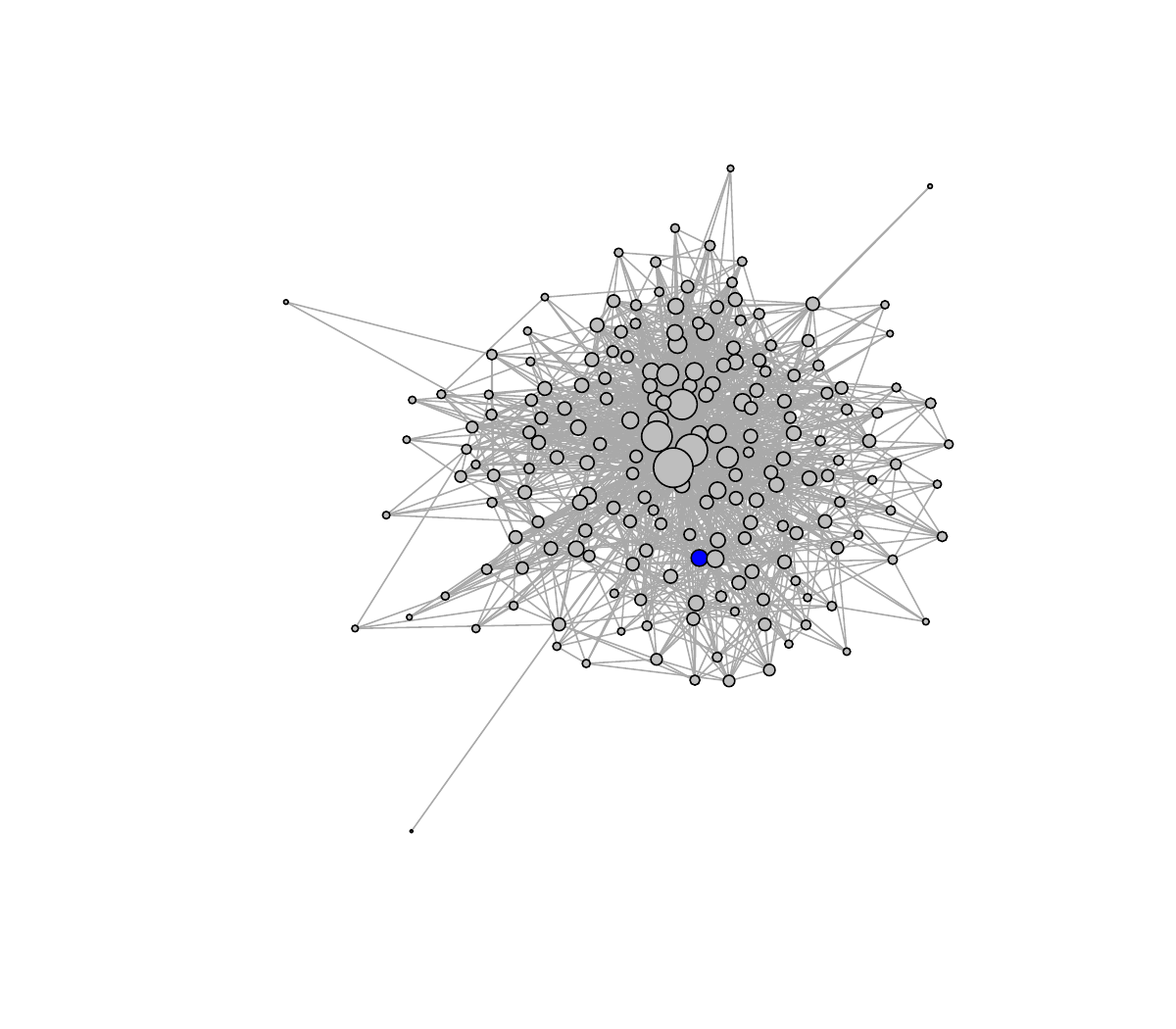}} \\
   \raisebox{-0.5\height}{\includegraphics[scale=0.2]{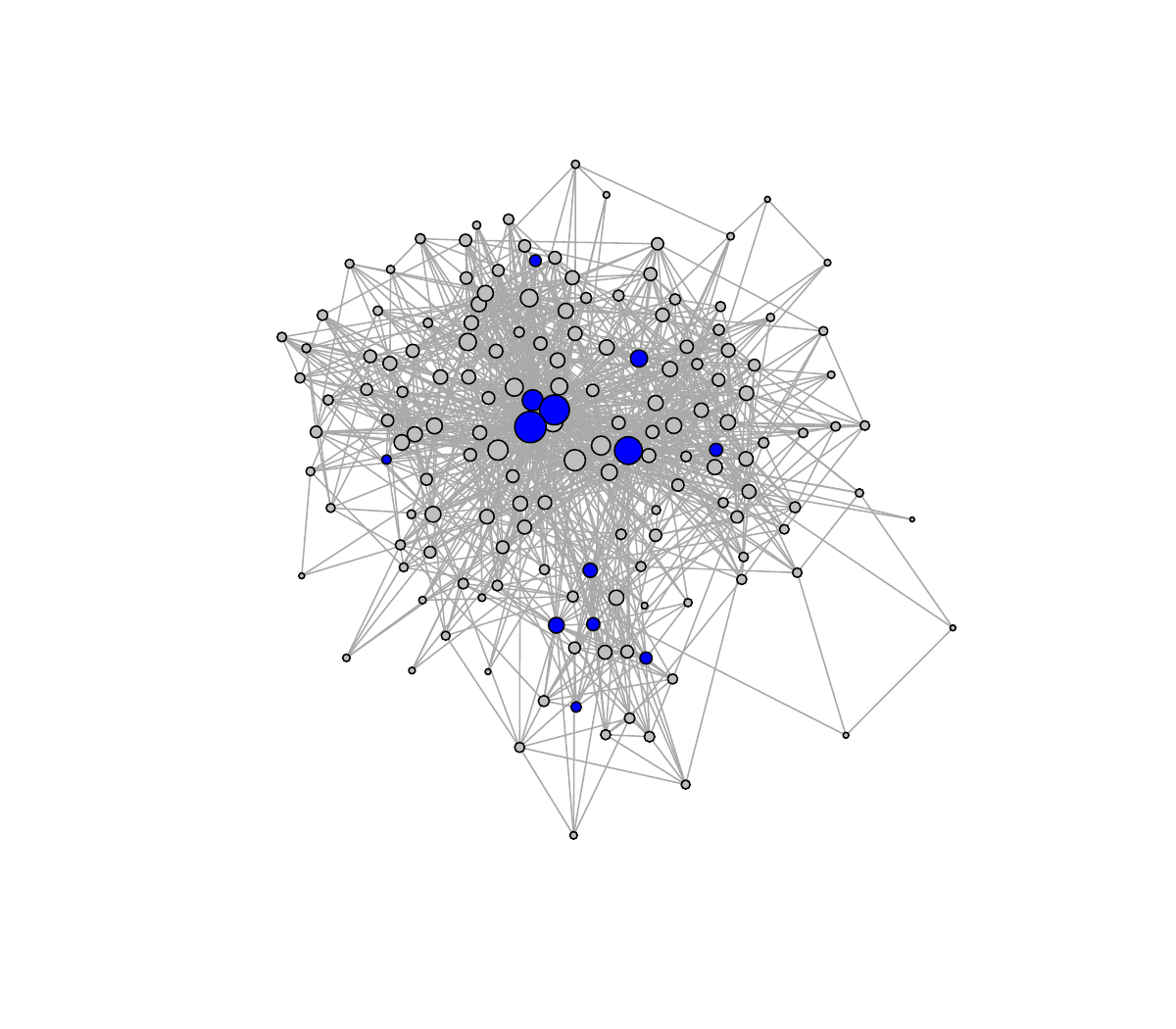}} &
   \raisebox{-0.5\height}{\includegraphics[scale=0.2]{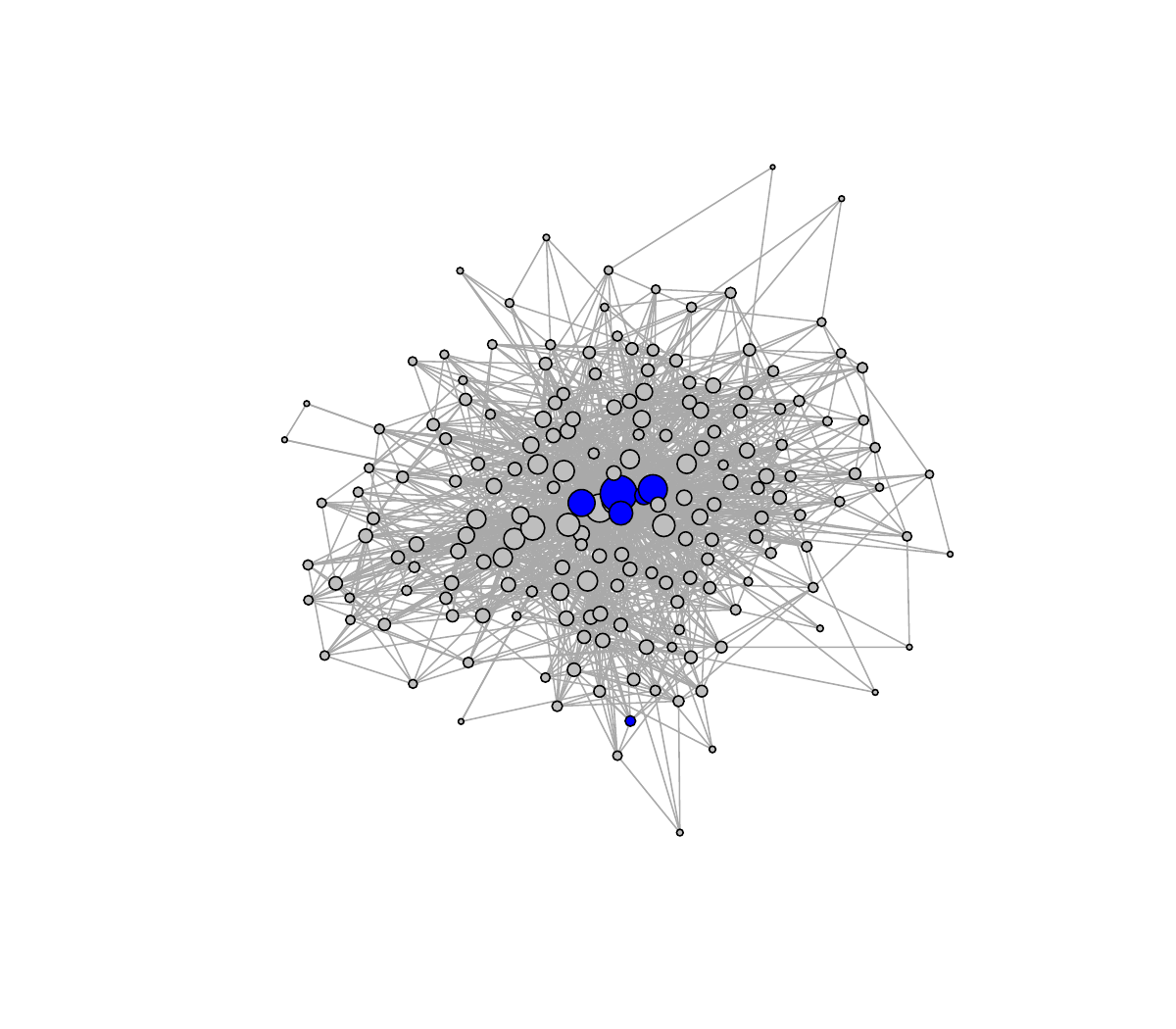}} &
   \raisebox{-0.5\height}{\includegraphics[scale=0.2]{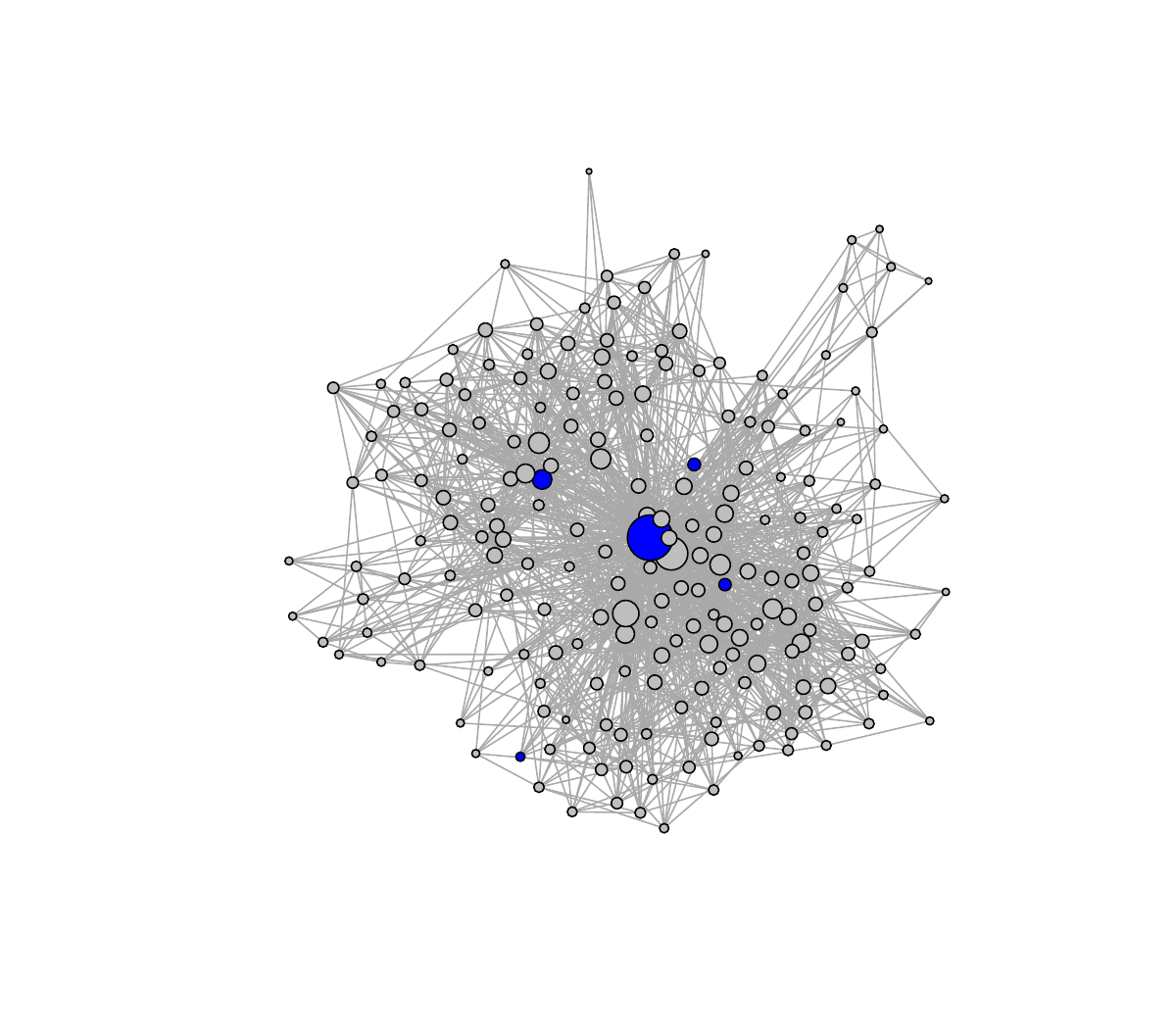}} &
   \raisebox{-0.5\height}{\includegraphics[scale=0.2]{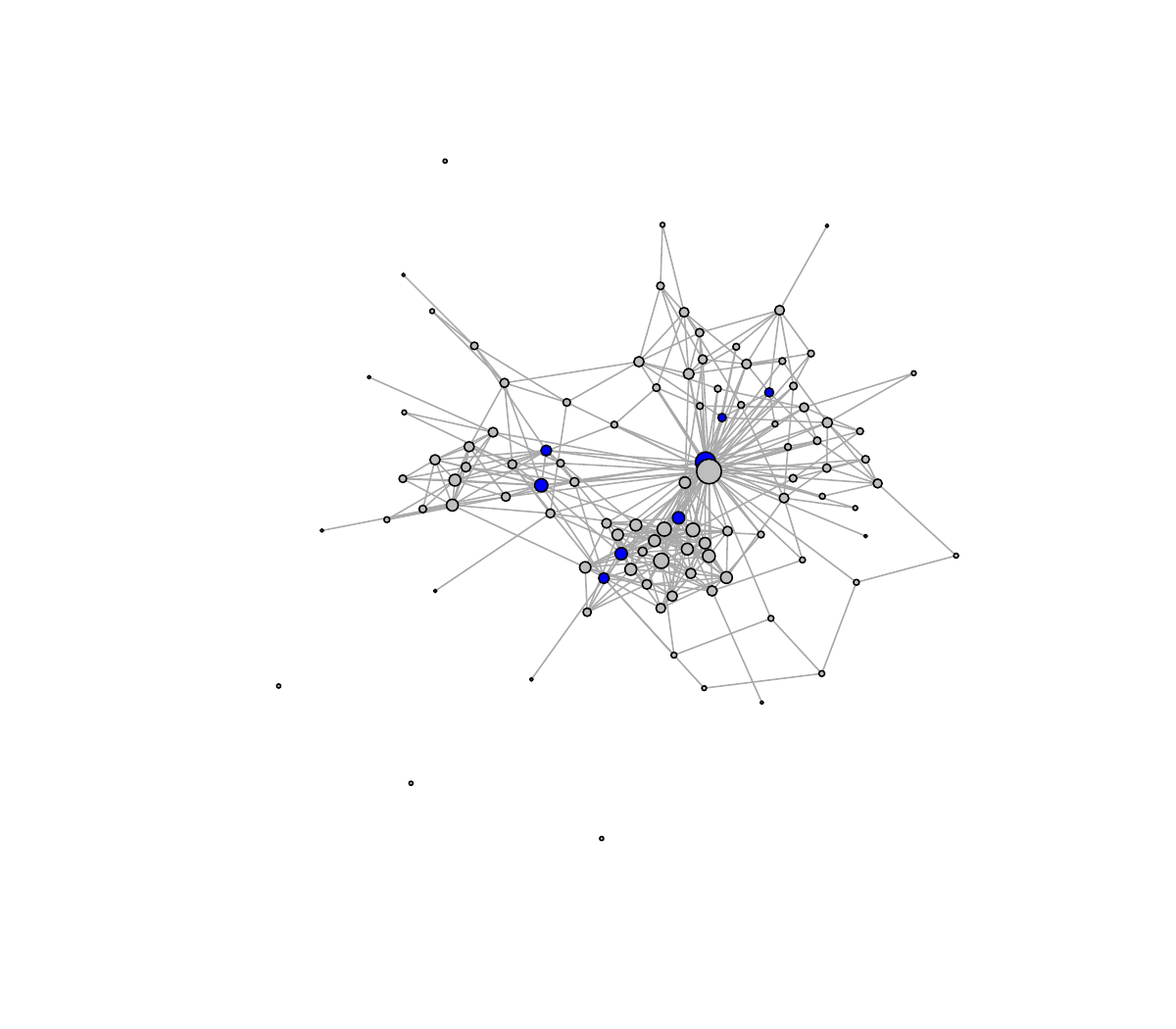}} \\
   \raisebox{-0.5\height}{\includegraphics[scale=0.2]{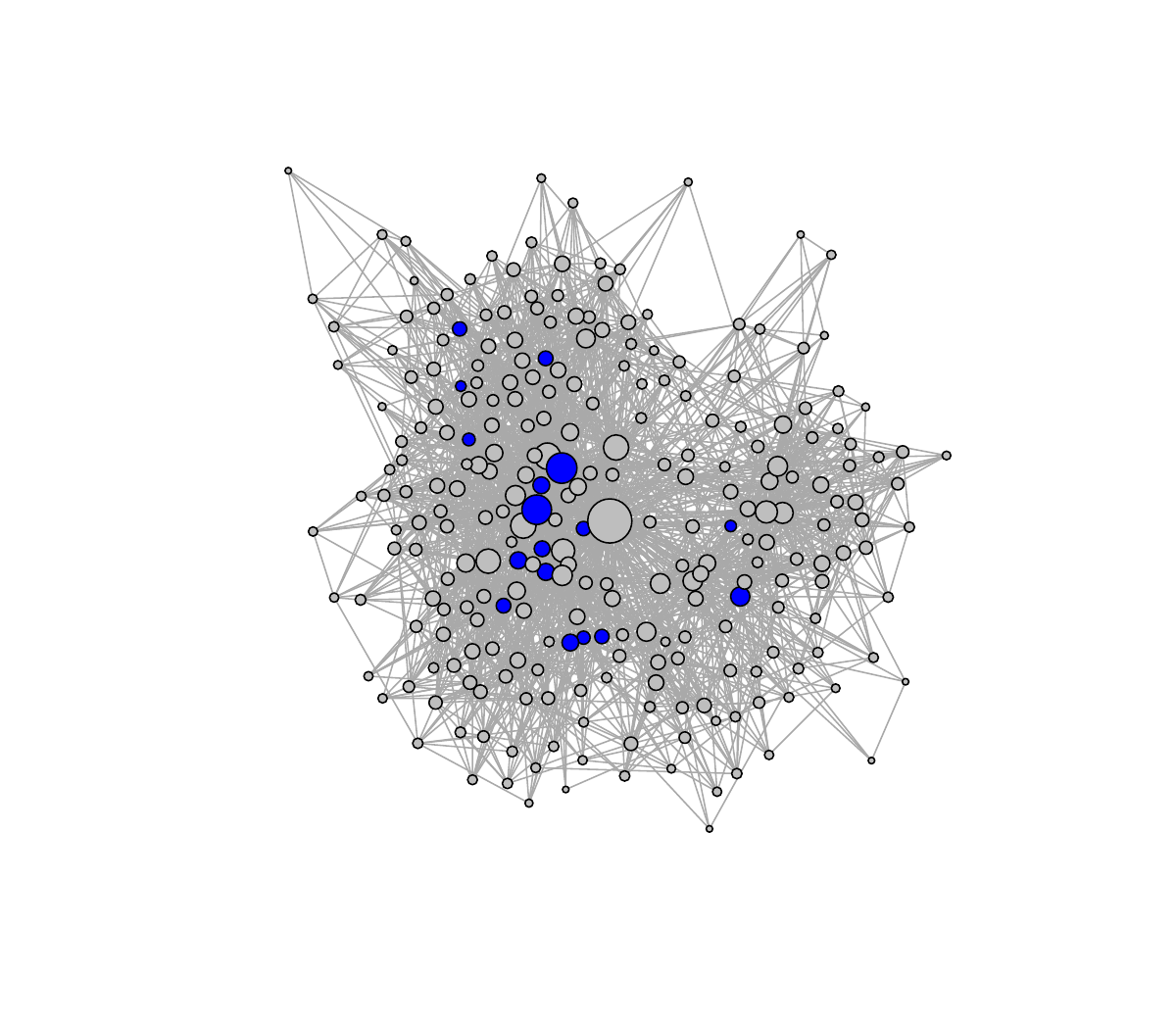}} &
   \raisebox{-0.5\height}{\includegraphics[scale=0.2]{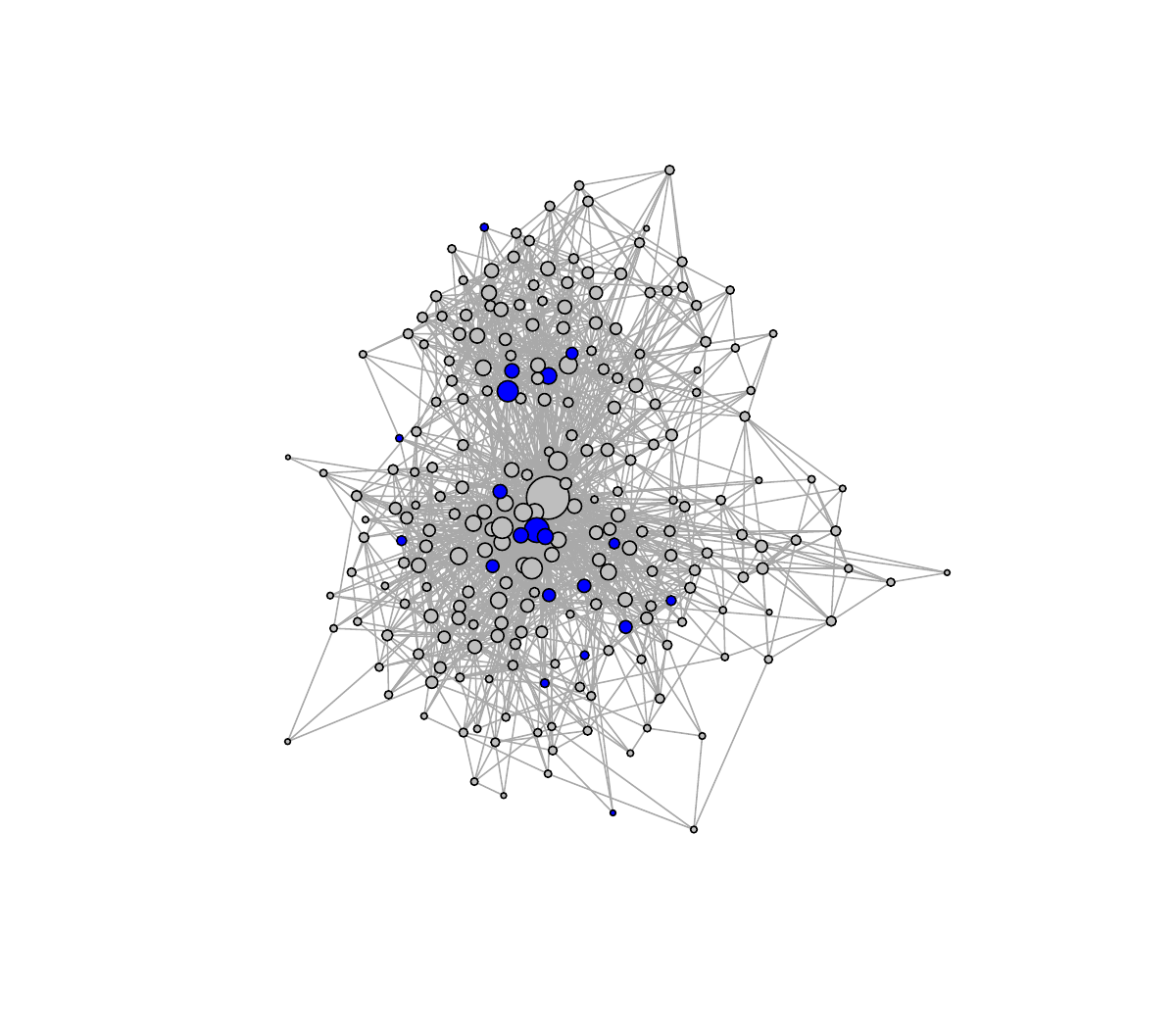}} &
   \raisebox{-0.5\height}{\includegraphics[scale=0.2]{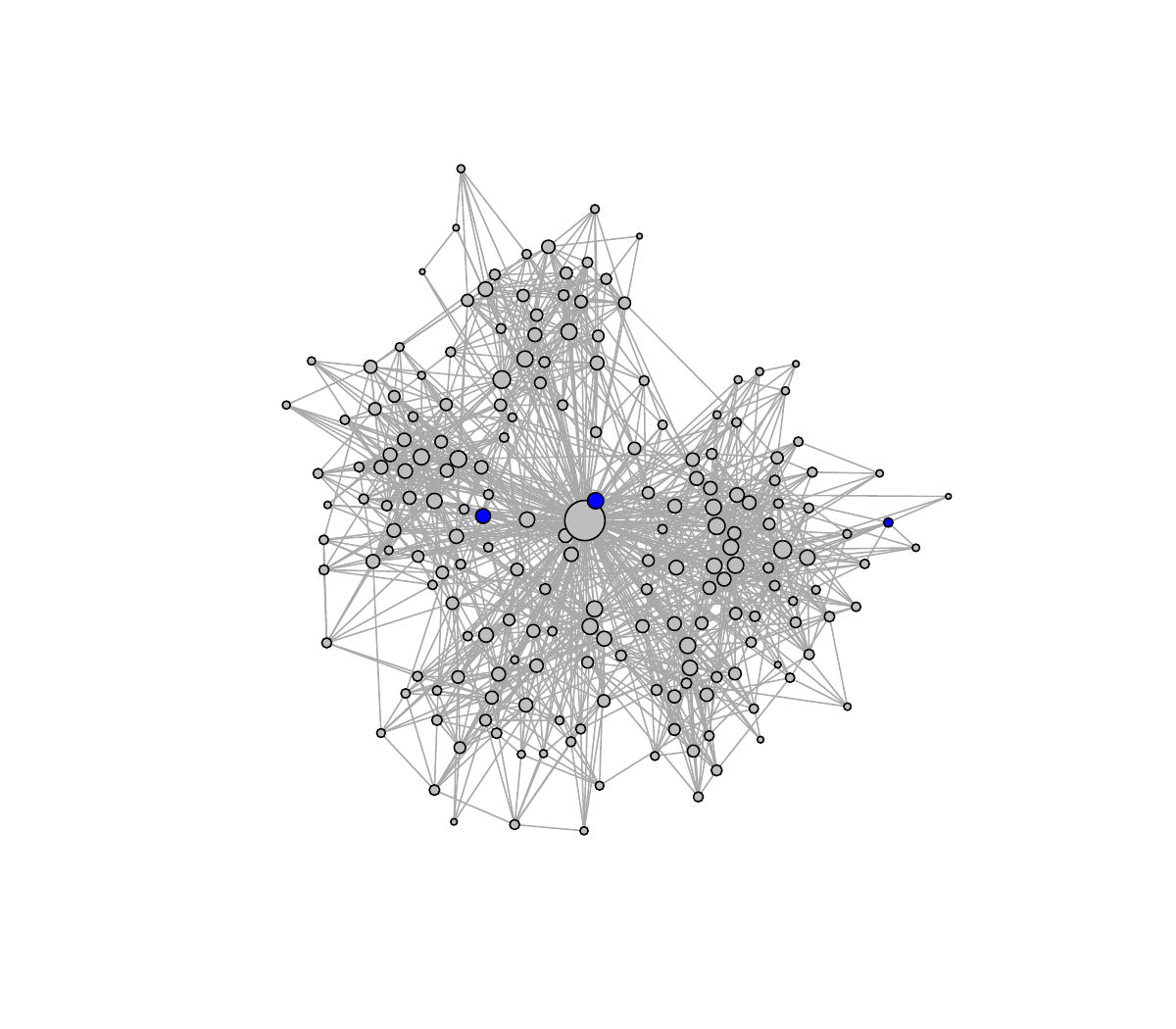}} &
   \raisebox{-0.5\height}{\includegraphics[scale=0.2]{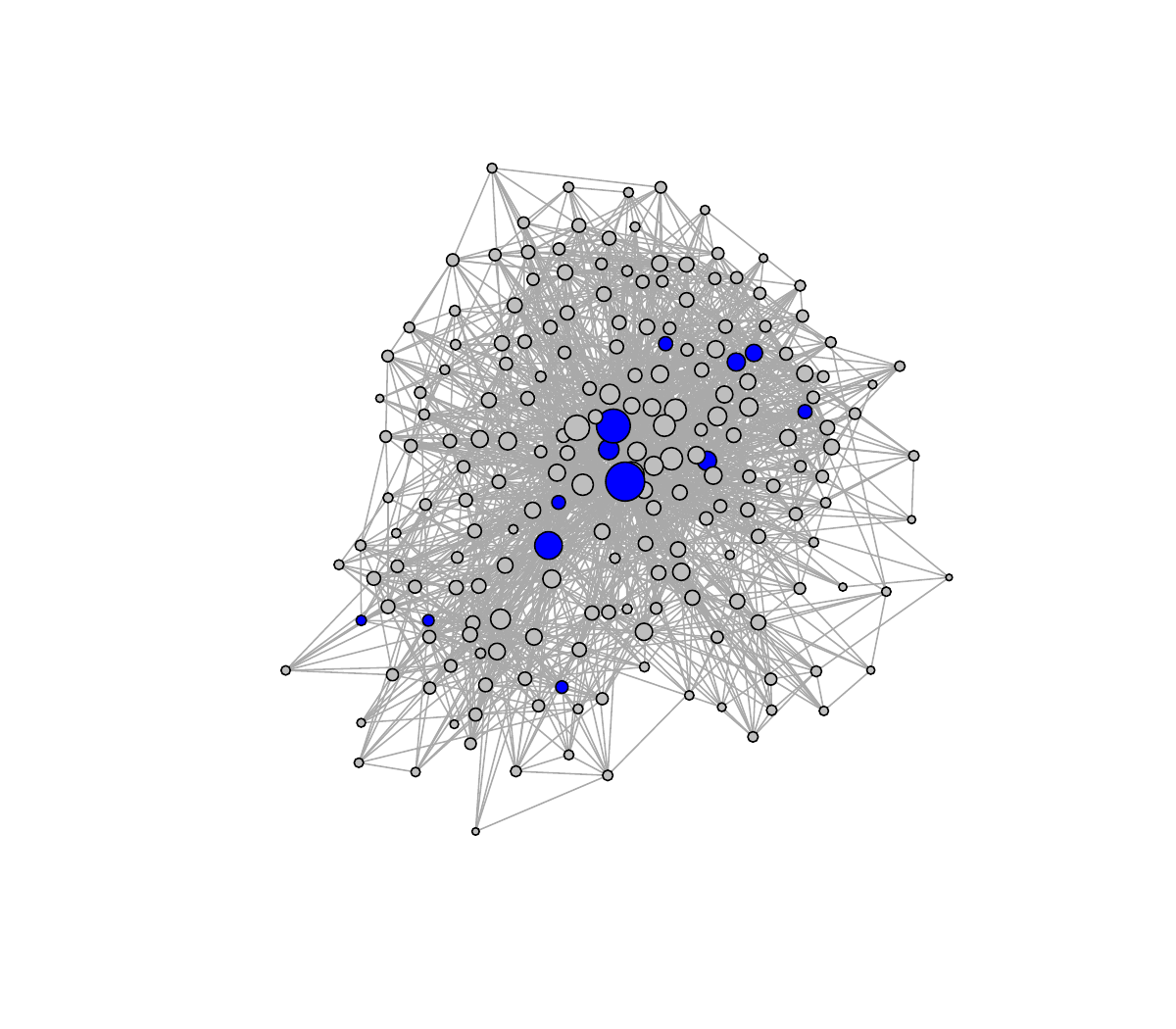}}
   \end{tabular}
   \end{center}
   \caption{Additional Network Visualizations for Villages (Treatment) }\label{ns2}
   \end{figure}
\end{landscape}

\begin{landscape}
\begin{figure}[!ht]
  \begin{center}
  \begin{tabular}{cccc}
   \raisebox{-0.5\height}{\includegraphics[scale=0.2]{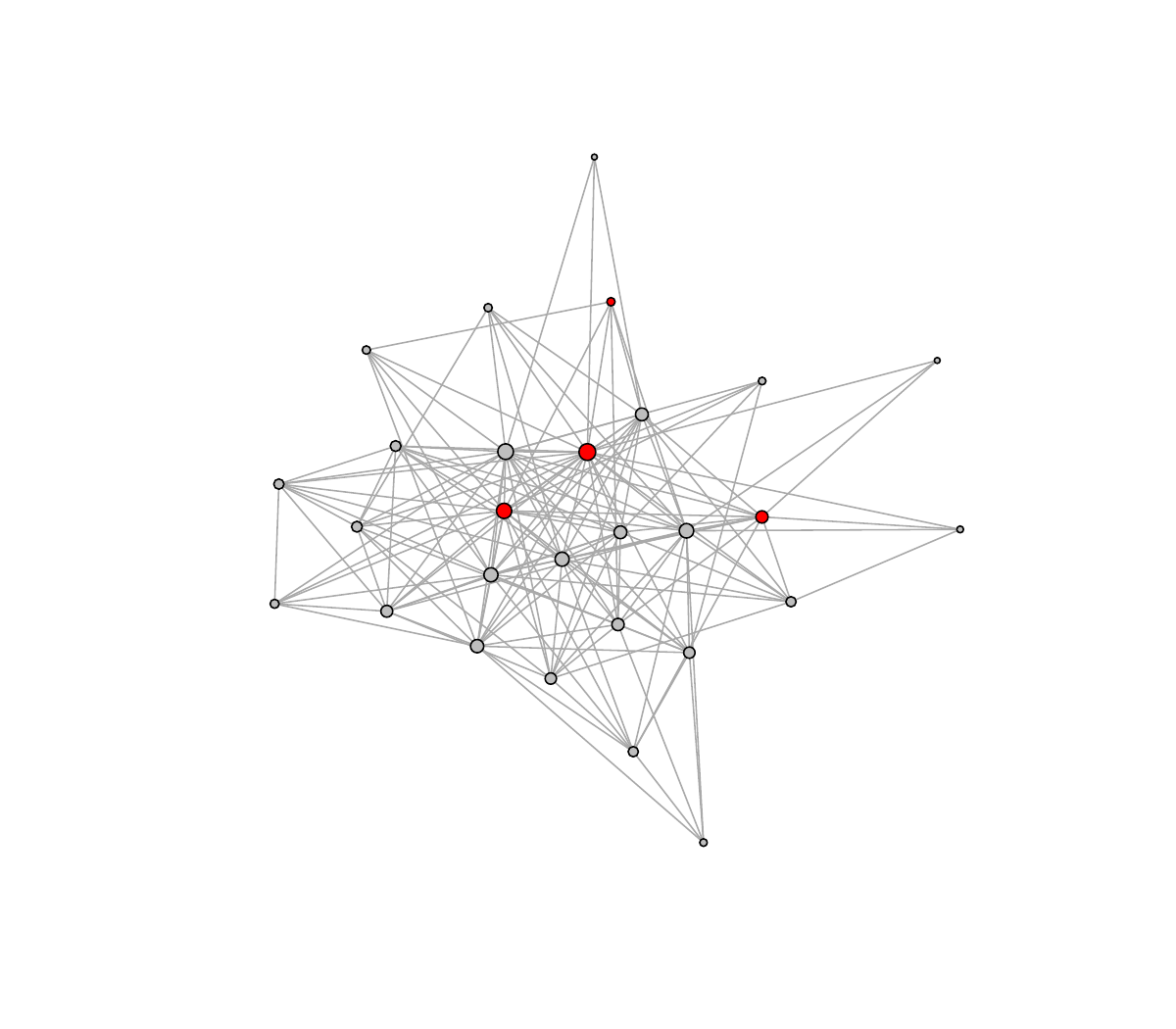}} &
   \raisebox{-0.5\height}{\includegraphics[scale=0.2]{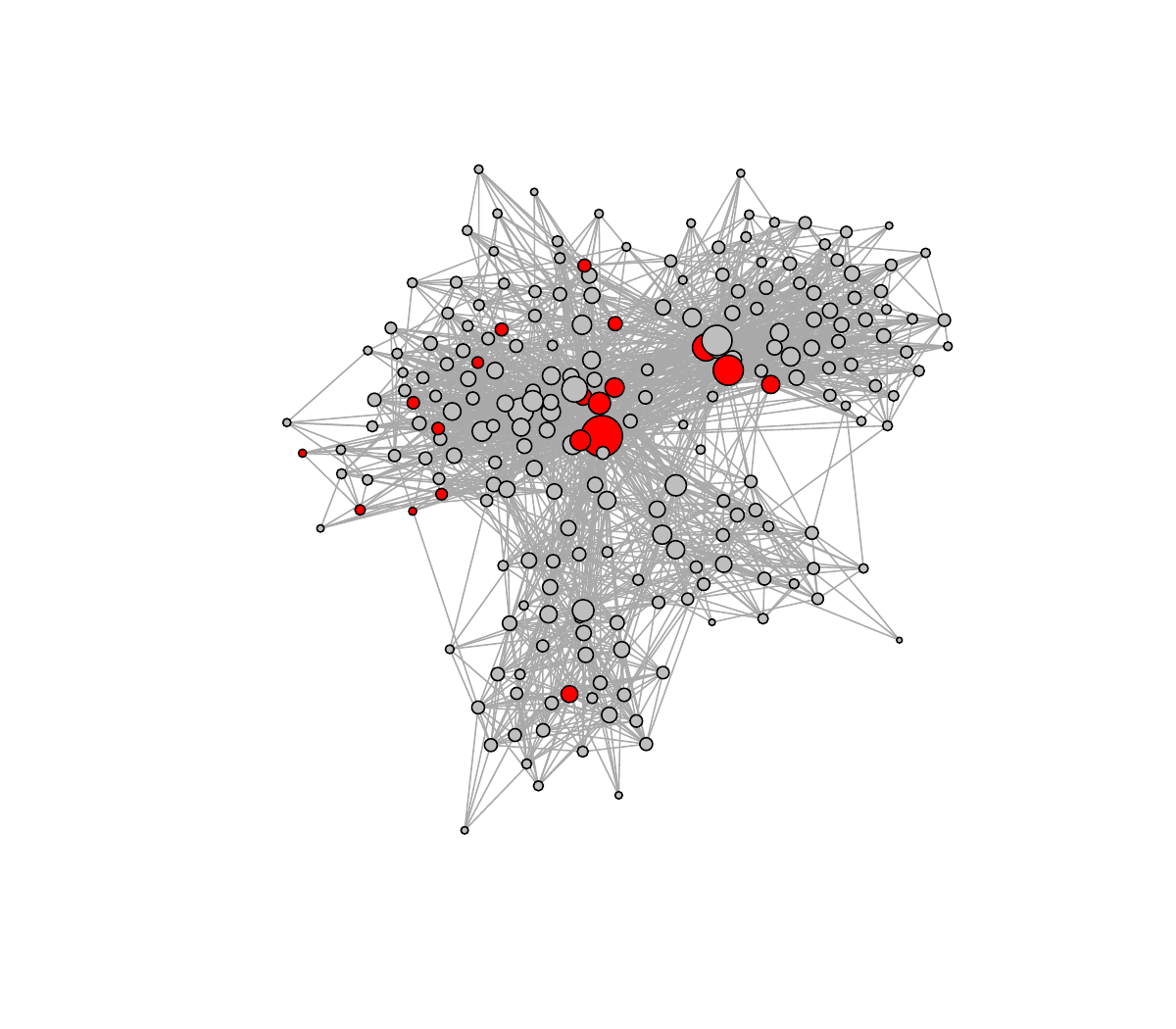}} &
   \raisebox{-0.5\height}{\includegraphics[scale=0.2]{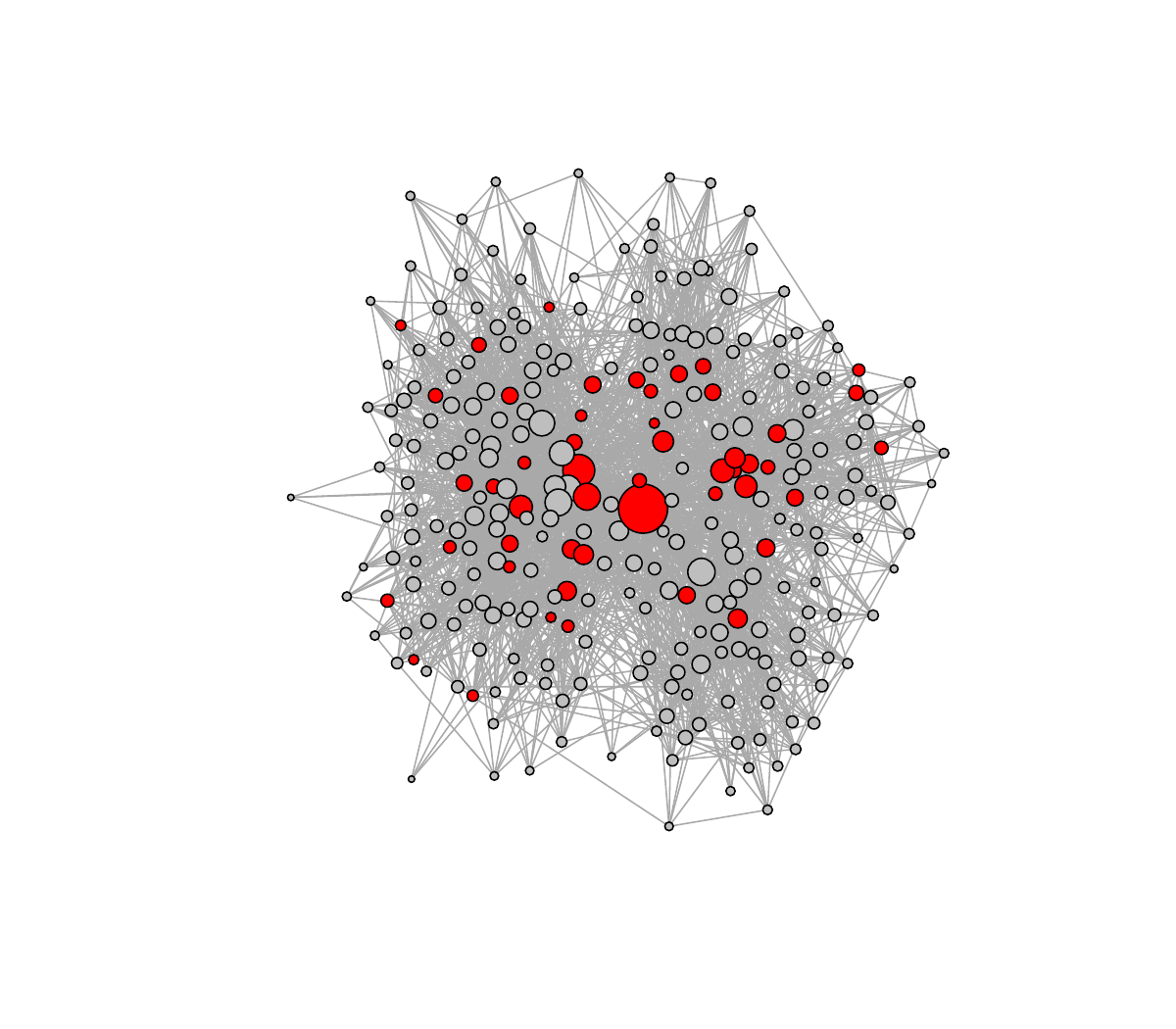}} &
   \raisebox{-0.5\height}{\includegraphics[scale=0.2]{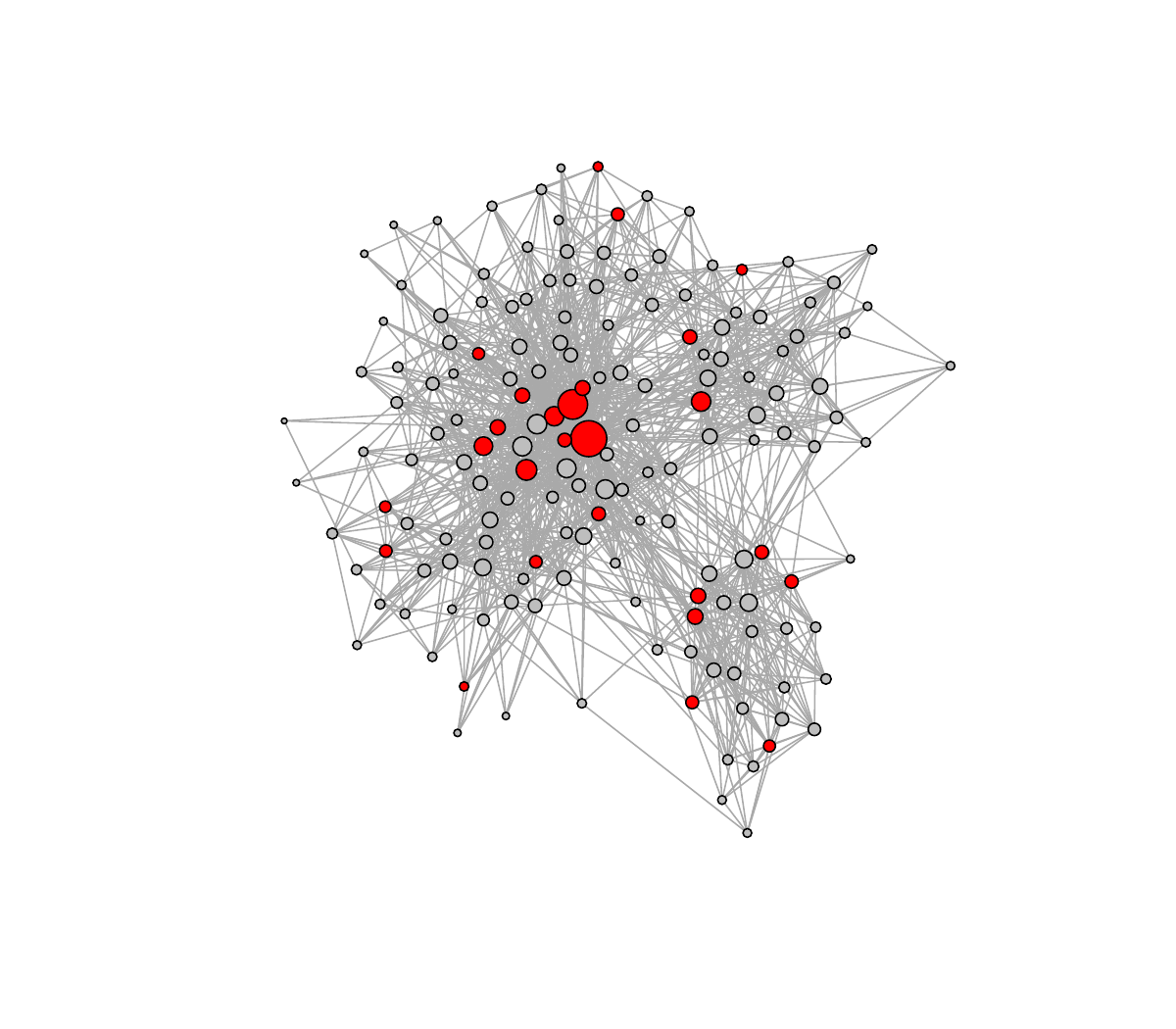}} \\
   \raisebox{-0.5\height}{\includegraphics[scale=0.2]{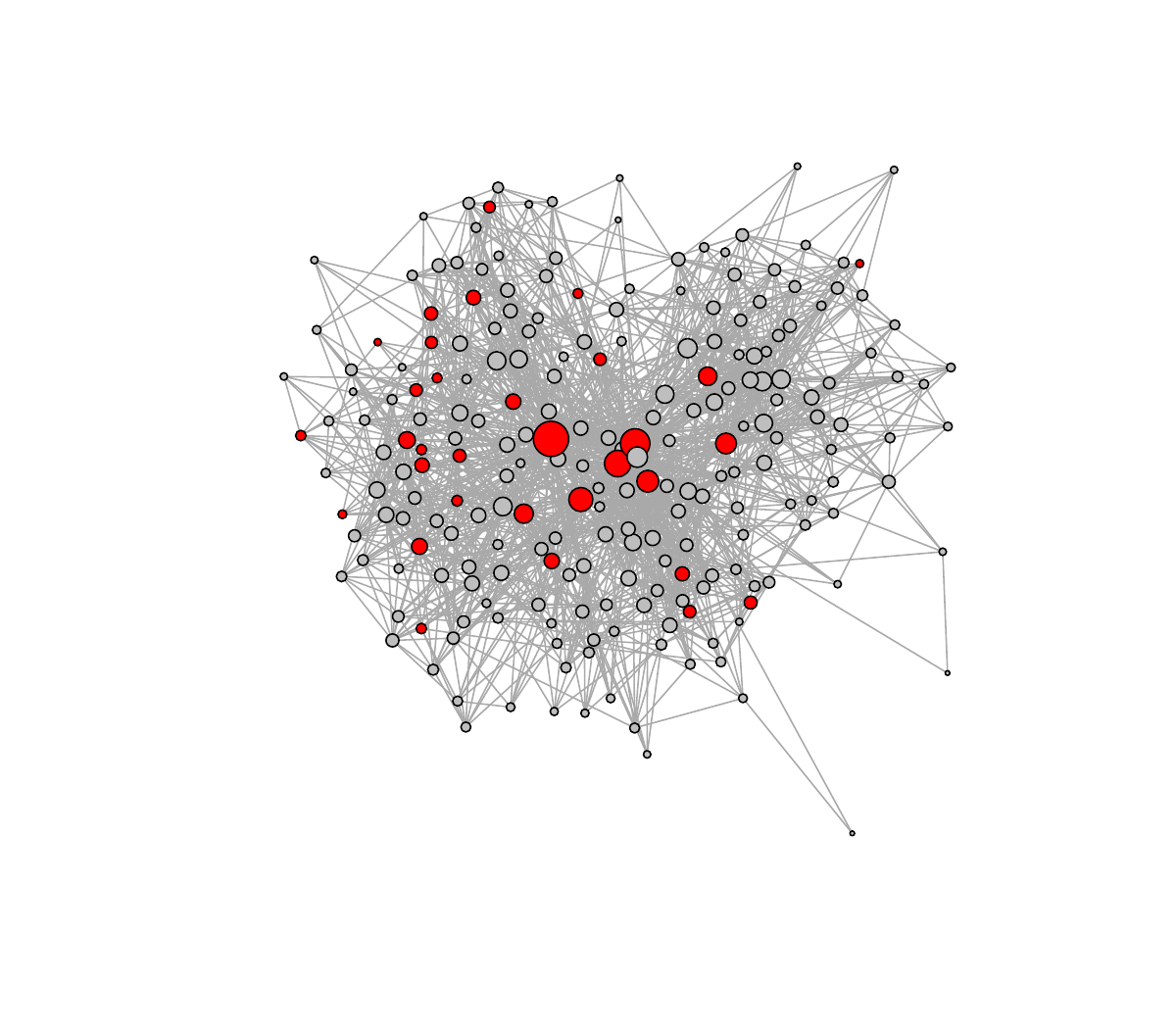}} &
   \raisebox{-0.5\height}{\includegraphics[scale=0.2]{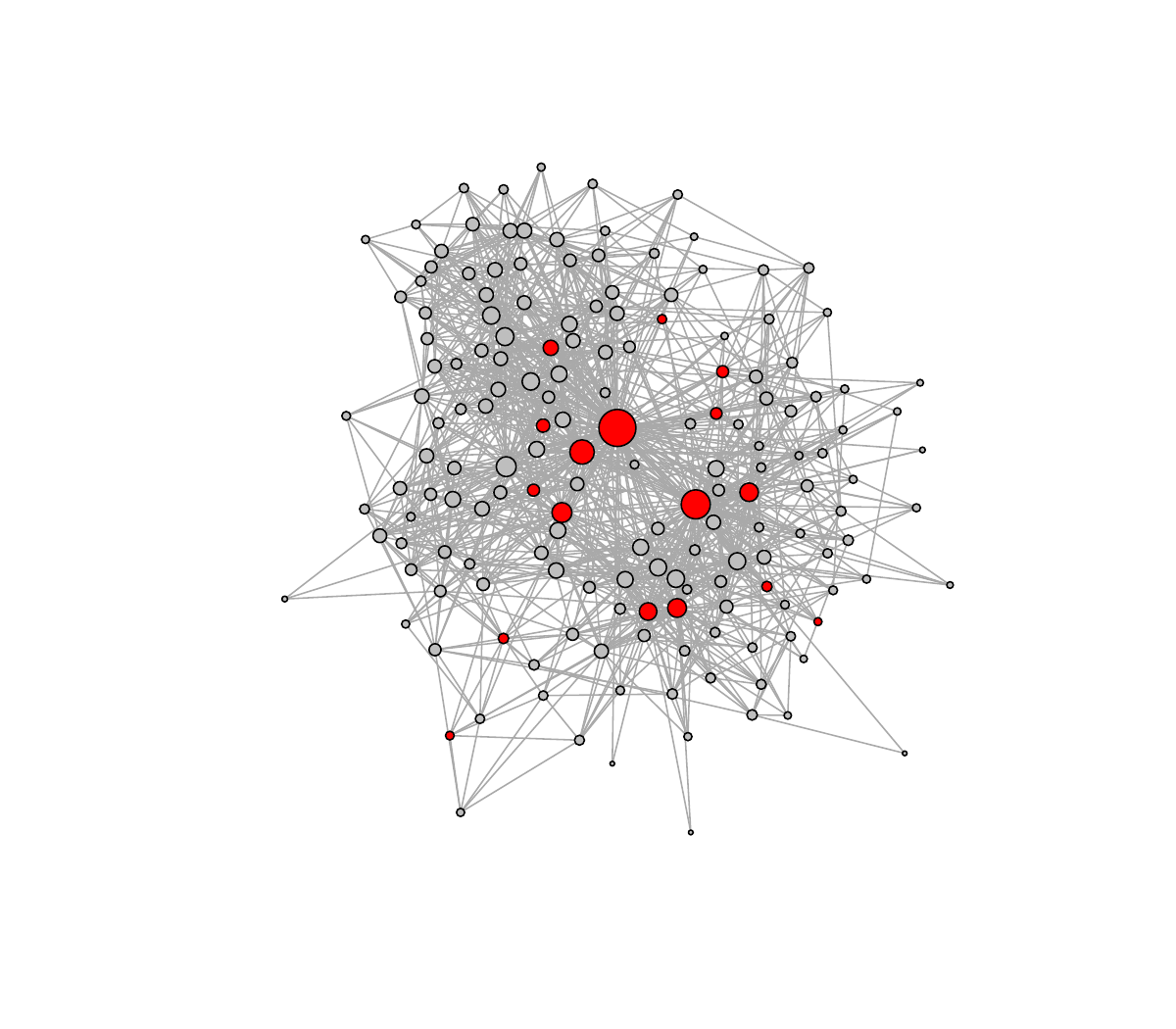}} &
   \raisebox{-0.5\height}{\includegraphics[scale=0.2]{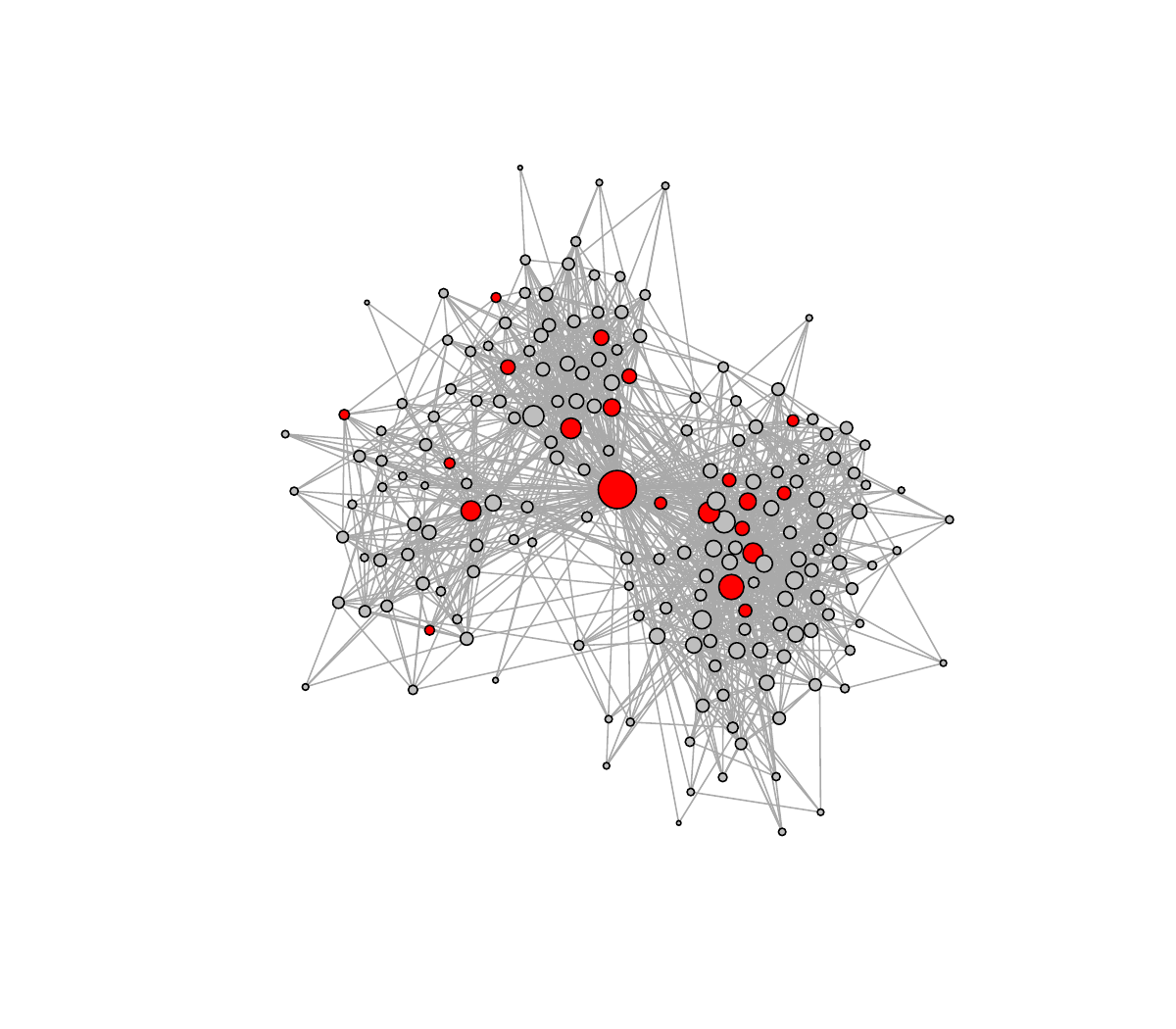}} &
   \raisebox{-0.5\height}{\includegraphics[scale=0.2]{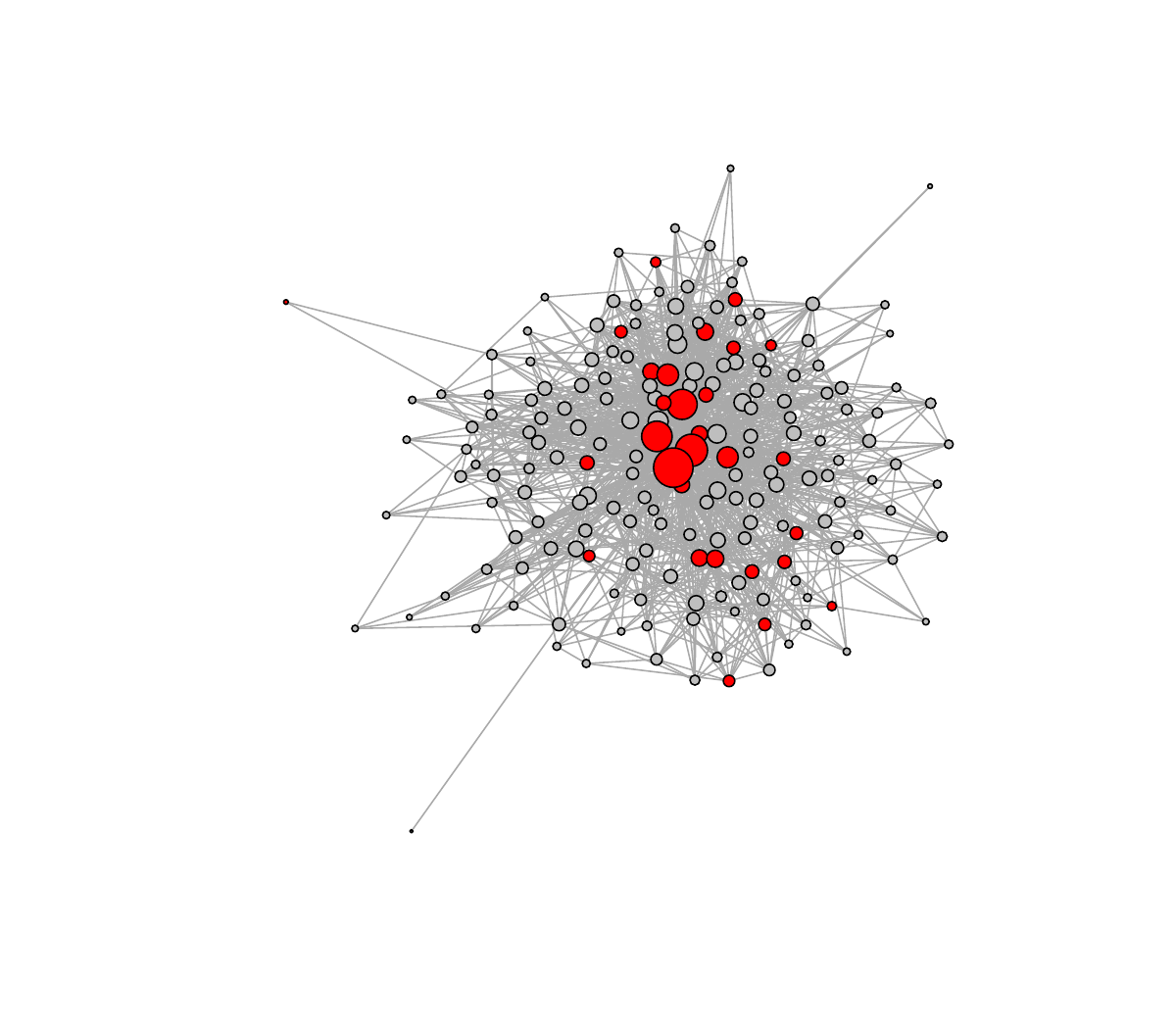}} \\
   \raisebox{-0.5\height}{\includegraphics[scale=0.2]{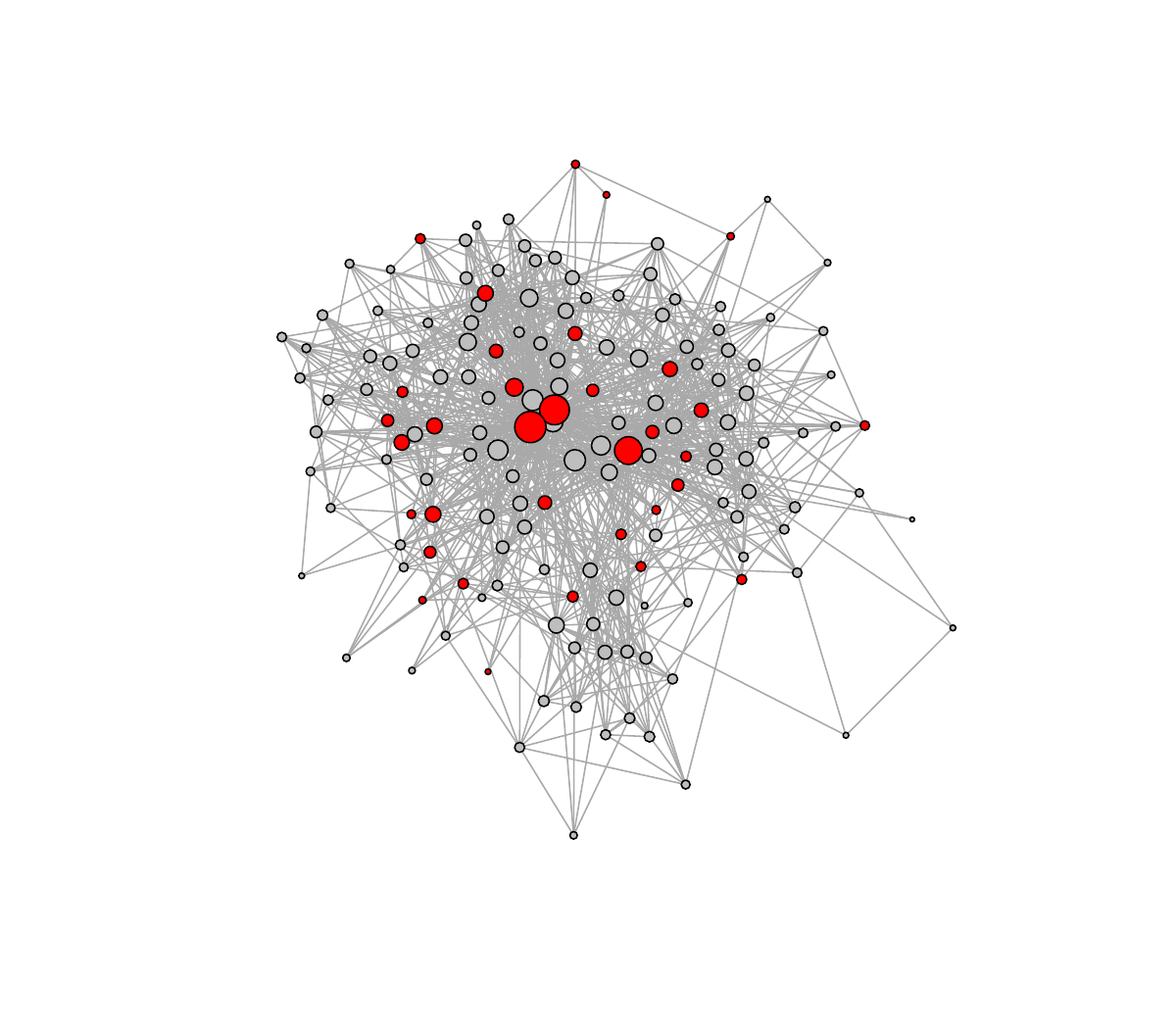}} &
   \raisebox{-0.5\height}{\includegraphics[scale=0.2]{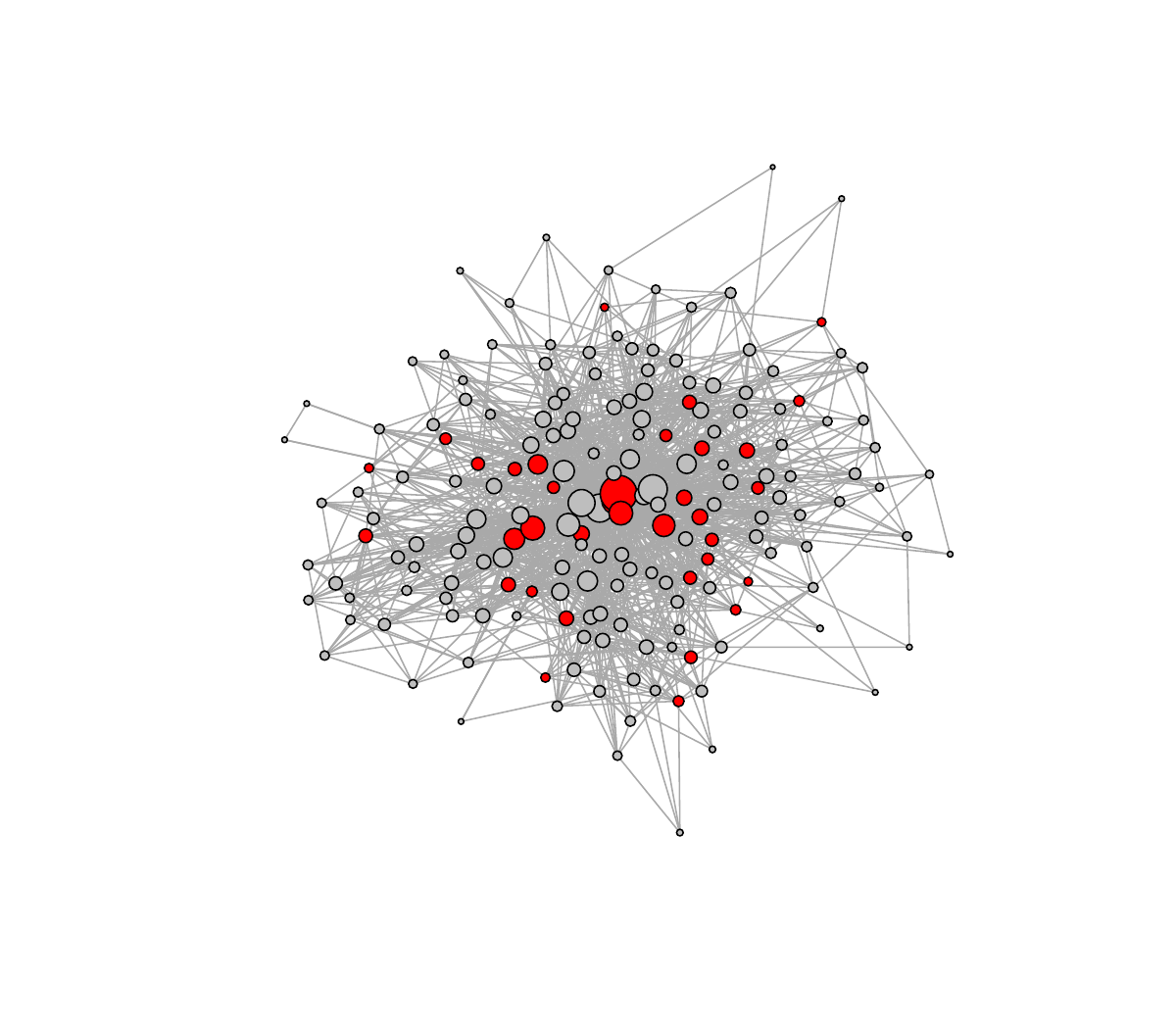}} &
   \raisebox{-0.5\height}{\includegraphics[scale=0.2]{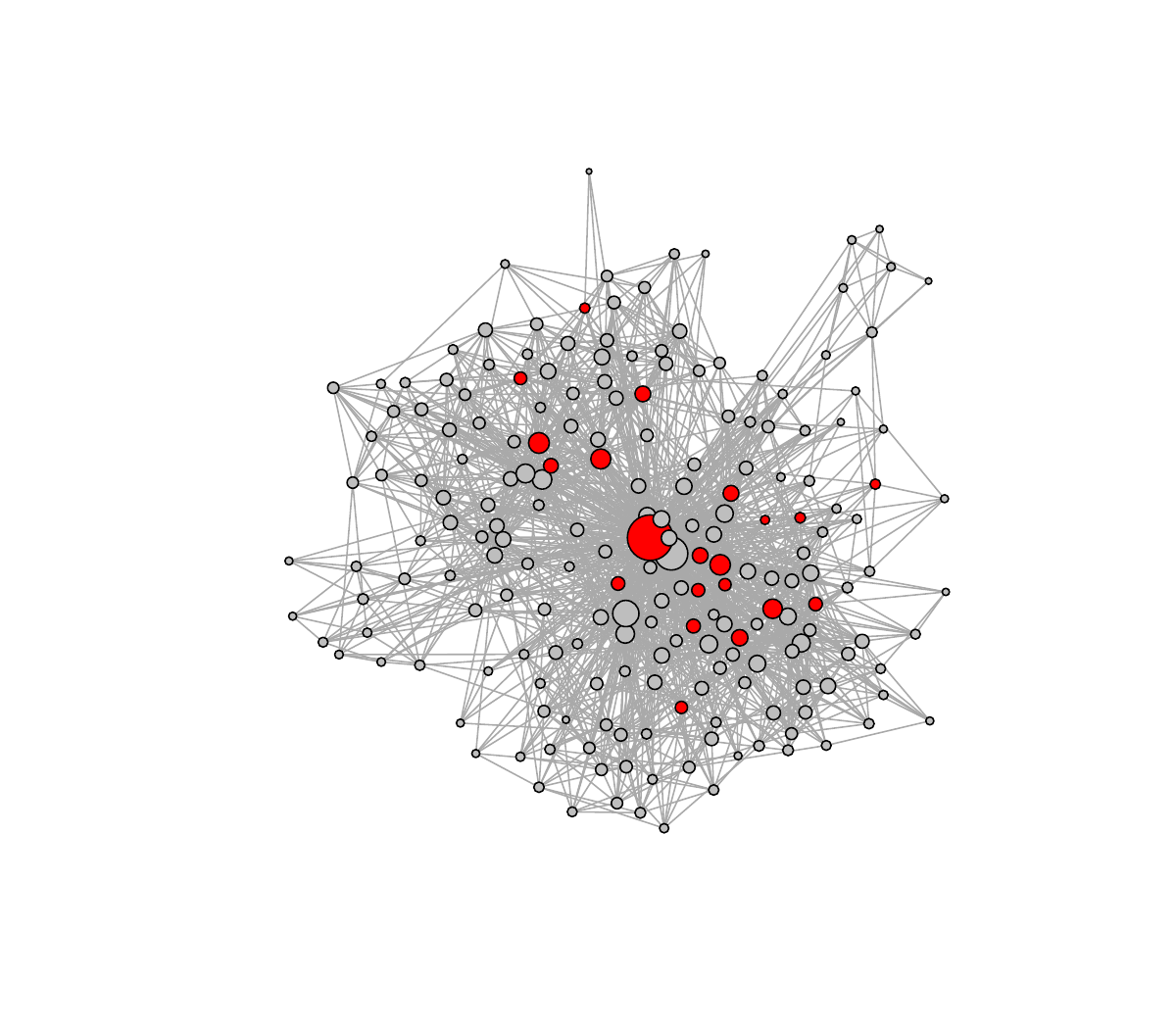}} &
   \raisebox{-0.5\height}{\includegraphics[scale=0.2]{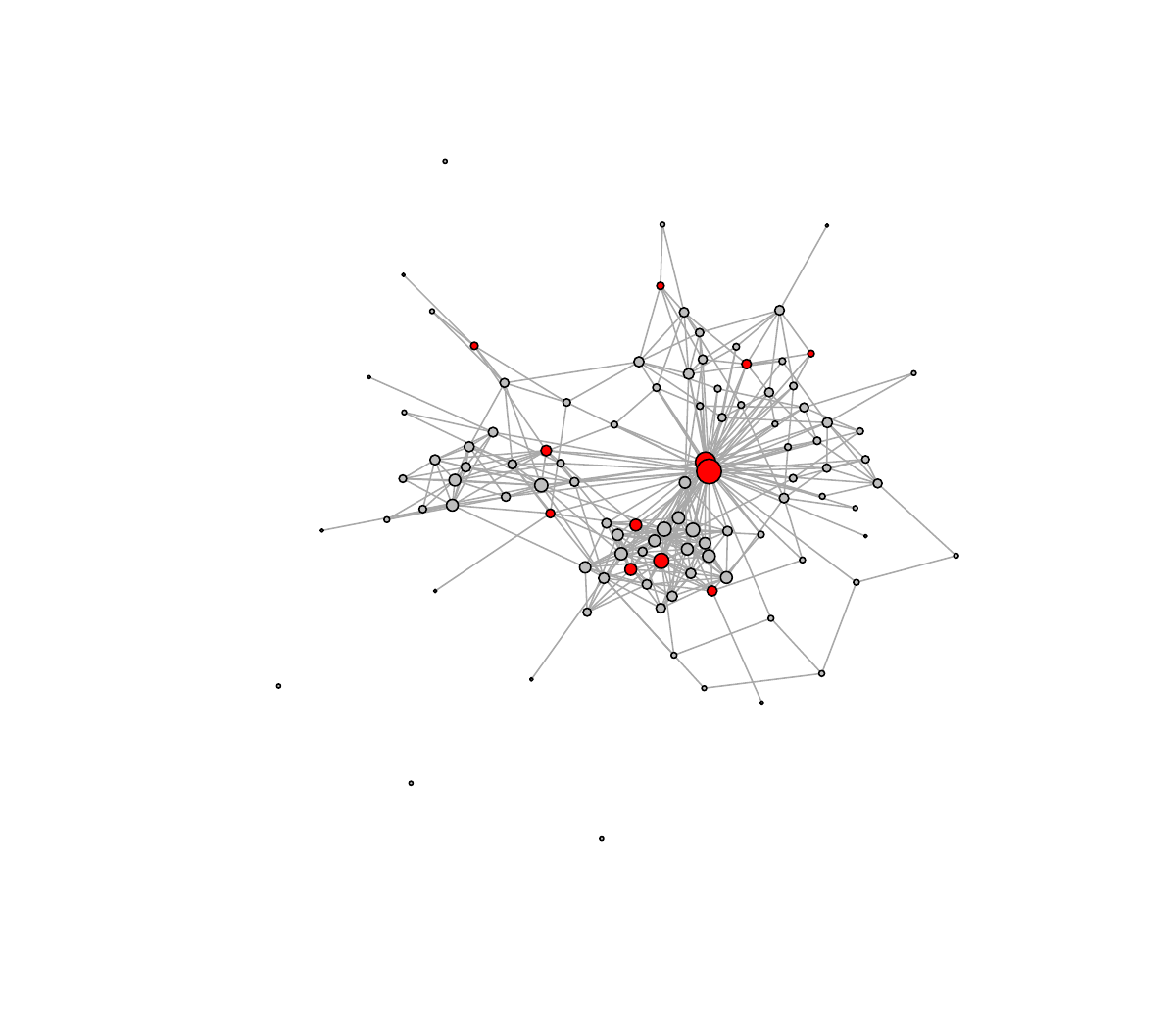}} \\
   \raisebox{-0.5\height}{\includegraphics[scale=0.2]{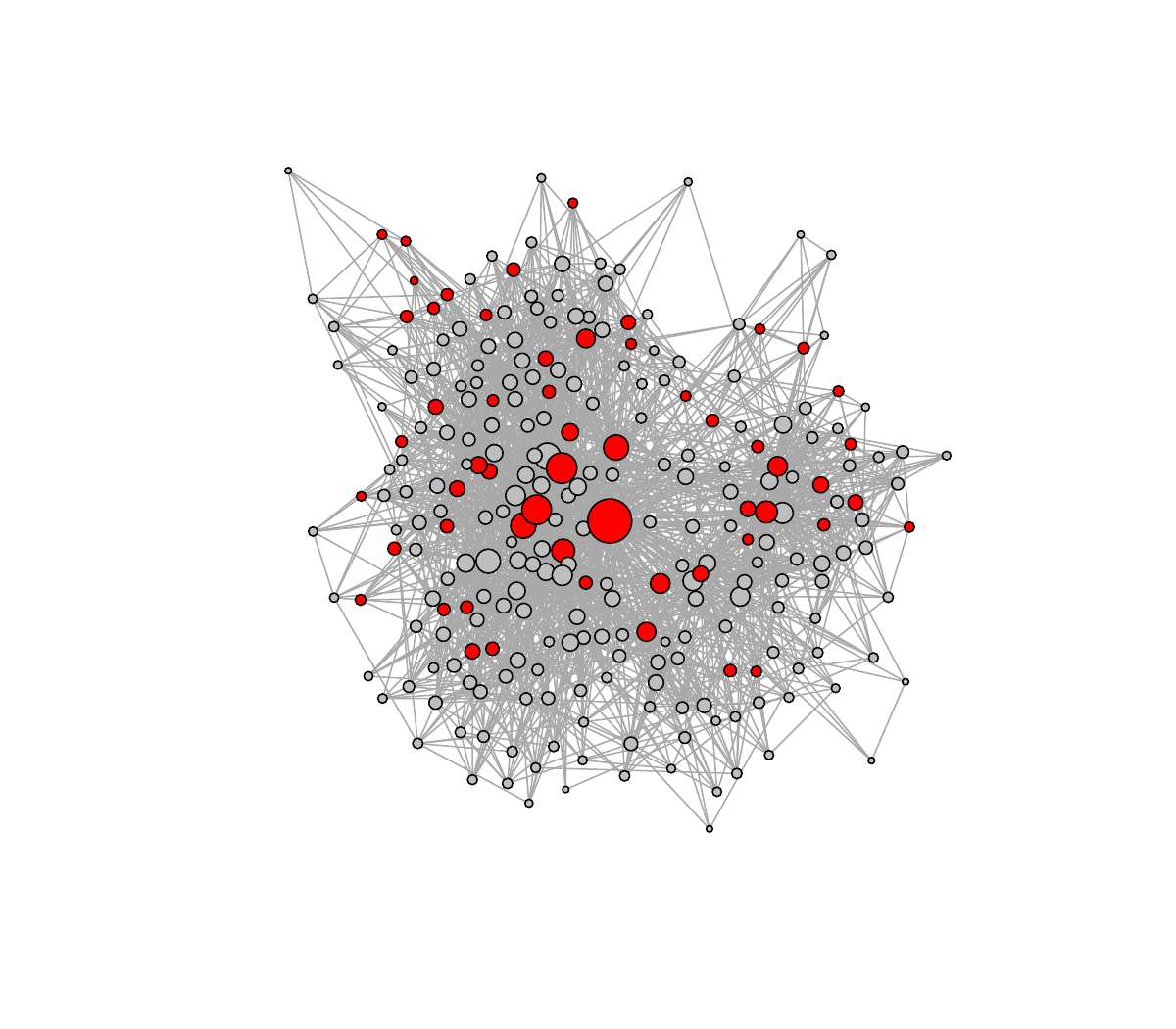}} &
   \raisebox{-0.5\height}{\includegraphics[scale=0.2]{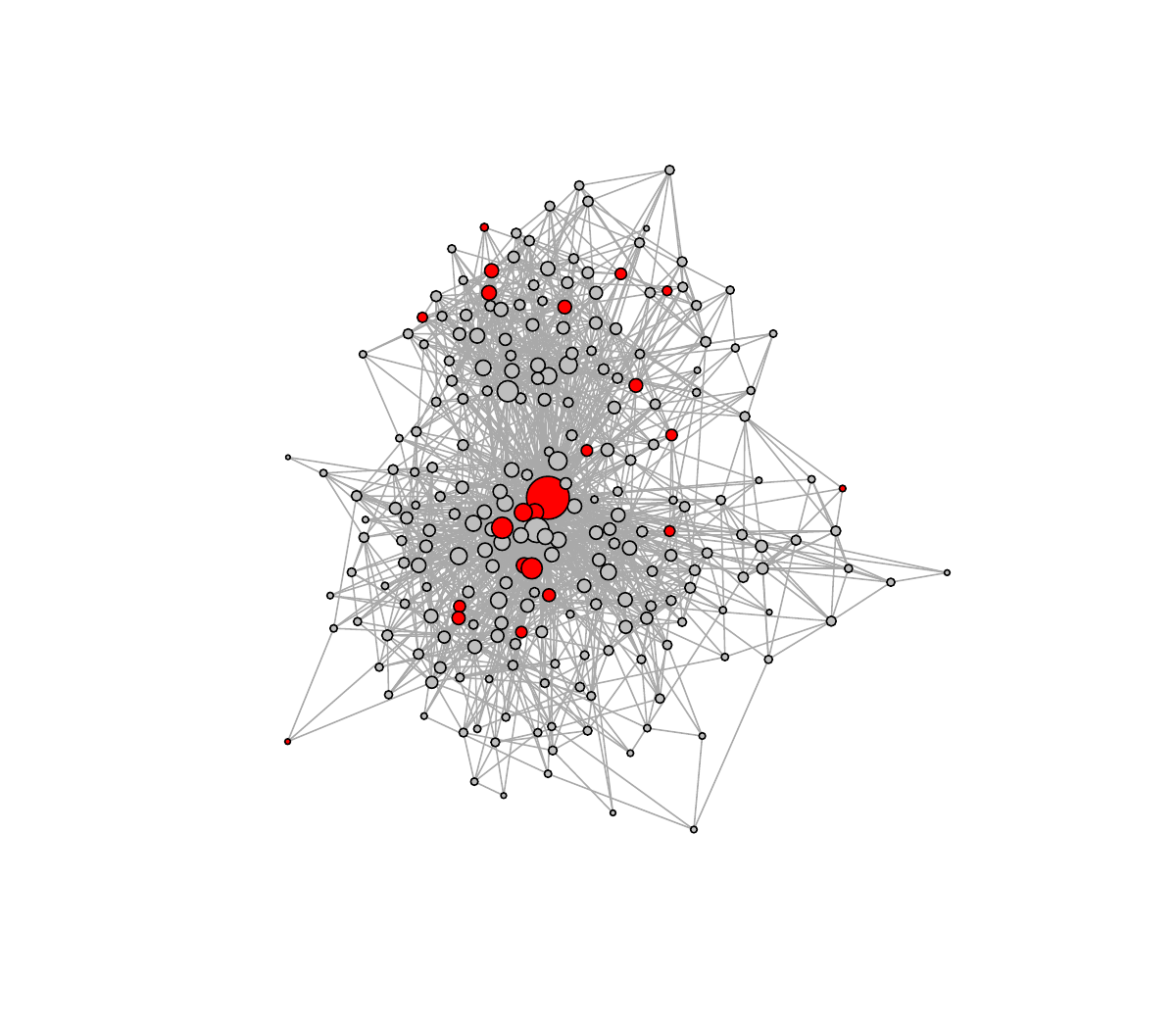}} &
   \raisebox{-0.5\height}{\includegraphics[scale=0.2]{empirical/plot_net/network_leader_9.pdf}} & 
   \raisebox{-0.5\height}{\includegraphics[scale=0.2]{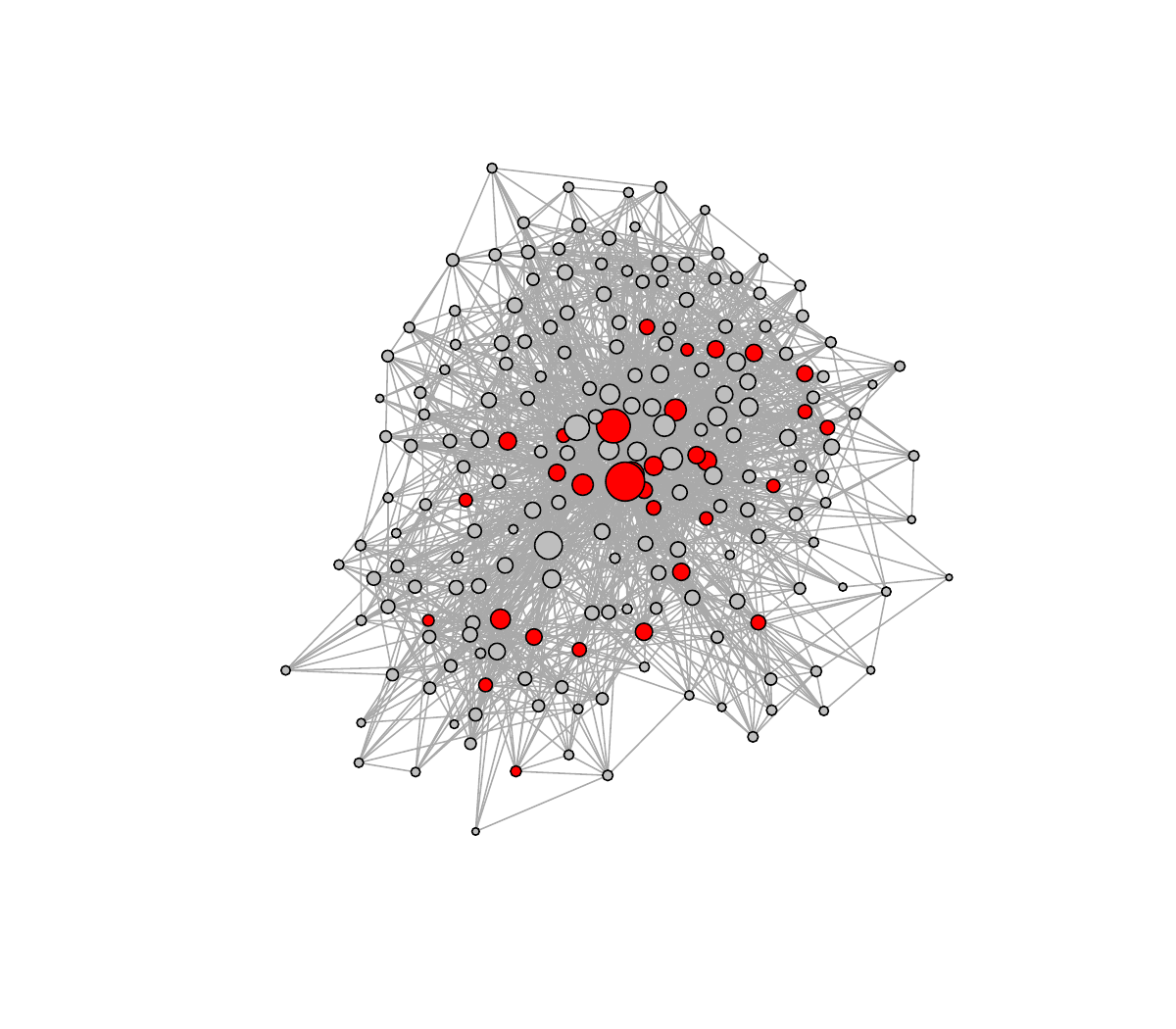}}
   \end{tabular}
   \end{center}
   \caption{Additional Network Visualizations for Villages (Leadership) }\label{ns3}
   \end{figure}
\end{landscape}





\begin{landscape}

\vfill

\begin{table}[h]
  \centering
  \caption{Estimation Results for Network Formation in Each Village}\label{nreg2} 
  \resizebox{\textwidth}{!}{%
  \begin{tabular}{l *{16}{S[table-format=-1.3, table-space-text-post={***}, table-align-text-post=false]}}
  \hline\hline
  {Village Index} & {Index 7} & {Index 9} & {Index 11} & {Index 14}
                  & {Index 29} & {Index 40} & {Index 46} & {Index 47}
                  & {Index 50} & {Index 51} & {Index 55} & {Index 66}
                  & {Index 75} & {Index 77} & {Index 80} & {Index 81} \\ 
  \hline
{Dif in Age} & -0.019*** & -0.014*** & -0.019* & -0.019*** & -0.022*** & -0.015*** & -0.016*** & -0.015*** & -0.014*** & -0.019*** & -0.018*** & -0.031*** & -0.012*** & -0.017*** & -0.017*** & -0.018***
 \\
 & {(0.002)} & {(0.002)} & {(0.012)} & {(0.002)} & {(0.001)} & {(0.002)} & {(0.001)} & {(0.002)} & {(0.002)} & {(0.002)} & {(0.002)} & {(0.002)} & {(0.004)} & {(0.001)} & {(0.001)} & {(0.001)}
 \\
{Dif in Gender} & -0.584*** & -0.738*** & -0.145 & -0.569*** & -0.524*** & -0.483*** & -0.525*** & -0.476*** & -0.684*** & -0.509*** & -0.440*** & -0.447*** & -0.357*** & -0.586*** & -0.480*** & -0.420***
 \\
 & {(0.039)} & {(0.052)} & {(0.177)} & {(0.035)} & {(0.025)} & {(0.041)} & {(0.033)} & {(0.044)} & {(0.042)} & {(0.039)} & {(0.050)} & {(0.038)} & {(0.086)} & {(0.034)} & {(0.031)} & {(0.034)}
 \\
{Dif in Education} & -0.301*** & -0.248*** & -0.206 & -0.172*** & -0.098*** & 0.037 & -0.169*** & -0.087* & -0.275*** & -0.159* & -0.195*** & -0.369*** & 0.029 & -0.192*** & -0.203*** & -0.104***
 \\
 & {(0.053)} & {(0.061)} & {(0.265)} & {(0.053)} & {(0.037)} & {(0.090)} & {(0.062)} & {(0.051)} & {(0.066)} & {(0.087)} & {(0.047)} & {(0.051)} & {(0.083)} & {(0.040)} & {(0.032)} & {(0.036)}
 \\
{Dif in Income} & -0.042** & -0.047** & 0.058 & 0.005 & -0.029** & -0.027 & -0.007 & 0.007 & 0.003 & 0.000 & -0.035* & -0.024 & -0.054 & -0.016 & -0.036*** & -0.029*
 \\
 & {(0.020)} & {(0.021)} & {(0.131)} & {(0.017)} & {(0.013)} & {(0.022)} & {(0.016)} & {(0.019)} & {(0.019)} & {(0.018)} & {(0.020)} & {(0.015)} & {(0.037)} & {(0.015)} & {(0.013)} & {(0.017)}
 \\
{Dif in CaC} & -0.005 & 0.198 & 0.096 & -0.345*** & 0.020 & -0.160 & -0.131 & -0.293** & -0.081 & -0.252** & -0.053 & -0.042 & -0.125 & -0.208* & -0.006 & -0.005
 \\
 & {(0.106)} & {(0.141)} & {(0.489)} & {(0.124)} & {(0.093)} & {(0.172)} & {(0.108)} & {(0.129)} & {(0.122)} & {(0.118)} & {(0.132)} & {(0.135)} & {(0.240)} & {(0.121)} & {(0.080)} & {(0.099)}
 \\
{At Least One Leader} & 0.101 & 0.349*** & 0.658 & 0.691*** & 0.238*** & 0.314*** & 0.300*** & 0.485*** & 0.394*** & 0.193** & 0.141 & 0.264*** & 0.547** & 0.142 & -0.068 & 0.307***
 \\
 & {(0.133)} & {(0.130)} & {(0.469)} & {(0.135)} & {(0.065)} & {(0.111)} & {(0.091)} & {(0.143)} & {(0.111)} & {(0.097)} & {(0.100)} & {(0.098)} & {(0.226)} & {(0.091)} & {(0.067)} & {(0.110)}
 \\
{At Least One Phone} & -0.368*** & -0.216*** & {--} & -0.075 & -0.250*** & -0.177** & -0.208*** & -0.033 & -0.036 & -0.021 & -0.094 & -0.080 & 0.045 & -0.182*** & -0.220*** & -0.293***
 \\
 & {(0.077)} & {(0.083)} & {} & {(0.084)} & {(0.053)} & {(0.078)} & {(0.063)} & {(0.125)} & {(0.075)} & {(0.070)} & {(0.105)} & {(0.074)} & {(0.252)} & {(0.062)} & {(0.055)} & {(0.103)}
 \\
  \hline\hline
  \end{tabular}
  }
  \parbox{0.8\textwidth}{
  \footnotesize{\bf Notes:} ``CaC'' stands for ``care about the community''. 
  \rev{Posterior means of the dyadic coefficients of the AME probit model~(\ref{emp1}) with 
  additive and multiplicative ($R = 3$) latent effects, 500 burn-in and 5,000 scans, thinned every 
  10; posterior standard deviations in parentheses. Village 11 ($N = 28$) is fitted with $R = 1$ 
  and without the phone covariate, which separates its ties.} \\
  \rev{Stars are based on the ratio of the posterior mean to the posterior standard deviation 
  (the $z$-statistic reported by the \texttt{amen} package): *** $|z| > 2.58$, ** $|z| > 1.96$, * $|z| > 1.64$.}}
\end{table}
\end{landscape}

\begin{table}[h]
  \centering
  \caption{Estimation Results for Treatment Assignment Mechanism} 
  \label{reg2} 
  \begin{tabular}{l S[table-format=-1.3, table-space-text-post={***}, table-align-text-post=false]}
  \hline\hline
  {Treatment Variable} & {Individual Adoption of U-Bridge Program ($D_{ig}$)} \\ 
  \hline
  Age        & -0.020** \\
             & {(0.007)} \\
  Female     & -0.876*** \\
             & {(0.236)} \\
  Income     & -0.012 \\
             & {(0.090)} \\
  Education  & 1.805*** \\
             & {(0.235)} \\
  Leader     & 0.626** \\
             & {(0.241)} \\
  Caring for Community & -0.890* \\
                 & {(0.457)} \\
  Having a Phone  & 1.230*** \\
             & {(0.304)} \\
  Attending Meetings    & 2.282*** \\
                        & {(0.213)}\\
  Village Fixed Effects & $\checkmark$ \\
  Observations             & {3,018} \\
  \hline\hline
  \end{tabular}

  \parbox{0.8\textwidth}{
  \footnotesize{\bf Notes:} Robust standard errors clustered at the village level are in parentheses. \\
  *** $p<0.01$, ** $p<0.05$, * $p<0.1$.}
\end{table}

\subsection{Estimated marginal distribution of the number of adopter friends based on the GPS}

\rev{Since $\mathbbm{E}[r(\{ig\}, t; \bm{X}_g)] = Pr(T_{ig} = t)$, the estimated score averaged over 
residents should reproduce the marginal distribution of the number of adopter friends ($Z_{ig}$). 
This checks the fit of the formation and adoption models.} \rev{Figure~\ref{eme} plots the implied marginal distribution of $Z_{ig}$ under the 
common-exposure and the treatment-diffusion specifications against its empirical distribution. 
Both reproduce the share of residents with one adopter friend but place more mass at zero 
($0.42$ and $0.45$ against $0.33$) and less at three or more ($0.12$ in both against $0.19$); see 
Section~\ref{sec:expo} for the interpretation.}

\begin{figure}[!ht]
  \begin{center}
  \begin{tabular}{c}
   \raisebox{-0.5\height}{\includegraphics[width=0.75\textwidth]{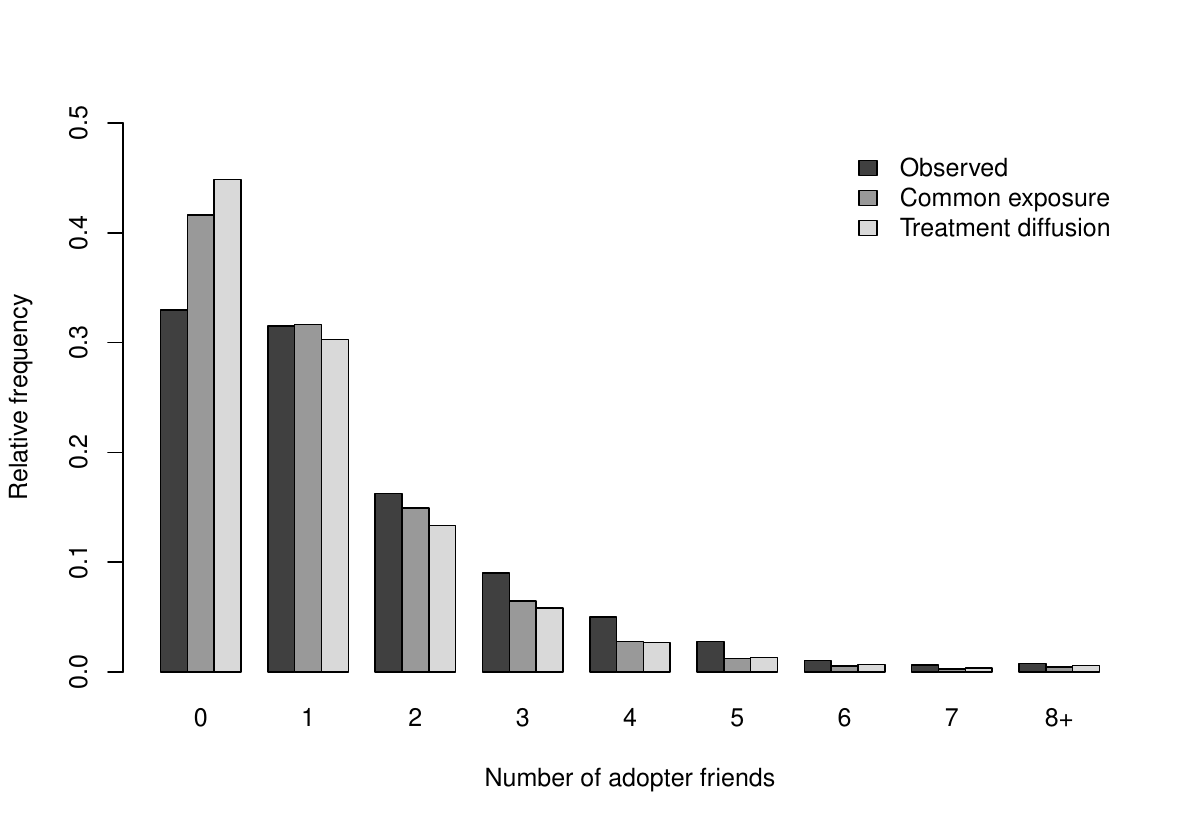}}
   \end{tabular}
   \end{center}
   \caption{Marginal Distribution of the Number of Adopter Friends}\label{eme}
   \footnotesize\textbf{Note:} \rev{Relative frequencies of the number of adopter friends: observed, 
   and implied by 500 joint draws per village of the network and adoption vector from the fitted 
   models under common exposure and under treatment diffusion (Appendix~\ref{app:rob}).}
   \end{figure}

\begin{revblock}
\subsection{Estimates, inference and sensitivity}\label{app:appest}

Table~\ref{tab:trim} reports the trimming of Section~\ref{ea} and the covariate means of kept and 
excluded residents. Table~\ref{tab:appest} reports the estimates behind Figure~\ref{te}; 
Table~\ref{tab:appinf} the few-cluster checks; Table~\ref{tab:appsens} the latent-augmented 
estimates of Section~\ref{sec:sens}; Table~\ref{tab:bal} covariate balance with the weights and 
samples the estimators use.

\begin{table}[h]
\centering
\caption{Trimming for the spillover effects: residents kept and excluded, and their covariate means}\label{tab:trim}
\footnotesize
\setlength{\tabcolsep}{4pt}
\begin{tabular}{l r r *{8}{S[table-format=2.2]}}
\hline\hline
Effect & Kept & Excluded & {Age} & {Female} & {Income} & {Educ.} & {Leader} & {CaC} & {Phone} & {Attend} \\
\hline
$\tau^{s}(1)$, kept     & 2803 &     & 36.87 & 0.61 & 2.54 & 0.20 & 0.13 & 0.28 & 0.57 & 0.06 \\
$\tau^{s}(1)$, excluded &      & 215 & 42.15 & 0.10 & 2.94 & 0.66 & 0.35 & 0.28 & 0.81 & 0.41 \\
$\tau^{s}(2)$, kept     & 2132 &     & 36.91 & 0.56 & 2.59 & 0.23 & 0.14 & 0.29 & 0.61 & 0.07 \\
$\tau^{s}(2)$, excluded &      & 886 & 38.05 & 0.61 & 2.51 & 0.24 & 0.15 & 0.27 & 0.55 & 0.12 \\
\hline\hline
\end{tabular}
\parbox{0.9\textwidth}{\footnotesize{\bf Notes:} A resident is kept for $\tau^{s}(z)$ if both 
$r(\{ig\}, (0,z); \bm{X}_g)$ and $r(\{ig\}, (0,0); \bm{X}_g)$ lie in $[0.05, 0.95]$. The direct effects 
on adopters use odds weights and no trimming (17 adopters with one adopter friend and 934 
comparison residents; 28 and 463 for two adopter friends; 130 and 1,893 for at least one). 
Population direct effects by AIPW with the same rule would keep 168 and 150 residents 
(estimates $0.13$ ($0.09$) and $-0.05$ ($0.15$)).}
\end{table}

\begin{table}[h]
\centering
\caption{Estimated spillover effects and direct effects on adopters}\label{tab:appest}
\begin{tabular}{l l S[table-format=-1.3] S[table-format=1.3] S[table-format=-1.3] S[table-format=-1.3] r}
\hline\hline
Effect & Estimator & {Estimate} & {SE} & {95\% lower} & {95\% upper} & {$N$} \\
\hline
$\tau^{s}(1)$ & AIPW-JM    & 0.034 & 0.016 &  0.003 & 0.065 & 2803 \\
            & AIPW-COND  & 0.018 & 0.020 & -0.020 & 0.057 & 2502 \\
            & IPW-JM     & 0.023 & 0.027 & -0.029 & 0.076 & 2803 \\
            & Regression & 0.049 & 0.007 &  0.036 & 0.062 & 3018 \\
$\tau^{s}(2)$ & AIPW-JM    & 0.051 & 0.034 & -0.015 & 0.117 & 2132 \\
            & AIPW-COND  & 0.063 & 0.044 & -0.023 & 0.149 & 1611 \\
            & IPW-JM     & 0.008 & 0.045 & -0.079 & 0.096 & 2132 \\
            & Regression & 0.098 & 0.013 &  0.071 & 0.124 & 3018 \\
$\tau^{s}(\text{any})$ & AIPW-JM & 0.052 & 0.016 & 0.021 & 0.084 & 2799 \\
\hline
$\tau^{e}(1)$ & AIPW-ATT   & 0.363 & 0.094 &  0.179 & 0.547 & 17 \\
                & IPW-ATT    & 0.417 & 0.080 &  0.260 & 0.574 & 17 \\
                & Regression & 0.280 & 0.075 &  0.134 & 0.426 & 17 \\
$\tau^{e}(2)$ & AIPW-ATT   & 0.188 & 0.109 & -0.025 & 0.401 & 28 \\
                & IPW-ATT    & 0.232 & 0.156 & -0.073 & 0.538 & 28 \\
                & Regression & 0.251 & 0.070 &  0.114 & 0.388 & 28 \\
$\tau^{e}(\text{any})$ & AIPW-ATT & 0.271 & 0.069 & 0.135 & 0.407 & 130 \\
                          & IPW-ATT  & 0.247 & 0.085 & 0.080 & 0.415 & 130 \\
\hline\hline
\end{tabular}
\parbox{0.92\textwidth}{\footnotesize{\bf Notes:} Village-clustered standard errors and normal 
critical values. For the spillovers, $N$ is the number of residents after trimming (for AIPW-COND, 
trimming on the network-conditional score, which also excludes residents whose realized degree does not 
permit a level); for the direct effects, $N$ is the number of adopters at the level. ``Regression'' 
is the coefficient contrast of the outcome model of Table~\ref{reg}, column (3), with contextual 
covariates, which coincides with regression imputation under that linear model; its standard 
error is the delta-method standard error with village-clustered covariance. 
$\tau^{s}(\text{any})$ and $\tau^{e}(\text{any})$ use the exposure ``at least one adopter 
friend'' among residents with at least one tie.}
\end{table}

\begin{table}[h]
\centering
\caption{Inference with 16 villages for the main estimates}\label{tab:appinf}
\footnotesize
\setlength{\tabcolsep}{4.5pt}
\begin{tabular}{l S[table-format=1.3] S[table-format=1.3] c c S[table-format=1.3] S[table-format=1.3]}
\hline\hline
 & & {Cluster} & \multicolumn{2}{c}{95\% interval} & {Bootstrap} & {Within-village} \\
\cline{4-5}
Effect & {Estimate} & {SE} & {normal} & {$t_{15}$} & {SE} & {SE} \\
\hline
$\tau^{s}(1)$ & 0.034 & 0.016 & $[0.003, 0.065]$  & $[0.000, 0.068]$  & 0.017 & 0.023 \\
$\tau^{s}(2)$ & 0.051 & 0.034 & $[-0.015, 0.117]$ & $[-0.021, 0.123]$ & 0.036 & 0.037 \\
$\tau^{e}(1)$ & 0.363 & 0.094 & $[0.179, 0.547]$  & $[0.163, 0.563]$  & 0.097 & {--} \\
$\tau^{e}(2)$ & 0.188 & 0.109 & $[-0.025, 0.401]$ & $[-0.044, 0.420]$ & 0.122 & {--} \\
\hline\hline
\end{tabular}
\parbox{0.95\textwidth}{\footnotesize{\bf Notes:} Bootstrap SE: wild cluster bootstrap with 
Rademacher weights by village applied to the outcome-model residuals; the outcome model is 
re-estimated on each bootstrap outcome, the estimator recomputed with the scores held fixed, and 
the standard error is the standard deviation over 1,999 draws (the bootstrap distribution is 
centred at the regression contrast rather than at the estimate, so only its spread is used; the 
corresponding interval is the estimate $\pm 1.96$ times this SE). Within-village SE: 
estimator~(\ref{hac2}) applied village by village with $b_i$ centred within village (heuristic 
here, Remark~\ref{rem:few}; not defined for the effects on the exposed).}
\end{table}

\begin{table}[h]
\centering
\caption{Latent-augmented specification: AME additive and multiplicative effects added to the adoption and outcome models}\label{tab:appsens}
\begin{tabular}{l *{4}{S[table-format=1.3]}}
\hline\hline
 & {$\tau^{s}(1)$} & {$\tau^{s}(2)$} & {$\tau^{e}(1)$} & {$\tau^{e}(2)$} \\
\hline
Main specification      & 0.034 & 0.051 & 0.363 & 0.188 \\
                        & {(0.016)} & {(0.034)} & {(0.094)} & {(0.109)} \\
Latent-augmented        & 0.010 & 0.028 & 0.320 & 0.168 \\
                        & {(0.022)} & {(0.034)} & {(0.104)} & {(0.093)} \\
\hline\hline
\end{tabular}
\parbox{0.8\textwidth}{\footnotesize{\bf Notes:} Posterior means of $(\alpha_{ig}, \tilde{U}_{ig})$ 
from the AME model added as covariates to the Firth logit for adoption and to the outcome model; 
scores recomputed with the new adoption probabilities. Village-clustered standard errors in 
parentheses.}
\end{table}

\begin{table}[h]
\centering
\caption{Covariate balance with the weights and samples used by the estimators: standardized mean differences}\label{tab:bal}
\footnotesize
\setlength{\tabcolsep}{3.5pt}
\begin{tabular}{l *{10}{S[table-format=-1.2]}}
\hline\hline
 & \multicolumn{2}{c}{$(0,1)$ vs $(0,0)$} & \multicolumn{2}{c}{$(0,2)$ vs $(0,0)$} & \multicolumn{2}{c}{$(1,1)$ vs $(0,1)$} & \multicolumn{2}{c}{$(1,2)$ vs $(0,2)$} & \multicolumn{2}{c}{$(1,\text{any})$ vs $(0,\text{any})$} \\
 & {Raw} & {Wtd} & {Raw} & {Wtd} & {Raw} & {Wtd} & {Raw} & {Wtd} & {Raw} & {Wtd} \\
\hline
Age              &  0.02 & -0.01 &  0.02 & -0.07 & -0.30 & -0.04 & -0.32 & -0.04 & -0.14 & -0.04 \\
Female           & -0.05 &  0.11 &  0.03 &  0.22 & -0.60 & -0.12 & -0.36 &  0.15 & -0.75 &  0.04 \\
Income           & -0.10 & -0.18 & -0.13 & -0.21 &  0.64 &  0.20 &  0.43 &  0.20 &  0.39 &  0.01 \\
Education        &  0.02 & -0.11 &  0.00 & -0.16 &  1.75 &  0.46 &  1.50 &  0.23 &  1.24 & -0.07 \\
Leader           &  0.00 &  0.01 &  0.19 &  0.17 &  0.20 &  0.11 & -0.03 & -0.35 &  0.30 & -0.07 \\
Care about community & -0.05 & -0.13 & -0.07 & -0.15 &  0.17 &  0.19 & -0.20 & -0.06 & -0.08 & -0.02 \\
Has phone        &  0.07 & -0.04 &  0.04 & -0.08 &  0.58 & -0.12 &  0.67 &  0.01 &  0.71 & -0.01 \\
Attended meeting &  0.08 &  0.06 &  0.14 &  0.15 &  1.40 &  0.55 &  0.88 & -0.20 &  1.16 & -0.13 \\
\hline\hline
\end{tabular}
\parbox{0.95\textwidth}{\footnotesize{\bf Notes:} Spillover contrasts: residents at $(0,z)$ 
versus $(0,0)$ within the trimmed sample of $\tau^{s}(z)$, weighted by $1/\hat{r}$ at the respective 
level. Direct contrasts: adopters at $(1,z)$ versus non-adopters at $(0,z)$, the latter weighted by 
the odds $\hat{r}(1,z) / \hat{r}(0,z)$, as in the estimator of the effect on the exposed.}
\end{table}

\subsection{Robustness checks and sensitivity grid}\label{app:rob}

\begin{table}[h]
\centering
\caption{Robustness: treatment-diffusion factorization}\label{tab:rob}
\begin{tabular}{l S[table-format=1.3] S[table-format=1.3] S[table-format=-1.3] S[table-format=1.3] r}
\hline\hline
Effect & {Estimate} & {SE} & {95\% lower} & {95\% upper} & {$N$} \\
\hline
$\tau^{s}(1)$     & 0.025 & 0.014 & -0.003 & 0.053 & 2880 \\
$\tau^{s}(2)$     & 0.072 & 0.037 & -0.001 & 0.145 & 2237 \\
$\tau^{e}(1)$ & 0.372 & 0.090 &  0.197 & 0.548 & 17 \\
$\tau^{e}(2)$ & 0.186 & 0.107 & -0.023 & 0.395 & 28 \\
\hline\hline
\end{tabular}
\parbox{0.95\textwidth}{\footnotesize{\bf Notes:} Adoption model: Firth logit with the number 
of adopter friends as a regressor (maximum pseudo-likelihood for an autologistic model; 
coefficient $0.35$, standard error $0.057$, profile-likelihood interval $[0.24, 0.47]$). Joint 
draws per village: network from the plug-in link probabilities, adoption initialized at 
independent draws and updated by twenty Gibbs sweeps (the mean number of adopter friends is 
stable after ten sweeps, although in several villages it differs from the observed mean, since 
the joint law implied by pseudo-likelihood full conditionals need not reproduce the observed 
adoption rates); 2,000 draws per village; scores at every level as Monte Carlo 
frequencies (with a score of $0.05$ the Monte Carlo error is about ten percent of the weight). 
Same trimming rules and estimators as Table~\ref{tab:appest}.}
\end{table}

\begin{table}[h]
\centering
\caption{Hypothetical confounder of adoption: shift in the estimates of the direct effects on adopters}\label{tab:sensgrid}
\begin{tabular}{S[table-format=1.1] S[table-format=1.2] *{2}{S[table-format=-1.3]}}
\hline\hline
{$\gamma_d$} & {$\gamma_y$} & {$\tau^{e}(1)$} & {$\tau^{e}(2)$} \\
\hline
0.5 & 0.10 & -0.015 & -0.013 \\
1.0 & 0.10 & -0.029 & -0.027 \\
2.0 & 0.10 & -0.041 & -0.042 \\
0.5 & 0.25 & -0.028 & -0.029 \\
1.0 & 0.25 & -0.057 & -0.063 \\
2.0 & 0.25 & -0.097 & -0.105 \\
0.5 & 0.50 & -0.064 & -0.051 \\
1.0 & 0.50 & -0.112 & -0.120 \\
2.0 & 0.50 & -0.205 & -0.207 \\
\hline\hline
\end{tabular}
\parbox{0.85\textwidth}{\footnotesize{\bf Notes:} A binary $U_{ig}$ with $Pr(U_{ig} = 1) = 1/2$ 
lowers the adoption log-odds by $\gamma_d$ and raises participation by $\gamma_y$ (the opposite 
signs give the same magnitudes). $U_{ig}$ is drawn from its conditional distribution given 
observed adoption and covariates under the fitted adoption model, $\gamma_y U_{ig}$ is added to 
the outcome, and the estimates are recomputed ignoring $U_{ig}$; entries are the average change 
over 50 draws. The spillover estimates change by less than $0.003$ because the realized network 
is held fixed. For reference, the adoption-model coefficients of secondary education and meeting 
attendance are $1.8$ and $2.3$.}
\end{table}
\end{revblock}

\end{document}